\newtheorem{theorem}{Theorem}
\newtheorem{definition}{Definition}
\newtheorem{lemma}{Lemma}
\newtheorem{proposition}{Proposition}
\newcommand{\bout}[1]
\begin{document}

\title{Clifford Group and Unitary Designs under Symmetry}

\author{Yosuke Mitsuhashi}
\email{mitsuhashi@noneq.t.u-tokyo.ac.jp}
\affiliation{Department of Applied Physics, University of Tokyo, 7-3-1 Hongo, Bunkyo-ku, Tokyo 113-8656, Japan}

\author{Nobuyuki Yoshioka}
\email{nyoshioka@ap.t.u-tokyo.ac.jp}
\affiliation{Department of Applied Physics, University of Tokyo, 7-3-1 Hongo, Bunkyo-ku, Tokyo 113-8656, Japan}
\affiliation{Theoretical Quantum Physics Laboratory, RIKEN Cluster for Pioneering Research (CPR), Wako-shi, Saitama 351-0198, Japan}
\affiliation{JST, PRESTO, 4-1-8 Honcho, Kawaguchi, Saitama, 332-0012, Japan}

\begin{abstract}
We have generalized the well-known statement that the Clifford group is a unitary 3-design into symmetric cases by extending the notion of unitary design. Concretely, we have proven that a symmetric Clifford group is a symmetric unitary 3-design if and only if the symmetry constraint is described by some Pauli subgroup. 
We have also found a complete and unique construction method of symmetric Clifford groups with simple quantum gates for Pauli symmetries. 
For the overall understanding, we have also considered physically relevant
U(1) and SU(2) symmetry constraints, which cannot be described by a Pauli subgroup, and have proven that the symmetric Clifford group is a symmetric unitary 1-design but not a 2-design under those symmetries. 
Our findings are numerically verified by computing the frame potentials, which measure the difference in randomness between the uniform ensemble on the symmetric group of interest and the symmetric unitary group.
This work will open a new perspective into quantum information processing such as randomized benchmarking, and give a deep understanding to many-body systems such as monitored random circuits.
\end{abstract}
\maketitle

\let\oldaddcontentsline\addcontentsline
\renewcommand{\addcontentsline}[3]{}

\section{Introduction}

Randomness in quantum systems is a ubiquitous concept that underpins the core of quantum information processing and quantum many-body systems~\cite{dupuis2014oneshot,roberts2017chaos}. 
The uniform randomness plays central role not only in understanding fundamental phenomena such as thermalization~\cite{popescu2006entanglement, rio2011thermodynamic} and information scrambling~\cite{landsman2019verified, mi2021information}, but also realizing efficient quantum communication~\cite{devetak2004relating,horodecki2005partial} and encryption~\cite{ambainis2009nonmalleable, chau2005unconditionally}. 
To utilize the beautiful and powerful property of the randomness, there have been significant advancements in engineering of approximations of the Haar unitary ensemble, namely the unitary design.  A unitary $t$-design is an ensemble of unitaries that mimics the Haar random unitaries up to the $t$-th moment, and it has proven useful in tasks such as data hiding ~\cite{divincenzo2002quantum}, quantum state discrimination~\cite{ambainis2007quantum,matthews2009distinguishability}, quantum advantage~\cite{bouland2018quantum, boixo2018characterizing, arute2019quantum}, quantum gravity~\cite{roberts2017chaos}, to name a few.

One of the most prominent examples of unitary designs is the Clifford group. Initial interest in the Clifford group was primarily in the context of quantum computing, e.g., the classical simulability~\cite{gottesman1998heisenberg, aaronson2004improved}. However, currently it is known to be applicable to even wider fields such as the quantum state tomography
\cite{huang2020predicting} and hardware verification via randomized benchmarking~\cite{emerson2005scalable, knill2008randomized, magesan2011scalable, nakata2021quantum}.
Although for general qudits, the Clifford group is only a unitary 1-design~\cite{graydon2021clifford}, it elevates to a unitary 2-design if the local Hilbert space dimension is prime~\cite{divincenzo2002quantum,dur2005standard,gross2007evenly}. 
Intriguingly, the multiqubit Clifford group singularly qualifies as a unitary 3-design~\cite{webb2016clifford,zhu2017multiqubit}.

While the concurrent presence of classical simulability and the pseudorandomness of the multiqubit Clifford group has invoked numerous applications to quantum science~\cite{Li2019measurement, Farshi2022mixing, Farshi2023absence, Liu2023model-independent},
we point out that existing studies have focused predominantly on the full ensemble of unitary designs;
our comprehension on realistic scenarios with operational constraints/restrictions remains underdeveloped.
One of the most outstanding questions pertains to the relationship with symmetry, an essential concept responsible for a wealth of phenomena in the natural sciences.
To further explore the physics and quantum information processing under realistic constraints, it is an urgent task to establish how the symmetry impacts the Clifford group.

In this work, we introduce the concept of a symmetric unitary $t$-design and prove that the multiqubit Clifford group under symmetry forms a symmetric unitary 3-design, if and only if the symmetry constraints are essentially characterized by some Pauli subgroup. 
We also propose a complete and unique method for constructing symmetric Clifford operators with elementary quantum gates that operate on a maximum of two qubits.
We subsequently show that, other classes of symmetry stand in stark contrast, as the Clifford groups under these symmetries are merely symmetric unitary 1-designs.
For comprehensive understanding, we have highlighted such a remarkable disparity through practical examples of U(1) and SU(2) symmetries.
Finally, we provide numerical evidence for our findings by computing the frame potentials for the Clifford groups under two representative types of symmetries.

\section{Setup}

We first overview the conventional Clifford group and unitary designs. 
The Clifford group on $N$ qubits is defined as the normalizer of the Pauli group in the unitary group $\mathcal{U}_N$, i.e., $\mathcal{C}_N:=\{U\in\mathcal{U}_N|U\mathcal{P}_N U^\dag=\mathcal{P}_N\}$, where $\mathcal{P}_N:=\{\pm 1, \pm i\}\cdot\{\mathrm{I}, \mathrm{X}, \mathrm{Y}, \mathrm{Z}\}^{\otimes N}$ is the group generated by the Pauli operators $\mathrm{I}$, $\mathrm{X}$, $\mathrm{Y}$ and $\mathrm{Z}$ on each qubit.
It is convenient to introduce the $t$-fold twirling channel to characterize the randomness of a subgroup $\mathcal{X}$ of $\mathcal{U}_N$ as
\begin{align}
    \Phi_{t, \mathcal{X}}(L):= \int_{U\in\mathcal{X}} U^{\otimes t} L U^{\dag \otimes t} d\mu_{\mathcal{X}}(U), \label{eq:twirling_def}
\end{align}
where $L$ is a linear operator acting on $tN$ qubits and $\mu_{\mathcal{X}}$ denotes the normalized Haar measure on $\mathcal{X}$.
We say that the subgroup $\mathcal{X}$ is a unitary $t$-design if 
\begin{align}
    \Phi_{t, \mathcal{X}}=\Phi_{t, \mathcal{U}_N}. 
\end{align}
We note that the definition of unitary designs can be extended for general subsets of the unitary group by considering a distribution on the sets~\cite{scott2008optimizing}, and that our main statement is invariant under the extended definition, as we show in Appendix~\ref{SMsec:weighted_unitary_design}. 
From Eq.~\eqref{eq:twirling_def}, we see that unitary $t$-designs with larger $t$ better approximate the Haar random unitaries, which can be regarded as a unitary $\infty$-design. 
In this regard, it is known that the Clifford group $\mathcal{C}_N$ is a unitary $3$-design but not a $4$-design~\cite{webb2016clifford, zhu2017multiqubit}. 
Note that unitary $t$-designs are always $t'$-designs if $t>t'$, but the contrary does not hold in general.

The symmetric Clifford group and symmetric unitary designs are defined as the symmetric generalizations of the conventional ones. 
In the following, we consider symmetry that can be represented by a subgroup $\mathcal{G}$ of $\mathcal{U}_N$. 
We define the $\mathcal{G}$-symmetric Clifford group on $N$ qubits as the group consisting of the Clifford gates commuting with all the elements in $\mathcal{G}$. 
We can give a rigorous definition as follows:

\begin{definition} \label{def:symmetric_Clifford}
    (Symmetric Clifford group.)
    Let $\mathcal{G}$ be a subgroup of $\mathcal{U}_N$. 
    The $\mathcal{G}$-symmetric Clifford group $\mathcal{C}_{N, \mathcal{G}}$ is defined by 
    \begin{align}
        \mathcal{C}_{N, \mathcal{G}}:=\mathcal{C}_N\cap\mathcal{U}_{N, \mathcal{G}} 
    \end{align}
    with the $\mathcal{G}$-symmetric unitary group 
    \begin{align}
        \mathcal{U}_{N, \mathcal{G}}:=\{U\in\mathcal{U}_N\ |\ \forall G\in\mathcal{G},\ [U, G]=0\}.
    \end{align}
\end{definition}

\noindent
Now it is natural to define for a subgroup $\mathcal{X}$ of $\mathcal{U}_N$ to be a $\mathcal{G}$-symmetric unitary design if the subgroup approximates the $\mathcal{G}$-symmetric unitary group $\mathcal{U}_{N, \mathcal{G}}$. 
The rigorous definition is as follows:

\begin{definition} \label{def:symmetric_unweighted_unitary_design}
    (Symmetric unitary designs.) 
    Let $\mathcal{G}$ and $\mathcal{X}$ be subgroups of $\mathcal{U}_N$. 
    $\mathcal{X}$ is a $\mathcal{G}$-symmetric unitary $t$-design if the $t$-fold twirling channel $\Phi_{t, \mathcal{X}}$ satisfies
    \begin{align}
        \Phi_{t, \mathcal{X}}=\Phi_{t, \mathcal{U}_{N, \mathcal{G}}}.
    \end{align}
\end{definition}

\noindent
Note that in these definitions, the symmetry constraint is described by $\mathcal{U}_{N, \mathcal{G}}$ rather than by $\mathcal{G}$ itself, and the conventional definitions are included as the special case when the symmetry is trivial, i.e., $\mathcal{G} = \{I\}$.

\section{Main results}

Now we are ready to present our two main results. 
The first one is the description of the randomness of symmetric Clifford groups in terms of symmetric unitary designs, which we rigorously present in Theorem~\ref{thm:main}. 
The second one is the complete and unique construction of symmetric Clifford circuits with elementary gates, which we concisely state in Theorem~\ref{thm:expression}.

\subsection{Characterization of pseudorandomness in symmetric Clifford groups}

We prove that the $\mathcal{G}$-symmetric Clifford group $\mathcal{C}_{N, \mathcal{G}}$ is a $\mathcal{G}$-symmetric unitary $3$-design if and only if the symmetry constraint by $\mathcal{G}$ is essentially described by some Pauli subgroup. 
This can be rigorously stated as follows: 
\begin{theorem} \label{thm:main}
    (Randomness of the Clifford group under symmetry.)
	Let $\mathcal{G}$ be a subgroup of $\mathcal{U}_N$. 
	Then, $\mathcal{C}_{N, \mathcal{G}}$ is a $\mathcal{G}$-symmetric unitary $3$-design if and only if $\mathcal{U}_{N, \mathcal{G}}=\mathcal{U}_{N, \mathcal{Q}}$ with some subgroup $\mathcal{Q}$ of $\mathcal{P}_N$. 
\end{theorem}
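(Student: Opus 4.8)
The plan is to establish the two directions separately, with the "if" direction being more constructive and the "only if" direction requiring a structural rigidity argument about commutant algebras. For the "if" direction, suppose $\mathcal{U}_{N,\mathcal{G}} = \mathcal{U}_{N,\mathcal{Q}}$ for a Pauli subgroup $\mathcal{Q} \le \mathcal{P}_N$. The key observation is that the commutant of $\mathcal{Q}$ inside the full operator algebra is spanned by the Pauli operators commuting with every element of $\mathcal{Q}$, which themselves form a subgroup $\mathcal{Q}^{\perp}$ of $\mathcal{P}_N$ (the "centralizer" in Pauli-symplectic language). So $\mathcal{U}_{N,\mathcal{Q}}$ is a unitary group acting on a tensor-product-like decomposition induced by the symplectic structure of $\mathcal{Q}$; concretely, after a Clifford change of basis one can bring $\mathcal{Q}$ to a standard form generated by some single-qubit $\mathrm{Z}$'s (giving superselection sectors) and some single-qubit $\mathrm{X}$'s or $\mathrm{Z}$'s on other qubits (giving frozen qubits), so that $\mathcal{U}_{N,\mathcal{Q}}$ becomes a direct sum/product of full unitary groups on the "free" qubits within each charge sector. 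First I would reduce to this normal form using the fact that the Clifford group acts transitively enough on Pauli subgroups of a given type. Then $\mathcal{C}_{N,\mathcal{Q}}$ correspondingly becomes (a group containing) the product of the ordinary multiqubit Clifford groups on the free qubits in each sector, and since the ordinary multiqubit Clifford group is a 3-design, the twirl $\Phi_{3,\mathcal{C}_{N,\mathcal{Q}}}$ matches $\Phi_{3,\mathcal{U}_{N,\mathcal{Q}}}$ by a block-wise application of the known result. The point needing care here is that the symmetric Clifford group must be rich enough within each block — i.e. that conjugation by $\mathcal{C}_{N,\mathcal{Q}}$ realizes all of the block Clifford group and all permutations of equivalent sectors; this should follow from exhibiting explicit generators (and indeed the paper promises such a construction).

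For the "only if" direction, I would argue contrapositively: assume $\mathcal{U}_{N,\mathcal{G}}$ is not of the form $\mathcal{U}_{N,\mathcal{Q}}$ for any Pauli subgroup $\mathcal{Q}$, and show $\mathcal{C}_{N,\mathcal{G}}$ fails to be a symmetric 3-design. The natural strategy is to compare commutants on the $t=3$ (or even $t=2$) level: a subgroup $\mathcal{X}$ is a symmetric $t$-design exactly when the algebra $\mathrm{Comm}(\mathcal{X}^{\otimes t})$ coincides with $\mathrm{Comm}(\mathcal{U}_{N,\mathcal{G}}^{\otimes t})$ (the images of the projectors $\Phi_{t,\mathcal{X}}$ and $\Phi_{t,\mathcal{U}_{N,\mathcal{G}}}$ agree iff these commutants agree). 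Now $\mathrm{Comm}(\mathcal{C}_{N,\mathcal{G}}^{\otimes t}) \supseteq \mathrm{Comm}(\mathcal{U}_{N,\mathcal{G}}^{\otimes t})$ always, so the task is to produce an operator in the former not in the latter whenever $\mathcal{G}$ is not Pauli-like. The key structural input is that $\mathcal{C}_{N,\mathcal{G}}$ is always contained in $\mathcal{C}_{N,\mathcal{Q}_0}$ where $\mathcal{Q}_0 := \mathcal{P}_N \cap \mathcal{U}_{N,\mathcal{G}}'$... more precisely one sets $\mathcal{Q}$ to be the Pauli operators lying in the commutant-generated structure of $\mathcal{C}_{N,\mathcal{G}}$, and observes $\mathcal{U}_{N,\mathcal{G}} \subseteq \mathcal{U}_{N,\mathcal{Q}}$ with equality precisely in the Pauli case; when the inclusion is strict, $\mathcal{U}_{N,\mathcal{Q}}$ is strictly larger, its $3$-fold commutant is strictly smaller than that of $\mathcal{U}_{N,\mathcal{G}}$, yet $\mathcal{C}_{N,\mathcal{G}} \subseteq \mathcal{C}_{N,\mathcal{Q}}$ forces $\mathrm{Comm}(\mathcal{C}_{N,\mathcal{G}}^{\otimes 3}) \supseteq \mathrm{Comm}(\mathcal{C}_{N,\mathcal{Q}}^{\otimes 3}) = \mathrm{Comm}(\mathcal{U}_{N,\mathcal{Q}}^{\otimes 3}) \subsetneq \mathrm{Comm}(\mathcal{U}_{N,\mathcal{G}}^{\otimes 3})$ — wait, that inclusion goes the wrong way, so instead one must directly exhibit the extra commutant element.

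The main obstacle, and the part I expect to require the most work, is precisely this last point: showing that when the symmetry is genuinely non-Pauli, the symmetric Clifford group is "too small" — its third moment commutant strictly exceeds that of the symmetric unitary group. The cleanest route is probably to show that $\mathcal{C}_{N,\mathcal{G}}$ is actually contained in $\mathcal{C}_{N,\mathcal{Q}^\star}$ for the largest Pauli subgroup $\mathcal{Q}^\star$ with $\mathcal{C}_{N,\mathcal{G}} \subseteq \mathcal{U}_{N,\mathcal{Q}^\star}$, and then analyze when $\mathcal{U}_{N,\mathcal{Q}^\star}$ is strictly larger than $\mathcal{U}_{N,\mathcal{G}}$: in that case there is a unitary symmetric operator that is "forgotten" by the discrete Clifford structure, and one leverages the known failure of the multiqubit Clifford group to be a $4$-design — or rather a tailored obstruction at level $3$ — restricted to the relevant block. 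I would also need the lemma that every Clifford operator commuting with all of $\mathcal{G}$ must permute/preserve the Pauli operators that commute with $\mathcal{U}_{N,\mathcal{G}}$, so that $\mathcal{C}_{N,\mathcal{G}}$ genuinely lives inside a Pauli-symmetric Clifford group; establishing this containment carefully, and identifying exactly which Pauli subgroup, is the technical heart of the argument. Handling the $t=3$ commutant computation concretely — using the known basis of $\mathrm{Comm}(\mathcal{U}_N^{\otimes 3})$ in terms of permutations and stochastic Lagrangian subspaces à la Webb and Zhu — and intersecting it with the symmetry constraints is where the bulk of the calculation lies, but it is routine given that machinery.
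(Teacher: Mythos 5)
Your proposal has the right skeleton --- reduce a Pauli subgroup to a standard form by Clifford conjugation for the ``if'' direction, and extract a candidate Pauli subgroup $\mathcal{Q}$ from the Pauli support of $\mathcal{G}$ for the ``only if'' direction --- but both halves contain a gap at exactly the point where the real work lies.

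For the ``if'' direction, the claim that after normalization $\mathcal{C}_{N,\mathcal{Q}}$ ``becomes (a group containing) the product of the ordinary multiqubit Clifford groups on the free qubits in each sector'' is false, and the block-wise application of the unconstrained $3$-design result does not go through as stated. Already for $\mathcal{Q}=\{\mathrm{I},\mathrm{Z}_1\}$ the block-diagonal unitary $\ket{0}\bra{0}\otimes V_0+\ket{1}\bra{1}\otimes V_1$ with independent Cliffords $V_0,V_1$ is generally not a Clifford operator on $N$ qubits (e.g.\ the controlled-Hadamard is not Clifford); the symmetric Clifford group only realizes the \emph{same} Clifford $V$ in every charge sector, dressed by sector-dependent Pauli corrections and phases (this is precisely the content of the paper's Theorem~2). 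Consequently the symmetric Haar twirl, which uses independent Haar unitaries in each sector, must be matched by a group whose sectors are correlated, and one has to verify separately that (i) the diagonal blocks still receive the full Clifford $3$-twirl, (ii) the off-diagonal blocks in the superselection registers are killed (the paper does this with mixtures of $\mathrm{S}$ and $\mathrm{CZ}$ gates), and (iii) the residual cross-copy permutation structure is generated (the paper does this with mixtures of controlled-Pauli gates, which produce the needed SWAPs between copies). None of this is a routine block-wise reduction.

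For the ``only if'' direction, you correctly sense that the inclusion chain you wrote goes the wrong way, but you do not repair it, and the route you then sketch (containment in $\mathcal{C}_{N,\mathcal{Q}^\star}$ plus ``a tailored obstruction at level 3'') does not actually use the $3$-design hypothesis in a way that closes the argument. The missing mechanism is the following. For each $G\in\mathcal{G}$ form $L=\sum_{P}\gamma_P(G)\,P^{\otimes 3}$ with $\gamma_P(G)=\mathrm{tr}(GP)/2^N$; since Clifford conjugation permutes the unsigned Paulis up to signs $s_P=\pm1$ and $s_P^3=s_P$, this $L$ is automatically fixed by $U^{\otimes3}(\cdot)U^{\dagger\otimes3}$ for every $U\in\mathcal{C}_{N,\mathcal{G}}$. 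The $3$-design hypothesis then upgrades this to invariance under all $U\in\mathcal{U}_{N,\mathcal{G}}$. Finally, a rigidity argument shows each Pauli term is individually fixed: writing $\alpha_{P,P'}(U)=\mathrm{tr}(UPU^\dagger P')/2^N$, the invariance of $L$ together with $\sum_{P'}|\alpha_{P,P'}(U)|^2=1$ and $|\alpha|^3\le|\alpha|^2$ forces $\alpha_{P,P'}(U)\in\{0,\pm1\}$ whenever $\gamma_P(G)\neq0$; continuity of $U\mapsto\alpha_{P,P'}(U)$ and connectedness of $\mathcal{U}_{N,\mathcal{G}}$ then force $\alpha_{P,P'}(U)=\delta_{P,P'}$, i.e.\ $UPU^\dagger=P$ for all $U\in\mathcal{U}_{N,\mathcal{G}}$, which gives $\mathcal{U}_{N,\mathcal{G}}\subset\mathcal{U}_{N,\mathcal{Q}}$. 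Without this construction-plus-discreteness-plus-connectedness step (or an equivalent), your commutant-comparison framework identifies what must be shown but does not show it.
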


This theorem provides a guarantee that, under a Pauli symmetry, the symmetric Clifford group maintains its pseudorandomness, which is applicable to various quantum information processing tasks~\cite{divincenzo2002quantum, ambainis2007quantum, matthews2009distinguishability, bouland2018quantum, boixo2018characterizing, arute2019quantum}. 
Moreover, it is remarkable that this theorem also states that the Clifford group maintains the pseudorandomness under symmetry \textit{only if} the symmetry can be characterized by some Pauli subgroup. 
We note that symmetry constraints can be captured by $\mathcal{U}_{N, \mathcal{G}}$ without using $\mathcal{G}$ itself, because when two subgroups $\mathcal{G}$ and $\mathcal{G}'$ of $\mathcal{U}_N$ satisfy $\mathcal{U}_{N, \mathcal{G}}=\mathcal{U}_{N, \mathcal{G}'}$, the $\mathcal{G}$- and $\mathcal{G}'$-symmetric Clifford groups are identical to each other, and moreover the notions of $\mathcal{G}$- and $\mathcal{G}'$-symmetric unitary designs are the same. 
We do not directly present the condition for $\mathcal{G}$ itself, because there are cases when $\mathcal{G}\neq\mathcal{G}'$ but $\mathcal{U}_{N, \mathcal{G}}=\mathcal{U}_{N, \mathcal{G}'}$, for example when $\mathcal{G}=\{\mathrm{I}, \mathrm{Z}\}$ and $\mathcal{G}'=\{e^{i\theta\mathrm{Z}} | \theta\in\mathbb{R}\}$.

We illustrate how we can use this theorem to know whether symmetric Clifford groups are symmetric unitary $3$-designs by taking the following three physically important examples: 
\begin{align}
	&\mathcal{G}=\left\{\mathrm{I}^{\otimes N}, \mathrm{Z}^{\otimes N}\right\}, \label{eq:example1}\\
	&\mathcal{G}=\left\{\left(e^{i\theta\mathrm{Z}}\right)^{\otimes N}\ \middle|\ \theta\in\mathbb{R}\right\}, \label{eq:example2}\\
	&\mathcal{G}=\left\{\left(e^{i(\theta_\mathrm{X}\mathrm{X}+\theta_\mathrm{Y}\mathrm{Y}+\theta_\mathrm{Z}\mathrm{Z})}\right)^{\otimes N}\ \middle|\ \theta_\mathrm{X}, \theta_\mathrm{Y}, \theta_\mathrm{Z}\in\mathbb{R}\right\}. \label{eq:example3}
\end{align}
These groups are isomorphic to $\mathbb{Z}_2$, $\mathrm{U}(1)$ and $\mathrm{SU}(2)$, respectively, which appear ubiquitously in quantum systems; first-principles description of electronic structures, atomic, molecular, and optical physics, quantum spin systems, and lattice gauge theory, to name a few.
When $\mathcal{G}$ is given by Eq.~\eqref{eq:example1}, $\mathcal{G}$-symmetric Clifford group is a $\mathcal{G}$-symmetric unitary $3$-design, because $\mathcal{G}$ itself is a Pauli subgroup. 
In contrast, when $\mathcal{G}$ is given by Eq.~\eqref{eq:example2} or \eqref{eq:example3} with $N\geq 2$, $\mathcal{G}$-symmetric Clifford group is \textit{not} a $\mathcal{G}$-symmetric unitary $3$-design, because in these cases $\mathcal{U}_{N, \mathcal{G}}$ cannot be expressed as $\mathcal{U}_{N, \mathcal{Q}}$ with any Pauli subgroups $\mathcal{Q}$. 
We note that when $\mathcal{G}$ is given by Eq.~\eqref{eq:example2} or \eqref{eq:example3} with $N=1$, $\mathcal{G}$-symmetric Clifford group is again a $\mathcal{G}$-symmetric unitary $3$-design. 
In fact, $\mathcal{G}$ itself is not a Pauli subgroup, but $\mathcal{U}_{N, \mathcal{G}}$ can be expressed as $\mathcal{U}_{N, \mathcal{Q}}$ with a Pauli subgroup $\mathcal{Q}=\{\mathrm{I}, \mathrm{Z}\}$ or $\mathcal{P}_1$.

We can prove that there is no Pauli subgroup $\mathcal{Q}$ such that $\mathcal{U}_{N, \mathcal{G}}=\mathcal{U}_{N, \mathcal{Q}}$ when $\mathcal{G}$ is given by Eq.~\eqref{eq:example2} or \eqref{eq:example3} with $N\geq 2$ as follows: 
First, we suppose that $\mathcal{U}_{N, \mathcal{G}}=\mathcal{U}_{N, \mathcal{Q}}$ with some Pauli subgroup $\mathcal{Q}$. 
Second, we note that we always have $\mathcal{Q}\subset\mathcal{U}_{N, \mathcal{U}_{N, Q}}$. 
Third, the qubit permutation group $\mathcal{S}$ satisfies $\mathcal{S}\subset\mathcal{U}_{N, \mathcal{G}}$, which implies that $\mathcal{U}_{N, \mathcal{U}_{N, G}}\subset\mathcal{U}_{N, \mathcal{S}}$. 
By these three relations, we get $\mathcal{Q}\subset\mathcal{U}_{N, \mathcal{U}_{N, \mathcal{Q}}}=\mathcal{U}_{N, \mathcal{U}_{N, \mathcal{G}}}\subset\mathcal{U}_{N, \mathcal{S}}$. 
Combined with $\mathcal{Q}\subset\mathcal{P}_N$, this implies that $\mathcal{Q}\subset\{\pm 1, \pm i\}\cdot\{\mathrm{I}^{\otimes N}, \mathrm{X}^{\otimes N}, \mathrm{Y}^{\otimes N}, \mathrm{Z}^{\otimes N}\}$. 
Then, we have $\mathrm{X}_{(1)}\mathrm{X}_{(2)}\in\mathcal{U}_{N, \mathcal{Q}}=\mathcal{U}_{N, \mathcal{G}}$, where $\mathrm{X}_{(j)}$ is the Pauli X operator on the $j$th qubit. 
However, this contradicts with Eq.~\eqref{eq:example2} as well as with Eq.~\eqref{eq:example3}.

We emphasize that we can completely characterize the randomness of the examples in terms of unitary designs, i.e., we can clarify the maximal $t$ such that $\mathcal{C}_{N, \mathcal{G}}$ is a $\mathcal{G}$-symmetric unitary $t$-design.
In fact, as expected from the non-symmetric case, we can prove the no-go theorem for $\mathcal{G}$-symmetric unitary 4-designs except for the most constrained case of $\mathcal{U}_{N, \mathcal{G}} = \{e^{i\theta}I|\theta\in\mathbb{R}\}$, which we will describe in Theorem~\ref{thm:4design}; generally we have $t_\mathrm{max} = 3$ for Pauli symmetry. 
On the other hand, when $\mathcal{G}$ is given by Eq.~\eqref{eq:example2} or \eqref{eq:example3} with $N\geq 2$, we get $t_{\rm max}=1$, which we will describe in Theorem~\ref{thm:1design}. 
Note that the single-qubit case is special since we have $t_{\rm max}=3, \infty$ for Eq.~\eqref{eq:example2} and \eqref{eq:example3}, respectively. 
This is because the symmetry constraint can be written by Pauli subgroup $\{\mathrm{I}, \mathrm{Z}\}$ in the case of Eq.~\eqref{eq:example2}, and the $\mathcal{G}$-symmetric Clifford operators are restricted to the identity operator up to phase in the case of Eq.~\eqref{eq:example3}. 
We finally remark that we cannot increase $t_\mathrm{max}$ by considering a nonuniform mixture in the definition of unitary designs, which we show in Appendix~\ref{SMsec:weighted_unitary_design}.

While we guide readers to Appendix~\ref{SMsec:3design} for details on the derivation, it is informative to provide a brief sketch on the proof.
In the proof of the ``if'' part, it is sufficient to show that $\mathcal{C}_{N, \mathcal{Q}}$ is a $\mathcal{Q}$-symmetric unitary $3$-design for all Pauli subgroups $\mathcal{Q}$.
We explicitly construct a map $\mathcal{D}$ with a certain class of symmetric Clifford operators and show that the twirling channels satisfy $\Phi_{3, \mathcal{C}_{N, \mathcal{G}}}=\Phi_{3, \mathcal{U}_{N, \mathcal{G}}}=\mathcal{D}$ by considering the fixed-points of $\Phi_{3, \mathcal{C}_{N, \mathcal{G}}}$ and $\Phi_{3, \mathcal{U}_{N, \mathcal{G}}}$. 
We emphasize that the nontrivial and technical contribution of Theorem~\ref{thm:main} resides in the ``only if" part.
Namely, if $\mathcal{C}_{N, \mathcal{G}}$ is a $\mathcal{G}$-symmetric unitary $3$-design, then there exists a Pauli subgroup $\mathcal{Q}$ such that $\mathcal{U}_{N, \mathcal{G}}=\mathcal{U}_{N, \mathcal{Q}}$. 
Concretely, we construct $\mathcal{Q}$ as the group generated by the set $\mathcal{Q}':=\{Q\in\{\mathrm{I}, \mathrm{X}, \mathrm{Y}, \mathrm{Z}\}^{\otimes N}|\exists G\in\mathcal{G}\ \mathrm{s.t.}\ \mathrm{tr}(GQ)\neq 0\}$, where $\mathcal{A}^{\otimes n}:=\mathcal{A}\otimes\mathcal{A}^{\otimes n-1}$ and $\mathcal{A}\otimes \mathcal{B}:=\{A\otimes B | A\in\mathcal{A}, B\in\mathcal{B}\}$ for general operator sets $\mathcal{A}$ and $\mathcal{B}$. 
The inclusion $\mathcal{U}_{N,\mathcal{G}} \supset \mathcal{U}_{N, \mathcal{Q}}$ directly follows from $\mathrm{span}(\mathcal{G})\subset\mathrm{span}(\mathcal{Q})$, because for any $G\in\mathcal{G}$, every Pauli basis in $G$ with a nonzero coefficient is included in $\mathcal{Q}$. 
However, the proof of the inverse inclusion $\mathcal{U}_{N,\mathcal{G}} \subset \mathcal{U}_{N, \mathcal{Q}}$ requires some technical lemmas (see Appendix~\ref{SMsubsec:3design_only_if}). 
We consider the function $U\mapsto UQU^\dagger$ from  $\mathcal{U}_{N, \mathcal{G}}$ to $\mathcal{U}_N$ for arbitrary taken $Q\in\mathcal{Q}'$ and show that it is a constant function.

\subsection{Construction of symmetric Clifford groups} \label{subsec:construction}

From the viewpoint of algorithms and experiments, it is crucial to give an explicit construction for the symmetric Clifford operators. 
In fact, for a Pauli symmetry, we show that the set of symmetric Clifford operators considered in the proof of Theorem~\ref{thm:main} actually form a complete and unique expression of the symmetric Clifford operators (see Fig.~\ref{fig:circuit} (a)).
We will later discuss the case for non-Pauli symmetry.
This is a symmetric extension of the result in Refs.~\cite{selinger2015generators, bravyi2021hadamard}, where they showed that the standard Clifford operators can be uniquely decomposed by elementary gate sets.

As a preparation for stating the theorem, it is crucial to mention that every Pauli subgroup naturally gives a decomposition into three parts. 
Concretely, we note that any Pauli subgroup $\mathcal{Q}$ can be transformed into the form 
\begin{align}
	\mathcal{R}:=\mathcal{P}_0\{\mathrm{I}, \mathrm{X}, \mathrm{Y}, \mathrm{Z}\}^{\otimes N_1}\otimes\{\mathrm{I}, \mathrm{Z}\}^{\otimes N_2}\otimes\{\mathrm{I}\}^{\otimes N_3} \label{eq:standard_Pauli}
\end{align}
by some Clifford conjugation action up to phase, i.e., $\mathcal{P}_0 W\mathcal{Q}W^\dag=\mathcal{R}$ with some $W\in\mathcal{C}_N$, where $\mathcal{P}_0:=\{\pm 1, \pm i\}$. 
We denote the subsystem of $N_k$ qubits by $\mathrm{A}_k~(k=1, 2, 3)$ and the set of indices representing the qubits in $\mathrm{A}_k$ by $\Gamma_k$. 
We can get $N_1$, $N_2$, $N_3$, and $W$ by considering the following two types of induction processes. 
Let $\mathcal{Q}$ be a Pauli subgroup on $n$ qubits. 
The process is to take a Pauli subgroup $\mathcal{Q}'$ on $n-1$ qubits such that (i) $W'\mathcal{Q}W'^\dag=\{\mathrm{I}, \mathrm{X}, \mathrm{Y}, \mathrm{Z}\}\otimes\mathcal{Q}'$ or (ii) $W'\mathcal{Q}W'^\dag=\{\mathrm{I}, \mathrm{Z}\}\otimes\mathcal{Q}'$ up to phase with some Clifford operator $W'$. 
We can conduct the first type of induction process while $\mathcal{Q}$ has noncommutative pairs of elements, and the second type of process while $\mathcal{Q}\neq\{I\}$ up to phase, as we show in Lemma~\ref{SMlem:Pauli_subgroup_equivalence} in Appendix~\ref{SMsec:technical}. 
$N_1$ and $N_2$ are given as the numbers of the first and the second induction processes, respectively, and $N_3=N-N_1-N_2$. 
We can get the Clifford operator $W$ by taking the product of the Clifford operators $W'$ in all the induction processes. 
By using these notations, we can present the following theorem:

\begin{theorem} \label{thm:expression}
	(Complete and unique construction of the Clifford group under Pauli symmetry.)
    Let $\mathcal{Q}$ be a subgroup of $\mathcal{P}_N$. 
	Then, there exists some $W\in\mathcal{C}_N$ and $\mathcal{R}$ in the form of Eq.~\eqref{eq:standard_Pauli} such that $\mathcal{P}_0 W\mathcal{Q}W^\dag=\mathcal{R}$, and every $\mathcal{Q}$-symmetric Clifford operator $U$ can be uniquely expressed as Fig.~\ref{fig:circuit}(a) as
	\begin{align}
        U=
        &W^\dag\left(\mathrm{T}\prod_{j\in\Gamma_2} \mathrm{C}(P_j)_{(j, \Gamma_3)}\right)V \nonumber\\
		&\ \ \ \ \ \times \left(\prod_{j, k\in\Gamma_2, j<k} \mathrm{CZ}_{(j, k)}^{\nu_{j, k}}\right)\left(\prod_{j\in\Gamma_2} \mathrm{S}_{(j)}^{\mu_j}\right) W \label{eq:expression}
	\end{align}
	with $\mu_j\in\{0 ,1, 2, 3\}$, $\nu_{j, k}\in\{0, 1\}$, $V\in\mathcal{C}_{N_3}$ and $P_j\in\{\mathrm{I}, \mathrm{X}, \mathrm{Y}, \mathrm{Z}\}^{\otimes N_3}$, where $\mathrm{S}_{(j)}$ is the S gate on the $j$th qubit, $\mathrm{CZ}_{(j, k)}$ is the controlled-Z gate on the $j$th and $k$th qubit, $V$ acts on the subsystem $\mathrm{A}_3$, and $\mathrm{C}(P_j)_{(j, \Gamma_3)}$ is the controlled-$P_j$ gates with the $j$th qubit as the control qubit and the qubits in the subsystem $\mathrm{A}_3$ as the target qubits, and $\mathrm{T}\prod$ means the ordered product, i.e., $\mathrm{T}\prod_{j=1}^n O_j:=O_n\cdots O_2 O_1$. 
\end{theorem}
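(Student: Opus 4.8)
\emph{Proof idea.}
The plan is to proceed in three stages: a symplectic normal-form step producing $W$ and $\mathcal{R}$; a reduction removing the subsystem $\mathrm{A}_1$; and a layer-by-layer decomposition of what remains on $\mathrm{A}_2\cup\mathrm{A}_3$. For the normal form, pass to the quotient $\mathcal{P}_N/\mathcal{P}_0\cong\mathbb{F}_2^{2N}$ equipped with the symplectic form that records (anti)commutation of Paulis; then $\mathcal{Q}$ corresponds to a subspace $V$. Let $V_0=V\cap V^{\perp}$ be its radical. The form is nondegenerate on $V/V_0$, so $\dim(V/V_0)=2N_1$ is even, and one may choose a complement $V_1\subset V$ with $V=V_0\oplus V_1$, $V_1$ nondegenerate and $V_0$ isotropic. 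Transitivity of the symplectic (Clifford) action on nondegenerate subspaces of a fixed dimension yields $W_1\in\mathcal{C}_N$ carrying $V_1$ onto the standard Pauli space of a set $\Gamma_1$ of $N_1$ qubits; then $W_1(V_0)$ is an isotropic subspace of the symplectic complement, i.e.\ of the Pauli space on the remaining qubits, and a further Clifford $W_2$ supported there carries it onto $\mathrm{span}(\mathrm{Z}_j:j\in\Gamma_2)$ for a set $\Gamma_2$ of $N_2=\dim V_0$ qubits. With $W=W_2W_1$ and $N_3=N-N_1-N_2$ one obtains $\mathcal{P}_0 W\mathcal{Q}W^\dagger=\mathcal{R}$; this stage is standard symplectic linear algebra.

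Conjugating by $W$, it then suffices to characterize an $\mathcal{R}$-symmetric Clifford operator $U$. Because $\mathcal{R}$ contains every Pauli supported on $\mathrm{A}_1$, $U$ commutes with the entire matrix algebra on $\mathrm{A}_1$, so $U=\mathrm{I}_{\mathrm{A}_1}\otimes U_{23}$ with $U_{23}\in\mathcal{C}_{N_2+N_3}$ (up to a global phase), and the only remaining constraint is $[U_{23},\mathrm{Z}_j]=0$ for every $j\in\Gamma_2$. Hence the stabilizer tableau of $U_{23}$ fixes each $\mathrm{Z}_j$, maps $\mathrm{X}_j$ to $\pm$ (a single-qubit $\mathrm{X}$ or $\mathrm{Y}$ on $j$) times (a $\mathrm{Z}$-string on $\Gamma_2\setminus\{j\}$) times (an arbitrary Pauli $R_j$ on $\mathrm{A}_3$), and maps the generators of $\mathrm{A}_3$ to (a $\mathrm{Z}$-string on $\Gamma_2$) times (a Pauli on $\mathrm{A}_3$).

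Now peel off layers of $U_{23}$ from the left. Since a controlled-$P$ gate with control qubit $j$ commutes with $\mathrm{X}_{j'}$ for $j'\neq j$, one can choose $P_{j,k}\in\{\mathrm{I},\mathrm{X},\mathrm{Y},\mathrm{Z}\}$ --- for each $j$ independently, by demanding $\bigotimes_{k\in\Gamma_3}P_{j,k}=R_j$ --- so that $C:=\mathrm{T}\prod_{j\in\Gamma_2}\prod_{k\in\Gamma_3}\mathrm{C}(P_{j,k})_{j,k}$ conjugates $\mathrm{X}_j$ to $\mathrm{X}_j$ times precisely that payload $R_j$. Then $\tilde{U}:=C^\dagger U_{23}$ sends every $\mathrm{X}_j$ to an $\mathrm{A}_2$-supported Pauli; demanding that this commute with $\tilde U\mathrm{Z}_{j'}\tilde U^\dagger=\mathrm{Z}_{j'}$ and with the now $\mathrm{A}_2$-supported image of $\mathrm{X}_{j'}$ forces the $\mathrm{A}_3$-generators under $\tilde{U}$ to be $\mathrm{A}_3$-supported too, so the tableau of $\tilde{U}$ splits into blocks on $\mathrm{A}_2$ and on $\mathrm{A}_3$. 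As $C$ is itself $\mathcal{R}$-symmetric, so is $\tilde{U}$, and a split tableau means $\tilde{U}=U_2\otimes V$ with $V\in\mathcal{C}_{N_3}$ and $U_2\in\mathcal{C}_{N_2}$ commuting with all $\mathrm{Z}_j$, i.e.\ diagonal. The diagonal Clifford group on $N_2$ qubits is generated by the $\mathrm{S}_j$ and $\mathrm{CZ}_{j,k}$, and because $\mathrm{S}_j^2=\mathrm{Z}_j$ it is parametrized uniquely, modulo phase, as $U_2=\bigl(\prod_{j<k}\mathrm{CZ}_{j,k}^{\nu_{j,k}}\bigr)\bigl(\prod_j\mathrm{S}_j^{\mu_j}\bigr)$ with $\mu_j\in\{0,1,2,3\}$ and $\nu_{j,k}\in\{0,1\}$, the four powers of $\mathrm{S}_j$ accounting for the four possibilities $\pm\mathrm{X}_j,\pm\mathrm{Y}_j$ for the image of $\mathrm{X}_j$. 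Multiplying back gives $U_{23}=C\,V\,(\prod_{j<k}\mathrm{CZ}_{j,k}^{\nu_{j,k}})(\prod_j\mathrm{S}_j^{\mu_j})$, i.e.\ Eq.~\eqref{eq:expression}. Uniqueness then follows because the assignment $(P_{j,k})\mapsto(R_j)$, $R_j=\bigotimes_k P_{j,k}$, is a bijection, so the $P_{j,k}$ are recovered from the tableau of $U$; peeling $C$ off fixes $V$ from the $\mathrm{A}_3$-block and $(\mu_j,\nu_{j,k})$ from the $\mathrm{A}_2$-block together with the signs.

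The main obstacle is the layered stage. One must check that a single controlled-$P$ layer, written with the fixed ordering $\mathrm{T}\prod$, already suffices, while correctly bookkeeping: (i) the Pauli signs, which are absorbed into the power $\mu_j$ of $\mathrm{S}_j$; (ii) the $\mathrm{Z}$-strings on $\Gamma_2$ that are generated whenever two controlled-$P$ gates sharing a target qubit are reordered, which are absorbed into the $\mathrm{CZ}_{j,k}$ layer --- this is precisely why the ordered product cannot be relaxed and why uniqueness requires fixing that order; and (iii) the consistency, enforced by the symplectic form, between the $\mathrm{A}_3$-payloads $R_j$ of the $\mathrm{X}_j$ and the $\Gamma_2$-supports attached to the $\mathrm{A}_3$-generators, which guarantees that cancelling the former simultaneously cancels the latter. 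A secondary check is the (standard) claim that the $(\mu_j,\nu_{j,k})$ parametrization of diagonal Cliffords is both exhaustive and free, which also matches $|\mathcal{C}_{N,\mathcal{Q}}|$ modulo phase.
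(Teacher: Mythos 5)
Your proposal is correct, and it reaches the same circuit decomposition as the paper, but it is organized around different key lemmas, so a comparison is worth recording. For the normal form $\mathcal{P}_0W\mathcal{Q}W^\dagger=\mathcal{R}$ the paper runs an explicit induction on the number of qubits, building $W$ from $\mathrm{H}$, $\mathrm{S}$, $\mathrm{CNOT}$ and $\mathrm{SWAP}$ gates (Lemmas on mapping a Pauli to $\mathrm{Z}_1$, and an anticommuting pair to $\mathrm{Z}_1,\mathrm{X}_1$); you instead invoke the symplectic picture over $\mathbb{F}_2^{2N}$, split off the radical, and use transitivity of the Clifford action --- equivalent, more structural, but it leaves the actual $W$ implicit. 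For completeness the paper performs a single induction over $j\in\Gamma_2$ in the Heisenberg picture, at each step choosing $\mathrm{S}^{\mu}$, a Clifford $T'$ on $\mathrm{A}_3$, $\mathrm{CZ}$'s and a controlled-$\mathrm{Z}$-string to fix $\mathrm{X}_{(2,j)}$, and then commuting $T'$ leftward through the earlier controlled-Pauli gates (which conjugates their payloads $Q_j\mapsto T'Q_jT'^\dagger$); you instead read the whole tableau at once, strip the controlled-Pauli layer in one shot, observe that commutation with the now-$\mathrm{A}_2$-supported images of $\mathrm{X}_j$ forces the $\mathrm{A}_3$ generators to stay in $\mathrm{A}_3$, and then appeal to the classification of diagonal Cliffords as freely parametrized (mod phase) by $\mathrm{S}_j^{\mu_j}$ and $\mathrm{CZ}_{j,k}^{\nu_{j,k}}$. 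Your route is cleaner and makes the origin of each layer transparent --- in particular your observation that conjugating by gates with control $j'\neq j$ only adds $\mathrm{Z}$-strings on $\Gamma_2$, so the extra terms are absorbed into the $\mathrm{CZ}$ layer, is exactly the point the paper handles via its ordered-product bookkeeping --- but it leans on two standard facts the paper proves or avoids from scratch (the symplectic transitivity and the diagonal-Clifford classification); if you were writing this up self-contained you would need to supply both. For uniqueness the paper evaluates both candidate decompositions on computational basis states of $\mathrm{A}_2$ tensored with arbitrary states of $\mathrm{A}_3$, which also pins the phases; your tableau-recovery argument (the $\mathrm{A}_3$-component of $U\mathrm{X}_jU^\dagger$ is exactly $\bigotimes_kP_{j,k}$, after which $V$ and $(\mu_j,\nu_{j,k})$ are read off from $C^\dagger U$, with the global phase absorbed into $V\in\mathcal{C}_{N_3}$) is equally valid and arguably more illuminating about why the parametrization is free.
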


\begin{figure}[t]
\begin{center}
\includegraphics[width=\columnwidth]{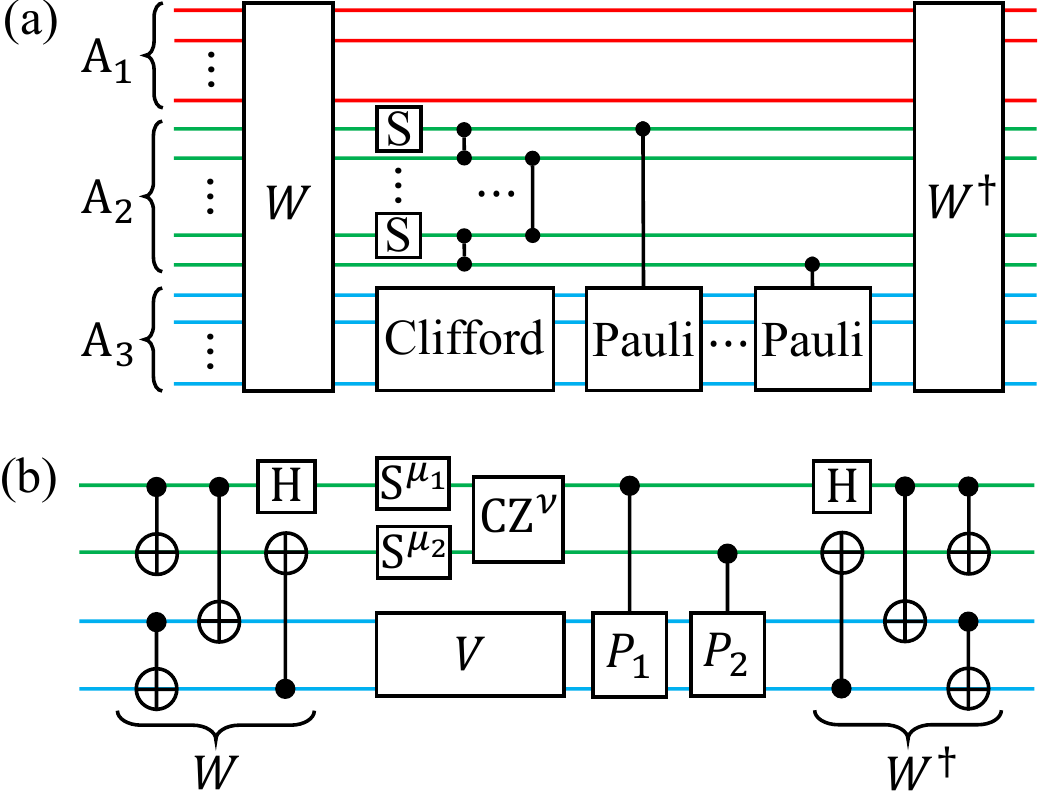}
\caption{
Circuit representation of $\mathcal{Q}$-symmetric Clifford operators presented in Theorem~\ref{thm:expression}. 
(a) Complete and unique expression for $\mathcal{Q}$-symmetric Clifford operators in a general case. 
(b) Example for the symmetry $\mathcal{Q}=\{\mathrm{I}^{\otimes 4}, \mathrm{X}^{\otimes 4}, \mathrm{Y}^{\otimes 4}, \mathrm{Z}^{\otimes 4}\}$. Every $\mathcal{Q}$-symmetric Clifford operator can be uniquely expressed with $\mu_j\in\{0, 1, 2, 3\}$, $\nu\in\{0, 1\}$, $V\in\mathcal{C}_2$ and $P_j\in\{\mathrm{I}, \mathrm{X}, \mathrm{Y}, \mathrm{Z}\}^{\otimes 2}$. 
}
\label{fig:circuit}
\end{center}
\end{figure}

The complete and unique expression by Eq.~\eqref{eq:expression} gives an efficient way to generate all the elements of a $\mathcal{Q}$-symmetric Clifford group. 
In fact, we can understand that it is much more efficient than choosing symmetric elements from the entire Clifford group. 
Namely, the size of the quotient group of the symmetric Clifford group $\mathcal{C}_{N, \mathcal{G}}$ divided by the freedom of phase $\mathcal{U}_0:=\{e^{i\theta}|\theta\in\mathbb{R}\}$ is given by 
\begin{align}
	|\mathcal{C}_{N, \mathcal{G}}/\mathcal{U}_0|
	=&4^{N_2}\cdot 2^{N_2(N_2-1)/2}\cdot |\mathcal{C}_{N_3}/\mathcal{U}_0|\cdot \left(4^{N_3}\right)^{N_2} \nonumber\\
	\sim&2^{2(N_2/2+N_3)^2+3(N_2/2+N_3)}, \label{eq:symm_clifford_group_size}
\end{align}
where we used the fact that there are $4$, $2$, $|\mathcal{C}_{N_3}/\mathcal{U}_0|$ and $4^{N_3}$ choices for each $\mu_j$, $\nu_{j, k}$, $V$, and $P_j$, respectively, and $|\mathcal{C}_{N_3}/\mathcal{U}_0|\sim 2^{2N_3^2+3N_3}$~\cite{Calderbank1998}.
This is much smaller than the size $|\mathcal{C}_N/\mathcal{U}_0|\sim 2^{2N^2+3N}$ of the entire Clifford group. 
The reduction rate is exponential with $N$ in a standard setup where $N_1$ and $N_2$ are $O(1)$~\cite{bravyi2017tapering, setia2020reducing}, which highlights the significance of the explicit construction of symmetric Clifford operators. 
We can see that each qubit in $\mathrm{A}_1$, $\mathrm{A}_2$ and $\mathrm{A}_3$ contributes as $0$, $1/2$ and $1$ qubit in the estimation of the size $|\mathcal{C}_{N, \mathcal{G}}/\mathcal{U}_0|$. 
When we ignore the phase degree of freedom, the size $|\mathcal{C}_{N, \mathcal{G}}/\mathcal{U}_0|$ of the $\mathcal{G}$-symmetric Clifford group on $N$ qubits is almost the same as the size $|\mathcal{C}_{N_2/2+N_3}/\mathcal{U}_0|$ of the entire Clifford group on $N_2/2+N_3$ qubits.

We illustrate the construction of Pauli-symmetric Clifford operators by taking the symmetry $\mathcal{Q}=\{\mathrm{I}^{\otimes 4}, \mathrm{X}^{\otimes 4}, \mathrm{Y}^{\otimes 4}, \mathrm{Z}^{\otimes 4}\}$ on four qubits as an example, which appears as the symmetry of the XYZ Hamiltonian with arbitrary connectivity. 
We know from Theorem~\ref{thm:expression} that every $\mathcal{Q}$-symmetric Clifford operator $U$ can be uniquely expressed as Fig.~\ref{fig:circuit} (b) by noting that $W\mathcal{Q}W^\dag=\{\mathrm{I}_{(1)}, \mathrm{Z}_{(1)}\}\otimes\{\mathrm{I}_{(2)}, \mathrm{Z}_{(2)}\}$ with $W=\mathrm{H}_{(1)}\mathrm{CNOT}_{(4, 2)}\mathrm{CNOT}_{(1, 3)}\mathrm{CNOT}_{(3, 4)}\mathrm{CNOT}_{(1, 2)}$. 
We can confirm that the symmetry constraint greatly reduces the size of the Clifford group by seeing that $|\mathcal{C}_{4, \mathcal{Q}}/\mathcal{U}_0|\sim 10^8$ and $|\mathcal{C}_4/\mathcal{U}_0|\sim 10^{13}$. 
Such a striking difference is displayed in more depth in Fig.~\ref{fig:symm_clifford_group_size}.
Here, we indeed find that the existence of Pauli symmetry leads to exponential reduction of $|\mathcal{C}_{N, \mathcal{G}}/\mathcal{U}_0|$. 
As can be seen from  Eq.~\eqref{eq:symm_clifford_group_size}, we can understand that the entire curve is shifted by $N_1 + N_2/2$ in the asymptotic limit, which gives the advantage of using the construction method presented in Theorem~\ref{thm:expression}.

While we leave the detailed proof of this theorem to Appendix~\ref{SMsec:construction}, we here provide the proof sketch. 
It is sufficient to consider the construction of $\mathcal{C}_{N, \mathcal{R}}$ with a specific class of Pauli subgroups $\mathcal{R}$ given by Eq.~\eqref{eq:standard_Pauli}, because there exists some Clifford conjugation action $W\cdot W^\dag$ that gives one-to-one correspondence from $\mathcal{C}_{N, \mathcal{Q}}$ to $\mathcal{C}_{N, \mathcal{R}}$ for general Pauli subgroups $\mathcal{Q}$. 
In the proof of the completeness, the key is to take the Heisenberg picture, i.e., to see how the conjugation action of a unitary operator transforms Pauli operators. 
For arbitrary $U\in\mathcal{C}_{N, \mathcal{R}}$, $U$ satisfies $U\mathrm{Z}_{(j)} U^\dag=\mathrm{Z}_{(j)}$ and $U\mathrm{X}_{(j)} U^\dag=\mathrm{X}_{(j)}$ for all $j\in\Gamma_1$, and $U\mathrm{Z}_{(j)} U^\dag=\mathrm{Z}_{(j)}$ for all $j\in\Gamma_2$. 
We can inductively construct $U'$ such that $U'U$ is in the form of Eq.~\eqref{eq:expression}, and $U'\mathrm{Z}_{(j)} U'^\dag=\mathrm{Z}_{(j)}$ and $U'\mathrm{X}_{(j)} U'^\dag=\mathrm{X}_{(j)}$ for all $j\in\Gamma_1$ and $j\in\Gamma_2$. 
This implies that $U'$ is a Clifford operator acting nontrivially only on the subsystem $\mathrm{A}_3$, and thus $U'$ is in the form of Eq.~\eqref{eq:expression}. 
Since $U$ can be written as $U=U'^\dag(U'U)$, and both $U'$ and $U'U$ is in the form of Eq.~\eqref{eq:expression}, we know that $U$ is in the form of Eq.~\eqref{eq:expression}. 
We prove the uniqueness by the proof by contradiction. 
Namely, we take arbitrary $U \in \mathcal{C}_{N, \mathcal{R}}$ and suppose that there are two different sets of $(\mu_j, \nu_{j, k}, V, P_{j})$ that realize $U$, and show that they must coincide with each other.

\begin{figure}[t]
\begin{center}
\includegraphics[width=1.0\columnwidth]{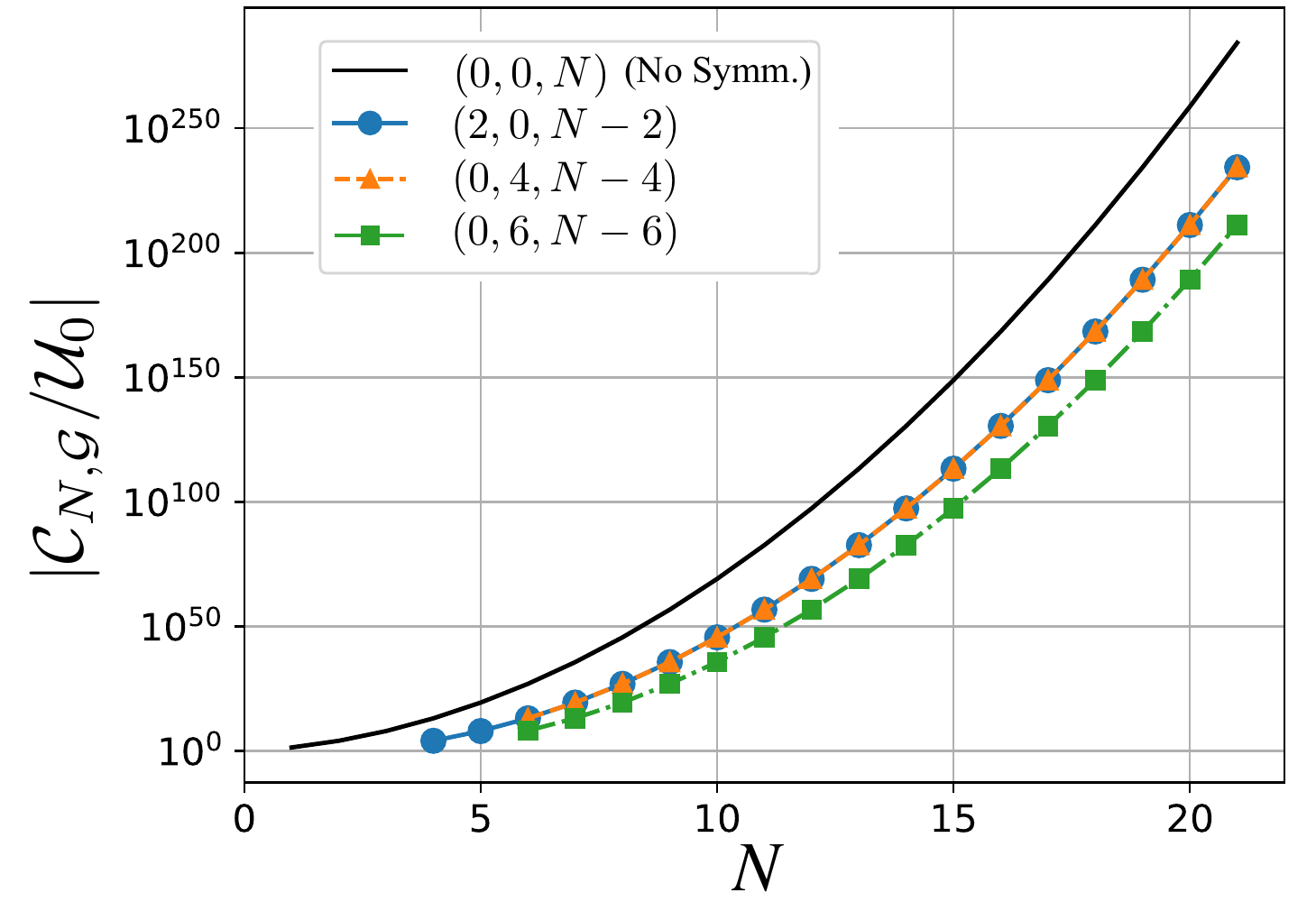}
\caption{
Size of symmetric Clifford groups $|\mathcal{C}_{N, \mathcal{G}} / \mathcal{U}_0|$ under various Pauli symmetries. Here, we show the scaling for symmetries such that the numbers of qubits in A$_1$, A$_2$, A$_3$ are given as $(N_1, N_2, N_3) = (2, 0, N-2), (0, 4, N-4),$ and $(0, 6, N-6)$. 
}
\label{fig:symm_clifford_group_size}
\end{center}
\end{figure}

\section{U(1) and SU(2)-symmetric Clifford groups}

As prominent examples of non-Pauli symmetries, we clarify the property of the $\mathcal{G}$-symmetric Clifford group when $\mathcal{G}$ is given by Eq.~\eqref{eq:example2} or \eqref{eq:example3} on multiple qubits. 
Concretely, in these cases, the $\mathcal{G}$-symmetric Clifford group $\mathcal{C}_{N, \mathcal{G}}$ is a $\mathcal{G}$-symmetric unitary $1$-design, but not a $2$-design. 
This property characterizes the randomness of the symmetric Clifford group under U(1) and SU(2) symmetries given by Eqs.~\eqref{eq:example2} and \eqref{eq:example3}. 
We rigorously present this statement as a theorem.

\begin{theorem} \label{thm:1design}
    (Randomness of $\mathrm{U}(1)$ and $\mathrm{SU}(2)$-symmetric Clifford groups.)
	Let $N\geq 2$ and $\mathcal{G}$ be given by Eq.~\eqref{eq:example2} or \eqref{eq:example3}. 
	Then, $\mathcal{C}_{N, \mathcal{G}}$ is a $\mathcal{G}$-symmetric unitary $1$-design but not a $2$-design. 
\end{theorem}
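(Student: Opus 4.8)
\emph{Proof strategy.} Recall that for any compact subgroup $\mathcal{X}$ of $\mathcal{U}_N$ the channel $\Phi_{t,\mathcal{X}}$ is the orthogonal projection, with respect to the Hilbert--Schmidt inner product, onto the commutant $(\mathcal{X}^{\otimes t})':=\{L\,|\,[L,U^{\otimes t}]=0\ \forall U\in\mathcal{X}\}$. Hence $\mathcal{C}_{N,\mathcal{G}}$ is a $\mathcal{G}$-symmetric unitary $t$-design if and only if $(\mathcal{C}_{N,\mathcal{G}}^{\otimes t})'=(\mathcal{U}_{N,\mathcal{G}}^{\otimes t})'$, and since $\mathcal{C}_{N,\mathcal{G}}\subseteq\mathcal{U}_{N,\mathcal{G}}$ the inclusion $(\mathcal{C}_{N,\mathcal{G}}^{\otimes t})'\supseteq(\mathcal{U}_{N,\mathcal{G}}^{\otimes t})'$ is automatic. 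The plan is therefore to prove the reverse inclusion for $t=1$ and to exhibit an explicit operator in the difference for $t=2$. Throughout I use that both symmetry groups of Eqs.~\eqref{eq:example2} and \eqref{eq:example3} contain $(e^{i\theta\mathrm{Z}})^{\otimes N}=e^{i\theta\sum_j\mathrm{Z}_j}$ for all $\theta\in\mathbb{R}$, so every element of $\mathcal{U}_{N,\mathcal{G}}$ commutes with $\sum_j\mathrm{Z}_j$, and that the qubit-permutation unitaries $U_\pi$ ($\pi\in S_N$) lie in $\mathcal{C}_{N,\mathcal{G}}$ for both symmetries (they are Clifford and commute with $\mathrm{SU}(2)^{\otimes N}$, hence with $\mathcal{G}$), while the single-qubit phase gates $\mathrm{S}_j$ lie in $\mathcal{C}_{N,\mathcal{G}}$ in the $\mathrm{U}(1)$ case (they are Clifford and commute with $\sum_l\mathrm{Z}_l$).

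\emph{The $1$-design part.} Let $\mathcal{M}_{\mathcal{G}}$ denote the von Neumann algebra generated by $\mathcal{U}_{N,\mathcal{G}}$; then $\mathcal{M}_{\mathcal{G}}'=(\mathcal{U}_{N,\mathcal{G}})'$, so it suffices to show that the subset $\{U_\pi\}\cup\{\mathrm{S}_j\}$ of $\mathcal{C}_{N,\mathcal{G}}$ already generates $\mathcal{M}_{\mathcal{G}}$, which forces $(\mathcal{C}_{N,\mathcal{G}})'\subseteq\{U_\pi,\mathrm{S}_j\}'=\mathcal{M}_{\mathcal{G}}'$. For $\mathrm{SU}(2)$, Schur--Weyl duality states that the algebra generated by $\{U_\pi\}$ equals the commutant of the diagonal $\mathrm{SU}(2)$-action, which is exactly $\mathcal{M}_{\mathcal{G}}$, so the permutations alone suffice. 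For $\mathrm{U}(1)$ we have $\mathcal{M}_{\mathcal{G}}=\bigoplus_{k=0}^{N}\mathrm{End}(\mathcal{H}_k)$, where $\mathcal{H}_k$ is the Hamming-weight-$k$ subspace (the eigenspaces of $\sum_j\mathrm{Z}_j$). Here I would first note that $\mathrm{S}_l$ restricted to $\mathcal{H}_k$ has the two spectral projectors onto $\{x:x_l=0\}$ and $\{x:x_l=1\}$, so the commutative algebra generated by the $\mathrm{S}_l$ on $\mathcal{H}_k$ contains $\prod_l(\text{spectral projector for }x_l=b_l)=|b\rangle\!\langle b|$ for every weight-$k$ string $b$, hence is the full diagonal algebra on $\mathcal{H}_k$. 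Conjugating these by the $U_\pi$, which act transitively on weight-$k$ strings, then produces all matrix units $|x\rangle\!\langle y|$ with $|x|=|y|$, hence all of $\bigoplus_k\mathrm{End}(\mathcal{H}_k)=\mathcal{M}_{\mathcal{G}}$.

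\emph{Failure of the $2$-design property.} The plan is to exhibit one operator in $(\mathcal{C}_{N,\mathcal{G}}^{\otimes 2})'\setminus(\mathcal{U}_{N,\mathcal{G}}^{\otimes 2})'$, namely $M:=\sum_{j=1}^{N}\mathrm{Z}_j\otimes\mathrm{Z}_j$ acting on the two-copy space $(\mathbb{C}^2)^{\otimes N}\otimes(\mathbb{C}^2)^{\otimes N}$, with $\mathrm{Z}_j$ the Pauli-$\mathrm{Z}$ on the $j$-th qubit of a copy. That $M\in(\mathcal{C}_{N,\mathcal{G}}^{\otimes 2})'$ follows from a structural fact: any $U\in\mathcal{C}_{N,\mathcal{G}}$ commutes with $\sum_j\mathrm{Z}_j$, and since $U$ is Clifford each $U\mathrm{Z}_jU^\dagger$ is a signed Pauli string; as distinct Pauli strings are linearly independent, the identity $\sum_jU\mathrm{Z}_jU^\dagger=\sum_j\mathrm{Z}_j$ forces $U\mathrm{Z}_jU^\dagger=\mathrm{Z}_{\sigma(j)}$ for a permutation $\sigma$ of $\{1,\dots,N\}$, whence $(U\otimes U)M(U\otimes U)^\dagger=\sum_j\mathrm{Z}_{\sigma(j)}\otimes\mathrm{Z}_{\sigma(j)}=M$. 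To see $M\notin(\mathcal{U}_{N,\mathcal{G}}^{\otimes 2})'$, I would test against $U_0:=e^{i\alpha}\Pi_{\mathrm{sym}}+(\mathrm{I}-\Pi_{\mathrm{sym}})$, where $\Pi_{\mathrm{sym}}$ projects onto $\mathrm{Sym}^N(\mathbb{C}^2)\subset(\mathbb{C}^2)^{\otimes N}$; since $\Pi_{\mathrm{sym}}$ commutes with $\mathrm{SU}(2)^{\otimes N}$ (and hence with its $\mathrm{U}(1)$ subgroup), $U_0\in\mathcal{U}_{N,\mathcal{G}}$ for both symmetries. Writing $M=N\,\mathrm{I}-2D$ with $D(|x\rangle|y\rangle)=d_H(x,y)\,|x\rangle|y\rangle$ the Hamming-distance operator, and evaluating on $|e_1\rangle\otimes|e_1\rangle$ with $|e_1\rangle$ the weight-one computational state with the first qubit excited, one computes $\langle e_1e_1|(U_0\otimes U_0)M(U_0\otimes U_0)^\dagger|e_1e_1\rangle=N-4+4\sum_k|\psi_k|^4$, where $\psi_k$ are the amplitudes of $U_0^\dagger|e_1\rangle=|e_1\rangle+\tfrac{e^{-i\alpha}-1}{N}\sum_k|e_k\rangle$. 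Taking e.g.\ $\alpha=\pi/2$, this vector has at least two nonvanishing amplitudes for $N\ge 2$, so $\sum_k|\psi_k|^4<1$ and the value is strictly below $N=\langle e_1e_1|M|e_1e_1\rangle$; hence $(U_0\otimes U_0)M(U_0\otimes U_0)^\dagger\ne M$. Therefore $\Phi_{2,\mathcal{C}_{N,\mathcal{G}}}(M)=M\ne\Phi_{2,\mathcal{U}_{N,\mathcal{G}}}(M)$, so $\mathcal{C}_{N,\mathcal{G}}$ is not a $\mathcal{G}$-symmetric unitary $2$-design.

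\emph{Main obstacle.} I expect the $1$-design direction to be the harder one: the $2$-design part reduces to the single explicit check above, whereas proving that $\{U_\pi,\mathrm{S}_j\}$ generates $\mathrm{span}(\mathcal{U}_{N,\mathcal{G}})$ requires the Schur--Weyl input in the $\mathrm{SU}(2)$ case and, in the $\mathrm{U}(1)$ case, the elementary but slightly delicate argument that the phase gates resolve the computational basis inside each fixed-weight sector. A secondary delicate point, on which Cliffordness is essential, is the claim that a symmetric Clifford operator merely permutes the single-qubit Pauli-$\mathrm{Z}$'s; this is the crux of the inclusion $M\in(\mathcal{C}_{N,\mathcal{G}}^{\otimes 2})'$.
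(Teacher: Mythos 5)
Your proposal is correct, and both halves take a genuinely different route from the paper's. For the $1$-design, the paper constructs an explicit Clifford conjugation mixture $\mathcal{D}$ (an average over the permutations $K_\sigma$, composed in the U(1) case with an average over $\mathrm{Z}$-gates), shows via Schur--Weyl that its image lies in $\mathrm{span}\{V^{\otimes N}\}$ (resp.\ $\mathrm{span}\{(V+\mathrm{Z}V\mathrm{Z})^{\otimes N}\}$), verifies these are fixed points of both twirls, and concludes $\Phi_{1,\mathcal{C}_{N,\mathcal{G}}}=\Phi_{1,\mathcal{C}_{N,\mathcal{G}}}\circ\mathcal{D}=\mathcal{D}=\Phi_{1,\mathcal{U}_{N,\mathcal{G}}}$; you instead use the characterization of the twirl as the orthogonal projection onto the commutant and show that the permutations and phase gates already generate the full algebra $\mathcal{G}'$. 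These are logically dual arguments resting on the same Schur--Weyl input, but yours sidesteps the fixed-point bookkeeping and is arguably cleaner. For the $2$-design, the paper proves a stronger no-go theorem for arbitrary tensor-product representations of connected Lie groups, using the witness $B=\sum_k\sum_P\beta_P^2\,P\otimes P$ tested against $e^{i\theta\,\mathrm{SWAP}_{1,2}}$ and an algebraic identity $\beta_S^2\,\mathrm{tr}([S,Q]R)=0$; your $M=\sum_j\mathrm{Z}_j\otimes\mathrm{Z}_j$ is exactly the specialization of $B$ to the generator $\mathrm{Z}_{\mathrm{tot}}$ (which both symmetries contain), and your test unitary $e^{i\alpha}\Pi_{\mathrm{sym}}+(\mathrm{I}-\Pi_{\mathrm{sym}})$ with the explicit Hamming-distance computation replaces the paper's trace gymnastics by a concrete expectation value -- less general but entirely elementary, and your derivation that a symmetric Clifford permutes the $\mathrm{Z}_j$'s is precisely the paper's own argument from its construction theorem. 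One small slip in the U(1) part of your $1$-design argument: \emph{conjugating} the diagonal projectors $|b\rangle\langle b|$ by $U_\pi$ only yields other diagonal projectors; what you need is that the generated algebra contains the \emph{products} $|x\rangle\langle x|\,U_\pi\,|y\rangle\langle y|=|x\rangle\langle y|$ for $\pi(y)=x$, which by transitivity of $\mathfrak{S}_N$ on fixed-weight strings gives all matrix units -- a one-line repair that does not affect the conclusion.
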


Let us remark that, as for $2$-designs, we can actually show no-go theorems in a more general class of symmetries, which are described as a tensor product of representations of a nontrivial connected Lie subgroup of a unitary group. 
This type of symmetry represents the conservation of the total observables on the system. 
In the proof of $1$-designs and the disproof of $2$-designs, we use the proof idea of the ``if'' part and the ``only if'' part in the proof of Theorem~\ref{thm:main}, respectively.

\begin{figure}[tb]
\begin{center}
\includegraphics[width=\columnwidth]{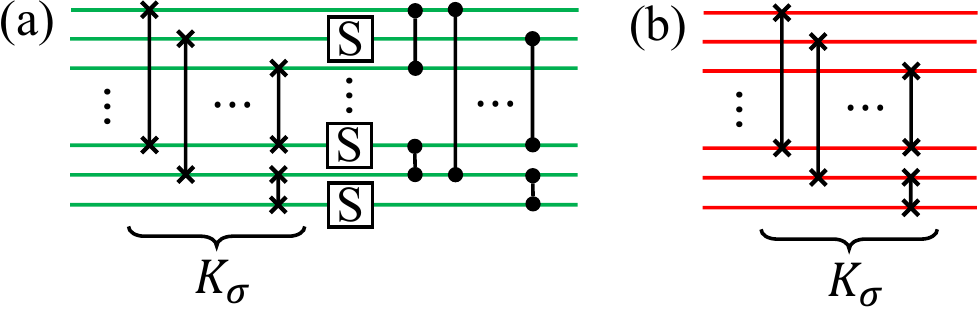}
\caption{
Circuit representations of (a) U(1)-symmetric and (b) SU(2)-symmetric Clifford operators. 
}
\label{fig:nonPauli_circuit}
\end{center}
\end{figure}

By using the result and the proof idea of Theorem~\ref{thm:expression}, we have also found a complete and unique expression for the $\mathcal{G}$-symmetric Clifford operators when the symmetry is given by Eq.~\eqref{eq:example2} or \eqref{eq:example3}. 
Concretely, in the case of Eq.~\eqref{eq:example2}, every $\mathcal{G}$-symmetric Clifford operator $U$ is uniquely expressed as Fig.~\ref{fig:nonPauli_circuit} (a) as
\begin{align}\label{eq:u1_symmetric_clifford}
    U=c\left(\prod_{1\leq j<k\leq N} \mathrm{CZ}_{(j, k)}^{\nu_{j, k}}\right)\left(\prod_{j=1}^N \mathrm{S}_{(j)}^{\mu_j}\right)K_\sigma
\end{align}
with $\mu_j\in\{0, 1, 2, 3\}$, $\nu_{j, k}\in\{0, 1\}$, $\sigma\in\mathfrak{S}_N$ and $c\in\mathcal{U}_0$, where $K_\sigma$ is the permutation operator that brings the $j$th qubit to the $\sigma(j)$th qubit. 
It follows that the size of the quotient group of the symmetric Clifford group $\mathcal{C}_{N, \mathcal{G}}$ divided by the freedom of phase is
\begin{align}
    |\mathcal{C}_{N, \mathcal{G}}/\mathcal{U}_0| =& 2^{N(N-1)/2}\cdot 4^N \cdot N! \nonumber\\
    \sim& 2^{2(N/2)^2+ 3(N/2)+(N+1/2)\log_2(N/e)}.
\end{align}
In the case of Eq.~\eqref{eq:example3}, the $\mathcal{G}$-symmetric Clifford operators are restricted to $c K_\sigma$ with $c\in\mathcal{U}_0$ and $\sigma\in\mathfrak{S}_N$ as expressed in Fig.~\ref{fig:nonPauli_circuit} (b).
The size of $\mathcal{U}_{N, \mathcal{G}}/\mathcal{U}_0$ is $N!$. 
See Theorem~\ref{SMthm:nonPauli_Clifford_expression} in Appendix~\ref{SMsec:construction} for details.

\section{Unitary 4-designs} \label{sec:4design}

We can show that the $\mathcal{G}$-symmetric Clifford group $\mathcal{C}_{N, \mathcal{G}}$ is not a $\mathcal{G}$-symmetric unitary $4$-design except for the trivial case when the $\mathcal{G}$-symmetric unitary subgroup $\mathcal{U}_{N, \mathcal{G}}$ has only scalar multiples of $I$.

\begin{theorem} \label{thm:4design}
    (No-go theorem for symmetric unitary $4$-designs.) 
    Let $\mathcal{G}$ be a subgroup of $\mathcal{U}_N$. 
	Then, $\mathcal{C}_{N, \mathcal{G}}$ is a $\mathcal{G}$-symmetric unitary $4$-design if and only if $\mathcal{U}_{N, \mathcal{G}}=\mathcal{U}_0 I$. 
\end{theorem}

This theorem and Theorem~\ref{thm:main} imply that under a nontrivial Pauli symmetry, the symmetric Clifford group is a symmetric unitary $3$-design but not a $4$-design. 
We can prove this theorem by using the proof idea used in the ``only if'' part of Theorem~\ref{thm:main} as follows:

\begin{figure}[t]
    \begin{center}
        \resizebox{0.98\hsize}{!}{\includegraphics{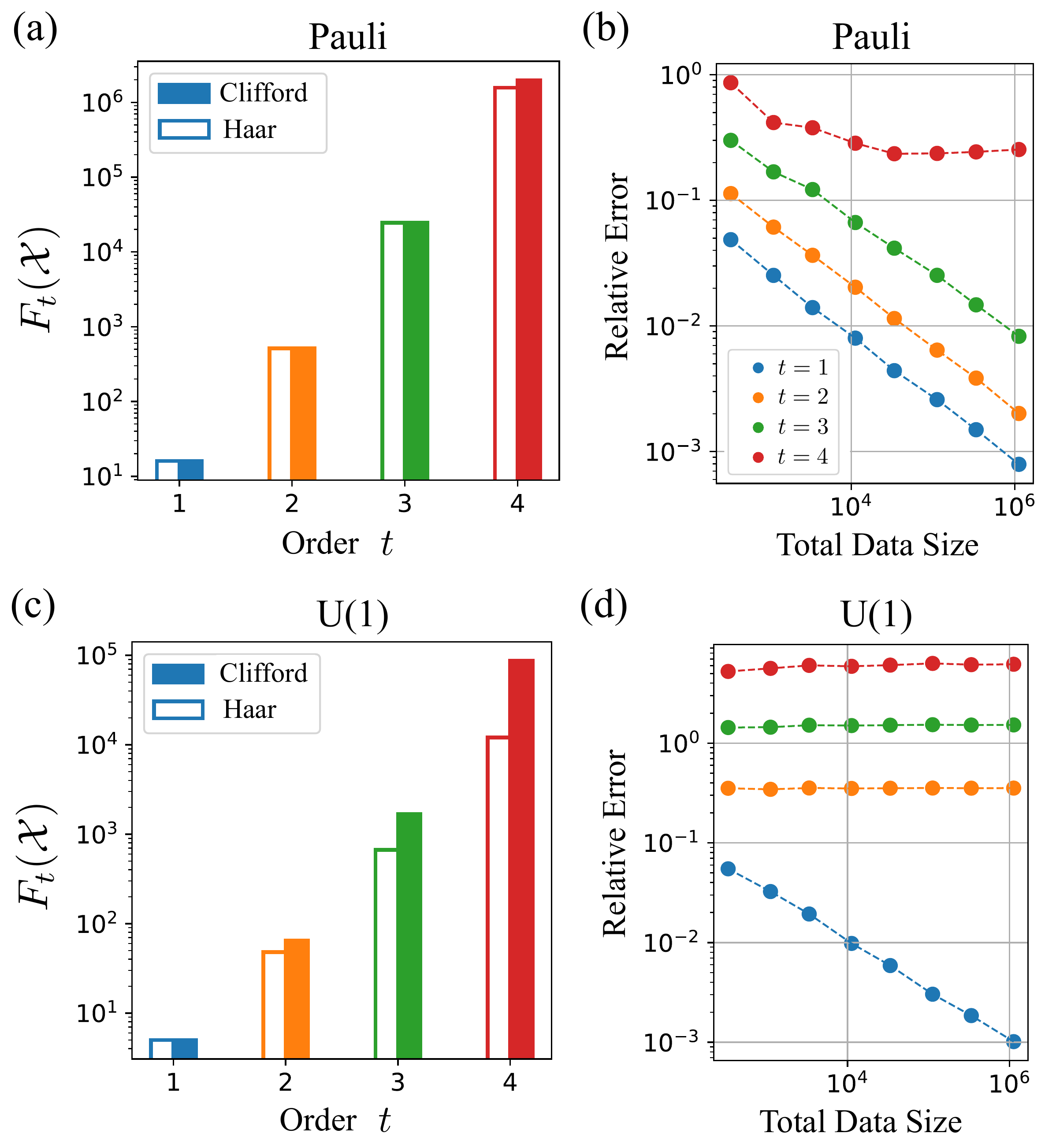}}
        \caption{
        Frame potentials $F_t(\mathcal{X})$ of $\mathcal{G}$-symmetric groups $\mathcal{X}=\mathcal{C}_{N, \mathcal{G}}, \mathcal{U}_{N, \mathcal{G}}$ computed by numerically taking the average over randomly generated unitaries. 
        Here we compare the results for the Pauli symmetry $\mathcal{R}$ given by Eq.~\eqref{eq:standard_Pauli} with $N_1=1$, $N_2=2$, $N_3=3$ and the U(1) symmetry given by Eq.~\eqref{eq:example2} with $N=4$. 
        (a) Frame potentials $F_t(\mathcal{X})$ for the case of  the Pauli symmetry computed by taking the average over $10^6$ samples. 
        (b) Size scaling of the relative error of $F_t(\mathcal{C}_{N, \mathcal{G}})$ against the theoretical lower bound $F_t(\mathcal{U}_{N, \mathcal{G}})$. 
        Here, we independently generate $M$ samples for $U$ and $U'$ respectively, and compute the mean value of $|\mathrm{tr}(UU'^\dag)|^{2t}$. 
        As we increase the total data size $M^2$, the errors become smaller for $t \leq 3$, while they remain finite for $t= 4$. 
        Panels (c) and (d) show the results for the case of U(1) symmetry on a $4$-qubit system. 
        Here the $\mathcal{G}$-symmetric Clifford group is only a $\mathcal{G}$-symmetric unitary $1$-design, and not a $2$-design. 
        The numerical simulation is performed using the library Qulacs~\cite{suzuki2021qulacs}.
        }
        \label{fig:frame_potential}
    \end{center}
\end{figure}

\begin{proof}
	Since the ``if'' part is trivial, it is sufficient to prove the ``only if'' part. 
	Suppose that $\mathcal{C}_{N, \mathcal{G}}$ is a $\mathcal{G}$-symmetric unitary $4$-design. 
	We define $L\in\mathcal{L}(\mathcal{H}^{\otimes 4})$ by $L:=\sum_{P\in\mathcal{P}_N^+} P^{\otimes 4}$, where $\mathcal{P}_N^+:=\{\mathrm{I}, \mathrm{X}, \mathrm{Y}, \mathrm{Z}\}^{\otimes N}$. 
	We take arbitrary $U\in\mathcal{C}_{N, \mathcal{G}}$. 
    By Lemma~\ref{SMlem:Clifford_action_on_Pauli} in Appendix~\ref{SMsec:technical}, we can take a function $s_U:\mathcal{P}_N^+\to\{\pm 1\}$ and a bijection $h_U$ on $\mathcal{P}_N^+$ such that $UPU^\dag=s_U(P) h_U(P)$ for all $P\in\mathcal{P}_N^+$. 
    We note that $s_U$ and $h_U$ are dependent on $U$. 
	By using the definitions of $L$, $s_U$ and $h_U$, we get 
	\begin{align}
		U^{\otimes 4}LU^{\dag\otimes 4}
		=&\sum_{P\in\mathcal{P}_N^+} (UPU^\dag)^{\otimes 4} \nonumber\\
		=&\sum_{P\in\mathcal{P}_N^+} (s_U(P) h_U(P))^{\otimes 4} \nonumber\\
		=&\sum_{P\in\mathcal{P}_N^+} s_U(P)^4 h_U(P)^{\otimes 4} \nonumber\\
		=&\sum_{P\in\mathcal{P}_N^+} h_U(P)^{\otimes 4} \nonumber\\
		=&\sum_{P\in\mathcal{P}_N^+} P^{\otimes 4} \nonumber\\
		=&L. 
	\end{align}
	Since this holds for all $U\in\mathcal{C}_{N, \mathcal{G}}$ and $\mathcal{C}_{N, \mathcal{G}}$ is a $\mathcal{G}$-symmetric unitary $4$-design, by Lemma~\ref{SMlem:unitary_design_commutativity} in Appendix~\ref{SMsec:3design}, we have $U^{\otimes 4}LU^{\dag\otimes 4}=L$ for all $U\in\mathcal{U}_{N, \mathcal{G}}$. 
	We therefore get $UPU^\dag=P$ for all $U\in\mathcal{U}_{N, \mathcal{G}}$ and $P\in\mathcal{P}_N^+$ by Lemma~\ref{SMlem:3_design_discreteness} in Appendix~\ref{SMsec:3design}. 
	This implies that any $U\in\mathcal{U}_{N, \mathcal{G}}$ satisfies $PUP=U$ for all $P\in\mathcal{P}_N^+$, which is equivalent to $U=e^{i\theta}I$ with some $\theta\in\mathbb{R}$. 
	This means that $\mathcal{U}_{N, \mathcal{G}}\subset\mathcal{U}_0 I$. 
	Since $\mathcal{U}_{N, \mathcal{G}}\supset\mathcal{U}_0 I$ always holds, we get $\mathcal{U}_{N, \mathcal{G}}=\mathcal{U}_0 I$. 
\end{proof}

\section{Verification via frame potentials}

We can give a numerical evidence to Theorems~\ref{thm:main} and \ref{thm:expression} by computing the frame potentials, which are defined for a subgroup $\mathcal{X}$ of $\mathcal{U}_{N, \mathcal{G}}$ as~\cite{roberts2017chaos}
\begin{align}
	F_t(\mathcal{X}):=\int_{U, U'\in\mathcal{X}} |\mathrm{tr}(UU'^\dag)|^{2t} d\mu_{\mathcal{X}}(U)d\mu_{\mathcal{X}}(U'), \label{eq:framepot}
\end{align}
where the integral is replaced by a summation if  $\mathcal{X}$ is a finite set up to phase. 
Similarly to the conventional case~\cite{roberts2017chaos}, we can measure the distance between the $t$-fold twirling channels $\Phi_{t, \mathcal{X}}$ and $\Phi_{t, \mathcal{U}_{N, \mathcal{G}}}$ by the frame potentials. 
It is therefore straightforward to show that
$F_t(\mathcal{X}) \geq F_t(\mathcal{U}_{N, \mathcal{G}})$, where the equality holds if and only if $\mathcal{X}$ is a $\mathcal{G}$-symmetric unitary $t$-design.

Figure~\ref{fig:frame_potential} (a)(b) clearly shows that the Clifford group $\mathcal{C}_{N, \mathcal{G}}$ under Pauli symmetry is a $\mathcal{G}$-symmetric unitary 3-design but not a 4-design. 
In sharp contrast, Fig.~\ref{fig:frame_potential} (c)(d) shows that, when the Clifford group is constrained by $\mathcal{G} = \{(e^{i \theta Z})^{\otimes N} | \theta\in \mathbb{R}\}$, which is isomorphic to U(1), then the $\mathcal{G}$-symmetric Clifford group $\mathcal{C}_{N, \mathcal{G}}$ is only a $\mathcal{G}$-symmetric unitary 1-design and not a $2$-design.

We remark that, in order to compute frame potentials numerically, we have sampled symmetric Clifford operators uniformly from the entire $\mathcal{C}_{N, \mathcal{G}}$. 
For instance, in the case of Pauli-symmetric case,
we have utilized the complete and unique construction provided in Theorem~\ref{thm:expression} (or Fig.~\ref{fig:circuit});
we have randomly chosen the parameters $\mu_j\in\{0,1,2,3\}, \nu_{j,k}\in\{0,1\}, P_j \in \{\mathrm{I}, \mathrm{X}, \mathrm{Y}, \mathrm{Z}\}^{\otimes N_3}$ with a uniform probability, and have also employed the method in Ref.~\cite{bravyi2021hadamard} in order to uniformly choose Clifford operators $V\in \mathcal{C}_{N_3}$ acting on the subsystem A$_3$.
This can also be done in a similar way for U(1) and SU(2)-symmetric cases as well based on Fig.~\ref{fig:nonPauli_circuit} and  Eq.~\eqref{eq:u1_symmetric_clifford}.

It is beneficial to mention that the frame potentials of the symmetric unitary groups can be written with those of the unitary groups of several dimensions. 
In order to state the result, we explain the irreducible decomposition of group representations~\cite{bartlett2007reference}. 
We consider the regular representation $\rho$ of the group $\mathcal{G}$, i.e., $\rho(G):=G$ for all $G\in\mathcal{G}$. 
Since $\rho$ is a unitary representation, $\rho$ is completely reducible, and thus there exist some set of pairs $\{(\mathcal{I}_\lambda, \mathcal{J}_\lambda)\}$ of spaces such that the Hilbert space $\mathcal{H}$ of the $N$ qubits is decomposed into 
\begin{align}
    \mathcal{H}=\bigoplus_\lambda \mathcal{I}_\lambda\otimes\mathcal{J}_\lambda
\end{align}
and 
\begin{align}
    \rho(G)=\bigoplus_\lambda \rho_\lambda(G)\otimes I \label{eq:representation_decomposition}
\end{align}
with inequivalent irreducible representations $\rho_\lambda$ of $\mathcal{G}$ on $\mathcal{I}_\lambda$ and the identity operator $I$ on $\mathcal{J}_\lambda$. 
This implies that $\lambda$ is the index for inequivalent irreducible representations, $\mathcal{I}_\lambda$ is the representation space, and $\mathcal{J}_\lambda$ is the multiplicity space.

\begin{theorem} \label{thm:frame_potential}
    (Formula for frame potentials of symmetric unitary groups.) 
    Let $\mathcal{G}$ be a subgroup of $\mathcal{U}_N$ and the regular representation $\rho$ of $\mathcal{G}$ be irreducibly decomposed in the form of Eq.~\eqref{eq:representation_decomposition}. 
    Then, the frame potential $F_t(\mathcal{U}_{N, \mathcal{G}})$ of the $\mathcal{G}$-symmetric unitary group $\mathcal{U}_{N, \mathcal{G}}$ is given by 
    \begin{align}
        F_t(\mathcal{U}_{N, \mathcal{G}})=(t!)^2\sum_{(t_\lambda)\in S_t} \prod_\lambda \frac{\mathrm{dim}(\mathcal{\mathcal{I}_\lambda})^{2t_\lambda}}{(t_\lambda!)^2}F_{t_\lambda}(\mathcal{U}(\mathcal{J}_\lambda)) \label{eq:symmetric_frame_potential}
    \end{align}
    with $S_t:=\{(t_\lambda)|\sum_\lambda t_\lambda=t,\ t_\lambda\in\mathbb{Z},\ t_\lambda\geq 0\}$ and the unitary group $\mathcal{U}(\mathcal{J}_\lambda)$ on $\mathcal{J}_\lambda$. 
\end{theorem}

We illustrate the result of Theorem~\ref{thm:frame_potential} in the case of the Pauli symmetries $\mathcal{Q}$, which are unitarily equivalent to the Pauli subgroup $\mathcal{R}$ given by Eq.~\eqref{eq:standard_Pauli}, as we have explained in Sec.~\ref{subsec:construction}. 
In this case, $\lambda$ is the index for specifying the sequence of the eigenvalues of $\mathrm{Z}_j$ for $j\in\Gamma_2$. We take $\mathcal{I}_\lambda$ as the Hilbert space of the subsystems $\mathrm{A}_1$ and $\mathrm{A}_2$ which is the simultaneous eigenspace of $(\mathrm{Z}_j)_{j\in\Gamma_2}$ with the eigenvalues specified by $\lambda$, and also take $\mathcal{J}_\lambda$ as the Hilbert space of the subsystem $\mathrm{A}_3$. 
This means that there are $2^{N_2}$ choices for $\lambda$, and for each of them, the dimensions of $\mathcal{I}_\lambda$ and $\mathcal{J}_\lambda$ are $2^{N_1}$ and $2^{N_3}$, respectively. When $t\leq 2^{N_3}$, Eq.~\eqref{eq:symmetric_frame_potential} gives the following simple form: 
\begin{align}
    F_t(\mathcal{U}_{N, \mathcal{Q}})
    =&(t!)^2\sum_{(t_\lambda)\in S_t} \prod_\lambda \frac{\left(2^{N_1}\right)^{2t_\lambda}}{(t_\lambda!)^2}\cdot t_\lambda! \nonumber\\
    =&2^{2N_1 t}\cdot t!\sum_{(t_\lambda)\in S_t}\frac{t!}{\prod_\lambda t_\lambda!} \nonumber\\
    =&2^{2N_1 t}\cdot t!\cdot \left(2^{N_2}\right)^t \nonumber\\
    =&2^{(2N_1+N_2)t}\cdot t!, 
\end{align}
where we used $F_{t_\lambda}(\mathcal{U}(\mathcal{J}_\lambda))=t_\lambda!$ in the first line, and the multinomial theorem in the third line.

\begin{proof}
    We are going to prove Eq.~\eqref{eq:symmetric_frame_potential} by considering the generating function of $F_t(\mathcal{U}_{N, \mathcal{G}})$. 
    For a unitary subgroup $\mathcal{X}$, we define $f_\mathcal{\mathcal{X}}$ by 
    \begin{align}
        f_\mathcal{\mathcal{X}}(z):=\int_{U\in\mathcal{X}} \mathrm{det}\left(e^{z(U+U^\dag)}\right) d\mu_\mathcal{X}(U)\ \forall z\in\mathbb{C}. \label{eq:thm:frame_potential1}
    \end{align}
    Then, $f_\mathcal{\mathcal{X}}$ satisfies 
    \begin{align}
        &f_\mathcal{\mathcal{X}}(z) \nonumber\\
        =&\int_{U\in\mathcal{X}} e^{\mathrm{tr}(z(U+U^\dag))} d\mu_\mathcal{X}(U) \nonumber\\
        =&\int_{U\in\mathcal{X}} \left(\sum_{t=0}^\infty \frac{z^t}{t!}(\mathrm{tr}(U))^t\right)\left(\sum_{t'=0}^\infty \frac{z^{t'}}{t'!}(\mathrm{tr}(U^\dag))^{t'}\right) d\mu_\mathcal{X}(U) \nonumber\\
        =&\sum_{t=0}^\infty \sum_{t'=0}^\infty \frac{z^{t+t'}}{t!t'!}\int_{U\in\mathcal{X}} (\mathrm{tr}(U))^t (\mathrm{tr}(U^\dag))^{t'} d\mu_\mathcal{X}(U) \nonumber\\
        =&\sum_{t=0}^\infty \sum_{t'=0}^\infty \frac{z^{t+t'}}{t!t'!}\delta_{t, t'}F_t(\mathcal{X}) \nonumber\\
        =&\sum_{t=0}^\infty \frac{z^{2t}}{(t!)^2}F_t(\mathcal{X}), \label{eq:thm:frame_potential2}
    \end{align}
    where we used 
    \begin{align}
        &\int_{U\in\mathcal{X}} (\mathrm{tr}(U))^t (\mathrm{tr}(U^\dag))^{t'} d\mu_\mathcal{X}(U) \nonumber\\
        =&\frac{1}{2\pi}\int_0^{2\pi} \left(\int_{U\in\mathcal{X}} (\mathrm{tr}(e^{i\theta}U))^t (\mathrm{tr}((e^{i\theta}U)^\dag))^{t'} d\mu_\mathcal{X}(U)\right)d\theta \nonumber\\
        =&\left(\frac{1}{2\pi}\int_0^{2\pi} e^{i(t-t')\theta} d\theta\right) \nonumber\\
        &\ \ \ \ \ \times\left(\int_{U\in\mathcal{X}} \left(\mathrm{tr}(U)\right)^t\left(\mathrm{tr}(U^\dag)\right)^{t'} d\mu_\mathcal{X}(U)\right) \nonumber\\
        =&\delta_{t, t'}F_t(\mathcal{X}). \label{eq:thm:frame_potential3}
    \end{align}
    in the second to last line, and we note that the definition of the frame potential (Eq.~\eqref{eq:framepot}) is equivalent to $F_t(\mathcal{X})=\int_{U\in\mathcal{X}} |\mathrm{tr}(U)|^{2t} d\mu_\mathcal{X}(U)$ by the left invariance of the Haar measure. 
    By Eq.~(2.26) of Ref.~\cite{bartlett2007reference}, every $U\in\mathcal{U}_{N, \mathcal{G}}$ can be written as 
    \begin{align}
        U=\bigoplus_\lambda I\otimes U_\lambda \label{eq:thm:frame_potential4}
    \end{align}
    with the identity operator $I$ on $\mathcal{I}_\lambda$ and some $U_\lambda\in\mathcal{U}(\mathcal{J}_\lambda)$, and we have 
    \begin{align}
        \mathrm{det}\left(e^{z(U+U^\dag)}\right)
        =&\mathrm{det}\left(\bigoplus_\lambda\left(I\otimes e^{z(U_\lambda+U_\lambda^\dag)}\right)\right) \nonumber\\
        =&\prod_\lambda \left(\mathrm{det}\left(e^{z(U_\lambda+U_\lambda^\dag)}\right)\right)^{\mathrm{dim}(\mathcal{I}_\lambda)} \nonumber\\
        =&\prod_\lambda \mathrm{det}\left(e^{\mathrm{dim}(\mathcal{I}_\lambda)z(U_\lambda+U_\lambda^\dag)}\right). \label{eq:thm:frame_potential5}
    \end{align}
    Since Eq.~\eqref{eq:thm:frame_potential4} gives one-to-one correspondence between $\mathcal{U}_{N, \mathcal{G}}$ and the set of $\mathcal{U}(\mathcal{J}_\lambda)$, by Eqs.~\eqref{eq:thm:frame_potential1} and \eqref{eq:thm:frame_potential5}, we get  
    \begin{align}
        &f_{\mathcal{U}_{N, \mathcal{G}}}(z) \nonumber\\
        =&\int_{U\in\mathcal{U}_{N, \mathcal{G}}} \mathrm{det}\left(e^{z(U+U^\dag)}\right) d\mu_{\mathcal{U}_{N, \mathcal{G}}}(U) \nonumber\\
        =&\prod_\lambda \int_{U_\lambda\in\mathcal{U}(\mathcal{J}_\lambda)} \mathrm{det}\left(e^{\mathrm{dim}(\mathcal{I}_\lambda)z(U_\lambda+U_\lambda^\dag)}\right) d\mu_{\mathcal{U}(\mathcal{J}_\lambda)}(U_\lambda) \nonumber\\
        =&\prod_\lambda f_{\mathcal{U}(\mathcal{J}_\lambda)}(\mathrm{dim}(\mathcal{I}_\lambda)z). \label{eq:thm:frame_potential6}
    \end{align}
    By using Eq.~\eqref{eq:thm:frame_potential2} and comparing the coefficients of $z^{2t}$ of both sides of Eq.~\eqref{eq:thm:frame_potential6}, we get Eq.~\eqref{eq:symmetric_frame_potential}. 
\end{proof}

\section{Discussion}

In this paper, we have generalized the unitary 3-design property of the multiqubit Clifford group into symmetric cases.
We have rigorously shown that a symmetric Clifford group is a symmetric unitary 3-design if and only if the symmetry is given by some Pauli subgroup (Theorem~\ref{thm:main}), and also have provided a way to generate all the elements without redundancy (Theorem~\ref{thm:expression}).
Furthermore, we have also proven that two physically important class of U(1) and SU(2) symmetries, which cannot be reduced to Pauli subgroups, only yields symmetric unitary 1-designs.
Finally, for numerical validation, we have computed the frame potentials by randomly sampling symmetric unitaries, and have confirmed that our findings indeed hold.

We can derive another property of the symmetric Clifford group with respect to locality, from the results about the construction method of the symmetric Clifford operators. 
As we have shown in Theorem~\ref{thm:expression} and after Theorem~\ref{thm:1design}, under the Pauli, U(1), and SU(2) symmetries, all the symmetric Clifford operators can be constructed with $2$-qubit local symmetric Clifford operators. 
It can be seen as a symmetric generalization of the fact that all the Clifford operators can be constructed with $2$-qubit local Clifford operators~\cite{selinger2015generators, bravyi2021hadamard}. 
This stands in contrast to the result in Ref.~\cite{marvian2022restrictions}, which shows that while all the unitary operators can be constructed with local unitary operators, some symmetric unitary operators \textit{cannot} be constructed with local symmetric unitary operators.

We envision two important future directions.
First, it is theoretically crucial to reveal the requirement to achieve symmetric unitary designs in the approximate sense. While it is known for the non-symmetric case that the approximate $t$-designs require only polynomial gate depth with respect to both the qubit count $N$ and order $t$~\cite{brandao2016local, mezher2019efficient}, it is far from trivial whether this is also true for the symmetric case.
Second, it is practically intriguing to develop a constructive way to generate symmetric Clifford circuits under general symmetry $\mathcal{G}$ that cannot be described by some Pauli subgroup.
While we provide such an example for both U(1) and SU(2) symmetry, it is important to construct circuits in an automated way for general situations, in particular when one is interested in performing Clifford gate simulation for the purpose of quantum many-body simulation~\cite{ravi2022cafqa}.\\

\section*{Acknowledgements}

The authors wish to thank Hidetaka Manabe and Takahiro Sagawa for insightful discussions. 
This work was supported by JST Grant Number JPMJPF2221, JST ERATO Grant Number JPMJER2302, and JST CREST Grant Number JPMJCR23I4, Japan. 
Y.M. is supported by World-leading Innovative Graduate Study Program for Materials Research, Information, and Technology (MERIT-WINGS) of the University of Tokyo.
Y.M. is also supported by JSPS KAKENHI Grant No. JP23KJ0421.
N.Y. wishes to thank JST PRESTO No. JPMJPR2119, 
and the support from IBM Quantum.

\bibliography{bib.bib}

\let\addcontentsline\oldaddcontentsline

\onecolumngrid
\appendix

{
\hypersetup{linkcolor=blue}
\tableofcontents
}

\section{General Remarks for Detailed Proofs}

In the following, we present the detailed proofs of the theorems in the main text. 
In Appendix~\ref{SMsec:3design}, we prove Theorem~\ref{thm:main} in the main text, which states that the symmetric Clifford group is a symmetric unitary $3$-design if and only if the symmetry constraint can be written as the commutativity with a Pauli subgroup. 
In Appendix~\ref{SMsec:construction}, we prove Theorem~\ref{thm:expression} in the main text, which gives a one-to-one correspondence from the sets of elementary gates to the symmetric Clifford gates. 
In Appendix~\ref{SMsec:1design}, we prove the former half of Theorem~\ref{thm:1design} in the main text. 
We show that the symmetric Clifford group is a symmetric unitary $1$-design under the U(1) and SU(2) symmetries, which are not Pauli symmetries. 
In Appendix~\ref{SMsec:2design}, we prove the latter half of Theorem~\ref{thm:1design} in the main text in a more general form. 
We take a larger class of symmetries than the one in Appendix~\ref{SMsec:1design}, and show that the symmetric Clifford group is not a symmetric unitary $2$-design for those symmetries. 
In Appendix~\ref{SMsec:technical}, we present the technical lemmas that we use in the proofs of the statements above.

Before going into the details, we introduce the notations in the following appendices. 
For general Hilbert spaces $\mathcal{K}$ and $\mathcal{K}'$, we denote the set of all linear operators from $\mathcal{K}$ to $\mathcal{K}'$, all linear operators on $\mathcal{K}$, and all unitary operators on $\mathcal{K}$ by $\mathcal{L}(\mathcal{K}\to\mathcal{K}')$, $\mathcal{L}(\mathcal{K})$ and $\mathcal{U}(\mathcal{K})$, respectively. 
We denote the Hilbert space for $N$ qubits by $\mathcal{H}$. 
We denote the unitary group on $N$ qubits by $\mathcal{U}_N$. 
We denote the Pauli group on $N$ qubits by $\mathcal{P}_N$. 
We define the Clifford group on $N$ qubits as the normalizer of $\mathcal{P}_N$ and denote it by $\mathcal{C}_N$. 
As for Clifford operators on a single qubit, we denote the Pauli-X, Y and Z, Hadamard and S operators on the $j$th qubit by $\mathrm{X}_j$, $\mathrm{Y}_j$, $\mathrm{Z}_j$, $\mathrm{H}_j$ and $\mathrm{S}_j$, respectively. 
As for Clifford operators on two qubits, we denote the controlled-NOT, the controlled-Z and the SWAP operators on the $j$th and $k$th qubits by $\mathrm{CNOT}_{j, k}$, $\mathrm{CZ}_{j, k}$ and $\mathrm{SWAP}_{j, k}$, respectively, where the $j$th qubit is the control qubit and the $k$th qubit is the target qubit of $\mathrm{CNOT}_{j, k}$. 
We denote the set of Pauli operators without phase by $\mathcal{P}_N^+:=\{\mathrm{I}, \mathrm{X}, \mathrm{Y}, \mathrm{Z}\}^{\otimes N}$, where $\mathcal{A}^{\otimes n}:=\mathcal{A}\otimes\mathcal{A}^{\otimes n-1}$ and $\mathcal{A}\otimes \mathcal{B}:=\{A\otimes B | A\in\mathcal{A}, B\in\mathcal{B}\}$ for general operator sets $\mathcal{A}$ and $\mathcal{B}$. 
For convenience, we formally define $\mathcal{U}_0:=\{e^{i\theta}\ |\ \theta\in\mathbb{R}\}$ and $\mathcal{P}_0:=\{\pm1, \pm i\}$. 
We denote the symmetric group of degree $M$ by $\mathfrak{S}_M$. 
We denote by $\braket{\mathcal{O}}$ the group generated by the operators in a set $\mathcal{O}$. 
We denote $a\equiv b$ (mod $r$) when $a-b$ is an integer multiple of $r$.

\section{Proof of Theorem~\ref{thm:main} (Unitary $3$-designs)} \label{SMsec:3design}

In this appendix, we define the notion of symmetric Clifford group and symmetric unitary design, and prove Theorem~\ref{thm:main} in the main text, namely the statement that the symmetric Clifford group is a symmetric unitary $3$-design if and only if the symmetry constraint is described by some Pauli subgroup. \\

First, we define the symmetric Clifford group as the symmetric subgroup of the conventional Clifford group.

\begin{definition}
    (Restatement of Definition~\ref{def:symmetric_Clifford}.) 
	Let $\mathcal{G}$ be a subgroup of $\mathcal{U}_N$. 
	The $\mathcal{G}$-symmetric Clifford group $\mathcal{C}_{N, \mathcal{G}}$ is defined by 
	\begin{align}
		\mathcal{C}_{N, \mathcal{G}}:=\mathcal{C}_N\cap\mathcal{U}_{N, \mathcal{G}}, \label{SMeq:C_G_def}
	\end{align}
	where 
	\begin{align}
		\mathcal{U}_{N, \mathcal{G}}:=\{U\in\mathcal{U}_N\ |\ \forall G\in\mathcal{G}\ [U, G]=0\}. 
	\end{align}
	An operator $U\in\mathcal{U}_N$ is called a $\mathcal{G}$-symmetric Clifford operator if $U\in\mathcal{C}_{N, \mathcal{G}}$. 
\end{definition}

This definition includes the conventional Clifford group $\mathcal{C}_N$ as the special case when $\mathcal{G}=\{I\}$. \\

Next, we define the notion of symmetric unitary design. 
For a subgroup $\mathcal{G}$ of $\mathcal{U}_N$ and $t\in\mathbb{N}$, we define $\mathcal{G}$-symmetric unitary $t$-designs as the group that approximate $\mathcal{U}_{N, \mathcal{G}}$.

\begin{definition} \label{SMdef:symmetric_unweighted_unitary_design}
	(Restatement of Definition~\ref{def:symmetric_unweighted_unitary_design}.) 
    Let $t\in\mathbb{N}$ and $\mathcal{G}$ be a subgroup of $\mathcal{U}_N$. 
	A subgroup $\mathcal{X}$ of $\mathcal{U}_N$ is a $\mathcal{G}$-symmetric unitary $t$-design if 
	\begin{align}
		\Phi_{t, \mathcal{X}}=\Phi_{t, \mathcal{U}_{N, \mathcal{G}}}, \label{SMeq:symmetric_unweighted_unitary_design_def}
	\end{align}
	where $\Phi_{t, \mathcal{X}}$ is the $t$-fold twirling channel defined by 
	\begin{align}
		\Phi_{t, \mathcal{X}}:=\int_{U\in\mathcal{X}} \mathcal{E}_{t, U} d\mu_\mathcal{X}(U),  \label{SMeq:Phi_def}
	\end{align}
	with the normalized Haar measure on $\mathcal{X}$ and $\mathcal{E}_{t, U}$ is the $t$-fold unitary conjugation map on $\mathcal{L}(\mathcal{H}^{\otimes t})$ defined by 
	\begin{align}
		\mathcal{E}_{t, U}(L):=U^{\otimes t}LU^{\dag\otimes t}\ \forall L\in\mathcal{L}(\mathcal{H}^{\otimes t}). \label{SMeq:t_fold_unitary_action_def}
	\end{align}
\end{definition}

This definition includes the standard unitary designs as the special case when $\mathcal{G}=\{I\}$. 
This type of unitary designs are sometimes called unweighted unitary designs in comparison with weighted unitary designs, where non-uniform mixtures of $\mathcal{E}_{t, U}$ are considered (see Definition~\ref{SMdef:symmetric_weighted_unitary_design}). 
As we show in Theorem~\ref{SMthm:weighted_unitary_design}, the conditions for a symmetric Clifford group being unweighted and weighted unitary designs are equivalent to each other. 
It is therefore sufficient to focus only on unweighted unitary designs, and we express them simply as unitary designs in the following. \\

We are going to prove that $\mathcal{C}_{N, \mathcal{G}}$ is a $\mathcal{G}$-symmetric unitary 3-design if and only if the symmetry condition can be described by the commutativity with some Pauli subgroup. 
This can be rigorously stated as follows:

\begin{theorem} \label{SMthm:main}
    (Restatement of Theorem~\ref{thm:main}.) 
	Let $\mathcal{G}$ be a subgroup of $\mathcal{U}_N$. 
	Then, $\mathcal{C}_{N, \mathcal{G}}$ is a $\mathcal{G}$-symmetric unitary $3$-design if and only if $\mathcal{U}_{N, \mathcal{G}}=\mathcal{U}_{N, \mathcal{Q}}$ with some subgroup $\mathcal{Q}$ of the Pauli group $\mathcal{P}_N$. 
\end{theorem}

We present the overall structure of the proof of this theorem in Fig.~\ref{SMfig:proof_structure}. 
We prove the ``if'' part in Proposition~\ref{SMprop:if_part}, and the ``only if'' part in Proposition~\ref{SMprop:only_if_part}. \\

	\begin{figure}
	\begin{center}
	\includegraphics[width=0.85\columnwidth]{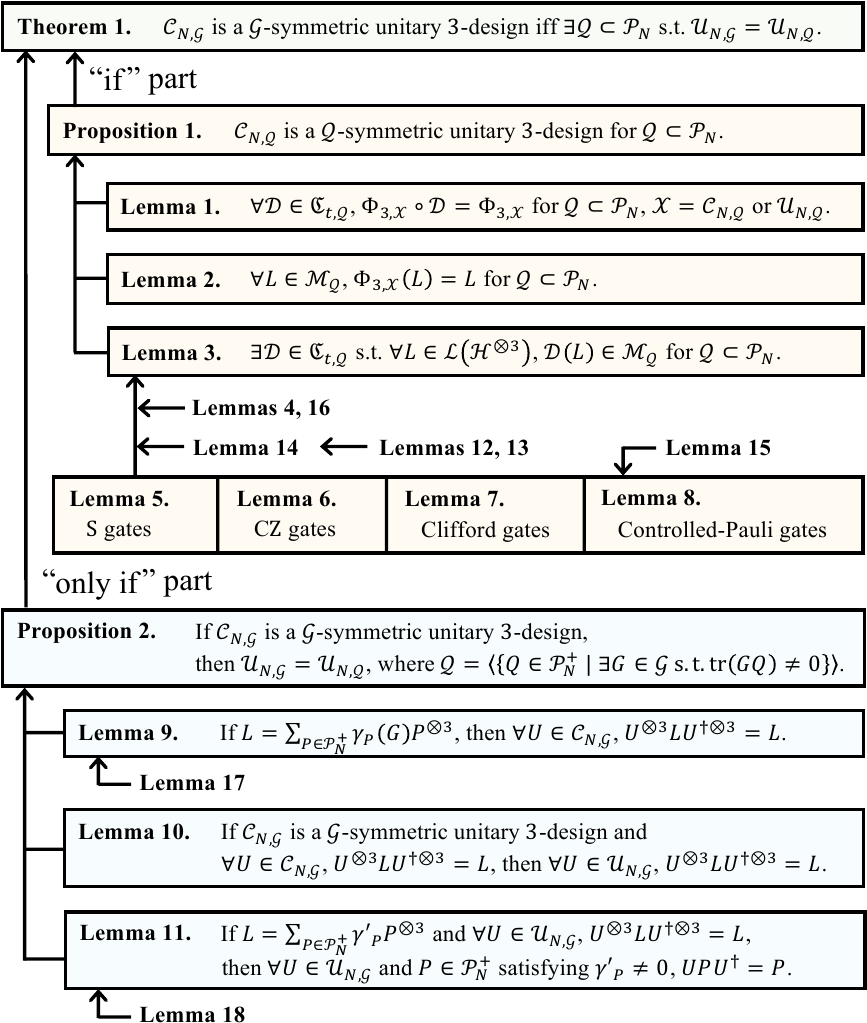}
	\caption{
	Overall structure of the proof of Theorem~\ref{thm:main}. 
	}
	\label{SMfig:proof_structure}
	\end{center}
	\end{figure}

\subsection{Proof of the ``if'' part of Theorem~\ref{thm:main} (Theorem~\ref{SMthm:main})} \label{SMsec:if_part}

The ``if'' part of Theorem~\ref{SMthm:main} is equivalent to the statement that $\mathcal{C}_{N, \mathcal{Q}}$ is a $\mathcal{Q}$-symmetric unitary $3$-design for all Pauli subgroups $\mathcal{Q}$, which we show in the following proposition. 
This is because if $\mathcal{U}_{N, \mathcal{G}}=\mathcal{U}_{N, \mathcal{Q}}$, then $\mathcal{C}_{N, \mathcal{G}}=\mathcal{C}_{N, \mathcal{Q}}$ and $\mathcal{G}$-symmetric unitary $3$-designs are the same as $\mathcal{Q}$-symmetric unitary $3$-designs.

\begin{proposition} \label{SMprop:if_part}
	Let $\mathcal{Q}$ be a subgroup of $\mathcal{P}_N$. 
	Then, $\mathcal{C}_{N, \mathcal{Q}}$ is a $\mathcal{Q}$-symmetric unitary $3$-design. 
\end{proposition}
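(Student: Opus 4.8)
The plan is to characterize both twirling channels $\Phi_{3,\mathcal{C}_{N,\mathcal{Q}}}$ and $\Phi_{3,\mathcal{U}_{N,\mathcal{Q}}}$ by identifying their common image, i.e.\ the commutant of the relevant group acting diagonally on $\mathcal{H}^{\otimes 3}$. Since each $\Phi_{3,\mathcal{X}}$ is the orthogonal projection (in Hilbert--Schmidt inner product) onto the space of fixed points $\{L : U^{\otimes 3} L U^{\dagger\otimes 3} = L\ \forall U\in\mathcal{X}\}$, it suffices to show that the fixed-point space of $\mathcal{C}_{N,\mathcal{Q}}$ equals that of $\mathcal{U}_{N,\mathcal{Q}}$; the inclusion $\mathrm{Fix}(\mathcal{U}_{N,\mathcal{Q}})\subset\mathrm{Fix}(\mathcal{C}_{N,\mathcal{Q}})$ is immediate from $\mathcal{C}_{N,\mathcal{Q}}\subset\mathcal{U}_{N,\mathcal{Q}}$, so the work is the reverse inclusion. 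First I would reduce to the standard form: by the Clifford conjugation $W$ of Eq.~\eqref{eq:standard_Pauli} one may assume $\mathcal{Q}=\mathcal{R}$ is a tensor product of $\mathcal{P}_1$ on $\mathrm{A}_1$, $\{\mathrm{I},\mathrm{Z}\}$ on $\mathrm{A}_2$, and $\{\mathrm{I}\}$ on $\mathrm{A}_3$, since conjugation by $W^{\otimes 3}$ intertwines the two twirling channels.

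Next I would build an explicit spanning set for $\mathrm{Fix}(\mathcal{C}_{N,\mathcal{Q}})$ out of symmetric Clifford operators and show it already lies in $\mathrm{Fix}(\mathcal{U}_{N,\mathcal{Q}})$. Concretely, expand any fixed point $L$ of $\mathcal{C}_{N,\mathcal{Q}}$ in the Pauli basis of $\mathcal{H}^{\otimes 3}$, $L=\sum_{P^{(1)},P^{(2)},P^{(3)}} c_{P^{(1)}P^{(2)}P^{(3)}}\, P^{(1)}\otimes P^{(2)}\otimes P^{(3)}$, and use the abundance of symmetric Clifford operators to constrain the coefficients. The key tools are: (i) on $\mathrm{A}_3$ there is no symmetry constraint, so the full Clifford group $\mathcal{C}_{N_3}$ acts and the restriction to those factors already forces the $3$-design structure known for the ordinary multiqubit Clifford group~\cite{webb2016clifford,zhu2017multiqubit}; (ii) on $\mathrm{A}_1$ the symmetry group is all of $\mathcal{P}_{N_1}$, whose commutant within the Clifford-twirl is the span of the conjugation-invariant Pauli strings, matching exactly the commutant of $\mathcal{U}_{N_1,\mathcal{P}_{N_1}}=\mathcal{U}_0\cdot I$; (iii) on $\mathrm{A}_2$ the available symmetric Clifford gates (the $\mathrm{S}_j$, $\mathrm{CZ}_{j,k}$, permutations, and the entanglers to $\mathrm{A}_3$ appearing in Theorem~\ref{thm:expression}) are rich enough to kill every coefficient that is not already invariant under all of $\mathcal{U}_{N,\mathcal{Q}}$. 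Assembling these qubit-block arguments via the tensor-product decomposition of $\mathcal{U}_{N,\mathcal{Q}}$ gives $\mathrm{Fix}(\mathcal{C}_{N,\mathcal{Q}})\subset\mathrm{Fix}(\mathcal{U}_{N,\mathcal{Q}})$, hence equality, hence $\Phi_{3,\mathcal{C}_{N,\mathcal{Q}}}=\Phi_{3,\mathcal{U}_{N,\mathcal{Q}}}$.

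An alternative, perhaps cleaner, route is to construct a single explicit channel $\mathcal{D}$ (as the main text hints with the notation $\mathcal{D}$) built from the symmetric Clifford operators of Theorem~\ref{thm:expression}, show $\Phi_{3,\mathcal{C}_{N,\mathcal{Q}}}\circ\mathcal{D}=\mathcal{D}$ and $\mathcal{D}\circ\Phi_{3,\mathcal{C}_{N,\mathcal{Q}}}=\Phi_{3,\mathcal{C}_{N,\mathcal{Q}}}$ (so $\mathcal{D}$ and $\Phi_{3,\mathcal{C}_{N,\mathcal{Q}}}$ have the same image and $\mathcal{D}$ is a projection onto it), and then verify directly that $\mathcal{D}$ is also the projection associated with $\mathcal{U}_{N,\mathcal{Q}}$ by checking $\mathcal{D}$ is invariant under the full symmetric unitary twirl and has the correct rank. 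In either approach the same fact does the heavy lifting: the $\mathrm{A}_3$-block supplies the genuine $3$-design content while the $\mathrm{A}_1$- and $\mathrm{A}_2$-blocks are handled by the Pauli/symmetric structure.

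The main obstacle I anticipate is item (iii): controlling the $\mathrm{A}_2$ block, where the symmetry is $\{\mathrm{I},\mathrm{Z}\}^{\otimes N_2}$ and the symmetric Clifford group is strictly smaller than the full Clifford group on those qubits, yet must still enforce all constraints coming from $\mathcal{U}_{N,\mathcal{Q}}$ at the third moment. One has to check that for every triple of Pauli strings on $\mathrm{A}_2$ whose coefficient is \emph{not} forced to survive, some product of $\mathrm{S}_j^{\mu_j}$, $\mathrm{CZ}_{j,k}$, permutation, and cross-block controlled-Pauli gates conjugates it to something orthogonal and thereby annihilates the coefficient; this is a finite but intricate case analysis, and keeping track of the interplay between the $\mathrm{A}_2$ and $\mathrm{A}_3$ factors (via the $\mathrm{C}(P)_{j,k}$ gates of Eq.~\eqref{eq:expression}) is where the bookkeeping is delicate. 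The $t=3$ (rather than $t=2$) aspect makes this sharper, since the relevant permutation-module multiplicities grow and one must use the full strength of the $\mathrm{A}_3$-Clifford $3$-design property, not merely its $2$-design property, to close the argument.
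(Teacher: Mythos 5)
Your overall strategy coincides with the paper's: reduce $\mathcal{Q}$ to the standard form $\mathcal{R}$ of Eq.~\eqref{SMeq:Pauli_subgroup_standard} by a Clifford conjugation, then use the four families of $\mathcal{R}$-symmetric Clifford gates ($\mathrm{S}$ on $\mathrm{A}_2$, $\mathrm{CZ}$ within $\mathrm{A}_2$, the full Clifford group on $\mathrm{A}_3$, and controlled-Pauli gates from $\mathrm{A}_2$ to $\mathrm{A}_3$) to build an averaging map $\mathcal{D}$ that serves as both twirls. Your ``alternative, cleaner route'' is in fact essentially what the paper does: it constructs $\mathcal{D}\in\mathfrak{C}_{3,\mathcal{R}}$ with $\mathcal{D}(L)\in\mathcal{M}_\mathcal{R}=\mathrm{span}\{V_\sigma(G_1\otimes G_2\otimes G_3)\}$ (Lemma~\ref{SMlem:symmetrization}), notes that every element of $\mathcal{M}_\mathcal{R}$ is a fixed point of both twirls (Lemma~\ref{SMlem:fixed_points_specific}), and concludes $\Phi_{3,\mathcal{X}}=\Phi_{3,\mathcal{X}}\circ\mathcal{D}=\mathcal{D}$ for $\mathcal{X}=\mathcal{C}_{N,\mathcal{Q}}$ and $\mathcal{U}_{N,\mathcal{Q}}$ by right invariance of the Haar measure (Lemma~\ref{SMlem:uniform_mixture_property1}). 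This sidesteps both your extra ``correct rank'' verification and any need to characterize the full commutant of $\mathcal{U}_{N,\mathcal{Q}}^{\otimes 3}$, which your first route would require.

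The genuine gap is that the decisive step --- your item (iii), which you yourself flag as ``a finite but intricate case analysis'' --- is left unexecuted, and it is the entire technical content of the proposition. The mechanism is not a brute-force case check but a specific sequence of identities: the $\mathrm{S}$-gate average annihilates a matrix element $\ket{\bm{x}}\bra{\bm{y}}$ on $\mathcal{H}_2^{\mathrm{tot}}$ unless $\sum_j x_{j,m}=\sum_j y_{j,m}$ for each $\mathrm{A}_2$ qubit $m$; the $\mathrm{CZ}$ average adds the quadratic condition $\sum_j x_{j,m}x_{j,m'}\equiv\sum_j y_{j,m}y_{j,m'}\ (\mathrm{mod}\ 2)$; the $\mathrm{A}_3$ Clifford average, using the full $3$-design property of $\mathcal{C}_{N_3}$, reduces the $\mathrm{A}_3$ factor to a combination of permutation operators $T_\sigma$; and the controlled-Pauli average exploits the fact that $4^{-N_3}\sum_{Q\in\mathcal{P}_{N_3}^+}\bigotimes_j Q^{x_{\sigma(j),m}-y_{j,m}}$ collapses either to $I$ or to $2^{-N_3}$ times a SWAP between the two mismatched copies, so that the surviving permutation $\sigma'$ satisfies $x_{\sigma'(j),m}=y_{j,m}$; a small combinatorial lemma on $3$-bit strings (Lemma~\ref{SMlem:3_bit_property}) guarantees consistency of this correction across different $m$. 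Only after all four stages does the output visibly land in $\mathcal{M}_\mathcal{R}$. Without supplying this (or an equivalent) construction, your argument is a plan rather than a proof. Two smaller corrections: permutations of the $\mathrm{A}_2$ qubits are \emph{not} $\mathcal{R}$-symmetric Clifford gates (a SWAP of qubits $j,k\in\mathrm{A}_2$ fails to commute with $\mathrm{Z}_j$ alone, since $\mathcal{R}$ contains $\mathrm{Z}_j$ individually), so they should be removed from your list of available gates in the Pauli case; and your commutant-equality route would additionally need the nontrivial fact that the fixed-point space of $\Phi_{3,\mathcal{U}_{N,\mathcal{Q}}}$ is exactly $\mathcal{M}_\mathcal{Q}$, which the paper deliberately avoids having to prove.
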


By the definition of unitary $t$-designs, the goal is to prove $\Phi_{\mathcal{C}_{N, \mathcal{Q}}}=\Phi_{\mathcal{U}_{N, \mathcal{Q}}}$. 
In the following, we introduce two useful properties of $\Phi_{t, \mathcal{X}}$ for $\mathcal{X}=\mathcal{C}_{N, \mathcal{Q}}$ and $\mathcal{U}_{N, \mathcal{Q}}$. \\

As the first property, we cannot distinguish whether there is a symmetric unitary conjugation action before the action of $\Phi_{t, \mathcal{X}}$. 
This can be directly proven by the right invariance of $\mu_\mathcal{X}$. 
We also use this lemma in the proofs of Theorems~\ref{SMthm:1design} and \ref{SMthm:weighted_unitary_design}.

\begin{lemma} \label{SMlem:uniform_mixture_property1}
	Let $N\in\mathbb{N}$, $t\in\mathbb{N}$, $\mathcal{G}$ be a subgroup of $\mathcal{U}_N$, $\mathcal{X}=\mathcal{C}_{N, \mathcal{G}}$ or $\mathcal{U}_{N, \mathcal{G}}$, $\Phi_{t, \mathcal{X}}$ be defined by Eq.~\eqref{SMeq:Phi_def}, and $\mathfrak{C}_{t, \mathcal{G}}$ be the set of all $t$-fold $\mathcal{G}$-symmetric Clifford conjugation mixture maps defined by 
	\begin{align}
		\mathfrak{C}_{t, \mathcal{G}}:=\left\{\sum_{j=1}^n \lambda_j\mathcal{E}_{t, U_j}\ \middle|\ n\in\mathbb{N},\ \lambda_1, \lambda_2, ..., \lambda_n\in\mathbb{R},\ U_1, U_2, ..., U_n\in\mathcal{C}_{N, \mathcal{G}},\ \sum_{j=1}^n \lambda_j=1\right\}. \label{SMeq:symmetric_Clifford_mixture_def}
  	\end{align}
	Then, for any $t$-fold $\mathcal{G}$-symmetric Clifford mixture map $\mathcal{D}\in\mathfrak{C}_{t, \mathcal{G}}$, 
	\begin{align}
		\Phi_{t, \mathcal{X}}\circ\mathcal{D}=\Phi_{t, \mathcal{X}}. 
	\end{align}
\end{lemma}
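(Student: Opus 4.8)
The plan is to exploit the right invariance of the Haar measure on $\mathcal{X}$, together with the fact that composing twirling channels is compatible with composing the underlying group elements. First I would observe that for any fixed $V\in\mathcal{C}_{N,\mathcal{G}}$, we have $\Phi_{t,\mathcal{X}}\circ\mathcal{E}_{t,V}=\Phi_{t,\mathcal{X}}$: indeed, since $\mathcal{E}_{t,U}\circ\mathcal{E}_{t,V}=\mathcal{E}_{t,UV}$ for all $U,V$, we get
\begin{align}
	\Phi_{t,\mathcal{X}}\circ\mathcal{E}_{t,V}=\int_{U\in\mathcal{X}}\mathcal{E}_{t,UV}\,d\mu_\mathcal{X}(U)=\int_{U'\in\mathcal{X}}\mathcal{E}_{t,U'}\,d\mu_\mathcal{X}(U')=\Phi_{t,\mathcal{X}},
\end{align}
where the middle equality is the substitution $U'=UV$ and uses that $\mu_\mathcal{X}$ is right invariant under $V$. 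This substitution is legitimate precisely because $V\in\mathcal{C}_{N,\mathcal{G}}\subset\mathcal{X}$ in both cases $\mathcal{X}=\mathcal{C}_{N,\mathcal{G}}$ and $\mathcal{X}=\mathcal{U}_{N,\mathcal{G}}$, so right multiplication by $V$ maps $\mathcal{X}$ onto itself and preserves the (normalized) Haar measure.

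Next I would extend this from a single symmetric Clifford conjugation $\mathcal{E}_{t,V}$ to an arbitrary element $\mathcal{D}=\sum_{j=1}^n\lambda_j\mathcal{E}_{t,U_j}\in\mathfrak{C}_{t,\mathcal{G}}$ with $U_j\in\mathcal{C}_{N,\mathcal{G}}$ and $\sum_j\lambda_j=1$. By linearity of composition in the right argument,
\begin{align}
	\Phi_{t,\mathcal{X}}\circ\mathcal{D}=\sum_{j=1}^n\lambda_j\bigl(\Phi_{t,\mathcal{X}}\circ\mathcal{E}_{t,U_j}\bigr)=\sum_{j=1}^n\lambda_j\,\Phi_{t,\mathcal{X}}=\Bigl(\sum_{j=1}^n\lambda_j\Bigr)\Phi_{t,\mathcal{X}}=\Phi_{t,\mathcal{X}},
\end{align}
using the single-element identity in the second step and the normalization $\sum_j\lambda_j=1$ in the last. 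This finishes the proof.

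There is no serious obstacle here; the only point requiring a little care is the justification of the measure-theoretic substitution, i.e. confirming that for both choices of $\mathcal{X}$ the set is a closed subgroup of $\mathcal{U}_N$ containing $\mathcal{C}_{N,\mathcal{G}}$, so that the normalized Haar measure exists and is right-invariant under multiplication by any $V\in\mathcal{C}_{N,\mathcal{G}}$; this is immediate from the definitions since $\mathcal{C}_{N,\mathcal{G}}\subset\mathcal{C}_N$ and $\mathcal{C}_{N,\mathcal{G}}\subset\mathcal{U}_{N,\mathcal{G}}$. One should also note that when $\mathcal{X}$ is finite up to phase the integral is a finite average and the same reindexing argument applies verbatim. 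Everything else is the elementary observation that $U\mapsto\mathcal{E}_{t,U}$ is a (anti)homomorphism turning right translation on the group into right composition of channels.
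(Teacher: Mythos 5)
Your proof is correct and follows essentially the same route as the paper: reduce to a single symmetric Clifford conjugation $\mathcal{E}_{t,V}$, use $\mathcal{E}_{t,U}\circ\mathcal{E}_{t,V}=\mathcal{E}_{t,UV}$ together with right invariance of the normalized Haar measure on $\mathcal{X}$ (valid since $V\in\mathcal{C}_{N,\mathcal{G}}\subset\mathcal{X}$ in both cases), then conclude by linearity and the normalization $\sum_j\lambda_j=1$. The extra remarks on the legitimacy of the substitution only make explicit what the paper cites from the theory of Haar measures on compact groups.
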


\begin{proof}
	Since $\mathcal{D}\in\mathfrak{C}_{t, \mathcal{G}}$, $\mathcal{D}$ can be written as $\mathcal{D}=\sum_{j=1}^n \lambda_j\mathcal{E}_{t, U_j}$ with some $\lambda_1, \lambda_2, ..., \lambda_n\in\mathbb{R}$ and $U_1, U_2, ..., U_n\in\mathcal{C}_{N, \mathcal{G}}$ satisfying $\sum_{j=1}^n \lambda_j=1$. 
	For any $j\in\{1, 2, ..., n\}$, we get 
	\begin{align}
		\Phi_{t, \mathcal{X}}\circ\mathcal{E}_{t, U_j} 
		=\int_{U\in\mathcal{X}} \mathcal{E}_{t, U}\circ\mathcal{E}_{t, U_j} d\mu_\mathcal{X}(U) 
		=\int_{U\in\mathcal{X}} \mathcal{E}_{t, UU_j} d\mu_\mathcal{X}(U) 
		=\int_{U\in\mathcal{X}} \mathcal{E}_{t, U} d\mu_\mathcal{X}(U) 
		=\Phi_{t, \mathcal{X}}, 
	\end{align}
	where we used the right invariance of $\mu_\mathcal{X}$. 
	We note that the Haar measure on a compact Lie group $\mathcal{X}$ is right-invariant by Corollary~8.31 of Ref.~\cite{knapp2002lie}.
	We therefore get 
	\begin{align}
		\Phi_{t, \mathcal{X}}\circ\mathcal{D} 
		=\sum_{j=1}^n \lambda_j\Phi_{t, \mathcal{X}}\circ\mathcal{E}_{t, U_j} 
		=\sum_{j=1}^n \lambda_j\Phi_{t, \mathcal{X}} 
		=\Phi_{t, \mathcal{X}}. 
	\end{align}
\end{proof}

As the second property, we introduce trivial fixed-points of $\Phi_{t, \mathcal{X}}$ for $\mathcal{X}=\mathcal{C}_{N, \mathcal{Q}}$ and $\mathcal{U}_{N, \mathcal{Q}}$ in an explicit form.

\begin{lemma} \label{SMlem:fixed_points_specific}
	Let $N\in\mathbb{N}$, $\mathcal{G}$ be a subgroup of $\mathcal{U}_N$, $\mathcal{X}=\mathcal{C}_{N, \mathcal{G}}$ or $\mathcal{U}_{N, \mathcal{G}}$, $\Phi_{3, \mathcal{X}}$ be defined by Eq.~\eqref{SMeq:Phi_def}, and $\mathcal{M}_\mathcal{G}$ be the linear subspace of $\mathcal{L}(\mathcal{H}^{\otimes 3})$ defined by 
	\begin{align}
		\mathcal{M}_\mathcal{G}:=\mathrm{span}\left(\left\{V_\sigma(G_1\otimes G_2\otimes G_3)\ |\ \sigma\in\mathfrak{S}_3,\ G_1, G_2, G_3\in\mathcal{G}\right\}\right), \label{SMeq:invariant_subspace}
	\end{align}
	where the span is taken over the field $\mathbb{C}$ and $V_\sigma\in\mathcal{U}(\mathcal{H}^{\otimes 3})$ is the permutation operator that brings the $j$th copy of qubits to the $\sigma(j)$th qubits, i.e., 
	\begin{align}
		V_\sigma\left(\ket{\Psi_1}\otimes\ket{\Psi_2}\otimes\ket{\Psi_3}\right)
		:=\ket{\Psi_{\sigma^{-1}(1)}}\otimes\ket{\Psi_{\sigma^{-1}(2)}}\otimes\ket{\Psi_{\sigma^{-1}(3)}} \label{SMeq:total_permutation}
	\end{align}
	for all $\ket{\Psi_1}, \ket{\Psi_2}, \ket{\Psi_3}\in\mathcal{H}$. 
	Then, $\Phi_{3, \mathcal{X}}(L)=L$ for all $L\in\mathcal{M}_\mathcal{G}$. 
\end{lemma}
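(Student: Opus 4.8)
The plan is to reduce the statement to the generating elements of $\mathcal{M}_\mathcal{G}$ by linearity, and then to show directly that each such generator is a fixed point of the conjugation map $\mathcal{E}_{3,U}$ for \emph{every} $U\in\mathcal{X}$, from which the conclusion $\Phi_{3,\mathcal{X}}(L)=L$ follows by integrating against the normalized (total mass one) Haar measure $\mu_\mathcal{X}$. Since $\Phi_{3,\mathcal{X}}$ is linear and $\mathcal{M}_\mathcal{G}$ is the complex span of the operators $V_\sigma(G_1\otimes G_2\otimes G_3)$, it suffices to treat a single such operator.

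The two ingredients I would assemble are as follows. First, for any $U\in\mathcal{U}_N$ the permutation operator $V_\sigma$ commutes with $U^{\otimes 3}$: indeed $V_\sigma(A_1\otimes A_2\otimes A_3)V_\sigma^{-1}=A_{\sigma^{-1}(1)}\otimes A_{\sigma^{-1}(2)}\otimes A_{\sigma^{-1}(3)}$ for all $A_1,A_2,A_3\in\mathcal{L}(\mathcal{H})$ by Eq.~\eqref{SMeq:total_permutation}, and setting $A_1=A_2=A_3=U$ gives $V_\sigma U^{\otimes 3}V_\sigma^{-1}=U^{\otimes 3}$. Second, in both cases $\mathcal{X}=\mathcal{C}_{N,\mathcal{G}}$ and $\mathcal{X}=\mathcal{U}_{N,\mathcal{G}}$ one has $\mathcal{X}\subseteq\mathcal{U}_{N,\mathcal{G}}$ (immediately from Eq.~\eqref{SMeq:C_G_def}), so every $U\in\mathcal{X}$ satisfies $[U,G]=0$ for all $G\in\mathcal{G}$, and in particular $UG_iU^\dagger=G_i$ for $i=1,2,3$.

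Combining these, for any $U\in\mathcal{X}$ and any $\sigma\in\mathfrak{S}_3$, $G_1,G_2,G_3\in\mathcal{G}$ I would compute
\begin{align}
    \mathcal{E}_{3,U}\bigl(V_\sigma(G_1\otimes G_2\otimes G_3)\bigr)
    &=U^{\otimes 3}V_\sigma(G_1\otimes G_2\otimes G_3)U^{\dagger\otimes 3} \nonumber\\
    &=V_\sigma\,U^{\otimes 3}(G_1\otimes G_2\otimes G_3)U^{\dagger\otimes 3} \nonumber\\
    &=V_\sigma\,(UG_1U^\dagger\otimes UG_2U^\dagger\otimes UG_3U^\dagger)
    =V_\sigma(G_1\otimes G_2\otimes G_3),
\end{align}
using the commutation $[V_\sigma,U^{\otimes 3}]=0$ in the second line and $[U,G_i]=0$ in the last. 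Hence every generator of $\mathcal{M}_\mathcal{G}$, and by linearity every $L\in\mathcal{M}_\mathcal{G}$, satisfies $\mathcal{E}_{3,U}(L)=L$ for all $U\in\mathcal{X}$. Integrating this identity over $U\in\mathcal{X}$ against $\mu_\mathcal{X}$ (a summation, up to phase, in the Clifford case, since $\mathcal{C}_N$ is finite modulo phases) and using $\int_{\mathcal{X}}d\mu_\mathcal{X}=1$ yields $\Phi_{3,\mathcal{X}}(L)=L$, which is the claim. There is no genuine obstacle here; the only points requiring a moment's care are the elementary commutation fact $[V_\sigma,U^{\otimes 3}]=0$ and the inclusion $\mathcal{X}\subseteq\mathcal{U}_{N,\mathcal{G}}$, both of which are immediate from the definitions.
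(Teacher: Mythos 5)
Your proposal is correct and follows essentially the same route as the paper's proof: reduce to the generators $V_\sigma(G_1\otimes G_2\otimes G_3)$ by linearity, use the commutation of $V_\sigma$ with $U^{\otimes 3}$ together with $[U,G_i]=0$ (valid since $\mathcal{X}\subseteq\mathcal{U}_{N,\mathcal{G}}$) to show each generator is fixed by every $\mathcal{E}_{3,U}$, and integrate against the normalized Haar measure. No gaps.
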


As for Eq.~\eqref{SMeq:total_permutation}, we note that the state of the $j$th copy of qubits after the action of $V_\sigma$ is the same as that of the $\sigma^{-1}(j)$th copy of qubits before the action, because $V_\sigma$ brings $\sigma^{-1}(j)$th copy of qubits to the $j$th copy of qubits.

\begin{proof}
	Since $\Phi_{3, \mathcal{X}}$ is a linear map and $\mathcal{M}_\mathcal{G}$ is a linear subspace spanned by $\{V_\sigma(G_1\otimes G_2\otimes G_3)\ |\ \sigma\in\mathfrak{S}_3, G_1, G_2, G_3\in\mathcal{G}\}$, it is sufficient to show that $\Phi_{3, \mathcal{X}}(V_\sigma(G_1\otimes G_2\otimes G_3))=V_\sigma(G_1\otimes G_2\otimes G_3)$ for all $\sigma\in\mathfrak{S}_3$, $G_1, G_2, G_3\in\mathcal{G}$. 
	Since $V_\sigma$ commutes with $U^{\otimes 3}$, and $G_1$, $G_2$ and $G_3$ commute with $U$ for all $U\in\mathcal{U}_{N, \mathcal{G}}$, we have $[V_\sigma(G_1\otimes G_2\otimes G_3), U^{\otimes 3}]=0$ for all $U\in\mathcal{U}_{N, \mathcal{G}}$. 
	We therefore get 
	\begin{align}
		\Phi_{3, \mathcal{X}}(V_\sigma(G_1\otimes G_2\otimes G_3))
		=&\int_{U\in\mathcal{X}} U^{\otimes 3}V_\sigma(G_1\otimes G_2\otimes G_3)U^{\dag\otimes 3} d\mu_\mathcal{X}(U) \nonumber\\
		=&\int_{U\in\mathcal{X}} V_\sigma(G_1\otimes G_2\otimes G_3) d\mu_\mathcal{X}(U) \nonumber\\
		=&V_\sigma(G_1\otimes G_2\otimes G_3). 
	\end{align}
\end{proof}

Although we only require the fact that all the points in $\mathcal{M}_\mathcal{G}$ are fixed-points of $\Phi_{3, \mathcal{U}_{N, \mathcal{G}}}$ in our proof, we can prove that the set of all the fixed-points of $\Phi_{3, \mathcal{U}_{N, \mathcal{G}}}$ corresponds with $\mathcal{M}_{\mathcal{U}_{N, \mathcal{U}_{N, \mathcal{G}}}}$ by using the result of Ref.~\cite{marvian2014generalization}. \\

In order to connect Lemmas~\ref{SMlem:uniform_mixture_property1} and \ref{SMlem:fixed_points_specific} to the proof of Proposition~\ref{SMprop:if_part}, it is sufficient to find a map $\mathcal{D}\in\mathfrak{C}_{3, \mathcal{Q}}$ satisfying $\mathcal{D}(L)\in\mathcal{M}_\mathcal{Q}$ for all $L\in\mathcal{L}(\mathcal{H}^{\otimes 3})$. 
If we can construct such a map $\mathcal{D}$, we can explicitly compute the $t$-fold uniform unitary mixture map $\Phi_{t, \mathcal{X}}$ as $\Phi_{3, \mathcal{X}}=\Phi_{3, \mathcal{X}}\circ\mathcal{D}=\mathcal{D}$ for $\mathcal{X}=\mathcal{C}_{N, \mathcal{Q}}$ and $\mathcal{U}_{N, \mathcal{Q}}$, which implies that $\Phi_{3, \mathcal{C}_{N, \mathcal{Q}}}=\Phi_{3, \mathcal{U}_{N, \mathcal{Q}}}$. 
We present the existence of such a map $\mathcal{D}$ as a lemma.

\begin{lemma} \label{SMlem:symmetrization}
	Let $N\in\mathbb{N}$, $\mathcal{Q}$ be a subgroup of $\mathcal{P}_N$, $\mathfrak{C}_{3, \mathcal{Q}}$ be the set all $t$-fold $\mathcal{Q}$-symmetric Clifford conjugation mixture maps defined by Eq.~\eqref{SMeq:symmetric_Clifford_mixture_def} and $\mathcal{M}_\mathcal{Q}$ be defined by Eq.~\eqref{SMeq:invariant_subspace}. 
	Then, there exists a map $\mathcal{D}\in\mathfrak{C}_{3, \mathcal{Q}}$ such that $\mathcal{D}(L)\in\mathcal{M}_\mathcal{Q}$ for all $L\in\mathcal{L}(\mathcal{H}^{\otimes 3})$. 
\end{lemma}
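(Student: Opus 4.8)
\noindent\textbf{Proof plan for Lemma~\ref{SMlem:symmetrization}.} The plan is to take $\mathcal{D}$ to be the uniform $3$-fold twirl over the symmetric Clifford group itself, $\mathcal{D}:=\Phi_{3,\mathcal{C}_{N,\mathcal{Q}}}$. This is manifestly an element of $\mathfrak{C}_{3,\mathcal{Q}}$, since $\mathcal{C}_{N,\mathcal{Q}}$ is finite up to phase and $\Phi_{3,\mathcal{C}_{N,\mathcal{Q}}}$ is then the equal‑weight average of the maps $\mathcal{E}_{3,U}$, $U\in\mathcal{C}_{N,\mathcal{Q}}$. With this choice the lemma reduces to the single inclusion $\mathrm{Im}(\Phi_{3,\mathcal{C}_{N,\mathcal{Q}}})\subseteq\mathcal{M}_{\mathcal{Q}}$, i.e.\ that every operator commuting with $U^{\otimes 3}$ for all $U\in\mathcal{C}_{N,\mathcal{Q}}$ lies in $\mathcal{M}_{\mathcal{Q}}$. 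First I would conjugate by the Clifford $W$ furnished by Theorem~\ref{thm:expression}: since $U\mapsto WUW^\dagger$ maps $\mathcal{C}_{N,\mathcal{Q}}$ onto $\mathcal{C}_{N,\mathcal{R}}$ and, using $[V_\sigma,W^{\otimes 3}]=0$, the map $\mathcal{E}_{3,W^\dagger}$ sends $\mathcal{M}_{\mathcal{R}}$ onto $\mathcal{M}_{\mathcal{Q}}$, it is enough to prove $\mathrm{Im}(\Phi_{3,\mathcal{C}_{N,\mathcal{R}}})\subseteq\mathcal{M}_{\mathcal{R}}$ for $\mathcal{R}$ in the standard form of Eq.~\eqref{eq:standard_Pauli}. (The reverse inclusion is already contained in Lemma~\ref{SMlem:fixed_points_specific}, so this would in fact identify the fixed‑point space exactly.)

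To exploit the structure I would use the generating set of $\mathcal{C}_{N,\mathcal{R}}$ provided by Theorem~\ref{thm:expression}: the Paulis in the centralizer $C_{\mathcal{P}_N}(\mathcal{R})=\mathcal{P}_0\cdot\{\mathrm{I}\}^{\otimes N_1}\otimes\{\mathrm{I},\mathrm{Z}\}^{\otimes N_2}\otimes\mathcal{P}_{N_3}^{+}$, the full Clifford group $\mathcal{C}_{N_3}$ on $\mathrm{A}_3$, the gates $\mathrm{S}_j,\mathrm{CZ}_{j,k}$ on $\mathrm{A}_2$, and the cross gates $\mathrm{C}(P)_{j,k}$ with $j\in\Gamma_2,k\in\Gamma_3$. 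A fixed point $L$ of $\Phi_{3,\mathcal{C}_{N,\mathcal{R}}}$ must commute with the third tensor power of each of these. Commutation with $C_{\mathcal{P}_N}(\mathcal{R})^{\otimes 3}$ already restricts $L$: expanding $L$ in the Pauli basis of the $3N$ qubits, a term $R_1\otimes R_2\otimes R_3$ survives only if $R_1R_2R_3$ commutes with all of $C_{\mathcal{P}_N}(\mathcal{R})$, and by the double‑centralizer property of the commutation form on $\mathcal{P}_N$ this forces $R_1R_2R_3\in\mathcal{R}$ up to phase — that is, the $\mathrm{A}_3$‑parts multiply to $\mathrm{I}$, the $\mathrm{A}_2$‑parts multiply to a $\mathrm{Z}$‑string, and the $\mathrm{A}_1$‑parts are unconstrained. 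Commutation with $(\mathrm{I}_{\mathrm{A}_1\mathrm{A}_2}\otimes V)^{\otimes 3}$ for all $V\in\mathcal{C}_{N_3}$, together with the known fact that $\mathcal{C}_{N_3}$ is an ordinary unitary $3$‑design (so that the commutant of $\mathcal{C}_{N_3}^{\otimes 3}$ is $\mathrm{span}\{V_\sigma^{\mathrm{A}_3}:\sigma\in\mathfrak{S}_3\}$), then forces $L\in\mathcal{L}((\mathrm{A}_1\mathrm{A}_2)^{\otimes 3})\otimes\mathrm{span}\{V_\sigma^{\mathrm{A}_3}\}$, i.e.\ $L=\sum_{\sigma\in\mathfrak{S}_3}M_\sigma\otimes V_\sigma^{\mathrm{A}_3}$ with each $M_\sigma$ a combination of Pauli words on $(\mathrm{A}_1\mathrm{A}_2)^{\otimes 3}$ whose $\mathrm{A}_2$‑parts multiply to a $\mathrm{Z}$‑string.

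The remaining step, and the heart of the argument, is to bring in commutation with the third powers of $\mathrm{S}_j$, $\mathrm{CZ}_{j,k}$ and the cross gates $\mathrm{C}(P)_{j,k}$. Conjugation by $\mathrm{S}_j^{\otimes 3}$ and $\mathrm{CZ}_{j,k}^{\otimes 3}$ acts only on $\mathrm{A}_2$ and confines the $\mathrm{A}_2$‑Pauli words occurring in each $M_\sigma$ to exactly those appearing in $V_\tau^{\mathrm{A}_2}$ times a product of single‑qubit $\mathrm{Z}$‑operators on the three copies of $\mathrm{A}_2$; conjugation by $\mathrm{C}(P)_{j,k}^{\otimes 3}$ couples $\mathrm{A}_2$ and $\mathrm{A}_3$ — in the Heisenberg picture it propagates an $\mathrm{X}$ or $\mathrm{Y}$ on the control $j\in\Gamma_2$ to an extra factor $P$ on the target $k\in\Gamma_3$, and multiplies the control qubit by $\mathrm{Z}_j$ when the target Pauli anticommutes with $P$ — and the requirement that $L$ be invariant under all of these forces the permutation $\tau$ carried by the $\mathrm{A}_2$‑part to coincide with the $\sigma$ already carried by $V_\sigma^{\mathrm{A}_3}$. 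Since the $\mathrm{A}_1$‑part stays arbitrary throughout, the conclusion is that $L$ is a $\mathbb{C}$‑combination of operators $V_\sigma(G_1\otimes G_2\otimes G_3)$ with $G_i\in\mathcal{R}$, i.e.\ $L\in\mathcal{M}_{\mathcal{R}}$.

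I expect the main obstacle to be precisely this last step: verifying that joint invariance under $\mathrm{S}_j$, $\mathrm{CZ}_{j,k}$ and the cross gates $\mathrm{C}(P)_{j,k}$ simultaneously pins the $\mathrm{A}_2$‑content down to the permutation‑times‑$\mathrm{Z}$‑string form \emph{and} synchronizes its permutation label with that on $\mathrm{A}_3$ — this synchronization is exactly the feature distinguishing $\mathcal{M}_{\mathcal{R}}$ from the strictly larger space obtained from the Pauli and $\mathrm{A}_3$‑Clifford constraints alone, so it cannot be bypassed. Carrying it out requires a careful combinatorial analysis of how conjugation by these elementary gates permutes and re‑signs Pauli words across the three copies, together with some bookkeeping of linear dependences among the $V_\sigma$ (relevant when $N_3\le 1$) and of the degenerate cases $N_2=0$ or $N_3=0$. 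Once $\mathrm{Im}(\mathcal{D})\subseteq\mathcal{M}_{\mathcal{Q}}$ is established, Lemma~\ref{SMlem:symmetrization} is proved, and combining it with Lemmas~\ref{SMlem:uniform_mixture_property1} and~\ref{SMlem:fixed_points_specific} yields $\Phi_{3,\mathcal{C}_{N,\mathcal{Q}}}=\mathcal{D}=\Phi_{3,\mathcal{U}_{N,\mathcal{Q}}}$, which is Proposition~\ref{SMprop:if_part} and hence the ``if'' part of Theorem~\ref{SMthm:main}.
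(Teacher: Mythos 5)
Your route is genuinely different from the paper's, and as a plan it is sound. You take $\mathcal{D}:=\Phi_{3,\mathcal{C}_{N,\mathcal{Q}}}$ and aim to show that its image --- equivalently, the commutant of $\{U^{\otimes3}:U\in\mathcal{C}_{N,\mathcal{Q}}\}$ --- is contained in $\mathcal{M}_{\mathcal{Q}}$; together with Lemma~\ref{SMlem:fixed_points_specific} this would identify the fixed-point space exactly, which is more than the lemma requires. The paper never computes the commutant: it exhibits one particular $\mathcal{D}=\mathcal{D}_4\circ\mathcal{D}_3\circ\mathcal{D}_2\circ\mathcal{D}_1\in\mathfrak{C}_{3,\mathcal{R}}$ built from partial averages (over powers of $\mathrm{S}_j$, powers of $\mathrm{CZ}_{j,k}$, over $\mathcal{C}_{N_3}$, and over all controlled-Paulis) and tracks a basis element $P\otimes\ket{\bm{x}}\bra{\bm{y}}\otimes O$ through the composition. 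The order is essential there: $\mathcal{D}_1$ and $\mathcal{D}_2$ first annihilate every term except those with $\sum_j x_{j,m}=\sum_j y_{j,m}$ and $\sum_j x_{j,m}x_{j,m'}\equiv\sum_j y_{j,m}y_{j,m'}\ (\mathrm{mod}\ 2)$, and these bit-string conditions are precisely the hypotheses under which the controlled-Pauli average $\mathcal{D}_{4,m}$ returns a \emph{single} corrected permutation $T_{\sigma'}$ (via the identity $2^{-N_3}\sum_{Q}Q\otimes Q=\mathrm{SWAP}$ and the combinatorial Lemma~\ref{SMlem:3_bit_property}). That construction sidesteps both the commutant computation and the linear dependences among the $V_\sigma$ that your approach must handle when $N_3\le 1$. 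Your preliminary reductions --- the double-centralizer argument for $C_{\mathcal{P}_N}(\mathcal{R})$ and the Schur--Weyl step via the $3$-design property of $\mathcal{C}_{N_3}$ --- are correct as stated.

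The gap is the step you yourself flag: you assert, but do not prove, that joint invariance under $\mathrm{S}_j^{\otimes3}$, $\mathrm{CZ}_{j,k}^{\otimes3}$ and $\mathrm{C}(P)_{j,k}^{\otimes3}$ confines the $\mathrm{A}_2$-content of each $M_\sigma$ to a permutation times a $\mathrm{Z}$-string \emph{with the permutation label synchronized to the $T_\sigma$ on $\mathrm{A}_3$}. This is the entire content of the lemma: without it you have only reached the strictly larger space cut out by the Pauli and $\mathrm{A}_3$-Clifford constraints, and, as you note, that space is not contained in $\mathcal{M}_{\mathcal{R}}$ (e.g.\ $S_{(12)}\otimes T_{e}$ survives those constraints but fails $\mathrm{CNOT}^{\otimes3}$-invariance). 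Making it precise requires showing that a matrix unit $\ket{\bm{x}}\bra{\bm{y}}$ on $\mathcal{H}_2^{\mathrm{tot}}$ paired with $T_\sigma$ survives only if $x_{\sigma(j),m}=y_{j,m}$ for all $j$ and $m$, and the consistency of this condition across different $m$ for one and the same $\sigma$ is exactly where the quadratic (CZ-derived) condition and a dedicated combinatorial argument enter. So the proposal should be counted as a viable outline with the decisive computation missing rather than a complete proof.
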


In order to simplify the proof of this lemma, we show that the statements of this lemma for two symmetry groups are equivalent if the two groups can be transformed into each other by some Clifford conjugation action up to phase.

\begin{lemma} \label{SMlem:unitary_design_equivalence}
	Let $N\in\mathbb{N}$, $\mathcal{G}$ and $\mathcal{G}'$ be subgroups of $\mathcal{U}_N$ satisfying $\mathcal{U}_0\mathcal{G}'=\mathcal{U}_0 W\mathcal{G}W^\dag$ with some $W\in\mathcal{C}_N$, and $\mathfrak{C}_{t, \mathcal{G}}$ and $\mathfrak{C}_{t, \mathcal{G}'}$ be the sets of all $t$-fold $\mathcal{G}$- and $\mathcal{G}'$-symmetric Clifford conjugation mixture maps defined by Eq.~\eqref{SMeq:symmetric_Clifford_mixture_def}. 
	Then, there exists a map $\mathcal{D}'\in\mathfrak{C}_{3, \mathcal{G}'}$ such that $\mathcal{D}'(L)\in\mathcal{M}_{\mathcal{G}'}$ for all $L\in\mathcal{L}(\mathcal{H}^{\otimes 3})$ if and only if there exists a map $\mathcal{D}\in\mathfrak{C}_{3, \mathcal{G}}$ such that $\mathcal{D}(L)\in\mathcal{M}_{\mathcal{G}}$ for all $L\in\mathcal{L}(\mathcal{H}^{\otimes 3})$. 
\end{lemma}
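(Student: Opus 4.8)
The plan is to observe that the Clifford conjugation $W\cdot W^{\dagger}$ intertwines all of the objects attached to $\mathcal{G}$ with those attached to $\mathcal{G}'$, so that the two existence statements are literally transported into one another. First I would record three elementary facts that follow from the hypothesis $\mathcal{U}_0\mathcal{G}'=\mathcal{U}_0 W\mathcal{G}W^{\dagger}$. (i) $\mathcal{U}_{N,\mathcal{G}'}=W\,\mathcal{U}_{N,\mathcal{G}}\,W^{\dagger}$, and hence (since $W\in\mathcal{C}_N$ and $\mathcal{C}_N$ is closed under conjugation by its own elements) $\mathcal{C}_{N,\mathcal{G}'}=W\,\mathcal{C}_{N,\mathcal{G}}\,W^{\dagger}$: indeed commuting with every element of $\mathcal{G}'$ is the same as commuting with every element of $\mathcal{U}_0\mathcal{G}'$ because scalars are central, $\mathcal{U}_0\mathcal{G}'=\mathcal{U}_0 W\mathcal{G}W^{\dagger}$, and $V$ commutes with $WGW^{\dagger}$ iff $W^{\dagger}VW$ commutes with $G$. (ii) $\mathcal{M}_{\mathcal{G}'}=W^{\otimes 3}\,\mathcal{M}_{\mathcal{G}}\,W^{\dagger\otimes 3}$: writing each generator $G_i'$ of $\mathcal{G}'$ as $c_i W G_i W^{\dagger}$ with $c_i\in\mathcal{U}_0$, $G_i\in\mathcal{G}$, and using $[W^{\otimes 3},V_\sigma]=0$, one gets $V_\sigma(G_1'\otimes G_2'\otimes G_3')=(c_1c_2c_3)\,W^{\otimes 3}V_\sigma(G_1\otimes G_2\otimes G_3)W^{\dagger\otimes 3}$, which lies in $W^{\otimes 3}\mathcal{M}_{\mathcal{G}}W^{\dagger\otimes 3}$ because this is a $\mathbb{C}$-linear span (so the phase is harmless), and the reverse inclusion follows symmetrically from $\mathcal{U}_0\mathcal{G}=\mathcal{U}_0 W^{\dagger}\mathcal{G}'W$. (iii) $\mathcal{E}_{3,V_1}\circ\mathcal{E}_{3,V_2}=\mathcal{E}_{3,V_1V_2}$ for $V_1,V_2\in\mathcal{U}_N$, directly from Eq.~\eqref{SMeq:t_fold_unitary_action_def}.

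Next, assuming there is $\mathcal{D}\in\mathfrak{C}_{3,\mathcal{G}}$ with $\mathcal{D}(L)\in\mathcal{M}_{\mathcal{G}}$ for all $L\in\mathcal{L}(\mathcal{H}^{\otimes 3})$, I would define $\mathcal{D}':=\mathcal{E}_{3,W}\circ\mathcal{D}\circ\mathcal{E}_{3,W^{\dagger}}$ and check the two required properties. Writing $\mathcal{D}=\sum_{j=1}^n\lambda_j\,\mathcal{E}_{3,U_j}$ with $U_j\in\mathcal{C}_{N,\mathcal{G}}$ and $\sum_j\lambda_j=1$, fact (iii) gives $\mathcal{D}'=\sum_{j=1}^n\lambda_j\,\mathcal{E}_{3,WU_jW^{\dagger}}$, and by (i) each $WU_jW^{\dagger}\in\mathcal{C}_{N,\mathcal{G}'}$, so $\mathcal{D}'\in\mathfrak{C}_{3,\mathcal{G}'}$ by Eq.~\eqref{SMeq:symmetric_Clifford_mixture_def}. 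For the image condition, for any $L$ we have $\mathcal{D}'(L)=W^{\otimes 3}\,\mathcal{D}\!\left(W^{\dagger\otimes 3}LW^{\otimes 3}\right)W^{\dagger\otimes 3}\in W^{\otimes 3}\mathcal{M}_{\mathcal{G}}W^{\dagger\otimes 3}=\mathcal{M}_{\mathcal{G}'}$ by (ii). This establishes the ``if there is $\mathcal{D}$ then there is $\mathcal{D}'$'' direction.

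The converse is obtained by symmetry: the hypothesis rewrites as $\mathcal{U}_0\mathcal{G}=\mathcal{U}_0 W^{\dagger}\mathcal{G}'W$ with $W^{\dagger}\in\mathcal{C}_N$, so the same construction with $(\mathcal{G},W)$ and $(\mathcal{G}',W^{\dagger})$ exchanged sends any admissible $\mathcal{D}'\in\mathfrak{C}_{3,\mathcal{G}'}$ to $\mathcal{D}:=\mathcal{E}_{3,W^{\dagger}}\circ\mathcal{D}'\circ\mathcal{E}_{3,W}\in\mathfrak{C}_{3,\mathcal{G}}$ with $\mathcal{D}(L)\in\mathcal{M}_{\mathcal{G}}$ for all $L$.

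I do not expect a real obstacle in this lemma; it is essentially bookkeeping. The only points requiring mild care are facts (i) and (ii): one must use that scalars in $\mathcal{U}_0$ are central (so the twirling/design structure never sees the phase), and that $W\in\mathcal{C}_N$ is genuinely needed so that $WU_jW^{\dagger}$ is again a \emph{Clifford} operator rather than merely a unitary — both of which are immediate.
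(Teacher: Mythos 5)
Your proposal is correct and follows essentially the same route as the paper: conjugate the mixture map by $\mathcal{E}_{3,W}$, note that $WU_jW^\dagger\in\mathcal{C}_{N,\mathcal{G}'}$, verify $W^{\otimes 3}\mathcal{M}_{\mathcal{G}}W^{\dagger\otimes 3}=\mathcal{M}_{\mathcal{G}'}$ (with the $\mathcal{U}_0$ phases absorbed by the $\mathbb{C}$-linear span), and invoke the symmetry of the hypothesis for the converse. Your explicit bookkeeping of facts (i)--(iii) only makes more visible what the paper's displayed chain of equalities does implicitly.
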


\begin{proof}
	Since $\mathcal{U}_0\mathcal{G}'=\mathcal{U}_0 W\mathcal{G}W^\dag$ is equivalent to $\mathcal{U}_0\mathcal{G}=\mathcal{U}_0 W^\dag\mathcal{G}'W$, it is sufficient only to prove the ``if'' part. 
	We suppose that there exists a map $\mathcal{D}\in\mathfrak{C}_{3, \mathcal{G}}$ such that $\mathcal{D}(L)\in\mathcal{M}_\mathcal{G}$ for all $L\in\mathcal{L}(\mathcal{H}^{\otimes 3})$. 
	By the definition of $\mathfrak{C}_{3, \mathcal{G}}$, $\mathcal{D}$ can be written as 
	\begin{align}
		\mathcal{D}=\sum_{j=1}^n \lambda_j\mathcal{E}_{3, U_j} 
	\end{align}
	with some $n\in\mathbb{N}$, $U_1, U_2, ..., U_n\in\mathcal{C}_{N, \mathcal{G}}$, and $\lambda_1, \lambda_2, ..., \lambda_n\in\mathbb{R}$ satisfying $\sum_{j=1}^n \lambda_j=1$. 
	We define 
	\begin{align}
		\mathcal{D}':=\sum_{j=1}^n \lambda_j\mathcal{E}_{3, W U_j W^\dag}. 
	\end{align}
	Then, $\mathcal{D}'\in\mathfrak{C}_{3, \mathcal{G}}$ by noting that $WU_j W^\dag\in\mathcal{C}_{N, \mathcal{G}'}$ for all $j\in\{1, 2, ..., n\}$. 
	By this definition, we also know that for any $L\in\mathcal{L}(\mathcal{H}^{\otimes t})$, 
	\begin{align}
		\mathcal{D}'(L)
		=&W^{\otimes 3}\mathcal{D}(W^{\dag\otimes 3}LW^{\otimes 3})W^{\dag\otimes 3} \nonumber\\
		\in& W^{\otimes 3}\mathcal{M}_\mathcal{G}W^{\dag\otimes 3} \nonumber\\
		=&\mathrm{span}\left(\left\{W^{\otimes 3}V_\sigma W^{\dag\otimes 3}(WG_1 W^\dag\otimes WG_2 W^\dag\otimes WG_3 W^\dag)\ |\ \sigma\in\mathfrak{S}_3, G_1, G_2, G_3\in\mathcal{G})\right\}\right) \nonumber\\
		=&\mathrm{span}\left(\left\{V_\sigma(G'_1\otimes G'_2\otimes G'_3)\ |\ \sigma\in\mathfrak{S}_3, G'_1, G'_2, G'_3\in\mathcal{G}')\right\}\right) \nonumber\\
		=&\mathcal{M}_{\mathcal{G}'}. 
	\end{align}
\end{proof}

Now we note that any Pauli subgroup $\mathcal{Q}$ can generally be transformed into a Pauli subgroup $\mathcal{R}$ in the form of 
\begin{align}
	\mathcal{R}:=\mathcal{P}_0\{\mathrm{I}, \mathrm{X}, \mathrm{Y}, \mathrm{Z}\}^{\otimes N_1}\otimes\{\mathrm{I}, \mathrm{Z}\}^{\otimes N_2}\otimes\{\mathrm{I}\}^{N_3} \label{SMeq:Pauli_subgroup_standard}
\end{align}
with some $N_1, N_2, N_3\geq0$ up to phase via some Clifford conjugation action, which we prove in Lemma~\ref{SMlem:Pauli_subgroup_equivalence} in Appendix~\ref{SMsec:technical}. 
By combining this property and Lemma~\ref{SMlem:unitary_design_equivalence}, we know that it is sufficient only to prove Lemma~\ref{SMlem:symmetrization} when $\mathcal{Q}$ is given as $\mathcal{R}$ in the form of Eq.~\eqref{SMeq:Pauli_subgroup_standard}.

Since we are going to deal with three copies of the system each of which is decomposed into three subsystems, we define the notations for explicit presentation of the Hilbert space on which an operator acts or in which a vector exists. 
When we explicitly show that an operator $O$ acts on a Hilbert space $\mathcal{K}$ and a vector $\ket{\Psi}$ exists in $\mathcal{K}$, we denote $O^{(\mathcal{K})}$ and $\ket{\Psi}^{(\mathcal{K})}$, respectively. 
The notations for Hilbert spaces are as follows: 
In order to distinguish the Hilbert spaces $\mathcal{H}$ associated with the $3$ copies of the $N$ qubits that we consider in the context of unitary $3$-designs, we denote the $3$ Hilbert spaces by $\mathcal{H}^1$, $\mathcal{H}^2$ and $\mathcal{H}^3$ (see Fig.~\ref{SMfig:space_decomposition} (a)). 
The symmetry $\mathcal{R}$ induces a natural decomposition of each Hilbert space $\mathcal{H}^j$ into three parts $\mathcal{H}_1^j$, $\mathcal{H}_2^j$ and $\mathcal{H}_3^j$ of $N_1$ $N_2$ and $N_3$ qubits, correspondingly to the representation of $\mathcal{R}$. 
We also denote the Hilbert space of the $l$th qubit in $\mathcal{H}_k^j$ by $\mathcal{H}_{k, l}^j$. (see Fig.~\ref{SMfig:space_decomposition} (b)). 
We denote the tensor product of the three spaces of $\mathcal{H}_k^1$, $\mathcal{H}_k^2$ and $\mathcal{H}_k^3$ by $\mathcal{H}_k^\mathrm{tot}$ (see Fig.~\ref{SMfig:space_decomposition} (c)). 
We may refer to $\mathcal{H}_k^j$ simply as $\mathcal{H}_k$ when we need not specify $j$.

In the proof of Lemma~\ref{SMlem:symmetrization}, we focus on the following four types of $\mathcal{R}$-symmetric Clifford operators; the S gates on a qubit in $\mathcal{H}_2$, the controlled-Z gates on two qubits in $\mathcal{H}_2$, the Clifford gates on qubits in $\mathcal{H}_3$, and the controlled-Pauli gates with a control qubit in $\mathcal{H}_2$ and target qubits in $\mathcal{H}_3$ (see Fig.~\ref{SMfig:space_decomposition} (d)). 
We are going to see their properties one by one in the four lemmas below.

	\begin{figure}
	\begin{center}
	\includegraphics[width=0.85\columnwidth]{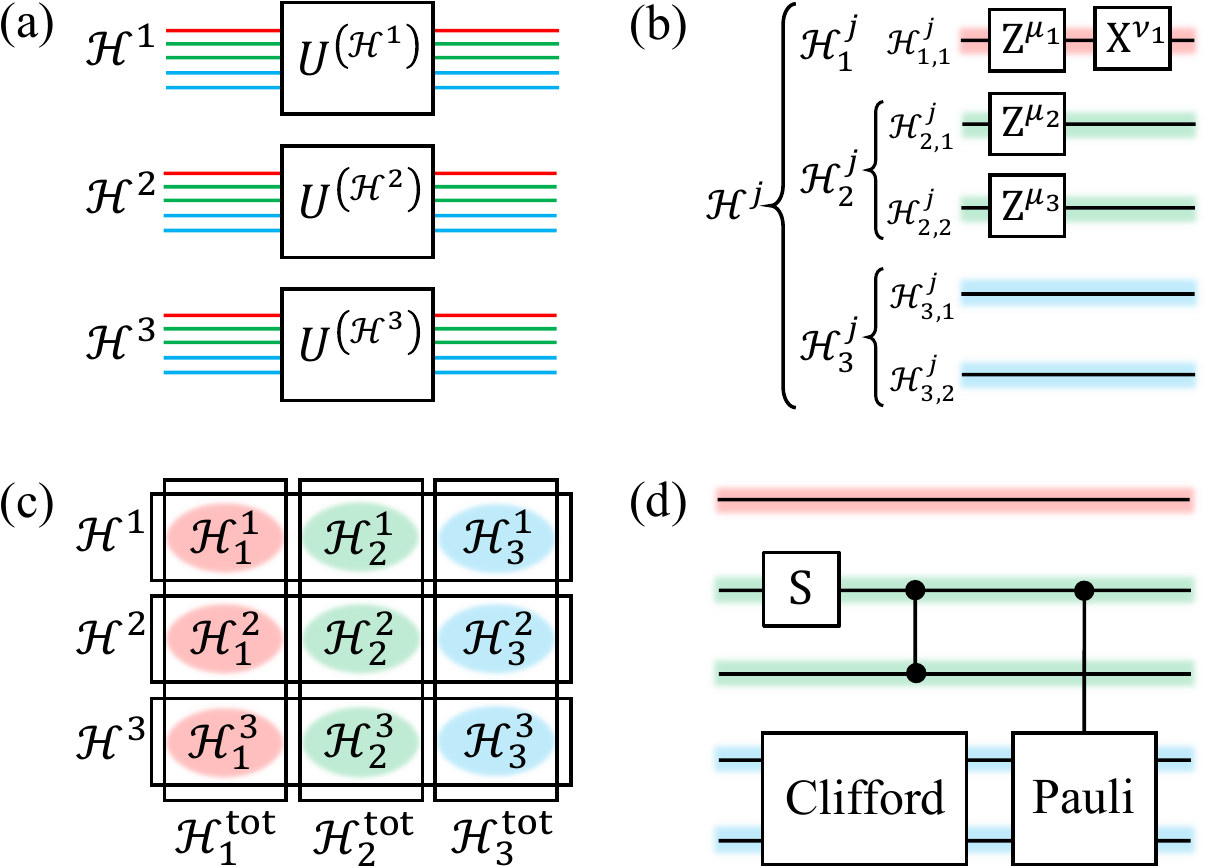}
	\caption{Setup of Proposition~\ref{SMprop:if_part} and the notations of the Hilbert spaces in the proof. 
	(a) In the proof of unitary $3$-designs, we consider unitary operations $U$ on $3$ copies of a Hilbert space, which we denote by $\mathcal{H}^1$, $\mathcal{H}^2$ and $\mathcal{H}^3$. 
	When we explicitly show that a unitary operator $U$ acts on $\mathcal{H}^j$, we denote $U^{(\mathcal{H}^j)}$. 
	(b) The symmetry $\mathcal{R}$ decomposes each Hilbert space $\mathcal{H}^j$ into three parts; $\mathcal{H}_1^j$, $\mathcal{H}_2^j$ and $\mathcal{H}_3^j$ for the $N_1$, $N_2$ and $N_3$ qubits, correspondingly to the representation of $\mathcal{R}$. 
	The figure is for the case when $N_1=1$, $N_2=2$ and $N_3=2$. 
	We denote the Hilbert space for the $l$th qubit in $\mathcal{H}_k^j$ by $\mathcal{H}_{k, l}^j$. 
	(c) We define $\mathcal{H}_k^\mathrm{tot}:=\mathcal{H}_k^1 \otimes\mathcal{H}_k^2\otimes\mathcal{H}_k^3$ for $k=1, 2, 3$. 
	We note that the total Hilbert space $\mathcal{H}^{\otimes 3}$ can be expressed in two ways as $\mathcal{H}^1\otimes\mathcal{H}^2\otimes\mathcal{H}^3$ and as $\mathcal{H}_1^\mathrm{tot}\otimes\mathcal{H}_2^\mathrm{tot}\otimes\mathcal{H}_3^\mathrm{tot}$. 
	(d) As $\mathcal{R}$-symmetric Clifford operations, we focus on four types of gates; the S gates on qubits in $\mathcal{H}_2$, the controlled-Z gates on two qubits in $\mathcal{H}_2$, the Clifford gates on qubits in $\mathcal{H}_3$, and the controlled-Pauli gates with a control qubit in $\mathcal{H}_2$ and target qubits in $\mathcal{H}_3$. 
	}
	\label{SMfig:space_decomposition}
	\end{center}
	\end{figure}

First, we prove the property of the mixture of the $\mathrm{S}$ gates on $\mathcal{H}_2$. 
In the following, we denote the Pauli Z-basis of the qubit in $\mathcal{H}_{2, l}^j$ by $\ket{x_{j, l}}$ with $x_{j, l}\in\{0, 1\}$, and define $\ket{\bm{x}}:=\bigotimes_{j\in\{1, 2, 3\}, l\in\{1, 2, ..., N_2\}, } \ket{x_{j, l}}$ for $\bm{x}:=(x_{j, l})_{j\in\{1, 2, 3\}, l\in\{1, 2, ..., N_2\}}$.

\begin{lemma} \label{SMlem:S_gates}
	Let $N\in\mathbb{N}$, $\mathcal{R}$ be defined by Eq.~\eqref{SMeq:Pauli_subgroup_standard}, $\mathfrak{C}_{3, \mathcal{R}}$ be the set of all $3$-fold $\mathcal{R}$-symmetric Clifford conjugation mixture maps defined by Eq.~\eqref{SMeq:symmetric_Clifford_mixture_def}, $m\in\{1, 2, ..., N_2\}$, $\mathcal{D}_{1, m}$ be defined by 
	\begin{align}
		\mathcal{D}_{1, m}(L):=\frac{1}{4}\sum_{\mu=0}^3 \left(\bigotimes_{j=1}^3 
		{\mathrm{S}^\mu}^{\left(\mathcal{H}_{2, m}^j\right)}\right) L\left(\bigotimes_{j=1}^3 {\mathrm{S}^{\mu\dag}}^{\left(\mathcal{H}_{2, m}^j\right)}\right) \label{SMeq:S^B_mixture_def}
	\end{align}
	and $K\in\mathcal{L}(\mathcal{H}^{\otimes 3})$ be in the form of 
	\begin{align}
		K=P^{\left(\mathcal{H}_1^\mathrm{tot}\right)}\otimes\ket{\bm{x}}\bra{\bm{y}}^{\left(\mathcal{H}_2^\mathrm{tot}\right)}\otimes O^{\left(\mathcal{H}_3^\mathrm{tot}\right)} \label{SMeq:basis_total}
	\end{align}
	with some $P\in\mathcal{P}_{3N_1}^+$, $\bm{x}, \bm{y}\in\{0, 1\}^{3N_2}$ and $O\in\mathcal{L}(\mathcal{H}_3^\mathrm{tot})$. 
	Then, we have $\mathcal{D}_{1, m}\in\mathfrak{C}_{3, \mathcal{R}}$ and 
	\begin{align}
		\mathcal{D}_{1, m}(K)
		=\left\{
		\begin{array}{ll}
			K\ &(\textrm{if}\ \sum_{j=1}^3 x_{j, m}=\sum_{j=1}^3 y_{j, m})\\
			0\ &(\textrm{otherwise}). 
		\end{array} 
		\right. 
	\end{align}
\end{lemma}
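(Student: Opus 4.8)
\textbf{Proof proposal for Lemma~\ref{SMlem:S_gates}.}

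The plan is to verify the two assertions separately: first that $\mathcal{D}_{1, m}$ belongs to $\mathfrak{C}_{3, \mathcal{R}}$, and second that it acts on an operator $K$ of the given tensor-product form exactly as the stated projector onto the ``diagonal charge'' sector. The membership claim is essentially a bookkeeping check. The map $\mathcal{D}_{1, m}$ is by construction the uniform average $\frac{1}{4}\sum_{\mu=0}^{3}\mathcal{E}_{3, U_\mu}$ over the four unitaries $U_\mu:=\mathrm{S}^\mu$ acting on the $m$th qubit of the subsystem $\mathcal{H}_2$ of each copy. So I need only confirm that $\mathrm{S}_m\in\mathcal{C}_{N, \mathcal{R}}$, i.e.\ that $\mathrm{S}$ on a qubit in the $\{\mathrm{I}, \mathrm{Z}\}$-block commutes with every element of $\mathcal{R}$; this is immediate since $\mathrm{S}$ commutes with $\mathrm{I}$ and $\mathrm{Z}$ and acts as the identity on all other tensor factors, while the nonzero coefficients $\lambda_\mu=\tfrac14$ sum to $1$. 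Hence $\mathcal{D}_{1, m}\in\mathfrak{C}_{3, \mathcal{R}}$.

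For the action on $K$, I would reduce to a single-qubit calculation on each copy's $m$th qubit of $\mathcal{H}_2$ and then take the tensor product. Writing $K$ in the form of Eq.~\eqref{SMeq:basis_total}, note that the operators $\mathrm{S}^\mu$ in $\mathcal{D}_{1, m}$ act trivially on the $\mathcal{H}_1^\mathrm{tot}$ and $\mathcal{H}_3^\mathrm{tot}$ factors and on all qubits of $\mathcal{H}_2^\mathrm{tot}$ other than the $m$th in each copy; so $\mathcal{D}_{1, m}(K)$ equals $P\otimes\mathcal{D}'(\ket{\bm{x}}\bra{\bm{y}})\otimes O$ where $\mathcal{D}'$ is the restriction of the twirl to the three $m$th qubits. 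On the computational basis, $\mathrm{S}\ket{x}=i^{x}\ket{x}$, so $(\bigotimes_{j}\mathrm{S}^\mu)\ket{\bm{x}}=i^{\mu\sum_j x_{j,m}}\ket{\bm{x}}$ (the product over the spectator qubits contributing a common phase that cancels), and therefore
\begin{align}
    \mathcal{D}'\!\left(\ket{\bm{x}}\bra{\bm{y}}\right)
    =\frac14\sum_{\mu=0}^{3} i^{\mu\left(\sum_j x_{j,m}-\sum_j y_{j,m}\right)}\,\ket{\bm{x}}\bra{\bm{y}}.
\end{align}
The geometric sum $\frac14\sum_{\mu=0}^{3} i^{\mu d}$ equals $1$ when $d\equiv 0\pmod 4$ and $0$ otherwise; since each $x_{j,m}, y_{j,m}\in\{0,1\}$, the difference $d=\sum_j x_{j,m}-\sum_j y_{j,m}$ lies in $\{-3,\dots,3\}$, so $d\equiv 0\pmod 4$ holds precisely when $\sum_j x_{j,m}=\sum_j y_{j,m}$. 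Reassembling the tensor product gives exactly the claimed dichotomy.

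I do not anticipate a genuine obstacle here; the statement is a clean modular-arithmetic identity dressed up in tensor notation. The only point requiring a little care is confirming that the phases contributed by $\mathrm{S}^\mu$ on the spectator qubits of $\mathcal{H}_2^\mathrm{tot}$ are identical on the bra and ket sides of each term in $\ket{\bm{x}}\bra{\bm{y}}$ only when the spectator labels match — but since $\mathrm{S}^\mu\otimes\mathrm{S}^{\mu\dagger}$ acts on those factors and $\mathrm{S}^\mu\ket{z}\bra{z'}\mathrm{S}^{\mu\dagger}=i^{\mu(z-z')}\ket{z}\bra{z'}$, one sees that any such spectator phase is absorbed consistently and the only surviving $\mu$-dependence is through the $m$th qubits, as used above. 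Making that observation explicit closes the argument.
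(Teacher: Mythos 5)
Your proposal is correct and follows essentially the same route as the paper's proof: verify membership in $\mathfrak{C}_{3,\mathcal{R}}$ by noting $\mathrm{S}$ on a qubit in the $\{\mathrm{I},\mathrm{Z}\}$-block is $\mathcal{R}$-symmetric, then use $\mathrm{S}\ket{z}=i^{z}\ket{z}$ to reduce the twirl to the geometric sum $\frac14\sum_{\mu=0}^{3} i^{\mu d}$ and observe that $d\in\{-3,\dots,3\}$ forces the mod-$4$ condition to coincide with equality. The only cosmetic difference is your worry about ``spectator'' phases, which is moot since $\mathrm{S}^{\mu}$ acts only on the $m$th qubit of $\mathcal{H}_2$ in each copy.
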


\begin{proof}
	Since $\mathrm{S}^{\left(\mathcal{H}_{2, m}^j\right)}\in\mathcal{C}_{N, \mathcal{R}}$ and $\mathcal{D}_{1, m}$ is an affine combination of the conjugation actions of $I^{\otimes 3}$, $\mathrm{S}^{\otimes 3}$, $(\mathrm{S}^2)^{\otimes 3}$ and $(\mathrm{S}^3)^{\otimes 3}$, we can confirm that $\mathcal{D}_{1, m}\in\mathfrak{C}_{3, \mathcal{R}}$. 
	By noting that $\mathrm{S}=\sum_{z\in\{0, 1\}} i^z\ket{z}\bra{z}$, we get 
	\begin{align}
		\mathcal{D}_{1, m}(K)
		=&\frac{1}{4}\sum_{\mu=0}^3 P^{\left(\mathcal{H}_1^\mathrm{tot}\right)}\otimes \left(i^{\mu\sum_{j=1}^3 x_{j, m}}\ket{\bm{x}}\bra{\bm{y}}^{\left(\mathcal{H}_2^\mathrm{tot}\right)}i^{-\mu\sum_{j=1}^3 y_{j, m}}\right)\otimes O^{\left(\mathcal{H}_3^\mathrm{tot}\right)} \nonumber\\
		=&\left(\frac{1}{4}\sum_{\mu=0}^3 i^{\mu(\sum_{j=1}^3 x_{j, m}-\sum_{j=1}^3 y_{j, m})}\right) P^{\left(\mathcal{H}_1^\mathrm{tot}\right)}\otimes\ket{\bm{x}}\bra{\bm{y}}^{\left(\mathcal{H}_2^\mathrm{tot}\right)}\otimes O^{\left(\mathcal{H}_3^\mathrm{tot}\right)} \nonumber\\
		=&\left\{
		\begin{array}{ll}
			K\ &(\textrm{if}\ \sum_{j=1}^3 x_{j, m}\equiv\sum_{j=1}^3 y_{j, m}\ (\mathrm{mod}\ 4))\\
			0\ &(\textrm{otherwise}). 
		\end{array} 
		\right. 
	\end{align}
	Since $x_{j, m}$ and $y_{j, m}$ take only $0$ or $1$, $\sum_{j=1}^3 x_{j, m}\equiv\sum_{j=1}^3 y_{j, m}$ (mod $4$) is equivalent to $\sum_{j=1}^3 x_{j, m}=\sum_{j=1}^3 y_{j, m}$. 
\end{proof}

Second, we prove the property of the mixture of the controlled-Z gates on $\mathcal{H}_2$.

\begin{lemma} \label{SMlem:CZ^BB_mixture}
	Let $N\in\mathbb{N}$, $\mathcal{R}$ be defined by Eq.~\eqref{SMeq:Pauli_subgroup_standard}, $\mathfrak{C}_{3, \mathcal{R}}$ be the set of all $3$-fold $\mathcal{R}$-symmetric Clifford conjugation mixture maps defined by Eq.~\eqref{SMeq:symmetric_Clifford_mixture_def}, $m, m'\in\{1, 2, ..., N_2\}$ satisfy $m\neq m'$, $\mathcal{D}_{2, m, m'}$ be defined by 
	\begin{align}
		\mathcal{D}_{2, m, m'}(L):=\frac{1}{2}\sum_{\nu=0}^1 \left(\bigotimes_{j=1}^3 {\mathrm{CZ}^\nu}^{\left(\mathcal{H}_{2, m}^j, \mathcal{H}_{2, m'}^j\right)}\right) L\left(\bigotimes_{j=1}^3 {\mathrm{CZ}^{\nu\dag}}^{\left(\mathcal{H}_{2, m}^j, \mathcal{H}_{2, m'}^j\right)}\right) \label{SMeq:CZ^B_mixture_def}
	\end{align}
	for all $L\in\mathcal{L}(\mathcal{H}^{\otimes 3})$, and $K\in\mathcal{L}(\mathcal{H}^{\otimes 3})$ be in the form of Eq.~\eqref{SMeq:basis_total} with some $P\in\mathcal{P}_{3N_1}^+$, $\bm{x}, \bm{y}\in\{0, 1\}^{3N_2}$ and $O\in\mathcal{L}(\mathcal{H}_3^\mathrm{tot})$. 
	Then, $\mathcal{D}_{2, m, m'}\in\mathfrak{C}_{3, \mathcal{R}}$ and 
	\begin{align}
		\mathcal{D}_{2, m ,m'}(K)=
		\left\{
		\begin{array}{ll}
			K\ &(\textrm{if}\ \sum_{j=1}^3 x_{j, m} x_{j, m'} \equiv\sum_{j=1}^3 y_{j, m} y_{j, m'}\ (\mathrm{mod}\ 2))\\
			0\ &(\textrm{otherwise}). 
		\end{array}
		\right. 
	\end{align}
\end{lemma}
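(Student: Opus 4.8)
The plan is to mirror the proof of Lemma~\ref{SMlem:S_gates} almost verbatim, replacing the S gate by the controlled-Z gate and the modulus $4$ by $2$. First I would check that $\mathcal{D}_{2, m, m'}\in\mathfrak{C}_{3, \mathcal{R}}$. The gate $\mathrm{CZ}^{\left(\mathcal{H}_{2, m}^j, \mathcal{H}_{2, m'}^j\right)}$ acts only on two qubits lying in the $\{\mathrm{I}, \mathrm{Z}\}$-block of $\mathcal{R}$, and since $\mathrm{CZ}$ is diagonal in the computational basis it commutes with $\mathrm{I}$ and $\mathrm{Z}$ on each of those qubits; hence this gate lies in $\mathcal{C}_{N, \mathcal{R}}$, and so does $\bigotimes_{j=1}^3 \mathrm{CZ}^{\nu\,\left(\mathcal{H}_{2, m}^j, \mathcal{H}_{2, m'}^j\right)}$ for $\nu\in\{0,1\}$. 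Thus $\mathcal{D}_{2, m, m'}=\frac{1}{2}\mathcal{E}_{3, I^{\otimes N}}+\frac{1}{2}\mathcal{E}_{3, W}$ with $W:=\bigotimes_{j=1}^3 \mathrm{CZ}^{\left(\mathcal{H}_{2, m}^j, \mathcal{H}_{2, m'}^j\right)}\in\mathcal{C}_{N, \mathcal{R}}$, an affine combination of symmetric Clifford conjugations whose coefficients sum to $1$, so $\mathcal{D}_{2, m, m'}\in\mathfrak{C}_{3, \mathcal{R}}$.

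Next I would compute the action on $K$. Using $\mathrm{CZ}=\sum_{z, z'\in\{0, 1\}} (-1)^{zz'}\ket{z}\ket{z'}\bra{z}\bra{z'}$ and hence $\mathrm{CZ}^\nu\ket{z}\ket{z'}=(-1)^{\nu zz'}\ket{z}\ket{z'}$, the $j$th factor of the conjugation sends $\ket{x_{j, m}}\ket{x_{j, m'}}\bra{y_{j, m}}\bra{y_{j, m'}}$ to $(-1)^{\nu(x_{j, m}x_{j, m'}-y_{j, m}y_{j, m'})}$ times itself, while the $P^{\left(\mathcal{H}_1^\mathrm{tot}\right)}$ and $O^{\left(\mathcal{H}_3^\mathrm{tot}\right)}$ factors and all qubits of $\mathcal{H}_2^\mathrm{tot}$ other than those labeled $m$ and $m'$ are left untouched because $W$ acts trivially there. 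Multiplying the three phases for $j=1, 2, 3$ and averaging over $\nu\in\{0, 1\}$ multiplies $K$ by the scalar $\frac{1}{2}\sum_{\nu=0}^1 (-1)^{\nu\left(\sum_{j=1}^3 x_{j, m}x_{j, m'}-\sum_{j=1}^3 y_{j, m}y_{j, m'}\right)}$, which equals $1$ when $\sum_{j=1}^3 x_{j, m}x_{j, m'}\equiv\sum_{j=1}^3 y_{j, m}y_{j, m'}$ (mod $2$) and $0$ otherwise. This is exactly the claimed formula.

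Since neither step involves a genuine difficulty — the only inputs are the diagonality of $\mathrm{CZ}$ in the computational basis, the fact that a $\mathrm{CZ}$ on two $\{\mathrm{I},\mathrm{Z}\}$-block qubits is $\mathcal{R}$-symmetric, and orthogonality of characters of $\mathbb{Z}_2$ — I do not expect a real obstacle. The only point that requires care is the bookkeeping of the tensor-factor labels $\left(\mathcal{H}_{2, m}^j, \mathcal{H}_{2, m'}^j\right)$, so that the accumulated phase is governed by the bilinear quantity $x_{j, m}x_{j, m'}$ rather than the individual bits, and noting that — unlike the S-gate case — $\mathrm{CZ}$ is an involution, so the relevant averaging is over the order-$2$ group $\nu\in\{0, 1\}$ and the resulting congruence is modulo $2$.
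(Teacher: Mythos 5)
Your proposal is correct and follows essentially the same route as the paper's proof: verify that the three-fold $\mathrm{CZ}$ is an $\mathcal{R}$-symmetric Clifford operator so that $\mathcal{D}_{2,m,m'}$ is an affine combination of two symmetric Clifford conjugations, then use the diagonality of $\mathrm{CZ}$ to extract the scalar $\tfrac{1}{2}\sum_{\nu=0}^{1}(-1)^{\nu(\sum_j x_{j,m}x_{j,m'}-\sum_j y_{j,m}y_{j,m'})}$. Your use of the phase $(-1)^{\nu z z'}$ is the right bookkeeping (the paper's displayed intermediate step writes $i$ where $-1$ is meant, though its conclusion is the same), so there is nothing to add.
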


\begin{proof}
	Since $\mathrm{CZ}^{\left(\mathcal{H}_{2, m}^j, \mathcal{H}_{2, m'}^j\right)}\in\mathcal{C}_{N, \mathcal{R}}$ and $\mathcal{D}_{2, m, m'}$ is an affine combination of the conjugation actions of $I^{\otimes 3}$ and $\mathrm{CZ}^{\otimes 3}$, we can confirm that $\mathcal{D}_{2, m, m'}\in\mathfrak{C}_{3, \mathcal{R}}$. 
	By noting that $\mathrm{CZ}=\sum_{z, w\in\{0, 1\}} (-1)^{zw}\ket{zw}\bra{zw}$, we get 
	\begin{align}
		\mathcal{D}_{2, m, m'}(K)
		=&\frac{1}{2}\sum_{\nu=0}^1 P^{\left(\mathcal{H}_1^\mathrm{tot}\right)}\otimes \left(i^{\nu\sum_{j=1}^3 x_{j, m}x_{j, m'}}\ket{\bm{x}}\bra{\bm{y}}^{\left(\mathcal{H}_2^\mathrm{tot}\right)}i^{-\nu\sum_{j=1}^3 y_{j, m}y_{j, m'}}\right)\otimes O^{\left(\mathcal{H}_3^\mathrm{tot}\right)} \nonumber\\
		=&\left(\frac{1}{2}\sum_{\nu=0}^1 i^{\nu(\sum_{j=1}^3 x_{j, m}x_{j, m'}-\sum_{j=1}^3 y_{j, m}y_{j, m'})}\right) P^{\left(\mathcal{H}_1^\mathrm{tot}\right)}\otimes\ket{\bm{x}}\bra{\bm{y}}^{\left(\mathcal{H}_2^\mathrm{tot}\right)}\otimes O^{\left(\mathcal{H}_3^\mathrm{tot}\right)} \nonumber\\
		=&\left\{
		\begin{array}{ll}
			K\ &(\textrm{if}\ \sum_{j=1}^3 x_{j, m}x_{j, m'}\equiv\sum_{j=1}^3 y_{j, m}y_{j, m'}\ (\mathrm{mod}\ 2))\\
			0\ &(\textrm{otherwise}). 
		\end{array} 
		\right. 
	\end{align}
\end{proof}

Third, we prove the property of the mixture of the Clifford gates on $\mathcal{H}_3$. 
We use the result of Refs.~\cite{webb2016clifford, zhu2017multiqubit} stating that the conventional Clifford group is a unitary $3$-design.

\begin{lemma} \label{SMlem:Clifford_3_mixture}
	Let $N\in\mathbb{N}$, $\mathcal{R}$ be defined by Eq.~\eqref{SMeq:Pauli_subgroup_standard}, $\mathfrak{C}_{3, \mathcal{R}}$ be the set of all $3$-fold $\mathcal{R}$-symmetric Clifford conjugation mixture maps defined by Eq.~\eqref{SMeq:symmetric_Clifford_mixture_def}, $\mathcal{D}_3$ be defined by 
	\begin{align}
		\mathcal{D}_3(L):=\frac{1}{|\mathcal{V}|}\sum_{U\in\mathcal{V}} \left(U^{(\mathcal{H}_3^1)}\otimes U^{(\mathcal{H}_3^2)}\otimes U^{(\mathcal{H}_3^3)}\right)L\left(U^{\dag(\mathcal{H}_3^1)}\otimes U^{\dag(\mathcal{H}_3^2)}\otimes U^{\dag(\mathcal{H}_3^3)}\right) \label{SMeq:Clifford^C_mixture_def}
	\end{align}
	for all $L\in\mathcal{L}(\mathcal{H}^{\otimes 3})$ with $\mathcal{V}$ being the set of all the representatives of the equivalence classes in $\mathcal{C}_{N_3}/(\mathcal{U}_0 I)$, and $K\in\mathcal{L}(\mathcal{H}^{\otimes 3})$ be in the form of Eq.~\eqref{SMeq:basis_total} with some $P\in\mathcal{P}_{3N_1}^+$, $\bm{x}, \bm{y}\in\{0, 1\}^{3N_2}$ and $O\in\mathcal{L}(\mathcal{H}_3^\mathrm{tot})$. 
	Then, $\mathcal{D}_3\in\mathfrak{C}_{3, \mathcal{R}}$ and 
	\begin{align}
		\mathcal{D}_3(K)=P^{\left(\mathcal{H}_1^\mathrm{tot}\right)}\otimes\ket{\bm{x}}\bra{\bm{y}}^{\left(\mathcal{H}_2^\mathrm{tot}\right)}\otimes \sum_{\sigma\in\mathfrak{S}_3} \alpha_\sigma T_\sigma^{\left(\mathcal{H}_3^\mathrm{tot}\right)} 
	\end{align}
	with some $\{\alpha_\sigma\}_{\sigma\in\mathfrak{S}_3}\in\mathbb{C}^{\mathfrak{S}_3}$, where $T_\sigma\in\mathcal{U}(\mathcal{H}_3^\mathrm{tot})$ is defined as the permutation operator satisfying 
	\begin{align}
		T_\sigma\left(\ket{\xi_1}^{\left(\mathcal{H}_3^1\right)}\otimes\ket{\xi_2}^{\left(\mathcal{H}_3^2\right)}\otimes\ket{\xi_3}^{\left(\mathcal{H}_3^3\right)}\right) 
		=\ket{\xi_{\sigma^{-1}(1)}}^{\left(\mathcal{H}_3^1\right)}\otimes\ket{\xi_{\sigma^{-1}(2)}}^{\left(\mathcal{H}_3^2\right)}\otimes\ket{\xi_{\sigma^{-1}(3)}}^{\left(\mathcal{H}_3^3\right)} \label{SMeq:permutation_C_def}
	\end{align}
	for all $\ket{\xi_1}, \ket{\xi_2}, \ket{\xi_3}\in\mathcal{H}_3$. 
\end{lemma}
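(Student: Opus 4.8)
\textbf{Proof proposal for Lemma~\ref{SMlem:Clifford_3_mixture}.}
The plan is to reduce the statement to the known fact that the conventional Clifford group $\mathcal{C}_{N_3}$ is a unitary $3$-design on $N_3$ qubits, applied to the three copies living in $\mathcal{H}_3^1\otimes\mathcal{H}_3^2\otimes\mathcal{H}_3^3 = \mathcal{H}_3^\mathrm{tot}$, while checking that the auxiliary factors on $\mathcal{H}_1^\mathrm{tot}$ and $\mathcal{H}_2^\mathrm{tot}$ are untouched. First I would verify that $\mathcal{D}_3\in\mathfrak{C}_{3,\mathcal{R}}$: each summand is the $3$-fold conjugation by $U^{(\mathcal{H}_3^1)}\otimes U^{(\mathcal{H}_3^2)}\otimes U^{(\mathcal{H}_3^3)}$, which is the $3$-fold tensor power of the operator acting as $U$ on $\mathcal{H}_3$ and as identity on $\mathcal{H}_1\otimes\mathcal{H}_2$; since $U\in\mathcal{C}_{N_3}$ acts trivially on $\mathcal{H}_1\otimes\mathcal{H}_2$, this extended operator lies in $\mathcal{C}_{N,\mathcal{R}}$ (it is Clifford, and it commutes with every element of $\mathcal{R}$ because $\mathcal{R}$ acts as identity on $\mathcal{H}_3$). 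The coefficients $1/|\mathcal{V}|$ are nonnegative and sum to one, so $\mathcal{D}_3$ is a legitimate element of $\mathfrak{C}_{3,\mathcal{R}}$.

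Next I would compute the action on $K$. Writing $K = P^{(\mathcal{H}_1^\mathrm{tot})}\otimes\ket{\bm x}\bra{\bm y}^{(\mathcal{H}_2^\mathrm{tot})}\otimes O^{(\mathcal{H}_3^\mathrm{tot})}$, the conjugating operators act as the identity on the first two tensor factors, so those factors pass through unchanged and we are left with
\begin{align}
\mathcal{D}_3(K) = P^{(\mathcal{H}_1^\mathrm{tot})}\otimes\ket{\bm x}\bra{\bm y}^{(\mathcal{H}_2^\mathrm{tot})}\otimes\left(\frac{1}{|\mathcal V|}\sum_{U\in\mathcal V} U^{\otimes 3} O\, U^{\dagger\otimes 3}\right)^{(\mathcal{H}_3^\mathrm{tot})},
\end{align}
where the tensor power $U^{\otimes 3}$ is with respect to the three copies $\mathcal{H}_3^1,\mathcal{H}_3^2,\mathcal{H}_3^3$. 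The bracketed expression is precisely $\Phi_{3,\mathcal{C}_{N_3}}(O)$, the $3$-fold Clifford twirl of $O$ on $N_3$ qubits (the sum over representatives of $\mathcal{C}_{N_3}/(\mathcal{U}_0 I)$ equals the Haar average since the phase acts trivially under conjugation). By the Clifford $3$-design theorem of Refs.~\cite{webb2016clifford, zhu2017multiqubit}, $\Phi_{3,\mathcal{C}_{N_3}} = \Phi_{3,\mathcal{U}_{N_3}}$, and the image of the Haar twirl $\Phi_{3,\mathcal{U}_{N_3}}$ is, by Schur--Weyl duality, spanned by the permutation operators $T_\sigma$ for $\sigma\in\mathfrak{S}_3$. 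Hence $\Phi_{3,\mathcal{C}_{N_3}}(O) = \sum_{\sigma\in\mathfrak{S}_3}\alpha_\sigma T_\sigma^{(\mathcal{H}_3^\mathrm{tot})}$ for suitable complex coefficients $\alpha_\sigma$ (depending on $O$), which is exactly the claimed form.

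I do not expect any genuine obstacle here; the lemma is essentially bookkeeping that isolates the $\mathcal{H}_3$ factor and invokes the standard result. The only point requiring a little care is confirming that the summands genuinely lie in $\mathcal{C}_{N,\mathcal{R}}$ rather than merely in $\mathcal{C}_N$ — i.e.\ that they commute with all of $\mathcal{R}$ — which follows immediately from the product structure $\mathcal{R}=\mathcal{P}_0\{\mathrm I,\mathrm X,\mathrm Y,\mathrm Z\}^{\otimes N_1}\otimes\{\mathrm I,\mathrm Z\}^{\otimes N_2}\otimes\{\mathrm I\}^{\otimes N_3}$ supported away from $\mathcal{H}_3$, and from the fact that a Clifford operator supported on $\mathcal{H}_3$ trivially commutes with operators supported on $\mathcal{H}_1\otimes\mathcal{H}_2$. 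A secondary point is the identification of the span of $\Phi_{3,\mathcal{U}_{N_3}}$'s image with $\mathrm{span}\{T_\sigma\}$, which is the standard Schur--Weyl statement and can simply be cited.
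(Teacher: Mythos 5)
Your proposal is correct and follows essentially the same route as the paper: verify membership in $\mathfrak{C}_{3,\mathcal{R}}$ via the tensor-product structure, pass the $\mathcal{H}_1^\mathrm{tot}$ and $\mathcal{H}_2^\mathrm{tot}$ factors through unchanged, identify the remaining average as the Clifford $3$-fold twirl on $N_3$ qubits, invoke the Clifford $3$-design theorem to replace it by the Haar twirl, and conclude via the commutant/Schur--Weyl characterization that the result lies in $\mathrm{span}\{T_\sigma\}$. The paper merely makes the last step slightly more explicit by first showing that the twirled operator commutes with all $U^{\otimes 3}$ using left invariance of the Haar measure before citing the commutant theorem, but this is the same argument.
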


\begin{proof}
	Since $U^{\left(\mathcal{H}_3^j\right)}\in\mathcal{C}_{N, \mathcal{R}}$ for all $U\in\mathcal{V}$ and $\mathcal{D}_3$ is an affine combination of the conjugation actions of $U^{\otimes 3}$ for $U\in\mathcal{V}$, we can confirm that $\mathcal{D}_3\in\mathfrak{C}_{3, \mathcal{R}}$. 
	By the definition of $\mathcal{D}_3$, $\mathcal{D}_3(K)$ is written as 
	\begin{align}
		\mathcal{D}_3(K)=P^{\left(\mathcal{H}_1^\mathrm{tot}\right)}\otimes\ket{\bm{x}}\bra{\bm{y}}^{\left(\mathcal{H}_2^\mathrm{tot}\right)}\otimes O'^{\left(\mathcal{H}_3^\mathrm{tot}\right)} 
	\end{align}
	with 
	\begin{align}
		O':=\int_{U'\in\mathcal{C}_{N_3}} {U'}^{\otimes 3}O{U'}^{\dag\otimes 3} d\mu_{\mathcal{C}_{N_3}}(U'). 
	\end{align}
	Since the Clifford group $\mathcal{C}_{N_3}$ is a unitary $3$-design~\cite{webb2016clifford, zhu2017multiqubit}, $O'$ can also be written as 
	\begin{align}
		O'=\int_{U'\in\mathcal{U}_{N_3}} {U'}^{\otimes 3}O{U'}^{\dag\otimes 3}\ d\mu_{\mathcal{U}_{N_3}}(U'). 
	\end{align}
	For any $U\in\mathcal{U}_{N_3}$, by the left invariance of $\mu_{\mathcal{U}_{N_3}}$, we get 
 	\begin{align}
		U^{\otimes 3}O'U^{\dag\otimes 3}
		=\int_{U'\in\mathcal{U}_{N_3}} (UU')^{\otimes 3}O'(UU')^{\dag\otimes 3}\ d\mu_{\mathcal{U}_{N_3}}(U')
		=\int_{U'\in\mathcal{U}_{N_3}} {U'}^{\otimes 3}O{U'}^{\dag\otimes 3}\ d\mu_{\mathcal{U}_{N_3}}(U')
		=O'. 
	\end{align}
	This implies that $O'$ commutes with $U^{\otimes 3}$ for all $U\in\mathcal{U}_{N_3}$. 
	By Theorem~7.15 of Ref.~\cite{watrous2018theory}, $O'$ can be written as $O'=\sum_{\sigma\in\mathfrak{S}_3} \alpha_\sigma T_\sigma$ with some $\{\alpha_\sigma\}_{\sigma\in\mathfrak{S}_3}\in\mathbb{C}^{\mathfrak{S}_3}$. 
\end{proof}

Finally, we prove the property of the mixture of the controlled-Pauli gates on $\mathcal{H}_2$ and $\mathcal{H}_3$. 
Here we fix the control qubit as the $m$th qubit in $\mathcal{H}_2$.

\begin{lemma} \label{SMlem:CPauli^BC_mixture}
	Let $N\in\mathbb{N}$, $\mathcal{R}$ be defined by Eq.~\eqref{SMeq:Pauli_subgroup_standard}, $\mathfrak{C}_{3, \mathcal{R}}$ be the set of all $3$-fold $\mathcal{R}$-symmetric Clifford conjugation mixture maps defined by Eq.~\eqref{SMeq:symmetric_Clifford_mixture_def}, $m\in\{1, 2, ..., N_2\}$, $\mathcal{D}_{4, m}$ be defined by 
	\begin{align}
		\mathcal{D}_{4, m}(L):=\frac{1}{4^{N_3}}\sum_{Q\in\mathcal{P}_{N_3}^+} \left(\bigotimes_{j=1}^3 \mathrm{C}(Q)^{\left(\mathcal{H}_{2, m}^j, \mathcal{H}_3^j\right)}\right)L\left(\bigotimes_{j=1}^3 \mathrm{C}(Q)^{\dag\left(\mathcal{H}_{2, m}^j, \mathcal{H}_3^j\right)}\right) \label{SMeq:CPauli^BC_mixture_def}
	\end{align}
	for all $L\in\mathcal{L}(\mathcal{H}^{\otimes 3})$, where $\mathrm{C}(Q)^{\left(\mathcal{H}_{2, m}^j, \mathcal{H}_3^j\right)}$ is the controlled-$Q$ operator defined for $Q\in\mathcal{P}_{N_3}^+$ by 
	\begin{align}
		\mathrm{C}(Q)^{\left(\mathcal{H}_{2, m}^j, \mathcal{H}_3^j\right)} 
		=\ket{0}\bra{0}^{\left(\mathcal{H}_{2, m}^j\right)}\otimes I^{\left(\mathcal{H}_3^j\right)}
		+\ket{1}\bra{1}^{\left(\mathcal{H}_{2, m}^j\right)}\otimes Q^{\left(\mathcal{H}_3^j\right)}, 
	\end{align}
	and $J\in\mathcal{L}(\mathcal{H}^{\otimes 3})$ be in the form of 
	\begin{align}
		J=P^{\left(\mathcal{H}_1^\mathrm{tot}\right)}\otimes\ket{\bm{x}}\bra{\bm{y}}^{\left(\mathcal{H}_2^\mathrm{tot}\right)}\otimes T_\sigma^{\left(\mathcal{H}_3^\mathrm{tot}\right)} \label{SMeq:intermediate_form}
	\end{align}
	with some $P\in\mathcal{P}_{3N_1}^+$, $\bm{x}, \bm{y}\in\{0, 1\}^{3N_2}$, and $\sigma\in\mathfrak{S}_3$ satisfying 
	\begin{align}
		&\sum_{j=1}^3 x_{j, l}=\sum_{j=1}^3 y_{j, l}\ \mathrm{for\ all}\ l\in\{1, 2, ..., N_2\}, \label{SMeq:bitstring_cond1}\\
		&\sum_{j=1}^3 x_{j, l} x_{j, l'}\equiv\sum_{j=1}^3 y_{j, l}y_{j, l'}\ (\mathrm{mod}\ 2)\ \mathrm{for\ all}\ l, l'\in\{1, 2, ..., N_2\}, \label{SMeq:bitstring_cond2}
	\end{align}
	and $x_{\sigma(j), m'}=y_{j, m'}$ for all $j\in\{1, 2, 3\}$ and $m'\in\{1, 2, ..., m-1\}$, where $T_\sigma$ is defined by Eq.~\eqref{SMeq:permutation_C_def}. 
	Then, $\mathcal{D}_{4, m}\in\mathfrak{C}_{3, \mathcal{R}}$ and 
	\begin{align}
		\mathcal{D}_{4, m}(J)=c P^{\left(\mathcal{H}_1^\mathrm{tot}\right)}\otimes\ket{\bm{x}}\bra{\bm{y}}^{\left(\mathcal{H}_2^\mathrm{tot}\right)}\otimes T_{\sigma'}^{\left(\mathcal{H}_3^\mathrm{tot}\right)} 
	\end{align}
	with some $c\in\mathbb{R}$ and $\sigma'\in\mathfrak{S}_3$ satisfying $x_{\sigma'(j), m'}=y_{j, m'}$ for all $j\in\{1, 2, 3\}$ and $m'\in\{1, 2, ..., m\}$. 
\end{lemma}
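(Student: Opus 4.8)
The plan is to verify the membership claim $\mathcal{D}_{4,m}\in\mathfrak{C}_{3,\mathcal{R}}$ and then compute $\mathcal{D}_{4,m}(J)$ by direct action of the controlled-Pauli conjugations on the tensor-factorized form of $J$. The membership part is routine: each $\mathrm{C}(Q)^{(\mathcal{H}_{2,m}^j,\mathcal{H}_3^j)}$ is a Clifford operator, and since $Q$ acts trivially on $\mathcal{H}_1$, acts as a diagonal control on a qubit in $\mathcal{H}_2$ (hence commutes with the $\{\mathrm{I},\mathrm{Z}\}$ symmetry there) and as a Pauli on $\mathcal{H}_3$ (where $\mathcal{R}$ acts as identity), it lies in $\mathcal{C}_{N,\mathcal{R}}$; the prefactor $4^{-N_3}$ over $\mathcal{P}_{N_3}^+$ is an affine combination summing to $1$, so $\mathcal{D}_{4,m}\in\mathfrak{C}_{3,\mathcal{R}}$.

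For the computation, I would fix the control index to $m$ and write the three control qubits' states from the bra/ket $\ket{\bm x}\bra{\bm y}$ as $\ket{x_{1,m}},\ket{x_{2,m}},\ket{x_{3,m}}$ on the ket side and $\ket{y_{1,m}},\ket{y_{2,m}},\ket{y_{3,m}}$ on the bra side. Conjugating $T_\sigma^{(\mathcal{H}_3^{\mathrm{tot}})}$ by $\bigotimes_j \mathrm{C}(Q)^{(\mathcal{H}_{2,m}^j,\mathcal{H}_3^j)}$ and using that $T_\sigma$ intertwines the $\mathcal{H}_3^j$ factors according to $\sigma$, the effect on the $\mathcal{H}_3^{\mathrm{tot}}$ part becomes $T_\sigma$ dressed by Pauli factors $Q$ raised to powers depending on $x_{j,m}$ and $y_{j,m}$; the $\mathcal{H}_1^{\mathrm{tot}}$ and the rest of the $\mathcal{H}_2^{\mathrm{tot}}$ factors are untouched. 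Averaging over $Q\in\mathcal{P}_{N_3}^+$ with weight $4^{-N_3}$ projects onto the Pauli-invariant part: a product $\big(\prod_j Q^{\,a_j}\big)T_\sigma\big(\prod_j Q^{\,b_j}\big)$, after moving everything through $T_\sigma$, survives the average only when the total exponent pattern makes it proportional to a permutation operator $T_{\sigma'}$ for a suitable $\sigma'$, and then it equals that $T_{\sigma'}$ up to a real scalar $c$. One reads off that $\sigma'$ must satisfy the additional matching $x_{\sigma'(j),m}=y_{j,m}$ coming from the $m$th control bits, while the already-assumed conditions \eqref{SMeq:bitstring_cond1}, \eqref{SMeq:bitstring_cond2} and the matching for $m'<m$ guarantee consistency (in particular that a valid $\sigma'$ exists and the scalar $c$ is real rather than merely complex).

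The main obstacle I anticipate is bookkeeping rather than conceptual: carefully tracking how $T_\sigma$ permutes the Pauli dressings so that the group-averaging identity $4^{-N_3}\sum_{Q\in\mathcal{P}_{N_3}^+} Q A Q = \operatorname{tr}(A)\, I / 2^{N_3}$ (applied factor-by-factor across the three copies) yields exactly a single surviving permutation term $T_{\sigma'}$ with a nonnegative real coefficient, and verifying that $\sigma'$ extends the partial matching from $\{1,\dots,m-1\}$ to $\{1,\dots,m\}$. The reality of $c$ and the fact that the output is again a pure permutation operator (not a general element of $\operatorname{span}\{T_\tau\}$) are the two points that require the hypotheses \eqref{SMeq:bitstring_cond1}–\eqref{SMeq:bitstring_cond2} and the earlier-$m'$ matching, so I would make sure those are invoked at precisely the step where the average is taken.
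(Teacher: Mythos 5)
Your overall plan---conjugate $J$ directly, push the left Pauli dressing through $T_\sigma$, and then average over $Q\in\mathcal{P}_{N_3}^+$---is the same route the paper takes, and your membership argument for $\mathcal{D}_{4,m}\in\mathfrak{C}_{3,\mathcal{R}}$ is fine. However, there is a genuine gap at the averaging step, which is the heart of the lemma. After commuting the left factor through $T_\sigma$, the object to be averaged is $\frac{1}{4^{N_3}}\sum_{Q}\bigotimes_{j=1}^3 Q^{x_{\sigma(j),m}-y_{j,m}}$, a sum of \emph{tensor products} of powers of $Q$ sitting to the right of $T_\sigma$. The identity you invoke, $4^{-N_3}\sum_Q QAQ=\mathrm{tr}(A)I/2^{N_3}$, is the depolarizing twirl of a conjugation $QAQ$ and does not apply here: the left and right Pauli dressings end up on different copies once $T_\sigma$ has permuted them, so the expression is never of that form. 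The correct tool is the swap trick $\frac{1}{2^{n}}\sum_{Q\in\mathcal{P}_n^+}Q\otimes Q=\mathrm{SWAP}$. Condition \eqref{SMeq:bitstring_cond1} at $l=m$ forces the number of indices $j$ with $x_{\sigma(j),m}\neq y_{j,m}$ to be even, hence $0$ or $2$; in the first case the average is $I$ and $\sigma'=\sigma$, while in the second case the average equals $2^{-N_3}\,T_{\tau_{p,q}}$ for the transposition $\tau_{p,q}$ of the two mismatched copies, so that $\sigma'=\sigma\tau_{p,q}$ and $c=2^{-N_3}$. Your phrasing that a term ``survives the average only when'' it is proportional to a permutation suggests a vanishing/selection mechanism, but nothing vanishes: the mismatch case actively \emph{creates} a new permutation, and without the swap trick you cannot see why the output is a single $T_{\sigma'}$ rather than something proportional to the identity or a general element of $\mathrm{span}\{T_\tau\}$.

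The second missing piece is the propagation of the matching condition. Once $\sigma'=\sigma\tau_{p,q}$, it is not automatic that $x_{\sigma'(j),m'}=y_{j,m'}$ still holds for $m'<m$; this is exactly where condition \eqref{SMeq:bitstring_cond2} (together with \eqref{SMeq:bitstring_cond1}) enters, via the combinatorial fact about $3$-bit strings (Lemma~\ref{SMlem:3_bit_property}) that $x_{\sigma(p),m'}=x_{\sigma(q),m'}$, so that composing with $\tau_{p,q}$ does not disturb the earlier columns; a separate short computation then gives the new matching at $m'=m$. Your proposal gestures at these hypotheses ``guaranteeing consistency'' but does not identify what they actually do, and it attributes to them the reality of $c$, which in fact is immediate ($c=1$ or $c=2^{-N_3}$). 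These two points---the swap-trick evaluation of the average and the $3$-bit lemma for the inductive matching---are the nontrivial content, and the proposal as written would not produce them.
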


\begin{proof}
	Since the controlled-Pauli operators can be expressed as products of the controlled-X, Y, and Z operators, we can confirm that $\mathrm{C}(Q)^{\left(\mathcal{H}_{2, m}^j, \mathcal{H}_3^j\right)}\in\mathcal{C}_{N, \mathcal{R}}$. 
	$\mathcal{D}_{4, m}$ is an affine combination of the conjugation actions of $\mathrm{C}(Q)^{\otimes 3}$ for $Q\in\mathcal{P}_{N_3}^+$, and thus we can also confirm that $\mathcal{D}_{4, m}\in\mathfrak{C}_{3, \mathcal{R}}$. 
	By noting that 
	\begin{align}
		\mathrm{C}(Q)^{\left(\mathcal{H}_{2, m}^j, \mathcal{H}_3^j\right)}=\sum_{z\in\{0, 1\}} \ket{z}\bra{z}^{\left(\mathcal{H}_{2, m}^j\right)}\otimes {Q^z}^{\left(\mathcal{H}_3^j\right)}, 
	\end{align}
	we know that 
	\begin{align}
		\mathcal{D}_{4, m}(J)
		=\frac{1}{4^{N_3}}\sum_{Q\in\mathcal{P}_{N_3}^+} P^{\left(\mathcal{H}_1^\mathrm{tot}\right)}\otimes\ket{\bm{x}}\bra{\bm{y}}^{\left(\mathcal{H}_2^\mathrm{tot}\right)}\otimes
		\left(\bigotimes_{j=1}^3 \mbox{$Q^{x_{j, m}}$}^{\left(\mathcal{H}_3^j\right)}\right)T_\sigma^{\left(\mathcal{H}_3^\mathrm{tot}\right)}
		\left(\bigotimes_{j=1}^3 \mbox{$Q^{y_{j, m}\dag}$}^{\left(\mathcal{H}_3^j\right)}\right). \label{SMeq:SMlem:CPauli^BC_mixture1}
	\end{align}
	We note that 
	\begin{align}
		\left(\bigotimes_{j=1}^3 \mbox{$Q^{x_{j, m}}$}^{\left(\mathcal{H}_3^j\right)}\right)T_\sigma^{\left(\mathcal{H}_3^\mathrm{tot}\right)}
		=&T_\sigma^{\left(\mathcal{H}_3^\mathrm{tot}\right)}\cdot
		T_{\sigma^{-1}}^{\left(\mathcal{H}_3^\mathrm{tot}\right)}
		\left(\bigotimes_{j=1}^3 \mbox{$Q^{x_{j, m}}$}^{\left(\mathcal{H}_3^j\right)}\right)T_\sigma^{\left(\mathcal{H}_3^\mathrm{tot}\right)} \nonumber\\
		=&T_\sigma^{\left(\mathcal{H}_3^\mathrm{tot}\right)}
		\left(\bigotimes_{j=1}^3 \mbox{$Q^{x_{j, m}}$}^{\left(\mathcal{H}_3^{\sigma^{-1}(j)}\right)}\right) \nonumber\\
		=&T_\sigma^{\left(\mathcal{H}_3^\mathrm{tot}\right)}
		\left(\bigotimes_{j=1}^3 \mbox{$Q^{x_{\sigma(j), m}}$}^{\left(\mathcal{H}_3^j\right)}\right). \label{SMeq:SMlem:CPauli^BC_mixture2}
	\end{align}
	By Eqs.~\eqref{SMeq:SMlem:CPauli^BC_mixture1} and \eqref{SMeq:SMlem:CPauli^BC_mixture2}, we get 
	\begin{align}
		\mathcal{D}_{4, m}(J)
		=&\frac{1}{4^{N_3}}\sum_{Q\in\mathcal{P}_{N_3}^+} P^{\left(\mathcal{H}_1^\mathrm{tot}\right)}\otimes\ket{\bm{x}}\bra{\bm{y}}^{\left(\mathcal{H}_2^\mathrm{tot}\right)}\otimes T_\sigma^{\left(\mathcal{H}_3^\mathrm{tot}\right)}
		\left(\bigotimes_{j=1}^3 \mbox{$Q^{x_{\sigma(j), m}}$}^{\left(\mathcal{H}_3^j\right)}\right)
		\left(\bigotimes_{j=1}^3 \mbox{$Q^{-y_{j, m}}$}^{\left(\mathcal{H}_3^j\right)}\right) \nonumber\\
		=&P^{\left(\mathcal{H}_1^\mathrm{tot}\right)}\otimes\ket{\bm{x}}\bra{\bm{y}}^{\left(\mathcal{H}_2^\mathrm{tot}\right)}\otimes T_\sigma^{\left(\mathcal{H}_3^\mathrm{tot}\right)}\left(\frac{1}{4^{N_3}}\sum_{Q\in\mathcal{P}_{N_3}^+} \bigotimes_{j=1}^3 \mbox{$Q^{x_{\sigma(j), m}-y_{j, m}}$}^{\left(\mathcal{H}_3^j\right)}\right). \label{SMeq:SMlem:CPauli^BC_mixture3}
	\end{align}

	First, we consider the case when $x_{\sigma(j), m}=y_{j, m}$ for all $j\in\{1, 2, 3\}$. 
	In this case, we have 
	\begin{align}
		\frac{1}{4^{N_3}}\sum_{Q\in\mathcal{P}_{N_3}^+} \bigotimes_{j=1}^3 \mbox{$Q^{x_{\sigma(j), m}-y_{j, m}}$}^{\left(\mathcal{H}_3^j\right)} 
		=\frac{1}{4^{N_3}}\sum_{Q\in\mathcal{P}_{N_3}^+} \bigotimes_{j=1}^3 I^{\left(\mathcal{H}_3^j\right)} 
		=I^{\left(\mathcal{H}_3^\mathrm{tot}\right)}.  \label{SMeq:SMlem:CPauli^BC_mixture4}
	\end{align}
	By Eqs.~\eqref{SMeq:SMlem:CPauli^BC_mixture4} and \eqref{SMeq:SMlem:CPauli^BC_mixture4}, we get 
	\begin{align}
		\mathcal{D}_{4, m}(J) 
		=P^{\left(\mathcal{H}_1^\mathrm{tot}\right)}\otimes\ket{\bm{x}}\bra{\bm{y}}^{\left(\mathcal{H}_2^\mathrm{tot}\right)}\otimes T_\sigma^{\left(\mathcal{H}_3^\mathrm{tot}\right)}I^{\left(\mathcal{H}_3^\mathrm{tot}\right)} 
		=c P^{\left(\mathcal{H}_1^\mathrm{tot}\right)}\otimes\ket{\bm{x}}\bra{\bm{y}}^{\left(\mathcal{H}_2^\mathrm{tot}\right)}\otimes T_{\sigma'}^{\left(\mathcal{H}_3^\mathrm{tot}\right)}, 
	\end{align}
	where $c:=1$ and $\sigma':=\sigma$. 
	Since $\sigma'=\sigma$, we get $x_{\sigma'(j), m}=x_{\sigma(j), m}=y_{j, m}$ for all $j\in\{1, 2, 3\}$ and $m'\in\{1, 2, ..., m\}$.

	Next, we consider the case when $x_{\sigma(j), m}\neq y_{j, m}$ for some $j\in\{1, 2, 3\}$. 
	Since Eq.~\eqref{SMeq:bitstring_cond1} is satisfied for $k=m$, we can take $p, q\in\{1, 2, 3\}$ that uniquely satisfy $x_{\sigma(p), m}\neq x_{\sigma(j), m}$ for $j\in\{1, 2, 3\}\backslash\{p\}$ and $y_{q, m}\neq y_{j, m}$ for $j\in\{1, 2, 3\}\backslash\{q\}$. 
    Then, such $p$ and $q$ satisfy $p\neq q$ and $x_{\sigma(p), m}=y_{q, m}$. 
	We note that 
	\begin{align}
		\frac{1}{2^{N_3}}\sum_{Q\in\mathcal{P}_{N_3}^+} \bigotimes_{j=1}^3 {Q^{x_{\sigma(j), m}-y_{j, m}}}^{\left(\mathcal{H}_3^j\right)} 
		=&\frac{1}{2^{N_3}}\sum_{Q\in\mathcal{P}_{N_3}^+} Q^{\left(\mathcal{H}_3^p\right)}\otimes Q^{\left(\mathcal{H}_3^q\right)} \nonumber\\
		=&\bigotimes_{l=1}^{N_3} \left(\frac{1}{2}\sum_{Q\in\mathcal{P}_1^+} Q^{\left(\mathcal{H}_{3, l}^p\right)}\otimes Q^{\left(\mathcal{H}_{3, l}^q\right)}\right) \nonumber\\
		=&\bigotimes_{l=1}^{N_3} \mathrm{SWAP}^{\left(\mathcal{H}_{3, l}^p, \mathcal{H}_{3, l}^q\right)} \nonumber\\
		=&T_{\tau_{p, q}}^{\left(\mathcal{H}_3^\mathrm{tot}\right)}, \label{SMeq:SMlem:CPauli^BC_mixture5}
	\end{align}
	where $\mathrm{SWAP}^{\left(\mathcal{H}_{3, l}^p, \mathcal{H}_{3, l}^q\right)}$ is the SWAP operator between the Hilbert spaces $\mathcal{H}_{3, l}^p$ and $\mathcal{H}_{3, l}^q$, and $\tau_{p, q}\in\mathfrak{S}_3$ is the transposition between $p$ and $q$. 
	By Eqs.~\eqref{SMeq:SMlem:CPauli^BC_mixture3} and \eqref{SMeq:SMlem:CPauli^BC_mixture5}, we get 
	\begin{align}
		\mathcal{D}_{4, m}(J) 
		=\frac{1}{2^{N_3}}P^{\left(\mathcal{H}_1^\mathrm{tot}\right)}\otimes\ket{\bm{x}}\bra{\bm{y}}^{\left(\mathcal{H}_2^\mathrm{tot}\right)}\otimes T_\sigma^{\left(\mathcal{H}_3^\mathrm{tot}\right)}T_{\tau_{p, q}}^{\left(\mathcal{H}_3^\mathrm{tot}\right)} 
		=c P^{\left(\mathcal{H}_1^\mathrm{tot}\right)}\otimes\ket{\bm{x}}\bra{\bm{y}}^{\left(\mathcal{H}_2^\mathrm{tot}\right)}\otimes T_{\sigma'}^{\left(\mathcal{H}_3^\mathrm{tot}\right)}, 
	\end{align}
	where $c:=1/2^{N_3}$ and $\sigma':=\sigma\tau_{p, q}$. 
	For any $m'\in\{1, 2, ..., m-1\}$, since Eq.~\eqref{SMeq:bitstring_cond1} is satisfied for $k=m$ and $k=m'$, and Eq.~\eqref{SMeq:bitstring_cond2} is satisfied for $k=m$ and $k'=m'$, we get $x_{\sigma(p), m'}=x_{\sigma(q), m'}$ by Lemma~\ref{SMlem:3_bit_property} in Appendix~\ref{SMsec:technical}. 
	We therefore get $x_{\sigma'(j), m'}=x_{\sigma(\tau_{p, q}(j)), m'}=x_{\sigma(j), m'}=y_{j, m'}$ for $j\in\{1, 2, 3\}$. 
	As for the conclusion in the case of $m'=m$, since $x_{\sigma'(j), m}=x_{\sigma(\tau_{p, q}(j)), m}\equiv x_{\sigma(p), m}+\delta_{\tau_{p, q}(j), p}=y_{q, m}+\delta_{j, \tau_{p, q}(p)}=y_{q, m}+\delta_{j, q}\equiv y_{j, m}$ (mod $2$), we get $x_{\sigma'(j), m}=y_{j, m}$ for $j\in\{1, 2, 3\}$. 
\end{proof}

By combining the five lemmas above, we prove Lemma~\ref{SMlem:symmetrization}. \\

\noindent
\textit{Proof of Lemma~\ref{SMlem:symmetrization}.}
	By Lemma~\ref{SMlem:Pauli_subgroup_equivalence} in Appendix~\ref{SMsec:technical}, we can take $\mathcal{R}$ in the form of Eq.~\eqref{SMeq:Pauli_subgroup_standard} and $W\in\mathcal{C}_N$ such that $\mathcal{U}_0 W\mathcal{Q}W^\dag=\mathcal{U}_0\mathcal{R}$. 
	By Lemma~\ref{SMlem:unitary_design_equivalence}, it is sufficient to show that there exists a map $\mathcal{D}\in\mathfrak{C}_{3, \mathcal{R}}$ such that $\mathcal{D}(L)\in\mathcal{M}_\mathcal{R}$ for all $L\in\mathcal{L}(\mathcal{H}^{\otimes 3})$. 
	We define $\mathcal{D}$ by 
	\begin{align}
		\mathcal{D}=\mathcal{D}_4\circ\mathcal{D}_3\circ\mathcal{D}_2\circ\mathcal{D}_1, 
	\end{align}
	where $\mathcal{D}_1$, $\mathcal{D}_2$, $\mathcal{D}_3$, and $\mathcal{D}_4$ are defined by Eqs.~\eqref{SMeq:S^B_mixture_def}, \eqref{SMeq:CZ^B_mixture_def}, \eqref{SMeq:Clifford^C_mixture_def}, \eqref{SMeq:CPauli^BC_mixture_def}, and 
	\begin{align}
		&\mathcal{D}_1:=\mathcal{D}_{1, N_2}\circ\cdots\circ\mathcal{D}_{1, 2}\circ\mathcal{D}_{1, 1}, \\
		&\mathcal{D}_2:=\mathcal{D}_{2, N_2-1}\circ\cdots\circ\mathcal{D}_{2, 2}\circ\mathcal{D}_{2, 1}, \\
		&\mathcal{D}_{2, m}:=\mathcal{D}_{2, m, N_2}\circ\cdots\circ\mathcal{D}_{2, m, m+2}\circ\mathcal{D}_{2, m, m+1}, \\
		&\mathcal{D}_4:=\mathcal{D}_{4, N_2}\circ\cdots\circ\mathcal{D}_{4, 2}\circ\mathcal{D}_{4, 1}. 
	\end{align}
	By Lemma~\ref{SMeq:symmetric_Clifford_mixture_composition} in Appendix~\ref{SMsec:technical}, we can confirm that $\mathcal{D}\in\mathfrak{C}_{3, \mathcal{R}}$. 
	Take arbitrary $L\in\mathcal{L}(\mathcal{H}^{\otimes 3})$. 
	Since $\mathcal{D}_1$, $\mathcal{D}_2$, $\mathcal{D}_3$ and $\mathcal{D}_4$ are linear maps, $\mathcal{M}_\mathcal{R}$ is a linear subspace, and $L$ can be written as 
	\begin{align}
		L=\sum_{P\in\mathcal{P}_{3N_1}^+, \bm{x}, \bm{y}\in\{0, 1\}^{3N_2}} P^{\left(\mathcal{H}_1^\mathrm{tot}\right)}\otimes\ket{\bm{x}}\bra{\bm{y}}^{\left(\mathcal{H}_2^\mathrm{tot}\right)}\otimes O_{P, \bm{x}, \bm{y}}^{\left(\mathcal{H}_3^\mathrm{tot}\right)} 
	\end{align}
	with some $O_{P, \bm{x}, \bm{y}}\in\mathcal{L}(\mathcal{H}_3^\mathrm{tot})$, it is sufficient to show that 
	\begin{align}
		\mathcal{D}_4\circ\mathcal{D}_3\circ\mathcal{D}_2\circ\mathcal{D}_1(K)\in\mathcal{M}_\mathcal{R} 
	\end{align}
	for all $K\in\mathcal{L}(\mathcal{H}^{\otimes 3})$ in the form of Eq.~\eqref{SMeq:basis_total} with $P\in\mathcal{P}_{3N_1}^+$, $\bm{x}, \bm{y}\in\{0, 1\}^{3N_2}$ and $O\in\mathcal{L}(\mathcal{H}_3^\mathrm{tot})$. 
	By Lemmas~\ref{SMlem:S_gates}, \ref{SMlem:CZ^BB_mixture} and \ref{SMlem:Clifford_3_mixture}, we get 
	\begin{align}
		\mathcal{D}_3\circ\mathcal{D}_2\circ\mathcal{D}_1(K)=
		\left\{
		\begin{array}{ll}
			P^{\left(\mathcal{H}_1^\mathrm{tot}\right)}\otimes\ket{\bm{x}}\bra{\bm{y}}^{\left(\mathcal{H}_2^\mathrm{tot}\right)}\otimes \sum_{\sigma\in\mathfrak{S}_3} \alpha_\sigma T_\sigma^{\left(\mathcal{H}_3^\mathrm{tot}\right)}\ &(\textrm{if}\ \bm{x}\ \mathrm{and}\ \bm{y}\ \mathrm{satisfy\ Eqs.}~\eqref{SMeq:bitstring_cond1}\  \mathrm{and}\ \eqref{SMeq:bitstring_cond2})\\
			0\ &(\textrm{otherwise}) 
		\end{array}
		\right. 
	\end{align}
	with some $\{\alpha_\sigma\}_{\sigma\in\mathfrak{S}_3}\in\mathbb{C}^{\mathfrak{S}_3}$, where we note that Eq.~\eqref{SMeq:bitstring_cond2} in the case of $k=k'$ can be derived from Eq.~\eqref{SMeq:bitstring_cond1}. 
	Since $\mathcal{D}_4$ is a linear map and $\mathcal{M}_\mathcal{R}$ is a linear subspace, it is sufficient to show that 
	\begin{align}
		\mathcal{D}_4(J)\in\mathcal{M}_\mathcal{R} 
	\end{align}
	for all $J\in\mathcal{L}(\mathcal{H}^{\otimes 3})$ in the form of Eq.~\eqref{SMeq:intermediate_form} with $P\in\mathcal{P}_{3N_1}^+$, $\bm{x}, \bm{y}\in\{0, 1\}^{3N_2}$ satisfying Eqs.~\eqref{SMeq:bitstring_cond1} and \eqref{SMeq:bitstring_cond2} and $\sigma\in\mathfrak{S}_3$. 
	By Lemma~\ref{SMlem:CPauli^BC_mixture}, we know that 
	\begin{align}
		\mathcal{D}_4(J)=c P^{\left(\mathcal{H}_1^\mathrm{tot}\right)}\otimes\ket{\bm{x}}\bra{\bm{y}}^{\left(\mathcal{H}_2^\mathrm{tot}\right)}\otimes T_{\sigma'}^{\left(\mathcal{H}_3^\mathrm{tot}\right)}
	\end{align}
	with $c\in\mathbb{R}$ and $\sigma'\in\mathfrak{S}_3$ satisfying $x_{\sigma'(j), l}=y_{j, l}$ for all $j\in\{1, 2, 3\}$ and $l\in\{1, 2, ..., N_2\}$. 
	We therefore get 
	\begin{align}
		\mathcal{D}_4(J)=&c V_{\sigma'} V_{\sigma'^{-1}}\left(P^{\left(\mathcal{H}_1^\mathrm{tot}\right)}\otimes\ket{\bm{x}}\bra{\bm{y}}^{\left(\mathcal{H}_2^\mathrm{tot}\right)}\otimes T_{\sigma'}^{\left(\mathcal{H}_3^\mathrm{tot}\right)}\right) \nonumber\\
		=&c V_{\sigma'}\left(R_{\sigma'^{-1}}^{\left(\mathcal{H}_1^\mathrm{tot}\right)}\otimes S_{\sigma'^{-1}}^{\left(\mathcal{H}_2^\mathrm{tot}\right)}\otimes T_{\sigma'^{-1}}^{\left(\mathcal{H}_3^\mathrm{tot}\right)}\right)\left(P^{\left(\mathcal{H}_1^\mathrm{tot}\right)}\otimes\ket{\bm{x}}\bra{\bm{y}}^{\left(\mathcal{H}_2^\mathrm{tot}\right)}\otimes T_{\sigma'}^{\left(\mathcal{H}_3^\mathrm{tot}\right)}\right) \nonumber\\
		=&c V_{\sigma'} \left(R_{\sigma'^{-1}}^{\left(\mathcal{H}_1^\mathrm{tot}\right)}P^{\left(\mathcal{H}_1^\mathrm{tot}\right)}
		\otimes S_{\sigma'^{-1}}^{\left(\mathcal{H}_2^\mathrm{tot}\right)}\ket{\bm{x}}\bra{\bm{y}}^{\left(\mathcal{H}_2^\mathrm{tot}\right)}
		\otimes T_{\sigma'^{-1}}^{\left(\mathcal{H}_3^\mathrm{tot}\right)}T_{\sigma'}^{\left(\mathcal{H}_3^\mathrm{tot}\right)}\right), \label{SMeq:SMlem:symmetrization1}
	\end{align}
	where $R_\sigma\in\mathcal{U}(\mathcal{H}_1^\mathrm{tot})$ and $S_\sigma\in\mathcal{U}(\mathcal{H}_2^\mathrm{tot})$ are defined as the permutation operators satisfying 
	\begin{align}
		&R_\sigma\left(\ket{\phi_1}^{\left(\mathcal{H}_1^1\right)}\otimes\ket{\phi_2}^{\left(\mathcal{H}_1^2\right)}\otimes\ket{\phi_3}^{\left(\mathcal{H}_1^3\right)}\right) 
		:=\ket{\phi_{\sigma^{-1}(1)}}^{\left(\mathcal{H}_1^1\right)}\otimes\ket{\phi_{\sigma^{-1}(2)}}^{\left(\mathcal{H}_1^2\right)}\otimes\ket{\phi_{\sigma^{-1}(3)}}^{\left(\mathcal{H}_1^3\right)}, \\
		&S_\sigma\left(\ket{\psi_1}^{\left(\mathcal{H}_2^1\right)}\otimes\ket{\psi_2}^{\left(\mathcal{H}_2^2\right)}\otimes\ket{\psi_3}^{\left(\mathcal{H}_2^3\right)}\right) 
		:=\ket{\psi_{\sigma^{-1}(1)}}^{\left(\mathcal{H}_2^1\right)}\otimes\ket{\psi_{\sigma^{-1}(2)}}^{\left(\mathcal{H}_2^2\right)}\otimes\ket{\psi_{\sigma^{-1}(3)}}^{\left(\mathcal{H}_2^3\right)} 
	\end{align}
	for all $\ket{\phi_j}\in\mathcal{H}_1$ and $\ket{\psi_j}\in\mathcal{H}_2$. 
	We note that 
	\begin{align}
		S_{\sigma'^{-1}}^{\left(\mathcal{H}_2^\mathrm{tot}\right)}\ket{\bm{x}}^{\left(\mathcal{H}_2^\mathrm{tot}\right)}
		=S_{\sigma'^{-1}}^{\left(\mathcal{H}_2^\mathrm{tot}\right)}\left(\bigotimes_{j=1}^3 \ket{\bm{x}_j}^{\left(\mathcal{H}_2^j\right)}\right) 
		=\bigotimes_{j=1}^3 \ket{\bm{x}_{\sigma'(j)}}^{\left(\mathcal{H}_2^j\right)} 
		=\bigotimes_{j=1}^3 \ket{\bm{y}_j}^{\left(\mathcal{H}_2^j\right)} 
		=\ket{\bm{y}}^{\left(\mathcal{H}_2^\mathrm{tot}\right)}. \label{SMeq:SMlem:symmetrization2}
	\end{align}
	where $\bm{x}_j:=(x_{j, l})_{l\in\{1, ..., N_2\}}$, $\bm{y}_j:=(y_{j, l})_{l\in\{1, ..., N_2\}}$, $\ket{\bm{x}_j}:=\bigotimes_{l\in\{1, ..., N_2\}} \ket{x_{j, l}}$ and $\ket{\bm{y}_j}:=\bigotimes_{l\in\{1, ..., N_2\}} \ket{y_{j, l}}$ for $j\in\{1, 2, 3\}$. 
	By Eqs.~\eqref{SMeq:SMlem:symmetrization1} and \eqref{SMeq:SMlem:symmetrization2}, we have 
	\begin{align}
		\mathcal{D}_4(J)=c V_{\sigma'} \left(R_{\sigma'^{-1}}^{\left(\mathcal{H}_1^\mathrm{tot}\right)}P^{\left(\mathcal{H}_1^\mathrm{tot}\right)}
		\otimes \ket{\bm{y}}\bra{\bm{y}}^{\left(\mathcal{H}_2^\mathrm{tot}\right)}
		\otimes I^{\left(\mathcal{H}_3^\mathrm{tot}\right)}\right). \label{SMeq:SMlem:symmetrization3}
	\end{align}
	$R_{\sigma'^{-1}}^{\left(\mathcal{H}_1^\mathrm{tot}\right)}P^{\left(\mathcal{H}_1^\mathrm{tot}\right)}$ can be written as 
	\begin{align}
		R_{\sigma'^{-1}}^{\left(\mathcal{H}_1^\mathrm{tot}\right)}P^{\left(\mathcal{H}_1^\mathrm{tot}\right)}=\sum_{P_1, P_2, P_3\in\mathcal{P}_{N_1}^+} \zeta_{P_1, P_2, P_3}\bigotimes_{j=1}^3 P_j^{\left(\mathcal{H}_1^j\right)} \label{SMeq:SMlem:symmetrization4}
	\end{align}
	with some $\zeta_{P_1, P_2, P_3}\in\mathbb{C}$ defined for $P_1, P_2, P_3\in\mathcal{P}_{N_1}^+$. 
	$\ket{\bm{y}}\bra{\bm{y}}^{\left(\mathcal{H}_2^\mathrm{tot}\right)}$ can be written as 
	\begin{align}
		\ket{\bm{y}}\bra{\bm{y}}^{\left(\mathcal{H}_2^\mathrm{tot}\right)}
		=&\bigotimes_{j\in\{1, 2, 3\}, l\in\{1, 2, ..., N_2\}} \ket{y_{j, l}}\bra{y_{j, l}}^{\left(\mathcal{H}_{2, l}^j\right)} \nonumber\\
		=&\bigotimes_{j\in\{1, 2, 3\}, l\in\{1, 2, ..., N_2\}} \frac{1}{2}\sum_{w_{j, l}\in\{0, 1\}} {\left[(-1)^{y_{j, l}}\mathrm{Z}\right]^{w_{j, l}}}^{\left(\mathcal{H}_{2, l}^j\right)} \nonumber\\
		=&\frac{1}{2^{3N_2}}\sum_{\bm{w}\in\{0, 1\}^{3N_2}} \bigotimes_{j\in\{1, 2, 3\}, l\in\{1, 2, ..., N_2\}} (-1)^{y_{j, l}w_{j, l}}{\mathrm{Z}^{w_{j, l}}}^{\left(\mathcal{H}_{2, l}^j\right)} \nonumber\\
		=&\frac{1}{2^{3N_2}}\sum_{\bm{w}\in\{0, 1\}^{3N_2}} (-1)^{\sum_{j, l} y_{j, l}w_{j, l}} \bigotimes_{j=1}^3\left(\bigotimes_{l=1}^{N_2} {\mathrm{Z}^{w_{j, l}}}^{\left(\mathcal{H}_{2, l}^j\right)}\right). \label{SMeq:SMlem:symmetrization5}
	\end{align}
	By plugging Eqs.~\eqref{SMeq:SMlem:symmetrization4} and \eqref{SMeq:SMlem:symmetrization5} into Eq.~\eqref{SMeq:SMlem:symmetrization3}, we get 
	\begin{align}
		\mathcal{D}_4(J)=\frac{c}{2^{3N_2}}\sum_{P_1, P_2, P_3\in\mathcal{P}_{N_1}^+, \bm{w}\in\{0, 1\}^{3N_2}} \zeta_{P_1, P_2, P_3}(-1)^{\sum_{j, l} y_{j, l}w_{j, l}} 
        V_{\sigma'}\bigotimes_{j=1}^3 \left(P_j^{\left(\mathcal{H}_1^j\right)}\otimes\bigotimes_{l=1}^{N_2} {\mathrm{Z}^{w_{j, l}}}^{\left(\mathcal{H}_{2, l}^j\right)}\otimes I^{(\mathcal{H}_3^j)}\right)\in\mathcal{M}_\mathcal{R}. 
	\end{align}
\hfill $\Box$\\

By combining Lemmas~\ref{SMlem:uniform_mixture_property1}, \ref{SMlem:fixed_points_specific} and \ref{SMlem:symmetrization}, we prove Proposition~\ref{SMprop:if_part}. \\

\noindent
\textit{Proof of Proposition~\ref{SMprop:if_part}.}
	Let $\mathcal{X}$ be $\mathcal{C}_{N, \mathcal{Q}}$ or $\mathcal{U}_{N, \mathcal{Q}}$. 
	By Lemma~\ref{SMlem:symmetrization}, we can take $\mathcal{D}\in\mathfrak{C}_{3, \mathcal{Q}}$ such that $\mathcal{D}(L)\in\mathcal{M}_\mathcal{Q}$ for all $L\in\mathcal{L}(\mathcal{H}^{\otimes 3})$. 
	For any $L\in\mathcal{L}(\mathcal{H}^{\otimes 3})$, By Lemma~\ref{SMlem:uniform_mixture_property1}, we have 
	\begin{align}
		\Phi_{3, \mathcal{X}}(L)=\Phi_{3, \mathcal{X}}(\mathcal{D}(L)). \label{SMeq:SMprop:if_part1}
	\end{align}
	Since $\mathcal{D}(L)\in\mathcal{M}_\mathcal{Q}$, by Lemma~\ref{SMlem:fixed_points_specific}, we get 
	\begin{align}
		\Phi_{3, \mathcal{X}}(\mathcal{D}(L))=\mathcal{D}(L). \label{SMeq:SMprop:if_part2}
	\end{align}
	Eqs.~\eqref{SMeq:SMprop:if_part1} and \eqref{SMeq:SMprop:if_part2} imply that $\Phi_{3, \mathcal{X}}=\mathcal{D}$. 
	Since this holds for $\mathcal{X}=\mathcal{C}_{N, \mathcal{Q}}$ and $\mathcal{U}_{N, \mathcal{Q}}$, we get $\Phi_{3, \mathcal{C}_{N, \mathcal{Q}}}=\Phi_{3, \mathcal{U}_{N, \mathcal{Q}}}$, i.e., $\mathcal{C}_{N, \mathcal{Q}}$ is a $\mathcal{Q}$-symmetric unitary $3$-design. 
\hfill $\Box$\\

\subsection{Proof of the ``only if'' part of Theorem~\ref{thm:main} (Theorem~\ref{SMthm:main})} \label{SMsubsec:3design_only_if}

In the ``only if'' part of the proof of Theorem~\ref{SMthm:main}, we take arbitrary unitary subgroup $\mathcal{G}$ such that the $\mathcal{G}$-symmetric Clifford group is a $\mathcal{G}$-symmetric unitary $3$-design, and construct a Pauli subgroup $\mathcal{Q}$ such that the constraints by $\mathcal{G}$ and $\mathcal{Q}$ are the same. 
We rigorously present this statement with the concrete construction of $\mathcal{Q}$ in the following proposition.

\begin{proposition} \label{SMprop:only_if_part}
	Let $N\in\mathbb{N}$, $\mathcal{G}$ be a subgroup of $\mathcal{U}_N$, $\mathcal{C}_{N, \mathcal{G}}$ be a $\mathcal{G}$-symmetric unitary $3$-design, and $\mathcal{Q}$ be defined by 
	\begin{align}
		\mathcal{Q}:=\braket{\{Q\in\mathcal{P}_N^+\ |\ \exists G\in\mathcal{G}\ \mathrm{s.t.}\ \mathrm{tr}(GQ)\neq0\}}. 
	\end{align}
	Then, $\mathcal{U}_{N, \mathcal{G}}=\mathcal{U}_{N, \mathcal{Q}}$. 
\end{proposition}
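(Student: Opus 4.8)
The plan is to establish the two inclusions $\mathcal{U}_{N,\mathcal{G}}\supseteq\mathcal{U}_{N,\mathcal{Q}}$ and $\mathcal{U}_{N,\mathcal{G}}\subseteq\mathcal{U}_{N,\mathcal{Q}}$ separately, with essentially all the difficulty in the latter. Write $\mathcal{Q}':=\{Q\in\mathcal{P}_N^+\mid\exists G\in\mathcal{G}\ \mathrm{s.t.}\ \mathrm{tr}(GQ)\neq0\}$, so $\mathcal{Q}=\langle\mathcal{Q}'\rangle$. For $\supseteq$: expanding each $G\in\mathcal{G}$ in the Pauli basis, the Paulis occurring with nonzero coefficient all lie in $\mathcal{Q}'$, so $\mathrm{span}(\mathcal{G})\subseteq\mathrm{span}(\mathcal{Q}')\subseteq\mathrm{span}(\mathcal{Q})$; since commuting with a set is the same as commuting with its linear span, $U\in\mathcal{U}_{N,\mathcal{Q}}$ implies $U\in\mathcal{U}_{N,\mathcal{G}}$.

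For $\subseteq$, I would first reduce to a statement about single Paulis. Since $\mathcal{Q}=\langle\mathcal{Q}'\rangle$ and a unitary commuting with each of a set of operators commutes with every product of them, it suffices to prove that every $U\in\mathcal{U}_{N,\mathcal{G}}$ commutes with every $Q\in\mathcal{Q}'$; equivalently, fixing $Q\in\mathcal{Q}'$ with a witness $G_0\in\mathcal{G}$ (so $\mathrm{tr}(G_0Q)\neq0$), that the map $f\colon\mathcal{U}_{N,\mathcal{G}}\to\mathcal{U}_N$, $f(U):=UQU^\dagger$, is constant --- it then equals $f(I)=Q$. The following elementary observations set the stage. (a) Every $U\in\mathcal{U}_{N,\mathcal{G}}$ commutes with all $G\in\mathcal{G}$, hence $\mathrm{tr}(Gf(U))=\mathrm{tr}(GQ)$ for every $G$; thus $f(U)-Q$ is Hilbert--Schmidt orthogonal to $\mathrm{span}(\mathcal{G})$, while both $f(U)$ and $Q$ lie in $\mathrm{span}(\mathcal{Q}')$. (b) For a Clifford $U$, $f(U)$ is $\pm$ a Pauli operator, and no $U\in\mathcal{C}_{N,\mathcal{G}}$ can send $Q\mapsto-Q$, since that would give $\mathrm{tr}(G_0Q)=-\mathrm{tr}(G_0Q)=0$. (c) Because $\mathcal{C}_{N,\mathcal{G}}$ is in particular a $\mathcal{G}$-symmetric unitary $1$-design, the fixed-point spaces of $\Phi_{1,\mathcal{C}_{N,\mathcal{G}}}$ and $\Phi_{1,\mathcal{U}_{N,\mathcal{G}}}$ coincide, and a double-commutant argument identifies both with $\mathrm{span}(\mathcal{G})$; hence ``$f$ is constant'' is equivalent to ``$Q\in\mathrm{span}(\mathcal{G})$'' and to ``$Q$ commutes with every $U\in\mathcal{C}_{N,\mathcal{G}}$''.

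The core of the proof uses the full hypothesis that $\mathcal{C}_{N,\mathcal{G}}$ is a $\mathcal{G}$-symmetric unitary $3$-design, i.e.\ $\Phi_{k,\mathcal{C}_{N,\mathcal{G}}}=\Phi_{k,\mathcal{U}_{N,\mathcal{G}}}$ for $k\le3$. One evaluates these identities on operators manufactured from $Q$ and $G_0$ --- for instance $Q^{\otimes3}$, and mixed objects such as $Q^{\otimes2}\otimes G_0$ --- and exploits the dichotomy between the two sides: on $\mathcal{C}_{N,\mathcal{G}}$ each $f(U)^{\otimes k}$ is $\pm$ a pure tensor power of Paulis from $\mathcal{Q}'$, so $\Phi_{k,\mathcal{C}_{N,\mathcal{G}}}(Q^{\otimes k})$ is a signed combination of such pure powers, whereas on $\mathcal{U}_{N,\mathcal{G}}$ the same object is $\int f(U)^{\otimes k}\,d\mu(U)$. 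Combining this rigidity with the constraints in (a), the sign restriction in (b), and the fact that $\mathcal{U}_{N,\mathcal{G}}$ is connected (so a continuous $f$ with discrete image is constant), one forces the $\mathcal{U}_{N,\mathcal{G}}$-orbit of $Q$ to collapse to the single point $Q$. The technical lemmas of Sec.~\ref{SMsec:technical} --- which control which Paulis can occur, count Pauli weights, and keep track of signs --- are precisely what makes this collapse rigorous. Once $f$ is constant for every $Q\in\mathcal{Q}'$, we have $[U,Q]=0$ for all $U\in\mathcal{U}_{N,\mathcal{G}}$ and all $Q\in\mathcal{Q}'$, hence $[U,P]=0$ for all $P\in\mathcal{Q}=\langle\mathcal{Q}'\rangle$, i.e.\ $\mathcal{U}_{N,\mathcal{G}}\subseteq\mathcal{U}_{N,\mathcal{Q}}$, completing the proof.

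I expect the rigidity/collapse step to be the genuine obstacle. A naive approach --- matching a few moments of the bounded observable $\mathrm{tr}(Qf(U))$ on $\mathcal{U}_{N,\mathcal{G}}$ against its two- or three-valued distribution on $\mathcal{C}_{N,\mathcal{G}}$ --- is not sufficient, and a direct dimension count for $\mathrm{span}(\mathcal{Q}')\cap\mathrm{span}(\mathcal{G})^\perp$ is circular; the argument really must use the $3$-design structure (not merely its $1$- or $2$-design consequences) together with the special feature that a symmetric Clifford conjugates a Pauli to a signed Pauli, which is where the technical lemmas do the work.
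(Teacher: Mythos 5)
Your overall skeleton matches the paper's: the easy inclusion via $\mathrm{span}(\mathcal{G})\subseteq\mathrm{span}(\mathcal{Q}')$, the reduction to showing $U\mapsto UQU^\dagger$ is constant for each generator $Q\in\mathcal{Q}'$, and the endgame via connectedness of $\mathcal{U}_{N,\mathcal{G}}$ plus discreteness of the image. However, the central step --- the actual mechanism that forces discreteness --- is missing, and you explicitly flag that you do not see how to make it work. That is a genuine gap, not a presentational one: without it the proof does not close.

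Here is what the paper actually does, and why your hesitation about a ``moment-matching'' argument is misplaced. The invariant object is not $Q^{\otimes 3}$ (which is generally \emph{not} fixed by symmetric Clifford conjugation) nor a mixed tensor like $Q^{\otimes 2}\otimes G_0$; it is
\begin{align}
L:=\sum_{P\in\mathcal{P}_N^+}\gamma_P(G)\,P^{\otimes 3},\qquad \gamma_P(G):=\tfrac{1}{2^N}\mathrm{tr}(GP),
\end{align}
i.e.\ the entire Pauli decomposition of the witness $G$ bundled into tensor cubes. A symmetric Clifford $U$ permutes Paulis up to signs $s_P\in\{\pm1\}$, and $UGU^\dagger=G$ forces the coefficient function to be equivariant, $\gamma_P(G)=\gamma_{h^{-1}(P)}(G)\,s_{h^{-1}(P)}$; since $3\equiv 1\pmod 2$ the signs cancel and $U^{\otimes3}LU^{\dagger\otimes3}=L$ for every $U\in\mathcal{C}_{N,\mathcal{G}}$. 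The $3$-design hypothesis (averaging, then left-invariance of Haar measure) upgrades this to invariance under all $U\in\mathcal{U}_{N,\mathcal{G}}$. Now set $\alpha_{P,P'}(U):=\tfrac{1}{2^N}\mathrm{tr}(UPU^\dagger P')$. Pairing the invariance of $L$ against $P'^{\otimes3}$ gives $\sum_P\gamma_P(G)\,\alpha_{P,P'}(U)^3=\gamma_{P'}(G)$, and summing the triangle inequality over $P'$ yields $\sum_{P,P'}|\gamma_P||\alpha_{P,P'}|^3\ge\sum_P|\gamma_P|$. Combined with the exact normalization $\sum_{P'}|\alpha_{P,P'}(U)|^2=1$ (unitary invariance of the Hilbert--Schmidt norm) and $|\alpha|\le1$, so that $|\alpha|^3\le|\alpha|^2$ termwise, equality is forced, giving $\alpha_{P,P'}(U)\in\{0,\pm1\}$ for every $P$ with $\gamma_P(G)\neq0$. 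This is precisely the third-moment-versus-second-moment comparison you suspected was insufficient; it works because the second moment is pinned at exactly $1$, leaving no slack. Discreteness plus connectedness then gives $\alpha_{P,P'}(U)=\delta_{P,P'}$, i.e.\ $UQU^\dagger=Q$. Note also that the heavy lemmas you point to in the technical section (Pauli normal forms, $3$-bit bookkeeping) belong to the ``if'' direction; the ``only if'' direction needs only the Clifford-permutes-Paulis fact and the connectedness lemma, plus the construction and inequality above.
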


Since we always have $\mathcal{U}_{N, \mathcal{G}}\supset\mathcal{U}_{N, \mathcal{Q}}$ by the construction of $\mathcal{Q}$, we focus on the proof of $\mathcal{U}_{N, \mathcal{G}}\subset\mathcal{U}_{N, \mathcal{Q}}$. 
In the proof of this, it is central to prove that for any $G\in\mathcal{G}$, each term of $G$ in the Pauli basis with a nonzero coefficient must be invariant under the conjugation action of $U\in\mathcal{U}_{N, \mathcal{G}}$. 
We prove it in three steps, correspondingly in Lemmas~\ref{SMlem:Clifford_invariant_construction}, \ref{SMlem:unitary_design_commutativity} and \ref{SMlem:3_design_discreteness}. \\

First, we show that we can construct an operator $L\in\mathcal{L}(\mathcal{H}^{\otimes 3})$ from arbitrary taken $G\in\mathcal{G}$ such that $L$ is invariant under the conjugation action of $U^{\otimes 3}$ for all $U\in\mathcal{C}_{N, \mathcal{G}}$. 
Here we use the definition of the Clifford operators, which transform a Pauli operator into some Pauli operator by its conjugation action.

\begin{lemma} \label{SMlem:Clifford_invariant_construction}
	Let $N\in\mathbb{N}$, $t, t'\in\mathbb{N}$ satisfy $t\equiv t'$ $(\mathrm{mod}\ 2)$, $U\in\mathcal{C}_N$, $G\in\mathcal{L}(\mathcal{H})$ satisfy $UGU^\dag=G$, and $L\in\mathcal{L}(\mathcal{H}^{\otimes t})$ be defined by 
	\begin{align}
		L:=\sum_{P\in\mathcal{P}_N^+} \gamma_P(G)^{t'} P^{\otimes t}, 
	\end{align}
	where $\gamma_P(G): \mathcal{L}(\mathcal{H})\to\mathbb{C}$ gives the expansion coefficient of $P$ in the Pauli basis, i.e.,   
	\begin{align}
		\gamma_P(G):=\frac{1}{2^N}\mathrm{tr}(GP). \label{SMeq:gamma_def}
	\end{align} 
	Then, $U^{\otimes t}LU^{\dag\otimes t}=L$. 
\end{lemma}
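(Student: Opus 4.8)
The plan is to use the defining feature of a Clifford operator, namely that conjugation by $U\in\mathcal{C}_N$ permutes the Hermitian Pauli strings up to a global sign. First I would record that for every $P\in\mathcal{P}_N^+$ one can write $UPU^\dag=\epsilon_U(P)\,\pi_U(P)$ with a uniquely determined sign $\epsilon_U(P)\in\{+1,-1\}$ and a uniquely determined $\pi_U(P)\in\mathcal{P}_N^+$: indeed $UPU^\dag\in\mathcal{P}_N=\{\pm1,\pm i\}\cdot\mathcal{P}_N^+$ since $U$ normalizes $\mathcal{P}_N$, and $UPU^\dag$ is Hermitian, so it is of the form $\pm Q$ with $Q\in\mathcal{P}_N^+$; uniqueness follows from the linear independence of $\mathcal{P}_N^+$. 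The map $\pi_U\colon\mathcal{P}_N^+\to\mathcal{P}_N^+$ is then a bijection, since $\pi_{U^\dag}$ is its inverse (equivalently, $P\mapsto UPU^\dag$ is injective and $U^\dag$ supplies preimages).

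Next I would convert the hypothesis $UGU^\dag=G$ — equivalently $U^\dag GU=G$ — into a relation among the Pauli coefficients of $G$. Using cyclicity of the trace and $\epsilon_U(P)^{-1}=\epsilon_U(P)$,
\begin{align*}
\gamma_{\pi_U(P)}(G)=\frac{1}{2^N}\mathrm{tr}\bigl(G\,\pi_U(P)\bigr)=\frac{\epsilon_U(P)}{2^N}\mathrm{tr}\bigl(G\,UPU^\dag\bigr)=\frac{\epsilon_U(P)}{2^N}\mathrm{tr}\bigl(U^\dag GU\,P\bigr)=\epsilon_U(P)\,\gamma_P(G),
\end{align*}
so that $\gamma_P(G)=\epsilon_U(P)\,\gamma_{\pi_U(P)}(G)$ for all $P\in\mathcal{P}_N^+$.

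The remaining computation is then short: expanding the tensor powers and substituting the two displayed identities gives
\begin{align*}
U^{\otimes t}LU^{\dag\otimes t}
&=\sum_{P\in\mathcal{P}_N^+}\gamma_P(G)^{t'}\,(UPU^\dag)^{\otimes t}
=\sum_{P\in\mathcal{P}_N^+}\epsilon_U(P)^{t}\,\gamma_P(G)^{t'}\,\pi_U(P)^{\otimes t}\\
&=\sum_{P\in\mathcal{P}_N^+}\epsilon_U(P)^{t}\,\bigl(\epsilon_U(P)\,\gamma_{\pi_U(P)}(G)\bigr)^{t'}\,\pi_U(P)^{\otimes t}
=\sum_{P\in\mathcal{P}_N^+}\epsilon_U(P)^{t+t'}\,\gamma_{\pi_U(P)}(G)^{t'}\,\pi_U(P)^{\otimes t},
\end{align*}
and since $t\equiv t'\ (\mathrm{mod}\ 2)$ the exponent $t+t'$ is even, so $\epsilon_U(P)^{t+t'}=1$; reindexing by the bijection $Q=\pi_U(P)$ turns the last sum into $\sum_{Q\in\mathcal{P}_N^+}\gamma_Q(G)^{t'}\,Q^{\otimes t}=L$, which is the claim.

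There is no genuine obstacle here; the only thing to get right is the bookkeeping of signs, and in particular to notice that the parity hypothesis $t\equiv t'\ (\mathrm{mod}\ 2)$ is precisely what makes the sign $\epsilon_U(P)^{t}$ produced by the $t$-fold tensor power cancel the sign $\epsilon_U(P)^{t'}$ picked up when transporting the coefficient $\gamma_P(G)$ to $\gamma_{\pi_U(P)}(G)$. I would therefore state the decomposition $UPU^\dag=\epsilon_U(P)\pi_U(P)$ and the resulting coefficient identity explicitly, since the whole argument rests on them.
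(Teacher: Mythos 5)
Your proposal is correct and follows essentially the same route as the paper's proof: the paper likewise invokes the sign-times-bijection decomposition $UPU^\dag=s_P h(P)$ (its Lemma on Clifford actions on Pauli operators), derives the same coefficient identity $\gamma_P(G)=s_P\,\gamma_{h(P)}(G)$ from $UGU^\dag=G$, and cancels the signs using $t\equiv t'\pmod 2$ before reindexing by the bijection. The only cosmetic differences are that you derive the coefficient relation by a trace computation rather than by comparing Pauli-basis expansions, and you prove the decomposition inline rather than citing it as a separate lemma.
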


We note that here we present this lemma in a general form because we also use it in the proof of Theorem~\ref{SMthm:2design}.

\begin{proof}
	By Lemma~\ref{SMlem:Clifford_action_on_Pauli} in Appendix~\ref{SMsec:technical}, there exist some function $s_U: \mathcal{P}_N^+\to\{\pm 1\}$ and some bijection $h_U$ on $\mathcal{P}_N^+$ such that for any $P\in\mathcal{P}_N^+$, 
	\begin{align}
		UPU^\dag=s_U(P)h_U(P). 
	\end{align}
	We therefore get 
	\begin{align}
		\sum_{P\in\mathcal{P}_N^+} \gamma_P(G) P
		=G
		=UGU^\dag
		=\sum_{P\in\mathcal{P}_N^+} \gamma_P(G) s_U(P) h_U(P)
		=\sum_{P\in\mathcal{P}_N^+} \gamma_{h_U^{-1}(P)}(G) s_U(h_U^{-1}(P)) P. 
	\end{align}
	By comparing the both sides, we get 
	\begin{align}
		\gamma_P(G)=\gamma_{h_U^{-1}(P)}(G) s_U(h_U^{-1}(P)) 
	\end{align}
	for all $P\in\mathcal{P}_N^+$. 
	By using this relation, we get 
	\begin{align}
		U^{\otimes t}LU^{\dag\otimes t}
		=&\sum_{P\in\mathcal{P}_N^+} \gamma_P(G)^{t'} (UPU^\dag)^{\otimes t} \nonumber\\
		=&\sum_{P\in\mathcal{P}_N^+} \gamma_P(G)^{t'} (s_U(P)h_U(P))^{\otimes t} \nonumber\\
		=&\sum_{P\in\mathcal{P}_N^+} \gamma_P(G)^{t'} s_U(P)^t h_U(P)^{\otimes t} \nonumber\\
		=&\sum_{P\in\mathcal{P}_N^+} (\gamma_P(G) s_U(P))^{t'} h_U(P)^{\otimes t} \nonumber\\
		=&\sum_{P\in\mathcal{P}_N^+} (\gamma_{h_U^{-1}(P)}(G) s_U(h_U^{-1}(P)))^{t'} P^{\otimes t} \nonumber\\
		=&\sum_{P\in\mathcal{P}_N^+} \gamma_P(G)^{t'} P^{\otimes t} \nonumber\\
		=&L. 
	\end{align}
\end{proof}

Second, we suppose that $\mathcal{C}_{N, \mathcal{G}}$ is a $\mathcal{G}$-symmetric unitary $3$-design and that $L\in\mathcal{L}(\mathcal{H}^{\otimes 3})$ is invariant under the conjugation action of $U^{\otimes 3}$ for all $U\in\mathcal{C}_{N, \mathcal{G}}$, and we show that such $L$ is invariant under the conjugation action of $U^{\otimes 3}$ even for all $U\in\mathcal{U}_{N, \mathcal{G}}$. 
This directly follows from the definitions of symmetric unitary designs and the Haar measure.

\begin{lemma} \label{SMlem:unitary_design_commutativity}
	Let $N\in\mathbb{N}$, $t\in\mathbb{N}$, $\mathcal{G}$ be a subgroup of $\mathcal{U}_N$, $\mathcal{C}_{N, \mathcal{G}}$ be a $\mathcal{G}$-symmetric unitary $t$-design, and $L\in\mathcal{L}(\mathcal{H}^{\otimes t})$ satisfy $U^{\otimes t}LU^{\dag\otimes t}=L$ for all $U\in\mathcal{C}_{N, \mathcal{G}}$. 
	Then, $U^{\otimes t}LU^{\dag\otimes t}=L$ for all $U\in\mathcal{U}_{N, \mathcal{G}}$. 
\end{lemma}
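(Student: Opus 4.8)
The plan is to restate the hypothesis as a fixed-point statement for the twirling channel and then transport it across the design identity. First I would observe that the assumption $U^{\otimes t}LU^{\dag\otimes t}=L$ for every $U\in\mathcal{C}_{N,\mathcal{G}}$ says exactly that $L$ is constant along the conjugation orbit, so integrating against the normalized Haar measure gives $\Phi_{t,\mathcal{C}_{N,\mathcal{G}}}(L)=\int_{U\in\mathcal{C}_{N,\mathcal{G}}}U^{\otimes t}LU^{\dag\otimes t}\,d\mu_{\mathcal{C}_{N,\mathcal{G}}}(U)=L$. Since $\mathcal{C}_{N,\mathcal{G}}$ is assumed to be a $\mathcal{G}$-symmetric unitary $t$-design, Definition~\ref{SMdef:symmetric_unweighted_unitary_design} gives $\Phi_{t,\mathcal{C}_{N,\mathcal{G}}}=\Phi_{t,\mathcal{U}_{N,\mathcal{G}}}$, hence $\Phi_{t,\mathcal{U}_{N,\mathcal{G}}}(L)=L$ as well.

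Second, I would fix an arbitrary $V\in\mathcal{U}_{N,\mathcal{G}}$ and use the invariance of the Haar measure on $\mathcal{U}_{N,\mathcal{G}}$ (a compact Lie group, as it contains the global phases $\mathcal{U}_0 I$), in the same spirit as the proof of Lemma~\ref{SMlem:uniform_mixture_property1}: $V^{\otimes t}\Phi_{t,\mathcal{U}_{N,\mathcal{G}}}(L)V^{\dag\otimes t}=\int_{U}(VU)^{\otimes t}L(VU)^{\dag\otimes t}\,d\mu_{\mathcal{U}_{N,\mathcal{G}}}(U)=\int_{U'}{U'}^{\otimes t}L{U'}^{\dag\otimes t}\,d\mu_{\mathcal{U}_{N,\mathcal{G}}}(U')=\Phi_{t,\mathcal{U}_{N,\mathcal{G}}}(L)$. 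Combining this with the first step, $V^{\otimes t}LV^{\dag\otimes t}=V^{\otimes t}\Phi_{t,\mathcal{U}_{N,\mathcal{G}}}(L)V^{\dag\otimes t}=\Phi_{t,\mathcal{U}_{N,\mathcal{G}}}(L)=L$, which is precisely the assertion since $V$ was arbitrary.

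There is essentially no obstacle here; this lemma is just the general principle that, for a $t$-design, every fixed point of the subgroup's twirl is a fixed point of the full symmetric group's twirl and therefore invariant under every symmetric unitary. The only points requiring a word of care are invoking the correct invariance property of the Haar measure (left-invariance above; right-invariance works identically after relabeling) and noting that the argument is unchanged when $\mathcal{C}_{N,\mathcal{G}}$ or $\mathcal{U}_{N,\mathcal{G}}$ is finite up to phase, in which case the integrals become finite averages.
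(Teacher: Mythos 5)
Your proposal is correct and follows essentially the same route as the paper's proof: integrate the hypothesis to see that $L$ is a fixed point of $\Phi_{t,\mathcal{C}_{N,\mathcal{G}}}$, invoke the design identity to transfer this to $\Phi_{t,\mathcal{U}_{N,\mathcal{G}}}$, and then use left-invariance of the Haar measure on $\mathcal{U}_{N,\mathcal{G}}$ to conclude invariance under every symmetric unitary. No gaps.
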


We present this lemma in a general form because we also use it in the proofs of Theorem~\ref{SMthm:2design} and Theorem~\ref{thm:4design} in the main text.

\begin{proof}
	Since $L$ satisfies $U'^{\otimes t}LU'^{\dag\otimes t}=L$ for all $U'\in\mathcal{C}_{N, \mathcal{G}}$, we have 
	\begin{align}
		\int_{U'\in\mathcal{C}_{N, \mathcal{G}}} U'^{\otimes t}LU'^{\dag\otimes t} d\mu_{\mathcal{C}_{N, \mathcal{G}}}(U')  
		=\int_{U'\in\mathcal{C}_{N, \mathcal{G}}} L d\mu_{\mathcal{C}_{N, \mathcal{G}}}(U') 
		=L. \label{SMeq:SMlem:unitary_design_commutativity1}
	\end{align}
	Since $\mathcal{C}_{N, \mathcal{G}}$ is a $\mathcal{G}$-symmetric unitary $t$-design, we have 
	\begin{align}
		\int_{U'\in\mathcal{C}_{N, \mathcal{G}}} U'^{\otimes t}LU'^{\dag\otimes t} d\mu_{\mathcal{C}_{N, \mathcal{G}}}(U') 
		=\int_{U'\in\mathcal{U}_{N, \mathcal{G}}} U'^{\otimes t}LU'^{\dag\otimes t} d\mu_{\mathcal{U}_{N, \mathcal{G}}}(U'). \label{SMeq:SMlem:unitary_design_commutativity2}
	\end{align}
	By Eqs.~\eqref{SMeq:SMlem:unitary_design_commutativity1} and \eqref{SMeq:SMlem:unitary_design_commutativity2}, we get 
	\begin{align}
		L=\int_{U'\in\mathcal{U}_{N, \mathcal{G}}} U'^{\otimes t}LU'^{\dag\otimes t} d\mu_{\mathcal{U}_{N, \mathcal{G}}}(U'). 
	\end{align}
	We therefore get for any $U\in\mathcal{U}_{N, \mathcal{G}}$, 
	\begin{align}
		U^{\otimes t}LU^{\dag\otimes t} 
		=&\int_{U'\in\mathcal{U}_{N, \mathcal{G}}} U^{\otimes t}U'^{\otimes t}LU'^{\dag\otimes t}U^{\dag\otimes t} d\mu_{\mathcal{U}_{N, \mathcal{G}}}(U') \nonumber\\
		=&\int_{U'\in\mathcal{U}_{N, \mathcal{G}}} (UU')^{\otimes t}L(UU')^{\dag\otimes t} d\mu_{\mathcal{U}_{N, \mathcal{G}}}(U') \nonumber\\
		=&\int_{U'\in\mathcal{U}_{N, \mathcal{G}}} U'^{\otimes t}LU'^{\dag\otimes t} d\mu_{\mathcal{U}_{N, \mathcal{G}}}(U') \nonumber\\
		=&L, 
	\end{align}
	where we used the left invariance of $\mu_{\mathcal{U}_{N, \mathcal{G}}}$. 
\end{proof}

Finally, under the assumption that $L$ is in the form given in Lemma~\ref{SMlem:Clifford_invariant_construction} and it is invariant under the conjugation action of $U^{\otimes t}$ for all $U\in\mathcal{U}_{N, \mathcal{G}}$, we prove that every Pauli basis composing $L$ with a nonzero coefficient is invariant under the conjugation action of $U$ for all $U\in\mathcal{U}_{N, \mathcal{G}}$. 
In the proof of this, we fix a Pauli operator $P$ and consider a continuous map $U\mapsto UPU^\dag$ from $\mathcal{U}_{N, \mathcal{G}}$ to $\mathcal{U}_N$.  
By the assumption above and the unitary invariance of the Hilbert-Schmidt norm, we know that the value of this map only takes discrete points. 
The combination of this and the connectedness of $\mathcal{U}_{N, \mathcal{G}}$ implies that this map always takes the constant value $P$, where we use the fact that only a singleton is a discrete and connected nonempty set. 
In the concrete proof process, we consider the linear expansion of $UPU^\dag$ in the Pauli basis.

\begin{lemma} \label{SMlem:3_design_discreteness}
	 Let $N\in\mathbb{N}$, $t\geq 3$, $\mathcal{G}$ be a subgroup of $\mathcal{U}_N$, and $L\in\mathcal{L}(\mathcal{H}^{\otimes t})$ be defined by 
	\begin{align}
		L:=\sum_{P\in\mathcal{P}_N^+} \gamma'_P P^{\otimes t}
	\end{align}
	with some $\gamma'_P\in\mathbb{C}$ for $P\in\mathcal{P}_N^+$ and satisfy 
	\begin{align}
		U^{\otimes t}LU^{\dag\otimes t}=L \label{SMeq:SMlem:3_design_discreteness1}
	\end{align}
	for all $U\in\mathcal{U}_{N, \mathcal{G}}$. 
	Then, $UPU^\dag=P$ for all $U\in\mathcal{U}_{N, \mathcal{G}}$ and $P\in\mathcal{P}_N^+$ satisfying $\gamma'_P\neq0$. 
\end{lemma}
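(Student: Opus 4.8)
The plan is to run the topological argument sketched just before the lemma: the map $\phi\colon U\mapsto UPU^\dagger$ from $\mathcal{U}_{N,\mathcal{G}}$ to $\mathcal{U}_N$ is continuous and $\mathcal{U}_{N,\mathcal{G}}$ is connected, so $\phi(\mathcal{U}_{N,\mathcal{G}})$ is connected; I will show it is also discrete, and since it is nonempty a connected discrete set is a single point, necessarily $\phi(I)=P$, which gives $UPU^\dagger=P$ for all $U$. The only nontrivial ingredient is discreteness, and this is where $t\ge3$, $\gamma'_P\ne0$, and the invariance \eqref{SMeq:SMlem:3_design_discreteness1} enter: concretely I will show the orbit of $P$ has zero tangent space, i.e.\ $[X,P]=0$ for every element $X$ of the Lie algebra $\{X\in\mathcal{L}(\mathcal{H}):X^\dagger=-X,\ [X,G]=0\ \forall G\in\mathcal{G}\}$ of $\mathcal{U}_{N,\mathcal{G}}$.

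First I would differentiate the invariance. For such an $X$ and any $s\in\mathbb{R}$ we have $e^{sX}\in\mathcal{U}_{N,\mathcal{G}}$, so \eqref{SMeq:SMlem:3_design_discreteness1} gives $(e^{sX})^{\otimes t}L(e^{-sX})^{\otimes t}=L$, i.e.\ $e^{s\tilde X}Le^{-s\tilde X}=L$ with $\tilde X:=\sum_{i=1}^{t} I^{\otimes(i-1)}\otimes X\otimes I^{\otimes(t-i)}$; differentiating at $s=0$ yields $[\tilde X,L]=0$. Expanding $[X,P']=\sum_{R\in\mathcal{P}_N^+}\beta^{X,P'}_R R$ with $\beta^{X,P'}_R=2^{-N}\mathrm{tr}(R[X,P'])\in\mathbb{R}$ (real since $[X,P']$ is Hermitian), and noting $\beta^{X,P'}_{P'}=0$ and $\beta^{X,P'}_{\mathrm{I}^{\otimes N}}=0$ because $[X,P']$ is traceless and Hilbert--Schmidt--orthogonal to $P'$, the relation $[\tilde X,L]=0$ becomes
\begin{align}
0=\sum_{P'\in\mathcal{P}_N^+}\ \sum_{i=1}^{t}\ \sum_{R\ne P'}\ \gamma'_{P'}\,\beta^{X,P'}_R\ \bigl(P'^{\otimes(i-1)}\otimes R\otimes P'^{\otimes(t-i)}\bigr).
\end{align}

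The key step is that, because $t\ge3$, the operators $P'^{\otimes(i-1)}\otimes R\otimes P'^{\otimes(t-i)}$ occurring here are pairwise distinct elements of the Pauli basis $\mathcal{P}_{tN}^+$ of $\mathcal{L}(\mathcal{H}^{\otimes t})$: from such a length-$t$ Pauli word one reads off $P'$ as the unique letter appearing in at least $t-1\ (\ge2)$ slots, $i$ as the unique slot whose letter differs from $P'$, and $R$ as that letter. Since a set of distinct Pauli words is linearly independent, every coefficient vanishes, so $\gamma'_{P'}\,\beta^{X,P'}_R=0$ for all $P'$, all $R\ne P'$, and all admissible $X$. In particular, if $\gamma'_P\ne0$ then $\beta^{X,P}_R=0$ for all $R\ne P$, and together with $\beta^{X,P}_P=\beta^{X,P}_{\mathrm{I}^{\otimes N}}=0$ this gives $[X,P]=0$ for every $X$ in the Lie algebra of $\mathcal{U}_{N,\mathcal{G}}$. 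Since $\mathcal{U}_{N,\mathcal{G}}$ is a connected Lie group — it is the unitary group of the finite-dimensional $*$-algebra $\{A\in\mathcal{L}(\mathcal{H}):[A,G]=0\ \forall G\in\mathcal{G}\}\cong\bigoplus_j M_{d_j}(\mathbb{C})$, hence $\cong\prod_j \mathcal{U}(d_j)$ — it is generated by the operators $e^{X}$ with $X$ as above, each of which fixes $P$ under conjugation; therefore $UPU^\dagger=P$ for every $U\in\mathcal{U}_{N,\mathcal{G}}$, i.e.\ $\phi(\mathcal{U}_{N,\mathcal{G}})=\{P\}$ (equivalently: the orbit has trivial tangent space $\{[X,P]\}=\{0\}$, hence is discrete, and being connected and nonempty is a singleton).

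The main obstacle is precisely this distinctness claim: for $t=2$ the operators $P'^{\otimes(i-1)}\otimes R\otimes P'^{\otimes(t-i)}$ are \emph{not} all distinct — the word $R\otimes P'$ arises both from $(P',R,i=1)$ and from $(R,P',i=2)$ — so the vanishing only forces $\gamma'_{P'}\beta^{X,P'}_R+\gamma'_R\beta^{X,R}_{P'}=0$, which is too weak to conclude $[X,P]=0$; this is exactly why the characterization in Theorem~\ref{SMthm:main} lives at the level of $3$-designs and fails for $2$-designs. Everything else — differentiating a smooth identity, the reality/vanishing of the $P'$- and $\mathrm{I}^{\otimes N}$-components of $[X,P']$, and passing from the Lie algebra back to the connected group — is routine.
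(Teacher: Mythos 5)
Your proof is correct, but it takes a genuinely different route from the paper's. The paper works globally: it defines $\alpha_{P,P'}(U):=2^{-N}\mathrm{tr}(UPU^\dag P')$, uses the invariance of $L$ to derive $\sum_P\gamma'_P\,\alpha_{P,P'}(U)^t=\gamma'_{P'}$, and combines the triangle inequality with the normalization $\sum_{P'}|\alpha_{P,P'}(U)|^2=1$ and $t\geq 3$ to force $|\alpha_{P,P'}(U)|^t=|\alpha_{P,P'}(U)|^2$, hence $\alpha_{P,P'}(U)\in\{0,\pm1\}$; discreteness plus connectedness of $\mathcal{U}_{N,\mathcal{G}}$ (Lemma~\ref{SMlem:symmetric_unitary_connectedness}) then pins the value at $\alpha_{P,P'}(I)=\delta_{P,P'}$. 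You instead work infinitesimally: differentiating the invariance along one-parameter subgroups to get $[\tilde X,L]=0$, expanding in the Pauli basis of $\mathcal{L}(\mathcal{H}^{\otimes t})$, and using the injectivity of $(P',R,i)\mapsto P'^{\otimes(i-1)}\otimes R\otimes P'^{\otimes(t-i)}$ for $t\geq 3$ together with linear independence of distinct Pauli words to kill each coefficient $\gamma'_{P'}\beta^{X,P'}_R$ separately, concluding $[X,P]=0$ on the whole Lie algebra and then exponentiating back (which needs the same structural fact that $\mathcal{U}_{N,\mathcal{G}}\cong\prod_\lambda\mathcal{U}(\mathcal{J}_\lambda)$ is connected with surjective exponential, i.e.\ the content of Lemma~\ref{SMlem:symmetric_unitary_connectedness}). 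Both arguments use $t\geq 3$ in an essential and instructive way — the paper through the strictness of $|\alpha|^t\leq|\alpha|^2$, you through the absence of word collisions — and your aside correctly identifies the $t=2$ collision $R\otimes P'$ as the reason the lemma cannot be pushed down to $2$-designs, a failure the paper instead exhibits by the explicit SWAP computation in Theorem~\ref{SMthm:2design}. The paper's version avoids any Lie-algebra machinery beyond connectedness; yours is combinatorially cleaner and localizes the role of $t$ more sharply. All the supporting claims you make (Hermiticity of $[X,P']$, vanishing of the $P'$- and identity-components, identification of the Lie algebra of $\mathcal{U}_{N,\mathcal{G}}$ with the anti-Hermitian part of the commutant) check out.
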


This lemma is also used in the proof of Theorem~\ref{thm:4design} in the main text.

\begin{proof}
	We define $\alpha_{P_1, P_2}(U)$: $\mathcal{U}_{N, \mathcal{G}}\to\mathbb{R}$ by 
	\begin{align}
		\alpha_{P_1, P_2}(U):=\frac{1}{2^N}\mathrm{tr}(UP_1U^\dag P_2) 
	\end{align}	
	for $P_1, P_2\in\mathcal{P}_N^+$, and prove that $\{\alpha_{P, P'}(U)\}_{U\in\mathcal{U}_{N, \mathcal{G}}}$ is discrete for all $P, P'\in\mathcal{P}_N^+$ satisfying $\gamma'_P\neq0$. 
	We can confirm that $\alpha_{P_1, P_2}(U)\in\mathbb{R}$ by noting that $UP_1U^\dag$ and $P_2$ are hermitian. 
	Take arbitrary $U\in\mathcal{U}_{N, \mathcal{G}}$. 
	Since $L$ satisfies Eq.~\eqref{SMeq:SMlem:3_design_discreteness1}, for any $P'\in\mathcal{P}_N^+$, we get 
	\begin{align}
		\sum_{P\in\mathcal{P}_N^+} \gamma'_P \alpha_{P, P'}(U)^t
		=&\sum_{P\in\mathcal{P}_N^+} \gamma'_P \cdot\frac{1}{2^{tN}}\mathrm{tr}((P'UPU^\dag)^{\otimes t}) \nonumber\\
		=&\frac{1}{2^{tN}}\mathrm{tr}({P'}^{\otimes t}U^{\otimes t}LU^{\dag\otimes t}) \nonumber\\
		=&\frac{1}{2^{tN}}\mathrm{tr}({P'}^{\otimes t}L) \nonumber\\
		=&\sum_{P\in\mathcal{P}_N^+} \gamma'_P \cdot\frac{1}{2^{tN}}\mathrm{tr}((P'P)^{\otimes t}) \nonumber\\
		=&\sum_{P\in\mathcal{P}_N^+} \gamma'_P \delta_{P, P'} \nonumber\\
		=&\gamma'_{P'}.
	\end{align}
	By the triangle inequality, we have 
	\begin{align}
		\sum_{P\in\mathcal{P}_N^+} |\gamma'_P| |\alpha_{P, P'}(U)|^t
		\geq|\gamma'_{P'}|.
	\end{align}
	By taking the sum over $P'\in\mathcal{P}_N^+$, we get 
	\begin{align}
		\sum_{P, P'\in\mathcal{P}_N^+} |\gamma'_P| |\alpha_{P, P'}(U)|^t
		\geq\sum_{P'\in\mathcal{P}_N^+} |\gamma'_{P'}|
		=\sum_{P\in\mathcal{P}_N^+} |\gamma'_P|. \label{SMeq:SMlem:3_design_discreteness2}
	\end{align}
	For any $P\in\mathcal{P}_N^+$, $UPU^\dag$ can be expanded in the Pauli basis as 
	\begin{align}
		UPU^\dag
		=\sum_{P'\in\mathcal{P}_N^+} \frac{1}{2^N}\mathrm{tr}(UPU^\dag P')P'
		=\sum_{P'\in\mathcal{P}_N^+} \alpha_{P, P'}(U)P'. \label{SMeq:SMlem:3_design_discreteness3}
	\end{align}
	By considering the Hilbert-Schmidt norm of the both sides, we get 
	\begin{align}
		1
		=&\frac{1}{2^N}\mathrm{tr}((UPU^\dag)^\dag(UPU^\dag)) \nonumber\\
		=&\frac{1}{2^N}\mathrm{tr}\left(\sum_{P', P''\in\mathcal{P}_N^+} \alpha_{P, P'}(U)^* \alpha_{P, P''}(U)P'P''\right) \nonumber\\
		=&\sum_{P', P''\in\mathcal{P}_N^+} \alpha_{P, P'}(U)^* \alpha_{P, P''}(U)\frac{1}{2^N}\mathrm{tr}(P'P'') \nonumber\\
		=&\sum_{P', P''\in\mathcal{P}_N^+} \alpha_{P, P'}(U)^* \alpha_{P, P''}(U)\delta_{P', P''} \nonumber\\
		=&\sum_{P'\in\mathcal{P}_N^+} |\alpha_{P, P'}(U)|^2. \label{SMeq:SMlem:3_design_discreteness4}
	\end{align}
	By Eqs.~\eqref{SMeq:SMlem:3_design_discreteness2} and \eqref{SMeq:SMlem:3_design_discreteness4}, we get 
	\begin{align}
		\sum_{P, P'\in\mathcal{P}_N^+} |\gamma'_P||\alpha_{P, P'}(U)|^t
		\geq\sum_{P\in\mathcal{P}_N^+} |\gamma'_P|\cdot 1
		=\sum_{P\in\mathcal{P}_N^+} |\gamma'_P|\left(\sum_{P'\in\mathcal{P}_N^+} |\alpha_{P, P'}(U)|^2\right)
		=\sum_{P, P'\in\mathcal{P}_N^+} |\gamma'_P| |\alpha_{P, P'}(U)|^2. 
	\end{align}
	This is equivalently expressed as  
	\begin{align}
		\sum_{P, P'\in\mathcal{P}_N^+} |\gamma'_P|(|\alpha_{P, P'}(U)|^t-|\alpha_{P, P'}(U)|^2)\geq0. \label{SMeq:SMlem:3_design_discreteness5}
	\end{align}
	Since $|\alpha_{P, P'}(U)|\leq1$ by Eq.~\eqref{SMeq:SMlem:3_design_discreteness4} and $t\geq 3$, we have $|\alpha_{P, P'}(U)|^t-|\alpha_{P, P'}(U)|^2\leq0$. 
	This implies that equality holds in Eq.~\eqref{SMeq:SMlem:3_design_discreteness5}. 
	We therefore get for any $P, P'\in\mathcal{P}_N^+$ satisfying $\gamma'_P\neq0$, 
	\begin{align}
		|\alpha_{P, P'}(U)|^t-|\alpha_{P, P'}(U)|^2=0. 
	\end{align}
	Since $\alpha_{P, P'}(U)\in\mathbb{R}$, we get 
	\begin{align}
		\alpha_{P, P'}(U)=0, \pm1. 
	\end{align} 
	This implies that $\{\alpha_{P, P'}(U)\}_{U\in\mathcal{U}_{N, \mathcal{G}}}$ is discrete. 
	On the other hand, since $\mathcal{U}_{N, \mathcal{G}}$ is connected by Lemma~\ref{SMlem:symmetric_unitary_connectedness} in Appendix~\ref{SMsec:technical} and $\alpha_{P, P'}(U)$ is continuous, $\{\alpha_{P, P'}(U)\}_{U\in\mathcal{U}_{N, \mathcal{G}}}$ is connected. 
	We therefore know that $\{\alpha_{P, P'}(U)\}_{U\in\mathcal{U}_{N, \mathcal{G}}}$ is a singleton, which implies that for any $U\in\mathcal{U}_{N, \mathcal{G}}$, 
	\begin{align}
		\alpha_{P, P'}(U)=\alpha_{P, P'}(I)=\delta_{P, P'}. 
	\end{align}
	By plugging this into Eq.~\eqref{SMeq:SMlem:3_design_discreteness3}, we get 
	\begin{align}
		UPU^\dag
		=\sum_{P'\in\mathcal{P}_N^+} \delta_{P, P'}P'
		=P. 
	\end{align}
\end{proof}

By combining the three lemmas above, we prove Proposition~\ref{SMprop:only_if_part}. \\

\noindent
\textit{Proof of Proposition~\ref{SMprop:only_if_part}.}
	First, we prove that $\mathcal{U}_{N, \mathcal{G}}\supset\mathcal{U}_{N, \mathcal{Q}}$. 
	We define $\mathcal{Q}':=\{Q\in\mathcal{P}_N^+\ |\ \exists G\in\mathcal{G}\ \mathrm{tr}(GQ)\neq0\}$. 
	Then, $\mathcal{Q}$ is the group $\braket{\mathcal{Q}'}$ generated by $\mathcal{Q}'$. 
	For any $G\in\mathcal{G}$, we know that 
	\begin{align}
		G=\sum_{Q\in\mathcal{P}_N^+} \gamma_Q(G)Q
		=\sum_{Q\in\mathcal{Q}'} \gamma_Q(G)Q+\sum_{Q\in\mathcal{P}_N^+\backslash\mathcal{Q}'} \gamma_Q(G)Q
		=\sum_{Q\in\mathcal{Q}'} \gamma_Q(G)Q
		\in\mathrm{span}(\mathcal{Q}')
		\subset\mathrm{span}(\mathcal{Q}), 
	\end{align}
	where $\gamma_Q(G)$ is defined by Eq.~\eqref{SMeq:gamma_def}, and we used $\gamma_Q(G)=0$ for all $Q\in\mathcal{P}_N^+\backslash\mathcal{Q}'$ by the definition of $\mathcal{Q}'$. 
	For any $U\in\mathcal{U}_{N, \mathcal{Q}}$, we therefore get $[U, G]=0$. 
	Since this holds for all $G\in\mathcal{G}$, we get $\mathcal{U}_{N, \mathcal{Q}}\subset\mathcal{U}_{N, \mathcal{G}}$.

	Next, we prove that $\mathcal{U}_{N, \mathcal{G}}\subset\mathcal{U}_{N, \mathcal{Q}}$. 
	Take arbitrary $Q\in\mathcal{Q}'$. 
	By the definition of $\mathcal{Q}'$, we can take $G\in\mathcal{G}$ such that $\gamma_Q(G)\neq0$. 
	By the definition of $\gamma_P(G)$, $G$ can be written as 
	\begin{align}
		G=\sum_{P\in\mathcal{P}_N^+} \gamma_P(G)P. 
	\end{align}
	We define $L\in\mathcal{L}(\mathcal{H}^{\otimes 3})$ by 
	\begin{align}
		L:=\sum_{P\in\mathcal{P}_N^+} \gamma_P(G)P^{\otimes 3}. 
	\end{align}
	By Lemma~\ref{SMlem:Clifford_invariant_construction}, we get $U^{\otimes 3}LU^{\dag\otimes 3}=L$ for all $U\in\mathcal{C}_{N, \mathcal{G}}$. 
	Since $\mathcal{C}_{N, \mathcal{G}}$ is a $\mathcal{G}$-symmetric unitary $3$-design, by Lemma~\ref{SMlem:unitary_design_commutativity}, we get $U^{\otimes 3}LU^{\dag\otimes 3}=L$ for all $U\in\mathcal{U}_{N, \mathcal{G}}$. 
	By Lemma~\ref{SMlem:3_design_discreteness}, we know that $UPU^\dag=P$ for all $U\in\mathcal{U}_{N, \mathcal{G}}$ and $P\in\mathcal{P}_N^+$ satisfying $\gamma_P(G)\neq0$. 
	Since $Q$ satisfies $\gamma_Q(G)\neq0$, we have for any $U\in\mathcal{U}_{N, \mathcal{G}}$, $UQU^\dag=Q$, or equivalently $[U, Q]=0$. 
	Since this holds for all $Q\in\mathcal{Q}'$ and $\mathcal{Q}=\braket{\mathcal{Q}'}$, we get $[U, Q]=0$ for all $U\in\mathcal{U}_{N, \mathcal{G}}$ and $Q\in\mathcal{Q}$. 
	This implies that $\mathcal{U}_{N, \mathcal{G}}\subset\mathcal{U}_{N, \mathcal{Q}}$. 
\hfill $\Box$\\

By combining Propositions~\ref{SMprop:if_part} and \ref{SMprop:only_if_part}, we prove Theorem~\ref{SMthm:main}. \\

\noindent
\textit{Proof of Theorem~\ref{SMthm:main}.}
	First, we consider the ``if'' part. 
	We suppose that $\mathcal{U}_{N, \mathcal{G}}=\mathcal{U}_{N, \mathcal{Q}}$ with some subgroup $\mathcal{Q}$ of $\mathcal{P}_N$. 
	Then, the conditions for $\mathcal{G}$- and $\mathcal{Q}$-symmetric unitary $3$-designs are equivalent and $\mathcal{C}_{N, \mathcal{G}}=\mathcal{C}_N \cap\mathcal{U}_{N, \mathcal{G}}=\mathcal{C}_N \cap\mathcal{U}_{N, \mathcal{Q}}=\mathcal{C}_{N, \mathcal{Q}}$. 
	Thus it suffices to show that $\mathcal{C}_{N, \mathcal{Q}}$ is a $\mathcal{Q}$-symmetric unitary $3$-design, which we proven in Proposition~\ref{SMprop:if_part}.

	Next, we consider the ``only if'' part. 
	We suppose that $\mathcal{C}_{N, \mathcal{G}}$ is a $\mathcal{G}$-symmetric unitary $3$-design. 
	We define $\mathcal{Q}:=\braket{\{Q\in\mathcal{P}_N^+\ |\ \exists G\in\mathcal{G}\ \mathrm{s.t.}\ \mathrm{tr}(GQ)\neq0\}}$. 
	Then, $\mathcal{Q}$ is a subgroup of $\mathcal{P}_N$ and by Proposition~\ref{SMprop:only_if_part}, we get $\mathcal{U}_{N, \mathcal{G}}=\mathcal{U}_{N, \mathcal{Q}}$.
 
\hfill $\Box$\\

\section{Proof of Theorem~\ref{thm:expression} (Construction of symmetric Clifford groups)} \label{SMsec:construction}

In this appendix, we present complete and unique constructions of symmetric Clifford groups with elementary gate sets under Pauli symmetries and certain non-Pauli symmetries in Theorems~\ref{SMthm:symmetric_Clifford_expression} and \ref{SMthm:nonPauli_Clifford_expression}, respectively (see Figs.~\ref{fig:circuit} and \ref{fig:nonPauli_circuit} in the main text). 
Theorem~\ref{SMthm:symmetric_Clifford_expression} corresponds to Theorem~\ref{thm:expression} in the main text. 
We note that the theorems about unitary designs in this paper are proven without using these theorems.

\subsection{Pauli symmetries}

First, for any Pauli symmetry, we present a construction of the symmetric Clifford group and prove that the construction is complete and unique. 
By using the fact that any Pauli subgroup can be transformed into a simple Pauli subgroup in the form of Eq.~\eqref{SMeq:Pauli_subgroup_standard}, we know that it is sufficient only to consider such form of symmetries. 
We show that every symmetric Clifford operator can be written with the four types of operators that we consider in the proofs of Lemmas~\ref{SMlem:S_gates}, \ref{SMlem:CZ^BB_mixture}, \ref{SMlem:Clifford_3_mixture} and \ref{SMlem:CPauli^BC_mixture}, and we also show that it is expressed in a unique way.

\begin{theorem} \label{SMthm:symmetric_Clifford_expression}
    (Restatement of Theorem~\ref{thm:expression}.) 
	Let $N\in\mathbb{N}$ and $\mathcal{Q}$ be a subgroup of $\mathcal{P}_N$. 
	Then, there exist $W\in\mathcal{C}_N$ and $N_1, N_2, N_3\in\mathbb{N}$ such that  
	\begin{align}
		\mathcal{P}_0 W\mathcal{Q}W^\dag=\mathcal{P}_0\{\mathrm{I}, \mathrm{X}, \mathrm{Y}, \mathrm{Z}\}^{\otimes N_1}\otimes\{\mathrm{I}, \mathrm{Z}\}^{\otimes N_2}\otimes\{\mathrm{I}\}^{\otimes N_3}, 
	\end{align}
	which gives a decomposition of $N$ qubits into three subsystems $\mathrm{A}_1$, $\mathrm{A}_2$ and $\mathrm{A}_3$, each consisting of $N_1$, $N_2$ and $N_3$ qubits. 
	Moreover, for any $U\in\mathcal{C}_{N, \mathcal{Q}}$, there uniquely exist $\{\mu_j\}_{j=1}^{N_2} \in\{0, 1, 2, 3\}^{N_2}$, $\{\nu_{j, k}\}_{1\leq j<k\leq N_2}\in\{0, 1\}^{N_2(N_2-1)/2}$, $V\in\mathcal{C}_{N_3}$ and $\{P_j\}_{j=1}^{N_2}\in(\mathcal{P}_{N_3}^+)^{N_2}$ such that 
	\begin{align}
		U=W^\dag\left(\mathrm{T}\prod_{j=1}^{N_2} \mathrm{C}_j(P_j)\right)
		V
		\left(\prod_{j, k: 1\leq j<k\leq N_2} \mathrm{CZ}_{(2, j), (2, k)}^{\nu_{j, k}}\right)
		\left(\prod_{j=1}^{N_2} \mathrm{S}_{(2, j)}^{\mu_j}\right)
		W, \label{SMeq:Pauli_Clifford_general}
	\end{align}
	where $\mathrm{S}_{(2, j)}$ acts on the $j$th qubit in the system $\mathrm{A}_2$, $\mathrm{CZ}_{(2, j), (2, k)}$ acts on the $j$th and $k$th qubits in the system $\mathrm{A}_2$, $V$ acts on the system $\mathrm{A}_3$, $\mathrm{C}_j(P_j)$ is the controlled-$P_j$ gate with the $j$th qubit in the system $\mathrm{A}_2$ as the control qubit and the qubits in the system $\mathrm{A}_3$ as the target qubits, and $\mathrm{T}\prod$ represents the ordered product, i.e., $\mathrm{T}\prod_{j=1}^{n} O_j:=O_n\cdots O_2 O_1$. 
\end{theorem}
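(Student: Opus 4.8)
The plan is to first dispose of the existence of $W\in\mathcal{C}_N$ and $N_1,N_2,N_3$ with $\mathcal{P}_0 W\mathcal{Q}W^\dag$ of the stated form, which is exactly Lemma~\ref{SMlem:Pauli_subgroup_equivalence}. Since conjugation by $W$ is a group isomorphism $\mathcal{C}_{N,\mathcal{Q}}\to\mathcal{C}_{N,\mathcal{R}}$ that turns the right-hand side of Eq.~\eqref{SMeq:Pauli_Clifford_general} into the analogous expression with $W$ removed, it suffices to prove the decomposition and its uniqueness for $U\in\mathcal{C}_{N,\mathcal{R}}$ with $\mathcal{R}=\mathcal{P}_0\{\mathrm{I},\mathrm{X},\mathrm{Y},\mathrm{Z}\}^{\otimes N_1}\otimes\{\mathrm{I},\mathrm{Z}\}^{\otimes N_2}\otimes\{\mathrm{I}\}^{\otimes N_3}$. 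A further reduction: any $U\in\mathcal{C}_{N,\mathcal{R}}$ commutes with $\mathrm{X}_j,\mathrm{Y}_j,\mathrm{Z}_j$ for every qubit $j$ of $\mathrm{A}_1$, hence $U=I_{\mathrm{A}_1}\otimes\tilde U$; since no gate in Eq.~\eqref{SMeq:Pauli_Clifford_general} touches $\mathrm{A}_1$, I may assume $N_1=0$.

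\emph{Completeness.} I would argue by a Heisenberg-picture peeling induction, sweeping $m$ from $N_2$ down to $1$. Set $U^{(N_2+1)}:=I$, and suppose $U^{(m+1)}$ has been built as a product of S gates, controlled-Z gates, and controlled-Paulis acting only on qubits $m+1,\dots,N_2$ of $\mathrm{A}_2$ (together with $\mathrm{A}_3$) such that $W_m:=U^{(m+1)\dagger}U$ fixes $\mathrm{Z}_1,\dots,\mathrm{Z}_{N_2}$ (automatic, as every operator in play is $\mathcal{R}$-symmetric) and $\mathrm{X}_{m+1},\dots,\mathrm{X}_{N_2}$. The operator $P:=W_m\mathrm{X}_m W_m^\dagger$ is a Pauli that anticommutes with $\mathrm{Z}_m$, commutes with $\mathrm{Z}_k$ for $k\neq m$ and with $\mathrm{X}_k$ for $k>m$; these constraints force $P$ to act as $\mathrm{X}$ or $\mathrm{Y}$ (up to sign) on qubit $m$, as $\mathrm{I}$ or $\mathrm{Z}$ on qubits $1,\dots,m-1$ of $\mathrm{A}_2$, as $\mathrm{I}$ on qubits $m+1,\dots,N_2$, and as an arbitrary $P_m\in\mathcal{P}_{N_3}^+$ on $\mathrm{A}_3$. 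I then choose $\mu_m\in\{0,1,2,3\}$, $\{\nu_{j,m}\}_{j<m}\subset\{0,1\}$, and $P_m$ so that $N_m:=\mathrm{S}_{(2,m)}^{\mu_m}\bigl(\prod_{j<m}\mathrm{CZ}_{(2,j),(2,m)}^{\nu_{j,m}}\bigr)\mathrm{C}_m(P_m)$ satisfies $N_m\mathrm{X}_m N_m^\dagger=P$, and set $U^{(m)}:=U^{(m+1)}N_m$. Because $N_m$ commutes with $\mathrm{X}_k$ for all $k>m$, the operator $U^{(m)\dagger}U$ still fixes $\mathrm{X}_{m+1},\dots,\mathrm{X}_{N_2}$ and now also $\mathrm{X}_m$. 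After the last step $U^{(1)\dagger}U$ fixes every $\mathrm{Z}_j$ and $\mathrm{X}_j$ on $\mathrm{A}_2$, hence commutes with the whole Pauli algebra of $\mathrm{A}_2$ and equals $I_{\mathrm{A}_2}\otimes V$ for a Clifford $V\in\mathcal{C}_{N_3}$. Finally $U^{(1)}=N_{N_2}\cdots N_1$; the S and CZ gates are diagonal and commute with every $\mathrm{C}_j(P_j)$ and with $I_{\mathrm{A}_2}\otimes V$, so collecting them on the right yields $U=U^{(1)}(I_{\mathrm{A}_2}\otimes V)$ in precisely the form Eq.~\eqref{SMeq:Pauli_Clifford_general} — and the decreasing sweep is what makes the controlled-Paulis emerge in the ordered product $\mathrm{T}\prod_{j=1}^{N_2}\mathrm{C}_j(P_j)$.

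\emph{Uniqueness.} I would show any admissible tuple is recovered from $U$ by an explicit procedure. Conjugating $\mathrm{X}_m$ through the normal form and tracking how each layer acts gives
\[
U\,\mathrm{X}_m\,U^\dagger=(-1)^{\lfloor\mu_m/2\rfloor}\,\mathrm{P}'_m\Bigl(\prod_{j<m}\mathrm{Z}_j^{\nu_{j,m}}\Bigr)\Bigl(\prod_{j>m}\mathrm{Z}_j^{\nu_{m,j}\oplus[\{P_j,P_m\}=0]}\Bigr)P_m ,
\]
where $\mathrm{P}'_m$ is $\mathrm{X}$ on qubit $m$ of $\mathrm{A}_2$ when $\mu_m$ is even and $\mathrm{Y}$ when it is odd. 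Reading the right-hand side off from $U$: the single-qubit factor on qubit $m$ and its overall sign fix $\mu_m$; the $\mathrm{A}_3$-support fixes $P_m$; the $\mathrm{Z}_j$-exponents with $j<m$ fix $\nu_{j,m}$ directly; and those with $j>m$ fix $\nu_{m,j}$ as soon as the $P_j$ with $j>m$ are known, so a downward sweep $m=N_2,\dots,1$ determines all $\mu_j,\nu_{j,k},P_j$ uniquely; then $V$ is forced by $I_{\mathrm{A}_2}\otimes V=\bigl(\mathrm{T}\prod_j\mathrm{C}_j(P_j)\bigr)^{-1}U\bigl(\prod_{j<k}\mathrm{CZ}_{(2,j),(2,k)}^{\nu_{j,k}}\bigr)^{-1}\bigl(\prod_j\mathrm{S}_{(2,j)}^{\mu_j}\bigr)^{-1}$.

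The main obstacle I expect is the bookkeeping on two fronts. First, choosing the induction direction and the internal gate order of each $N_m$ so that the final product reassembles into exactly the $\mathrm{T}\prod$-ordered expression while never disturbing an already-fixed generator $\mathrm{X}_k$ with $k>m$. Second, in uniqueness, correctly accounting for the $\mathrm{Z}_j$ ``leakage'' onto $\mathrm{A}_2$ that a later controlled-Pauli $\mathrm{C}_j(P_j)$ produces when $\{P_j,P_m\}=0$ on $\mathrm{A}_3$: this coupling between the $P$'s and the $\nu$'s is what dictates that the parameters can only be read off in a specific order — the $\mathrm{A}_3$-supports first, then the controlled-Z exponents, sweeping $m$ downward.
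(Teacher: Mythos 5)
Your proposal is correct, and its skeleton matches the paper's: reduce to the standard form $\mathcal{R}$ via Lemma~\ref{SMlem:Pauli_subgroup_equivalence}, then run a Heisenberg-picture induction that fixes the generators $\mathrm{Z}_{(2,j)}$ automatically and peels off one $\mathrm{X}_{(2,m)}$ at a time. The differences are in execution. For completeness, the paper sweeps $l$ upward and, at each step, first applies an auxiliary Clifford $T'\in\mathcal{C}_{N_3}$ to rotate the $\mathrm{A}_3$-part of $U'\mathrm{X}_{(2,l+1)}U'^\dag$ into a product of $\mathrm{Z}$'s and then cancels it with $\mathrm{CZ}$ gates between $\mathrm{A}_2$ and $\mathrm{A}_3$, reassembling the controlled-Pauli gates only at the end by conjugating the accumulated $Q_j$'s through the residual unitary; you instead absorb the $\mathrm{A}_3$-part directly into a controlled-Pauli $\mathrm{C}_m(P_m)$ at each step and rely on the (correct) observation that the diagonal $\mathrm{S}$ and $\mathrm{CZ}$ layers commute with every $\mathrm{C}_j(P_j)$, which makes the final reassembly into the $\mathrm{T}\prod$-ordered form cleaner; the up-versus-down sweep is cosmetic. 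For uniqueness the routes genuinely diverge: the paper works in the Schr\"odinger picture, applying the two candidate normal forms to vectors $\ket{\Psi}\otimes\ket{x_1\cdots x_{N_2}}\otimes\ket{\Xi}$ with at most two nonzero bits, which sidesteps any interaction between different controlled-Paulis; you stay in the Heisenberg picture and read the parameters off $U\mathrm{X}_mU^\dag$, which requires the extra bookkeeping of the $\mathrm{Z}_j$ contribution $[\{P_j,P_m\}=0]$ from anticommuting controlled-Paulis with $j>m$ that you correctly identify (and which is in fact avoidable, since every $\nu_{j,k}$ with $j<k$ is already visible in the $j<m$ factors of $U\mathrm{X}_kU^\dag$). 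Both uniqueness arguments are valid; the paper's is slightly lighter on case analysis, while yours doubles as an explicit parameter-extraction algorithm.
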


We note that here we use different notations from those in Theorem~\ref{thm:expression} in the main text for convenience of explanation. 
The subscript $(j, k)$ in $O_{(j, k)}$ means that $O$ acts on the $k$th qubit in the susbystem $\mathrm{A}_j$.

\begin{proof}
	By Lemma~\ref{SMlem:Pauli_subgroup_equivalence} in Appendix~\ref{SMsec:technical}, we can take $W\in\mathcal{C}_N$ and $N_1, N_2, N_3\geq 0$ such that $\mathcal{P}_0 W\mathcal{Q}W^\dag=\mathcal{P}_0\mathcal{R}$ and $\mathcal{R}=\mathcal{P}_0\{\mathrm{I}, \mathrm{X}, \mathrm{Y}, \mathrm{Z}\}^{\otimes N_1}\otimes\{\mathrm{I}, \mathrm{Z}\}^{\otimes N_2}\otimes\{I\}^{\otimes N_3}$. 
	Since $U\in\mathcal{C}_{N, \mathcal{Q}}$ is equivalent to $W^\dag UW\in\mathcal{C}_{N, \mathcal{R}}$, it is sufficient to prove that for any $U\in\mathcal{C}_{N, \mathcal{R}}$, there uniquely exist $\{\mu_j\}_{j=1}^{N_2} \in\{0, 1, 2, 3\}^{N_2}$, $\{\nu_{j, k}\}_{1\leq j<k\leq N_2}\in\{0, 1\}^{N_2(N_2-1)/2}$, $V\in\mathcal{C}_{N_3}$ and $\{P_j\}_{j=1}^{N_2}\in(\mathcal{P}_{N_3}^+)^{N_2}$ such that 
	\begin{align}
		U=\left(\mathrm{T}\prod_{j=1}^{N_2} \mathrm{C}_j(P_j)\right)
		V
		\left(\prod_{j, k: 1\leq j<k\leq N_2} \mathrm{CZ}_{(2, j), (2, k)}^{\nu_{j, k}}\right)
		\left(\prod_{j=1}^{N_2} \mathrm{S}_{(2, j)}^{\mu_j}\right). \label{SMeq:C_R_expression}
	\end{align}

	First, we take arbitrary $U\in\mathcal{C}_{N, \mathcal{R}}$ and prove that $U$ can be expressed in the form of Eq.~\eqref{SMeq:C_R_expression}. 
	We are going to prove by mathematical induction that for any $l\in\{0, 1, ..., N_2\}$, there exist $\{\mu_j\}_{j=1}^l \in\{0, 1, 2, 3\}^l$, $\{\nu_{j, k}\}_{1\leq j\leq l, j+1\leq k\leq N_2}\in\{0, 1\}^{l(2N_2-l-1)/2}$, $T\in\mathcal{C}_{N_3}$ and $\{Q_j\}_{j=1}^l\in(\pm\mathcal{P}_{N_3}^+)^l$ such that 
	\begin{align}
		&[U', \mathrm{Z}_{(1, m)}]=[U', \mathrm{X}_{(1, m)}]=0\ \mathrm{if}\ 1\leq m\leq N_1, \label{SMeq:SMthm:symmetric_Clifford_expression1}\\
		&[U', \mathrm{Z}_{(2, m)}]=0\ \mathrm{if}\ 1\leq m\leq N_2, \label{SMeq:SMthm:symmetric_Clifford_expression2}\\
		&[U', \mathrm{X}_{(2, m)}]=0\ \mathrm{if}\ 1\leq m\leq l, \label{SMeq:SMthm:symmetric_Clifford_expression3}
	\end{align}
	where 
	\begin{align}
		U':=\left(\mathrm{T}\prod_{j=1}^l \mathrm{C}_j(Q_j)\right)
		T
		\left(\prod_{j=1}^l \prod_{k=j+1}^{N_2} \mathrm{CZ}_{(2, j), (2, k)}^{\nu_{j, k}}\right)
		\left(\prod_{j=1}^l \mathrm{S}_{(2, j)}^{\mu_j}\right)
		U^\dag. \label{SMeq:SMthm:symmetric_Clifford_expression4}
	\end{align}
	We can easily verify that this holds for $l=0$, because $U'=U$ in this case and $U\in\mathcal{C}_{N, \mathcal{R}}$ satisfies $U'\mathrm{Z}_{(1, m)}U'^\dag=\mathrm{Z}_{(1, m)}$ and $U'\mathrm{X}_{(1, m)}U'^\dag=\mathrm{X}_{(1, m)}$ for all $m\in\{1, 2, ..., N_1\}$, and $U'\mathrm{Z}_{(2, m)}U'^\dag=\mathrm{Z}_{(2, m)}$ for all $m\in\{1, 2, ..., N_2\}$. 
	We take arbitrary $l\in\{0, 1, ..., N_2-1\}$ and suppose that we can take $\{\mu_j\}_{j=1}^l \in\{0, 1, 2, 3\}^{N_2}$, $\{\nu_{j, k}\}_{1\leq j\leq l, j+1\leq k\leq N_2}\in\{0, 1\}^{l(2N_2-l-1)/2}$, $T\in\mathcal{C}_{N_3}$ and $\{Q_j\}_{j=1}^l \in(\pm\mathcal{P}_{N_3}^+)^l$ satisfying Eqs.~\eqref{SMeq:SMthm:symmetric_Clifford_expression1}, \eqref{SMeq:SMthm:symmetric_Clifford_expression2} and \eqref{SMeq:SMthm:symmetric_Clifford_expression3}. 
	By Eq.~\eqref{SMeq:SMthm:symmetric_Clifford_expression1}, we get 
	\begin{align}
		[U'\mathrm{X}_{(2, l+1)}U'^\dag, \mathrm{Z}_{(1, m)}]
		=[U'\mathrm{X}_{(2, l+1)}U'^\dag, U'\mathrm{Z}_{(1, m)}U'^\dag]
		=U'[\mathrm{X}_{(2, l+1)}, \mathrm{Z}_{(1, m)}]U'^\dag
		=0\ \mathrm{if}\ 1\leq m\leq N_1. \label{SMeq:SMthm:symmetric_Clifford_expression5}
	\end{align} 
	In the same way, by Eqs.~\eqref{SMeq:SMthm:symmetric_Clifford_expression1}, \eqref{SMeq:SMthm:symmetric_Clifford_expression2} and \eqref{SMeq:SMthm:symmetric_Clifford_expression3}, we get 
	\begin{align}
		&[U'\mathrm{X}_{(2, l+1)}U'^\dag, \mathrm{X}_{(1, m)}]
		=0\ \mathrm{if}\ 1\leq m\leq N_1, \label{SMeq:SMthm:symmetric_Clifford_expression6}\\
		&[U'\mathrm{X}_{(2, l+1)}U'^\dag, \mathrm{Z}_{(2, m)}]
		=U'[\mathrm{X}_{(2, l+1)}, \mathrm{Z}_{(2, m)}]U'^\dag
		=0\ \mathrm{if}\ 1\leq m\leq N_2\ \mathrm{and}\ m\neq l+1, \label{SMeq:SMthm:symmetric_Clifford_expression7}\\
		&[U'\mathrm{X}_{(2, l+1)}U'^\dag, \mathrm{Z}_{(2, l+1)}]
		=U'[\mathrm{X}_{(2, l+1)}, \mathrm{Z}_{(2, l+1)}]U'^\dag
		\neq0, \label{SMeq:SMthm:symmetric_Clifford_expression8}\\
		&[U'\mathrm{X}_{(2, l+1)}U'^\dag, \mathrm{X}_{(2, m)}]
		=U'[\mathrm{X}_{(2, l+1)}, \mathrm{X}_{(2, m)}]U'^\dag
		=0\ \mathrm{if}\ 1\leq m\leq l. \label{SMeq:SMthm:symmetric_Clifford_expression9}
	\end{align}
	Since $U'\in\mathcal{C}_N$ and $\mathrm{X}_{(2, l+1)}\in\mathcal{P}_N^+$, we get 
	\begin{align}
		U'\mathrm{X}_{(2, l+1)}U'^\dag\in\pm\mathcal{P}_N^+. \label{SMeq:SMthm:symmetric_Clifford_expression10}
	\end{align} 
	By Eqs.~\eqref{SMeq:SMthm:symmetric_Clifford_expression5}, \eqref{SMeq:SMthm:symmetric_Clifford_expression6}, \eqref{SMeq:SMthm:symmetric_Clifford_expression7}, \eqref{SMeq:SMthm:symmetric_Clifford_expression8}, \eqref{SMeq:SMthm:symmetric_Clifford_expression9} and \eqref{SMeq:SMthm:symmetric_Clifford_expression10},
we get $U'\mathrm{X}_{(2, l+1)}U'^\dag\in\pm\{\mathrm{X}_{(2, l+1)}, \mathrm{Y}_{(2, l+1)}\}\otimes\{I_{(2, l+2)}, \mathrm{Z}_{(2, l+2)}\}\otimes\cdots\otimes\{I_{(2, N_2)}, \mathrm{Z}_{(2, N_2)}\}\otimes\{I_{(3, 1)}, \mathrm{X}_{(3, 1)}, \mathrm{Y}_{(3, 1)}, \mathrm{Z}_{(3, 1)}\}\otimes\cdots\otimes\{I_{(3, N_3)}, \mathrm{X}_{(3, N_3)}, \mathrm{Y}_{(3, N_3)}, \mathrm{Z}_{(3, N_3)}\}$. 
	By noting that $\mathrm{S}\mathrm{X}\mathrm{S}^\dag=\mathrm{Y}$, $\mathrm{S}\mathrm{Y}\mathrm{S}^\dag=-\mathrm{X}$ and $\mathrm{H}\mathrm{X}\mathrm{H}^\dag=\mathrm{Z}$, we can take $\mu_{l+1}\in\{0, 1, 2, 3\}$ and $T'\in\mathcal{C}_{N_3}$ 
	such that 
	$(\mathrm{S}_{(2, l+1)}^{\mu_{l+1}}\otimes T') U'\mathrm{X}_{(2, l+1)}U'^\dag(\mathrm{S}_{(2, l+1)}^{\mu_{l+1}}\otimes T')^\dag\in\{\mathrm{X}_{(2, l+1)}\}\otimes\{\mathrm{I}_{(2, l+2)}, \mathrm{Z}_{(2, l+2)}\}\otimes\cdots\otimes\{\mathrm{I}_{(2, N_2)}, \mathrm{Z}_{(2, N_2)}\}\otimes\{I_{(3, 1)}, \mathrm{Z}_{(3, 1)}\}\otimes\cdots\otimes\{I_{(3, N_3)}, \mathrm{Z}_{(3, N_3)}\}$. 
	Equivalently, there exist $\{\nu_{l+1, k}\}_{k=l+2}^{N_2}\in\{0, 1\}^{N_2-l-1}$ and $\{\xi_k\}_{k=1}^{N_3}\in\{0, 1\}^{N_3}$ satisfying 
	\begin{align}
		(\mathrm{S}_{(2, l+1)}^{\mu_{l+1}}\otimes T') U'\mathrm{X}_{(2, l+1)}U'^\dag(\mathrm{S}_{(2, l+1)}^{\mu_{l+1}}\otimes T')^\dag=\mathrm{X}_{(2, l+1)}
		\left(\prod_{k=l+2}^{N_2} \mathrm{Z}_{(2, k)}^{\nu_{l+1, k}}\right) 
		\left(\prod_{k=1}^{N_3} \mathrm{Z}_{(3, k)}^{\xi_k}\right). 
	\end{align}
	By noting that $\mathrm{CZ}(\mathrm{X}\otimes \mathrm{I})\mathrm{CZ}^\dag=\mathrm{X}\otimes\mathrm{Z}$, we know that $U''\mathrm{X}_{(2, l+1)}U''^\dag=\mathrm{X}_{(2, l+1)}$, i.e., $[U'', \mathrm{X}_{(2, l+1)}]=0$ with $U''$ defined by 
	\begin{align}
		U'':=\left(\prod_{k=1}^{N_3} \mathrm{CZ}_{(2, l+1), (3, k)}^{\xi_k}\right) 
		\left(\prod_{k=l+2}^{N_2} \mathrm{CZ}_{(2, l+1), (2, k)}^{\nu_{l+1, k}}\right) 
		(\mathrm{S}_{(2, l+1)}^{\mu_{l+1}}\otimes T')U'. \label{SMeq:SMthm:symmetric_Clifford_expression11}
	\end{align}
	By Eqs.~\eqref{SMeq:SMthm:symmetric_Clifford_expression1}, \eqref{SMeq:SMthm:symmetric_Clifford_expression2}, \eqref{SMeq:SMthm:symmetric_Clifford_expression3} and \eqref{SMeq:SMthm:symmetric_Clifford_expression11}, we get $[U'', \mathrm{Z}_{(1, m)}]=[U'', \mathrm{X}_{(1, m)}]=0$ if $1\leq m\leq N_1$, 
	$[U'', \mathrm{Z}_{(2, m)}]=0$ if $1\leq m\leq N_2$, and 
	$[U'', \mathrm{X}_{(2, m)}]=0$ if $1\leq m\leq l$. 
	In order to complete the process of mathematical induction, we show that $U''$ can be written in the form of Eq.~\eqref{SMeq:SMthm:symmetric_Clifford_expression4}. 
	By plugging Eq.~\eqref{SMeq:SMthm:symmetric_Clifford_expression4} into \eqref{SMeq:SMthm:symmetric_Clifford_expression11}, we get 
	\begin{align}
		U''=&\mathrm{C}_{l+1}\left(\prod_{k=1}^{N_3} \mathrm{Z}_{(3, k)}^{\xi_k}\right)
		T'
		\left(\prod_{k=l+2}^{N_2} \mathrm{CZ}_{(2, l+1), (2, k)}^{\nu_{l+1, k}}\right) \mathrm{S}_{(2, l+1)}^{\mu_{l+1}} \nonumber\\
		&\ \ \ \ \ \ \ \ \ \ \times\left(\mathrm{T}\prod_{j=1}^l \mathrm{C}_j(Q_j)\right)
		T
		\left(\prod_{j=1}^l \prod_{k=j+1}^{N_2} \mathrm{CZ}_{(2, j), (2, k)}^{\nu_{j, k}}\right) \left(\prod_{j=1}^l \mathrm{S}_{(2, j)}^{\mu_j}\right)U^\dag. \label{SMeq:SMthm:symmetric_Clifford_expression12}
	\end{align}
	We note that 
	\begin{align}
		T'\left(\prod_{k=l+2}^{N_2} \mathrm{CZ}_{(2, l+1), (2, k)}^{\nu_{l+1, k}}\right) \mathrm{S}_{(2, l+1)}^{\mu_{l+1}}
		\left(\mathrm{T}\prod_{j=1}^l \mathrm{C}_j(Q_j)\right)
		=\left(\mathrm{T}\prod_{j=1}^l \mathrm{C}_j(T'Q_j T'^\dag)\right)
		T'
		\left(\prod_{k=l+2}^{N_2} \mathrm{CZ}_{(2, l+1), (2, k)}^{\nu_{l+1, k}}\right) \mathrm{S}_{(2, l+1)}^{\mu_{l+1}}. \label{SMeq:SMthm:symmetric_Clifford_expression13}
	\end{align}
	By plugging Eq.~\eqref{SMeq:SMthm:symmetric_Clifford_expression13} into Eq.~\eqref{SMeq:SMthm:symmetric_Clifford_expression12}, we get 
	\begin{align}
		U''=\left(\mathrm{T}\prod_{j=1}^{l+1} \mathrm{C}_j(Q'_j)\right)
		T''
		\left(\prod_{j=1}^{l+1} \prod_{k=j+1}^{N_2} \mathrm{CZ}_{(2, j), (2, k)}^{\nu_{j, k}}\right)\left(\prod_{j=1}^{l+1} \mathrm{S}_{(2, j)}^{\mu_j}\right)U^\dag, 
	\end{align}
	where $T'':=T'T$, $Q'_j:=T'Q_j T'^\dag$ for $j\in\{1, 2, ..., l\}$, $Q'_{l+1}:=\prod_{k=1}^{N_3} \mathrm{Z}_{(3, k)}^{\xi_k}$. 
	We have thus completed the process of mathematical induction, which implies that we can take $\{\mu_j\}_{j=1}^{N_2} \in\{0, 1, 2, 3\}^{N_2}$, $\{\nu_{j, k}\}_{1\leq j<k\leq N_2}\in\{0, 1\}^{N_2(N_2-1)/2}$, $T\in\mathcal{C}_{N_3}$ and $\{Q_j\}_{j=1}^{N_2}\in(\pm\mathcal{P}_{N_3}^+)^{N_2}$ satisfying 
	\begin{align}
		\forall m\in\{1, 2, ..., N_1\}\ [U', \mathrm{Z}_{(1, m)}]=[U', \mathrm{X}_{(1, m)}]=0,\ 
		\forall m\in\{1, 2, ..., N_2\}\ [U', \mathrm{Z}_{(2, m)}]=[U', \mathrm{X}_{(2, m)}]=0 \label{SMeq:SMthm:symmetric_Clifford_expression14}
	\end{align}
	with 
	\begin{align}
		U':=\left(\mathrm{T}\prod_{j=1}^{N_2} \mathrm{C}_j(Q_j)\right)
		T\left(\prod_{j, k: 1\leq j<k\leq N_2} \mathrm{CZ}_{(2, j), (2, k)}^{\nu_{j, k}}\right) \left(\prod_{j=1}^{N_2} \mathrm{S}_{(2, j)}^{\mu_j}\right)U^\dag. \label{SMeq:SMthm:symmetric_Clifford_expression15}
	\end{align}
	This implies that 
	\begin{align}
		U=&U'^\dag\left(\mathrm{T}\prod_{j=1}^{N_2} \mathrm{C}_j(Q_j)\right)
		T\left(\prod_{j, k: 1\leq j<k\leq N_2} \mathrm{CZ}_{(2, j), (2, k)}^{\nu_{j, k}}\right) \left(\prod_{j=1}^l \mathrm{S}_{(2, j)}^{\mu_j}\right) \nonumber\\
		=&\left(\mathrm{T}\prod_{j=1}^{N_2} \mathrm{C}_j(U'^\dag Q_j U')\right)
		U'^\dag T\left(\prod_{j, k: 1\leq j<k\leq N_2} \mathrm{CZ}_{(2, j), (2, k)}^{\nu_{j, k}}\right) \left(\prod_{j=1}^l \mathrm{S}_{(2, j)}^{\mu_j}\right). 
	\end{align}
	Eq.~\eqref{SMeq:SMthm:symmetric_Clifford_expression14} implies that $U'$ acts only on the system $\mathrm{A}_3$, and Eq.~\eqref{SMeq:SMthm:symmetric_Clifford_expression15} implies that $U'$ is a Clifford operator. 
	For any $j\in\{1, 2, ..., N_2\}$, the Pauli operator $Q_j$ on the subsystem $\mathrm{A}_3$ satisfies $U'^\dag Q_j U'\in\pm\mathcal{P}_{N_3}^+$, i.e., $U'^\dag Q_j U'=(-1)^{\kappa_j}P_j$ with some $\kappa_j\in\{0, 1\}$ and $P_j\in\mathcal{P}_{N_3}^+$. 
	We note that 
	\begin{align}
		\mathrm{C}_j(U'^\dag Q_j U')
		=\left(\mathrm{C}_j(-I)\right)^{\kappa_j}\mathrm{C}_j(P_j)
		=\mathrm{S}_{(2, j)}^{2\kappa_j}\mathrm{C}_j(P_j). 
	\end{align}
	We therefore get 
	\begin{align}
		U=&\left(\mathrm{T}\prod_{j=1}^{N_2} \mathrm{C}_j(P_j)\right)
		U'^\dag T\left(\prod_{j, k: 1\leq j<k\leq N_2} \mathrm{CZ}_{(2, j), (2, k)}^{\nu_{j, k}}\right) \left(\prod_{j=1}^l \mathrm{S}_{(2, j)}^{\mu_j+2\kappa_j}\right) \nonumber\\
		=&\left(\mathrm{T}\prod_{j=1}^{N_2} \mathrm{C}_j(P_j)\right)
		V\left(\prod_{j, k: 1\leq j<k\leq N_2} \mathrm{CZ}_{(2, j), (2, k)}^{\nu_{j, k}}\right) \left(\prod_{j=1}^l \mathrm{S}_{(2, j)}^{\mu'_j}\right), 
	\end{align}
	where $V:=U'^\dag T\in\mathcal{C}_{N_3}$ and $\mu'_j\in\{0, 1, 2, 3\}$ is defined by $\mu'_j\equiv\mu_j+2\kappa_j\ (\mathrm{mod}\ 4)$.

	Next, we prove that the expression of $U\in\mathcal{C}_{N, \mathcal{R}}$ by Eq.~\eqref{SMeq:C_R_expression} is unique. 
	We suppose that $\{\mu_j\}_{j=1}^{N_2}, \{\mu'_j\}_{j=1}^{N_2} \in\{0, 1, 2, 3\}^{N_2}$, $\{\nu_{j, k}\}_{1\leq j<k\leq N_2}, \{\nu'_{j, k}\}_{1\leq j<k\leq N_2}\in\{0, 1\}^{N_2(N_2-1)/2}$, $V, V'\in\mathcal{C}_{N_3}$ and $\{P_j\}_{j=1}^{N_2}, \{P'_j\}_{j=1}^{N_2}\in(\mathcal{P}_{N_3}^+)^{N_2}$ satisfy 
	\begin{align}
		&\left(\mathrm{T}\prod_{j=1}^{N_2} \mathrm{C}_j(P_j)\right)
		V\left(\prod_{j, k: 1\leq j<k\leq N_2} \mathrm{CZ}_{(2, j), (2, k)}^{\nu_{j, k}}\right)\left(\prod_{j=1}^{N_2} \mathrm{S}_{(2, j)}^{\mu_j}\right) \nonumber\\
		&\ \ \ \ \ \ \ \ \ \ 
		\ \ \ \ \ \ \ \ \ \ =\left(\mathrm{T}\prod_{j=1}^{N_2} \mathrm{C}_j(P'_j)\right)
		V'\left(\prod_{j, k: 1\leq j<k\leq N_2} \mathrm{CZ}_{(2, j), (2, k)}^{\nu'_{j, k}}\right)\left(\prod_{j=1}^{N_2} \mathrm{S}_{(2, j)}^{\mu'_j}\right). 
	\end{align}
	We consider applying these two operators to vectors given in the form of $\ket{\Psi}\otimes\ket{x_1 x_2 \cdots x_{N_2}}\otimes\ket{\Xi}$, where $\ket{\Psi}$, $\ket{x_1 x_2 \cdots x_{N_2}}:=\ket{x_1}\otimes\ket{x_2}\otimes\cdots\otimes\ket{x_{N_2}}$ and $\ket{\Xi}$ are vectors of the system $\mathrm{A}_1$, $\mathrm{A}_2$ and $\mathrm{A}_3$, respectively. 
	We first consider the case when $x_l=0$ for all $l\in\{1, 2, ..., N_2\}$. 
	Then, we get 
	\begin{align}
		\ket{\Psi}\otimes\ket{x_1 x_2 \cdots x_{N_2}}\otimes V\ket{\Xi} 
		=\ket{\Psi}\otimes\ket{x_1 x_2 \cdots x_{N_2}}\otimes V'\ket{\Xi}. 
	\end{align} 
	Since this holds for all $\ket{\Xi}\in\mathcal{H}_3$, we get $V=V'$. 
	We next take arbitrary $j\in\{1, 2, ..., N_2\}$ and consider the case when $x_l=\delta_{l, j}$ for all $l\in\{1, 2, ..., N_2\}$. 
	Then, we get 
	\begin{align}
		\ket{\Psi}\otimes\ket{x_1 x_2 \cdots x_{N_2}}\otimes (i^{\mu_j}P_j) V\ket{\Xi} 
		=\ket{\Psi}\otimes\ket{x_1 x_2 \cdots x_{N_2}}\otimes (i^{\mu'_j}P'_j) V\ket{\Xi}. 
	\end{align} 
	Since this holds for all $\ket{\Xi}\in\mathcal{H}_3$, we get $i^{\mu_j}P_j=i^{\mu'_j}P'_j$. 
	Since $\mu_j, \mu'_j\in\{0, 1, 2, 3\}$ and $P_j, P'_j\in\mathcal{P}_{N_3}^+$, this implies that $\mu_j=\mu'_j$ and $P_j=P'_j$. 
	We finally take arbitrary $j, k\in\{1, 2, ..., N_2\}$ satisfying $j<k$ and consider the case when $x_l=\delta_{l, j}+\delta_{l, k}$ for all $l\in\{1, 2, ..., N_2\}$. 
	Then, we get 
	\begin{align}
		\ket{\Psi}\otimes(-1)^{\nu_{j, k}}\ket{x_1 x_2 \cdots x_{N_2}}\otimes (i^{\mu_j}P_j) V\ket{\Xi} 
		=\ket{\Psi}\otimes(-1)^{\nu'_{j, k}}\ket{x_1 x_2 \cdots x_{N_2}}\otimes (i^{\mu_j}P_j) V\ket{\Xi}. 
	\end{align} 
	This implies that $(-1)^{\nu_{j, k}}=(-1)^{\nu'_{j, k}}$. 
	Since $\nu_{j, k}, \nu'_{j, k}\in\{0, 1\}$, we get $\nu_{j, k}=\nu'_{j, k}$. 
	We therefore get $\mu_j=\mu'_j$, $\nu_{j, k}=\nu'_{j, k}$, $V=V'$ and $P_j=P'_j$ for all $j, k\in\{1, 2, ..., N_2\}$. 
	This means that the expression of $U\in\mathcal{C}_{N, \mathcal{R}}$ in the form of Eq.~\eqref{SMeq:C_R_expression} is unique. 
\end{proof}

\subsection{U(1) and SU(2) symmetries}

Next, we take U(1) and SU(2) symmetries given by Eqs.~(7) and (8) in the main text as examples of non-Pauli symmetries, and present complete and unique constructions for the symmetric Clifford groups. 
In both cases, every symmetric Clifford operator can be written as the product of a permutation operator and a Pauli-symmetric Clifford operator as shown in Fig.~\ref{fig:nonPauli_circuit} in the main text.

\begin{theorem} \label{SMthm:nonPauli_Clifford_expression}
    (Construction of the U(1) and SU(2)-symmetric Clifford groups.) 
	Let $N\in\mathbb{N}$, and $\mathcal{G}_1$ and $\mathcal{G}_2$ be given by 
	\begin{align}
		&\mathcal{G}_1=\left\{\left(e^{i\theta \mathrm{Z}}\right)^{\otimes N}\ |\ \theta\in\mathbb{R}\right\}, \\ 
		&\mathcal{G}_2=\left\{\left(e^{i(\theta_\mathrm{X} \mathrm{X}+\theta_\mathrm{Y} \mathrm{Y}+\theta_\mathrm{Z} \mathrm{Z})}\right)^{\otimes N}\ |\ \theta_\mathrm{X}, \theta_\mathrm{Y}, \theta_\mathrm{Z}\in\mathbb{R}\right\}. 
	\end{align}
	Then, for any $U\in\mathcal{C}_{N, \mathcal{G}_1}$, there uniquely exist $\{\mu_j\}_{j=1}^N\in\{0, 1, 2, 3\}^N$, $\{\nu_{j, k}\}_{1\leq j<k\leq N}\in\{0, 1\}^{N(N-1)/2}$, $\sigma\in\mathfrak{S}_N$ and $c\in\mathcal{U}_0$ such that 
	\begin{align}
		U=c\left(\prod_{j, k: 1\leq j<k\leq N} \mathrm{CZ}_{j, k}^{\nu_{j, k}}\right)\left(\prod_{j=1}^{N} \mathrm{S}_j^{\mu_j}\right)K_\sigma, \label{SMeq:U1_Clifford_expression}
	\end{align}
	and for any $U\in\mathcal{C}_{N, \mathcal{G}_2}$, there uniquely exist $\sigma\in\mathfrak{S}_N$ and $c\in\mathcal{U}_0$ such that 
	\begin{align}
		U=cK_\sigma, \label{SMeq:SU2_Clifford_expression}
	\end{align}
	where $K_\sigma$ is defined as the permutation operator on $\mathcal{H}$ defined by 
	\begin{align}
		K_\sigma\left(\bigotimes_{j=1}^N \ket{\psi_j}\right):=\bigotimes_{j=1}^N \ket{\psi_{\sigma^{-1}(j)}}. \label{SMeq:permutation_single_copy}
	\end{align}
\end{theorem}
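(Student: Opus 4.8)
The plan is to reduce both statements to Theorem~\ref{SMthm:symmetric_Clifford_expression} by factoring out a qubit permutation. The first step is to describe $\mathcal{C}_{N,\mathcal{G}_1}$ and $\mathcal{C}_{N,\mathcal{G}_2}$ in terms of the action of a Clifford operator on single-qubit Paulis. Since $\mathcal{G}_1$ is the connected one-parameter group generated by $Z_{\mathrm{tot}}:=\sum_{j=1}^{N}\mathrm{Z}_j$, a unitary $U$ commutes with all of $\mathcal{G}_1$ if and only if $[U,Z_{\mathrm{tot}}]=0$; likewise $\mathcal{G}_2$ is generated by $X_{\mathrm{tot}}$, $Y_{\mathrm{tot}}$ and $Z_{\mathrm{tot}}$. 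For $U\in\mathcal{C}_N$, conjugation sends each $\mathrm{Z}_j$ to $s_j P_j$ with $s_j\in\{\pm1\}$ and $P_j\in\mathcal{P}_N^+$ by Lemma~\ref{SMlem:Clifford_action_on_Pauli}. Imposing $\sum_j s_j P_j=\sum_j\mathrm{Z}_j$ and comparing the two expansions in the orthogonal Pauli basis, the total number of nonzero coefficients on the left is at most $N$ and must equal $N$ with no cancellation; this forces every $s_j=1$ and $\{P_1,\dots,P_N\}=\{\mathrm{Z}_1,\dots,\mathrm{Z}_N\}$, i.e.\ $U\mathrm{Z}_j U^\dagger=\mathrm{Z}_{\sigma(j)}$ for a unique $\sigma\in\mathfrak{S}_N$. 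In the $\mathcal{G}_2$ case the same reasoning applied to $\mathrm{X}_j,\mathrm{Y}_j,\mathrm{Z}_j$ yields permutations $\sigma_\mathrm{X},\sigma_\mathrm{Y},\sigma_\mathrm{Z}$; using $\mathrm{Y}_j=i\mathrm{X}_j\mathrm{Z}_j$ and noting that $\mathrm{X}_{\sigma_\mathrm{X}(j)}\mathrm{Z}_{\sigma_\mathrm{Z}(j)}$ is supported on a single qubit only if $\sigma_\mathrm{X}(j)=\sigma_\mathrm{Z}(j)$, we get $\sigma_\mathrm{X}=\sigma_\mathrm{Y}=\sigma_\mathrm{Z}=:\sigma$.

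For the U(1) case, fix $U\in\mathcal{C}_{N,\mathcal{G}_1}$ with its associated $\sigma$ and set $V:=UK_{\sigma^{-1}}$, which is Clifford since $K_{\sigma^{-1}}$ is. Because $K_{\sigma^{-1}}\mathrm{Z}_j K_{\sigma^{-1}}^\dagger=\mathrm{Z}_{\sigma^{-1}(j)}$, the operator $V$ commutes with every $\mathrm{Z}_j$, hence $V\in\mathcal{C}_{N,\mathcal{Q}}$ for the Pauli subgroup $\mathcal{Q}=\{\mathrm{I},\mathrm{Z}\}^{\otimes N}$, which is already in the standard form of Eq.~\eqref{SMeq:Pauli_subgroup_standard} with $W=\mathrm{I}$, $N_1=N_3=0$ and $N_2=N$. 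Applying Theorem~\ref{SMthm:symmetric_Clifford_expression} directly and noting that $\mathcal{C}_{N_3}=\mathcal{C}_0=\mathcal{U}_0$ and $\mathcal{P}_{N_3}^+=\{\mathrm{I}\}$ (so the controlled-Pauli factors are trivial and the $V$-factor is just a scalar $c\in\mathcal{U}_0$), we obtain a unique decomposition $V=c\prod_{1\le j<k\le N}\mathrm{CZ}_{j,k}^{\nu_{j,k}}\prod_{j=1}^{N}\mathrm{S}_j^{\mu_j}$; multiplying on the right by $K_\sigma$ gives Eq.~\eqref{SMeq:U1_Clifford_expression}. Uniqueness follows because $\sigma$ is determined by $U$ through $U\mathrm{Z}_j U^\dagger=\mathrm{Z}_{\sigma(j)}$, whence $V=UK_{\sigma^{-1}}$ is determined and its parameters $(c,\{\mu_j\},\{\nu_{j,k}\})$ are unique by Theorem~\ref{SMthm:symmetric_Clifford_expression}; conversely every operator of the form \eqref{SMeq:U1_Clifford_expression} is a $\mathcal{G}_1$-symmetric Clifford operator since $\mathrm{S}_j$, $\mathrm{CZ}_{j,k}$ and $K_\sigma$ all commute with $Z_{\mathrm{tot}}$.

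For the SU(2) case the same construction is shorter: with $V:=UK_{\sigma^{-1}}$, the operator $V$ now commutes with every $\mathrm{X}_j,\mathrm{Y}_j,\mathrm{Z}_j$, hence with the whole Pauli group $\mathcal{P}_N$; since $\mathcal{P}_N$ spans $\mathcal{L}(\mathcal{H})$, $V$ is a scalar multiple of the identity, so $V=c\,\mathrm{I}$ with $c\in\mathcal{U}_0$ and $U=cK_\sigma$, which is Eq.~\eqref{SMeq:SU2_Clifford_expression}. Uniqueness is immediate: $\sigma$ is fixed by $U\mathrm{Z}_j U^\dagger=\mathrm{Z}_{\sigma(j)}$ and then $c$ by $c\,\mathrm{I}=UK_{\sigma^{-1}}$; and conversely $cK_\sigma$ commutes with $X_{\mathrm{tot}},Y_{\mathrm{tot}},Z_{\mathrm{tot}}$ and is Clifford.

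The main obstacle is the first step — in particular, checking carefully that commuting with $Z_{\mathrm{tot}}$ (a stronger condition than merely normalizing $\langle\mathrm{Z}_1,\dots,\mathrm{Z}_N\rangle$) excludes sign flips and forces a genuine signless permutation of the $\mathrm{Z}_j$, and that in the SU(2) case the permutations obtained from $\mathrm{X}$, $\mathrm{Y}$ and $\mathrm{Z}$ necessarily coincide. Once this reduction to commutation with a single-qubit-Pauli subgroup is in place, the rest is a direct specialization of Theorem~\ref{SMthm:symmetric_Clifford_expression} to $N_1=N_3=0$.
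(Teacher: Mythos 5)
Your proof is correct and follows essentially the same route as the paper's: extract the permutation $\sigma$ from conservation of $\mathrm{Z}_\mathrm{tot}$ (and $\mathrm{X}_\mathrm{tot}$, $\mathrm{Y}_\mathrm{tot}$), factor out $K_\sigma$, and reduce to Theorem~\ref{SMthm:symmetric_Clifford_expression} applied to the Pauli subgroup $\{\mathrm{I}, \mathrm{Z}\}^{\otimes N}$ (resp.\ the full single-qubit Pauli constraints) with $N_1=N_3=0$. The only cosmetic difference is that the paper establishes the coincidence of the $\mathrm{X}$- and $\mathrm{Z}$-permutations via the non-vanishing of $[\mathrm{Z}_j, \mathrm{X}_j]$ rather than your single-qubit-support argument for $\mathrm{Y}_j=i\mathrm{X}_j\mathrm{Z}_j$.
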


\begin{proof}
	First, we consider the expression for $\mathcal{C}_{N, \mathcal{G}_1}$. 
	For the completeness of the expression, we take arbitrary $U\in\mathcal{C}_{N, \mathcal{G}_1}$ and show that $U$ can be written in the form of Eq.~\eqref{SMeq:U1_Clifford_expression}. 
	Since $U$ is $\mathcal{G}_1$-symmetric, we have 
	\begin{align}
		U \left(e^{i\theta\mathrm{Z}}\right)^{\otimes N}U^\dag=\left(e^{i\theta\mathrm{Z}}\right)^{\otimes N} 
	\end{align}
	for all $\theta\in\mathbb{R}$. 
	By taking the derivative at $\theta=0$, we get 
	\begin{align}
		\sum_{j=1}^N U\mathrm{Z}_j U^\dag
		=U\left(\sum_{j=1}^N \mathrm{Z}_j\right)U^\dag
		=\sum_{j=1}^N \mathrm{Z}_j. 
	\end{align}
	By noting that the both sides are the sum of $N$ different Pauli operators with equal coefficients, we know that 
	\begin{align}
		U\mathrm{Z}_j U^\dag=\mathrm{Z}_{\sigma(j)}\ \forall j\in\{1, 2, ..., N\} \label{SMeq:SMthm:nonPauli_Clifford_expression1}
	\end{align}
	with some $\sigma\in\mathfrak{S}_N$. 
	We define $U':=U K_\sigma^\dag$. 
	Then, we get 
	\begin{align}
		U'\mathrm{Z}_j U'^\dag
		=UK_\sigma^\dag\mathrm{Z}_j K_\sigma U^\dag  
		=U\mathrm{Z}_{\sigma^{-1}(j)} U^\dag 
		=\mathrm{Z}_j\ \forall j\in\{1, 2, ..., N\}. 
	\end{align}
	By Theorem~\ref{SMthm:symmetric_Clifford_expression}, $U'$ can be written as 
	\begin{align}
		U'=c\left(\prod_{j, k: 1\leq j<k\leq N} \mathrm{CZ}_{j, k}^{\nu_{j, k}}\right)\left(\prod_{j=1}^{N} \mathrm{S}_j^{\mu_j}\right) 
	\end{align}
	with some $\mu_j\in\{0, 1, 2, 3\}$, $\nu_{j, k}\in\{0, 1\}$ and $c\in\mathcal{U}_0$. 
	Note that the term $c$ corresponds to the term $V$ in Eq.~\eqref{SMeq:Pauli_Clifford_general} when $N_3=0$. 
	We therefore get 
	\begin{align}
		U=c\left(\prod_{j, k: 1\leq j<k\leq N} \mathrm{CZ}_{j, k}^{\nu_{j, k}}\right)\left(\prod_{j=1}^{N} \mathrm{S}_j^{\mu_j}\right)K_\sigma. 
	\end{align}
	For the uniqueness of this expression, for any $U\in\mathcal{C}_{N, \mathcal{G}_1}$, we take arbitrary two representations 
	\begin{align}
		U
		=c\left(\prod_{j, k: 1\leq j<k\leq N} \mathrm{CZ}_{j, k}^{\nu_{j, k}}\right)\left(\prod_{j=1}^N \mathrm{S}_j^{\mu_j}\right)K_\sigma 
		=c'\left(\prod_{j, k: 1\leq j<k\leq N} \mathrm{CZ}_{j, k}^{\nu'_{j, k}}\right)\left(\prod_{j=1}^N \mathrm{S}_j^{\mu'_j}\right)K_{\sigma'} \label{SMeq:SMthm:nonPauli_Clifford_expression2}
	\end{align}
	with $\sigma, \sigma'\in\mathfrak{S}_N$, $\{\mu_j\}_{j=1}^N, \{\mu'_j\}_{j=1}^N \in\{0, 1, 2, 3\}^N$, $\{\nu_{j, k}\}_{1\leq j<k\leq N}, \{\nu'_{j, k}\}_{1\leq j<k\leq N}\in\{0, 1\}^{N(N-1)/2}$ and $c, c'\in\mathcal{U}_0$, and show that all these parameters for the two representations are the same. 
	We suppose that $\sigma\neq\sigma'$. 
	Then, we can take $j\in\{1, 2, ..., N\}$ such that $\sigma(j)\neq\sigma'(j)$. 
	By using the two representations for $U$, we get 
	\begin{align}
		\mathrm{Z}_{\sigma(j)}=U\mathrm{Z}_j U^\dag=\mathrm{Z}_{\sigma'(j)}, 
	\end{align}
	but this is a contradiction. 
	We therefore get $\sigma=\sigma'$. 
	By Plugging this into Eq.~\eqref{SMeq:SMthm:nonPauli_Clifford_expression2}, we get 
	\begin{align}
		c\left(\prod_{j, k: 1\leq j<k\leq N} \mathrm{CZ}_{j, k}^{\nu_{j, k}}\right)\left(\prod_{j=1}^N \mathrm{S}_j^{\mu_j}\right) 
		=c'\left(\prod_{j, k: 1\leq j<k\leq N} \mathrm{CZ}_{j, k}^{\nu'_{j, k}}\right)\left(\prod_{j=1}^N \mathrm{S}_j^{\mu'_j}\right). 
	\end{align}
	By Theorem~\ref{SMthm:symmetric_Clifford_expression}, we get $\mu_j=\mu'_j$ for all $j\in\{1, 2, ..., N\}$, $\nu_{j, k}=\nu'_{j, k}$ for all $j, k\in\{1, 2, ..., N\}$ satisfying $j<k$, and $c=c'$.

	Next, we consider the expression for $\mathcal{C}_{N, \mathcal{G}_2}$. 
	For the completeness of the expression, we take arbitrary $U\in\mathcal{C}_{N, \mathcal{G}_2}$ and show that $U$ can be written in the form of Eq.~\eqref{SMeq:SU2_Clifford_expression}. 
	In the same way as the case above, we have Eq.~\eqref{SMeq:SMthm:nonPauli_Clifford_expression1} and 
	\begin{align}
		U\mathrm{X}_j U^\dag=\mathrm{X}_{\sigma'(j)} 
	\end{align}
	with some $\sigma'\in\mathfrak{S}_N$. 
	We suppose that $\sigma\neq\sigma'$. 
	Then, we can take $j\in\{1, 2, ..., N\}$ such that $\sigma(j)\neq\sigma'(j)$. 
	This implies that 
	\begin{align}
		[\mathrm{Z}_j, \mathrm{X}_j] 
		=U^\dag[U\mathrm{Z}_j U^\dag, U\mathrm{X}_j U^\dag]U
		=U^\dag[\mathrm{Z}_{\sigma(j)}, \mathrm{X}_{\sigma'(j)}]U
		=0
	\end{align}
	but this contradicts with $[\mathrm{Z}_j, \mathrm{X}_j]\neq 0$. 
	We thus get $\sigma=\sigma'$. 
	We define $U':=UK_\sigma^\dag$. 
	Then, we get $U'\mathrm{Z}_j U'^\dag=\mathrm{Z}_j$ and $U'\mathrm{X}_j U'^\dag=\mathrm{X}_j$ for all $j\in\{1, 2, ..., N\}$. 
	Since such $U'$ is restricted to $U'=cI$ with some $c\in\mathcal{U}_0$, we get
	\begin{align}
		U=cK_\sigma. 
	\end{align} 
	The uniqueness of this expression is trivial.  
\end{proof}

\section{Proof of Unitary $1$-Designs in Theorem~\ref{thm:1design}} \label{SMsec:1design}

In this appendix, we take U(1) and SU(2) symmetries given by Eqs.~(7) and (8) in the main text as examples of non-Pauli symmetries, and show that the symmetric Clifford groups are symmetric unitary $1$-designs for those symmetries. 
This corresponds to the former half of the statement of Theorem~\ref{thm:1design} in the main text. 
The proof method is similar to the one in the ``if'' part of Theorem~\ref{SMthm:main}.

\begin{theorem} \label{SMthm:1design}
    (1-design part of Theorem~\ref{thm:1design}.) 
	Let $N\in\mathbb{N}$, $\mathcal{G}_1$ and $\mathcal{G}_2$ be defined by 
	\begin{align}
		&\mathcal{G}_1=\left\{\left(e^{i\theta \mathrm{Z}}\right)^{\otimes N}\ |\ \theta\in\mathbb{R}\right\}, \\
		&\mathcal{G}_2=\left\{\left(e^{i(\theta_\mathrm{X} \mathrm{X}+\theta_\mathrm{Y} \mathrm{Y}+\theta_\mathrm{Z} \mathrm{Z})}\right)^{\otimes N}\ |\ \theta_\mathrm{X}, \theta_\mathrm{Y}, \theta_\mathrm{Z}\in\mathbb{R}\right\}. 
	\end{align}
	Then, $\mathcal{C}_{N, \mathcal{G}_j}$ is a $\mathcal{G}_j$-symmetric unitary $1$-design for $j=1, 2$. 
\end{theorem}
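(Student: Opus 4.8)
The plan is to mirror the proof of the ``if'' part of Theorem~\ref{SMthm:main}. By the $t=1$ case of Lemma~\ref{SMlem:fixed_points_specific} --- which holds for exactly the same reason, since $\mathfrak{S}_1$ is trivial and every $G\in\mathcal{G}_j$ commutes with all of $\mathcal{U}_{N,\mathcal{G}_j}\supseteq\mathcal{C}_{N,\mathcal{G}_j}$ --- the whole subspace $\mathcal{M}_{\mathcal{G}_j}=\mathrm{span}(\mathcal{G}_j)$ consists of fixed points of both $\Phi_{1,\mathcal{C}_{N,\mathcal{G}_j}}$ and $\Phi_{1,\mathcal{U}_{N,\mathcal{G}_j}}$. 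Hence it suffices to construct a map $\mathcal{D}\in\mathfrak{C}_{1,\mathcal{G}_j}$ with $\mathcal{D}(L)\in\mathcal{M}_{\mathcal{G}_j}$ for all $L\in\mathcal{L}(\mathcal{H})$: Lemma~\ref{SMlem:uniform_mixture_property1} then gives $\Phi_{1,\mathcal{X}}=\Phi_{1,\mathcal{X}}\circ\mathcal{D}$ for $\mathcal{X}=\mathcal{C}_{N,\mathcal{G}_j}$ and $\mathcal{X}=\mathcal{U}_{N,\mathcal{G}_j}$, and the fixed-point property upgrades this to $\Phi_{1,\mathcal{X}}=\mathcal{D}$ in both cases, so that $\Phi_{1,\mathcal{C}_{N,\mathcal{G}_j}}=\mathcal{D}=\Phi_{1,\mathcal{U}_{N,\mathcal{G}_j}}$, which is the claimed $1$-design property.

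For the SU(2) symmetry $\mathcal{G}_2$, I would take $\mathcal{D}(L)=\frac{1}{N!}\sum_{\sigma\in\mathfrak{S}_N}K_\sigma L K_\sigma^\dagger$, which lies in $\mathfrak{C}_{1,\mathcal{G}_2}$ because every $K_\sigma\in\mathcal{C}_{N,\mathcal{G}_2}$ (it commutes with $(e^{i(\theta_\mathrm{X}\mathrm{X}+\theta_\mathrm{Y}\mathrm{Y}+\theta_\mathrm{Z}\mathrm{Z})})^{\otimes N}$). Its image is the commutant of the permutation representation of $\mathfrak{S}_N$ on $(\mathbb{C}^2)^{\otimes N}$, and by Schur--Weyl duality this equals $\mathrm{span}\{u^{\otimes N}\mid u\in\mathrm{SU}(2)\}=\mathrm{span}(\mathcal{G}_2)=\mathcal{M}_{\mathcal{G}_2}$, which finishes this case.

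For the U(1) symmetry $\mathcal{G}_1$, permutation averaging alone is insufficient, so I would compose it with a single-qubit phase randomization: $\mathcal{D}=\mathcal{D}_{\mathrm{perm}}\circ\mathcal{D}_{\mathrm{S}}$ with $\mathcal{D}_{\mathrm{perm}}$ as above and $\mathcal{D}_{\mathrm{S}}(L)=\frac{1}{4^N}\sum_{\mu_1,\dots,\mu_N\in\{0,1,2,3\}}\big(\prod_{j=1}^N\mathrm{S}_j^{\mu_j}\big)L\big(\prod_{j=1}^N\mathrm{S}_j^{\mu_j}\big)^\dagger$. Both factors are $\mathcal{G}_1$-symmetric Clifford mixtures since $\mathrm{S}_j$ and $K_\sigma$ commute with $(e^{i\theta\mathrm{Z}})^{\otimes N}=e^{i\theta\sum_k\mathrm{Z}_k}$, and the composition stays in $\mathfrak{C}_{1,\mathcal{G}_1}$ by Lemma~\ref{SMeq:symmetric_Clifford_mixture_composition} in Sec.~\ref{SMsec:technical}. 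A short Pauli-basis computation shows $\mathcal{D}_{\mathrm{S}}$ kills every component outside $\mathrm{span}\{\mathrm{I},\mathrm{Z}\}^{\otimes N}$ (since $\frac14\sum_\mu\mathrm{S}^\mu\mathrm{X}\mathrm{S}^{-\mu}=\frac14\sum_\mu\mathrm{S}^\mu\mathrm{Y}\mathrm{S}^{-\mu}=0$ while $\mathrm{I}$ and $\mathrm{Z}$ are fixed), and $\mathcal{D}_{\mathrm{perm}}$ then maps the diagonal operators onto the permutation-invariant ones, i.e.\ onto functions of the Hamming weight. Writing $M:=\sum_k\mathrm{Z}_k$, whose $N+1$ eigenvalues are distinct, Lagrange interpolation identifies that space with $\mathrm{span}\{M^0,\dots,M^N\}$, and a Vandermonde argument applied to $\{e^{i\theta M}\mid\theta\in\mathbb{R}\}$ identifies the latter with $\mathrm{span}(\mathcal{G}_1)=\mathcal{M}_{\mathcal{G}_1}$. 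Hence $\mathcal{D}(L)\in\mathcal{M}_{\mathcal{G}_1}$ for all $L$, as required.

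The computations involved --- the action of $\mathrm{S}$-conjugation on Pauli operators, the Schur--Weyl identification of the $\mathfrak{S}_N$-commutant, the eigenvalue count of $M$ --- are all routine. The only point requiring care is verifying that the permutation-symmetrized diagonal algebra coincides exactly with $\mathrm{span}(\mathcal{G}_1)$ and not something larger; I expect this, together with checking the membership of the various elementary gates in $\mathcal{C}_{N,\mathcal{G}_j}$, to be the main (though still mild) obstacle, since all the heavier structural lemmas are already in place.
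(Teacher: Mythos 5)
Your proposal is correct and follows essentially the same route as the paper: both construct a $\mathcal{G}_j$-symmetric Clifford conjugation mixture (permutation averaging composed with a single-qubit dephasing channel for U(1)), identify its image via Schur--Weyl / diagonal-algebra arguments, and conclude with the right-invariance and fixed-point lemmas. The only cosmetic differences are that the paper dephases with products of $\mathrm{Z}$ rather than powers of $\mathrm{S}$ and, instead of identifying the image with $\mathrm{span}(\mathcal{G}_1)$ by interpolation, directly checks that the resulting operators $(V+\mathrm{Z}V\mathrm{Z})^{\otimes N}\propto e^{i\psi \mathrm{Z}_{\mathrm{tot}}}$ commute with $\mathcal{U}_{N,\mathcal{G}_1}$.
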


\begin{proof}
	First, we consider the symmetry given by $\mathcal{G}_2$. 
	We define $\mathcal{D}$ by 
	\begin{align}
		\mathcal{D}(L):=\frac{1}{N!}\sum_{\sigma\in\mathfrak{S}_N} K_\sigma LK_\sigma^\dag\ \forall L\in\mathcal{L}(\mathcal{H}), \label{SMeq:permutation_mixture}
	\end{align}
	where $K_\sigma$ is the permutation operator defined by Eq.~\eqref{SMeq:permutation_single_copy}. 
	Then, for any $L\in\mathcal{L}(\mathcal{H})$ and $\sigma'\in\mathfrak{S}_N$, we get 
	\begin{align}
		K_{\sigma'}\mathcal{D}(L)K_{\sigma'}^\dag
		=\frac{1}{N!}\sum_{\sigma\in\mathfrak{S}_N} K_{\sigma'}K_\sigma LK_\sigma^\dag K_{\sigma'}^\dag
		=\frac{1}{N!}\sum_{\sigma\in\mathfrak{S}_N} K_{\sigma'\sigma} LK_{\sigma'\sigma}^\dag 
		=\frac{1}{N!}\sum_{\sigma\in\mathfrak{S}_N} K_\sigma LK_\sigma^\dag 
		=\mathcal{D}(L). 
	\end{align}
	This implies that 
	\begin{align}
		\mathcal{D}(L)\in\mathrm{span}\{V^{\otimes N}\ |\ V\in\mathcal{U}_1\} \label{SMeq:SMthm:1design1}
	\end{align}
	by Theorem~7.11 of Ref.~\cite{watrous2018theory}. 
	It is therefore sufficient to show that $UV^{\otimes N}U^\dag=V^{\otimes N}$ for all $U\in\mathcal{U}_{N, \mathcal{G}_2}$ and $V\in\mathcal{U}_1$, in order to show that $\mathcal{D}(L)$ satisfies $U\mathcal{D}(L)U^\dag=\mathcal{D}(L)$ for all $U\in\mathcal{U}_{N, \mathcal{G}_2}$. 
	Take arbitrary $U\in\mathcal{U}_{N, \mathcal{G}_2}$ and $V\in\mathcal{U}_1$. 
	Since $U$ satisfies $[U, (e^{i\theta \mathrm{X}})^{\otimes N}]=[U, (e^{i\theta \mathrm{Y}})^{\otimes N}]=[U, (e^{i\theta \mathrm{Z}})^{\otimes N}]=0$ for all $\theta\in\mathbb{R}$, by taking the derivative at $\theta=0$, we get 
	\begin{align}
		[U, \mathrm{X}_\mathrm{tot}]=[U, \mathrm{Y}_\mathrm{tot}]=[U, \mathrm{Z}_\mathrm{tot}]=0, \label{SMeq:SMthm:1design2}
	\end{align}
	where $\mathrm{X}_\mathrm{tot}:=\sum_{k=1}^N \mathrm{X}_k$, $\mathrm{Y}_\mathrm{tot}:=\sum_{k=1}^N \mathrm{Y}_k$ and $\mathrm{Z}_\mathrm{tot}:=\sum_{k=1}^N \mathrm{Z}_k$. 
	Since $V$ is a unitary operator on a single qubit, $V$ can be written as $V=e^{i(\phi_\mathrm{I} \mathrm{I}+\phi_\mathrm{X} \mathrm{X}+\phi_\mathrm{Y} \mathrm{Y}+\phi_\mathrm{Z} \mathrm{Z})}$ with some $\phi_\mathrm{I}, \phi_\mathrm{X}, \phi_\mathrm{Y}, \phi_\mathrm{Z}\in\mathbb{R}$. 
	This implies that 
	\begin{align}
		V^{\otimes N}=e^{i(N\phi_\mathrm{I} I+\phi_\mathrm{X} \mathrm{X}_\mathrm{tot}+\phi_\mathrm{Y} \mathrm{Y}_\mathrm{tot}+\phi_\mathrm{Z} \mathrm{Z}_\mathrm{tot})}. \label{SMeq:SMthm:1design3}
	\end{align}
	By Eqs.~\eqref{SMeq:SMthm:1design2} and \eqref{SMeq:SMthm:1design3}, we get $[V^{\otimes N}, U]=0$ for all $U\in\mathcal{U}_{N, \mathcal{G}_2}$ and $V\in\mathcal{U}_1$. 
	By Eq.~\eqref{SMeq:SMthm:1design1}, this implies that $[\mathcal{D}(L), U]=0$ for all $U\in\mathcal{U}_{N, \mathcal{G}_2}$. 
	By Lemma~\ref{SMlem:uniform_mixture_property1}, for $\mathcal{X}=\mathcal{C}_{N, \mathcal{G}_2}$ and $\mathcal{U}_{N, \mathcal{G}_2}$, we get 
	\begin{align}
		\Phi_{1, \mathcal{X}}(L)
		=\Phi_{1, \mathcal{X}}(\mathcal{D}(L))
		=\int_{U\in\mathcal{X}} U\mathcal{D}(L)U^\dag d\mu_\mathcal{X}(U) 
		=\int_{U\in\mathcal{X}} \mathcal{D}(L) d\mu_\mathcal{X}(U)
		=\mathcal{D}(L)
	\end{align}
	for all $L\in\mathcal{L}(\mathcal{H})$. 
	Since this holds for $\mathcal{X}=\mathcal{C}_{N, \mathcal{G}_2}$ and $\mathcal{U}_{N, \mathcal{G}_2}$, this implies that $\Phi_{1, \mathcal{C}_{N, \mathcal{G}_2}}=\Phi_{1, \mathcal{U}_{N, \mathcal{G}_2}}$, or equivalently, $\mathcal{C}_{N, \mathcal{G}_2}$ is a $\mathcal{G}_2$-symmetric unitary $1$-design.

	Next, we consider the symmetry given by $\mathcal{G}_1$. 
	We define $\mathcal{D}$ by Eq.~\eqref{SMeq:permutation_mixture}. 
	Then, $\mathcal{D}$ satisfies $\mathcal{D}\in\mathfrak{C}_{1, \mathcal{G}}$. 
	By the same argument as the case of $\mathcal{G}_2$, we get $\mathcal{D}(L)\in\mathrm{span}\{V^{\otimes N}\ |\ V\in\mathcal{U}_1\}$. 
	We define $\mathcal{D}'\in\mathfrak{C}_{1, \mathcal{G}_1}$ by 
	\begin{align}
		\mathcal{D}'(L):=\frac{1}{2^N}\sum_{(\mu_1, \mu_2, ..., \mu_N)\in\{0, 1\}^N} \left(\prod_{k=1}^N \mathrm{Z}_k^{\mu_k}\right)L\left(\prod_{k=1}^N \mathrm{Z}_k^{\dag\mu_k}\right)\ \forall L\in\mathcal{L}(\mathcal{H}). 
	\end{align}
	and $\mathcal{D}''$ by $\mathcal{D}'':=\mathcal{D}'\circ\mathcal{D}$. 
	Then, we get $\mathcal{D}''\in\mathfrak{C}_{1, \mathcal{G}_1}$ by Lemma~\ref{SMeq:symmetric_Clifford_mixture_composition} in Appendix~\ref{SMsec:technical}, and 
	\begin{align}
		\mathcal{D}''(L)\in\mathrm{span}\{(V+\mathrm{Z}V\mathrm{Z})^{\otimes N}\ |\ V\in\mathcal{U}_1\}. \label{SMeq:SMthm:1design4}
	\end{align}
	It is therefore sufficient to prove that $[(V+\mathrm{Z}V\mathrm{Z})^{\otimes N}, U]=0$ for all $U\in\mathcal{U}_{N, \mathcal{G}_1}$ and $V\in\mathcal{U}_1$, in order to show that $U\mathcal{D}''(L)U^\dag=\mathcal{D}''(L)$ for all $U\in\mathcal{U}_{N, \mathcal{G}_1}$. 
	Take arbitrary $U\in\mathcal{U}_{N, \mathcal{G}}$ and $V\in\mathcal{U}_1$. 
	In the same way as Eq.~\eqref{SMeq:SMthm:1design2}, we can prove that $U$ satisfies 
	\begin{align}
		[U, \mathrm{Z}_\mathrm{tot}]=0. \label{SMeq:SMthm:1design5}
	\end{align} 
	Since $V$ is a unitary operator on a single qubit, we note that $V=e^{i(\phi_\mathrm{I} \mathrm{I}+\phi_\mathrm{X} \mathrm{X}+\phi_\mathrm{Y} \mathrm{Y}+\phi_\mathrm{Z} \mathrm{Z})}$, i.e., $V=e^{i\phi_\mathrm{I}}\cos(\phi)\mathrm{I}+i\sin(\phi)(\phi_\mathrm{X} \mathrm{X}+\phi_\mathrm{Y} \mathrm{Y}+\phi_\mathrm{Z} \mathrm{Z})/\phi$ with some $\phi_\mathrm{I}, \phi_\mathrm{X}, \phi_\mathrm{Y}, \phi_\mathrm{Z}\in\mathbb{R}$, where $\phi:=\sqrt{\phi_\mathrm{X}^2+\phi_\mathrm{Y}^2+\phi_\mathrm{Z}^2}$. 
	We thus get $V+\mathrm{Z}V\mathrm{Z}=2e^{i\phi_\mathrm{I}}[\cos(\phi)\mathrm{I}+i\sin(\phi)\cdot\phi_\mathrm{Z} \mathrm{Z}/\phi]$. 
	We take $r\geq0$ and $\psi\in\mathbb{R}$ such that $r\cos(\psi)=\cos(\phi)$ and $r\sin(\psi)=\sin(\phi)\cdot\phi_\mathrm{Z}/\phi$. 
	Then we get $V+\mathrm{Z}V\mathrm{Z}=2r e^{i\phi_\mathrm{I}}e^{i\psi\mathrm{Z}}$. 
	This implies that 
	\begin{align}
		(V+\mathrm{Z}V\mathrm{Z})^{\otimes N}=(2re^{i\phi_\mathrm{I}})^N e^{i\psi\mathrm{Z}_\mathrm{tot}}. \label{SMeq:SMthm:1design6}
	\end{align}
	By Eqs.~\eqref{SMeq:SMthm:1design5} and \eqref{SMeq:SMthm:1design6}, we get $[(V+\mathrm{Z}V\mathrm{Z})^{\otimes N}, U]=0$ for all $U\in\mathcal{U}_{N, \mathcal{G}_1}$ and $V\in\mathcal{U}_1$. 
	By Eq.~\eqref{SMeq:SMthm:1design4}, this implies that $[\mathcal{D}''(L), U]=0$ for all $U\in\mathcal{U}_{N, \mathcal{G}_1}$. 
	By the same argument as the case of $\mathcal{G}_2$, we know that $\mathcal{C}_{N, \mathcal{G}_1}$ is a $\mathcal{G}_1$-symmetric unitary $1$-design. 
\end{proof}

We note that there exists a group $\mathcal{G}$ such that the $\mathcal{G}$-symmetric Clifford group $\mathcal{C}_{N, \mathcal{G}}$ is not even a $\mathcal{G}$-symmetric unitary $1$-design. 
As a simple example, we can take $N=1$ and $\mathcal{G}=\{e^{i\theta(\alpha\mathrm{Z}+\beta\mathrm{X})}\ |\ \theta\in\mathbb{R}\}$ with $\alpha, \beta\in\mathbb{R}$ satisfying $\alpha>\beta>0$. 
In this case, we can show that $\mathcal{C}_{1, \mathcal{G}}$ is not a $\mathcal{G}$-symmetric unitary $1$-design as follows: 
We take arbitrary $U\in\mathcal{C}_{1, \mathcal{G}}$. 
Then, $U$ satisfies 
\begin{align}
	Ue^{i\theta(\alpha\mathrm{Z}+\beta\mathrm{X})}U^\dag=e^{i\theta(\alpha\mathrm{Z}+\beta\mathrm{X})} 
\end{align}
for all $\theta\in\mathbb{R}$. 
By taking the derivative at $\theta=0$, we get 
\begin{align}
	\alpha U\mathrm{Z}U^\dag+\beta U\mathrm{X}U^\dag=U(\alpha\mathrm{Z}+\beta\mathrm{X})U^\dag=\alpha\mathrm{Z}+\beta\mathrm{X}. 
\end{align}
By noting that both $U\mathrm{Z}U^\dag$ and $U\mathrm{X}U^\dag$ are Pauli operators and $\alpha>\beta>0$, we get 
\begin{align}
	U\mathrm{Z}U^\dag=\mathrm{Z},\ U\mathrm{X}U^\dag=\mathrm{X}\ \forall U\in\mathcal{C}_{1, \mathcal{G}}. 
\end{align}
By Lemma~\ref{SMlem:unitary_design_commutativity}, we get 
\begin{align}
	U\mathrm{Z}U^\dag=\mathrm{Z},\ U\mathrm{X}U^\dag=\mathrm{X}\ \forall U\in\mathcal{U}_{1, \mathcal{G}}. 
\end{align}
This implies that $U=e^{i\theta}I$ with some $\theta\in\mathbb{R}$ for all $U\in\mathcal{U}_{1, \mathcal{G}}$, i.e., $\mathcal{U}_{1, \mathcal{G}}\subset\mathcal{U}_0 I$, but this contradicts with $(\alpha\mathrm{Z}+\beta\mathrm{X})/\sqrt{\alpha^2+\beta^2}\in\mathcal{U}_{1, \mathcal{G}}$. 
We therefore know that $\mathcal{C}_{1, \mathcal{G}}$ is not a $\mathcal{G}$-symmetric unitary $1$-design.

\section{Disproof of Unitary $2$-Designs in Theorem~\ref{thm:1design}} \label{SMsec:2design}

In this appendix, we show that for a certain class of non-Pauli symmetries, the symmetric Clifford group is not a symmetric unitary $2$-design. 
This is a generalized statement of the latter half of Theorem~\ref{thm:1design} in the main text. 
Concretely, we consider the setup where a system consists of $M\geq 2$ copies of $n$ qubits and a symmetry group $\mathcal{G}$ is given by 
\begin{align}
	\mathcal{G}=\left\{F^{\otimes M}\ |\ F\in\mathcal{F}\right\} \label{SMeq:tensor_product_group}
\end{align}
with a connected Lie subgroup $\mathcal{F}$ of $\mathcal{U}_n$ on $n$ qubits. 
In a physical perspective, this symmetry represents the conservation of the total $M$ quantities each of which is defined on $n$ qubits. 
In a mathematical perspective, $\mathcal{G}$ is isomorphic to $\mathcal{F}$ and the conserved quantities on $n$ qubits are elements of the Lie algebra $\mathfrak{f}$ of $\mathcal{F}$. 
This form of symmetries includes the U(1) and SU(2) symmetries given by Eqs.~(7) and (8) in the main text as special cases. 
In fact, those two symmetry groups are represented with $M=N$, $n=1$ and $\mathcal{F}$ given by 
\begin{align}
	\mathcal{F}=\left\{e^{i\theta\mathrm{Z}}\ |\ \theta\in\mathbb{R}\right\},\ 
	\mathcal{F}=\left\{e^{i(\theta_\mathrm{X} \mathrm{X}+\theta_\mathrm{Y} \mathrm{Y}+\theta_\mathrm{Z} \mathrm{Z})}\ |\ \theta_\mathrm{X}, \theta_\mathrm{Y}, \theta_\mathrm{Z}\in\mathbb{R}\right\}. 
\end{align}

\begin{theorem} \label{SMthm:2design}
    (Generalized version of the 2-design part of Theorem~\ref{thm:1design}.)
	Let $N\in\mathbb{N}$ and $\mathcal{G}$ be a subgroup of $\mathcal{U}_N$ given in the form Eq.~\eqref{SMeq:tensor_product_group} with $M\geq 2$, $n\in\mathbb{N}$ and a connected Lie subgroup $\mathcal{F}$ of $\mathcal{U}_n$. 
	Then, $\mathcal{C}_{N, \mathcal{G}}$ is a $\mathcal{G}$-symmetric unitary $2$-design if and only if $\mathcal{U}_{N, \mathcal{G}}=\mathcal{U}_N$. 
\end{theorem}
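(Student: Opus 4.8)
The plan is to treat the two implications separately. The forward implication is immediate: if $\mathcal{U}_{N,\mathcal{G}}=\mathcal{U}_N$ then $\mathcal{C}_{N,\mathcal{G}}=\mathcal{C}_N\cap\mathcal{U}_{N,\mathcal{G}}=\mathcal{C}_N$, and since the multiqubit Clifford group is a unitary $3$-design~\cite{webb2016clifford,zhu2017multiqubit} and a fortiori a $2$-design, $\Phi_{2,\mathcal{C}_{N,\mathcal{G}}}=\Phi_{2,\mathcal{C}_N}=\Phi_{2,\mathcal{U}_N}=\Phi_{2,\mathcal{U}_{N,\mathcal{G}}}$. All the work is in the reverse implication, which I would prove in contrapositive form: assuming $\mathcal{U}_{N,\mathcal{G}}\neq\mathcal{U}_N$, exhibit a second-moment operator distinguishing $\mathcal{C}_{N,\mathcal{G}}$ from $\mathcal{U}_{N,\mathcal{G}}$, following the skeleton of the ``only if'' part of Theorem~\ref{SMthm:main}.

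Concretely: $\mathcal{U}_{N,\mathcal{G}}\neq\mathcal{U}_N$ forces $\mathcal{F}$ to act non-scalarly on $\mathbb{C}^{2^n}$, so $\mathcal{G}$ contains a non-scalar element $G=F^{\otimes M}$. I would set $L:=\sum_{P\in\mathcal{P}_N^+}\gamma_P(G)^2\,P^{\otimes2}\in\mathcal{L}(\mathcal{H}^{\otimes2})$, which is the case $t=t'=2$ of Lemma~\ref{SMlem:Clifford_invariant_construction}; since every $U\in\mathcal{C}_{N,\mathcal{G}}$ satisfies $UGU^\dagger=G$, that lemma gives $U^{\otimes2}LU^{\dagger\otimes2}=L$ for all $U\in\mathcal{C}_{N,\mathcal{G}}$. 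If $\mathcal{C}_{N,\mathcal{G}}$ were a $\mathcal{G}$-symmetric unitary $2$-design, Lemma~\ref{SMlem:unitary_design_commutativity} (applied with $t=2$) would promote this to $U^{\otimes2}LU^{\dagger\otimes2}=L$ for every $U\in\mathcal{U}_{N,\mathcal{G}}$. It therefore suffices to produce one $U^\star\in\mathcal{U}_{N,\mathcal{G}}$ with $(U^\star)^{\otimes2}L(U^\star)^{\dagger\otimes2}\neq L$, for an appropriately chosen $F$.

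To build $U^\star$ I would use the $*$-algebra picture $\mathcal{U}_{N,\mathcal{G}}=\mathcal{U}(\mathcal{A}')$, where $\mathcal{A}'=\bigoplus_c\mathcal{L}(\tilde W_c)$ is the commutant of $\mathcal{G}$ and $\mathcal{H}\cong\bigoplus_c V_c\otimes\tilde W_c$ is the $\mathcal{F}$-isotypic decomposition. Because $M\geq2$, the copy-swap $\mathrm{SWAP}$ lies in $\mathcal{A}'\setminus\mathbb{C}I$, so $\mathcal{A}'\supsetneq\mathbb{C}I$; hence either there are at least two sectors $c$, in which case $U^\star$ can be taken to be a non-scalar unitary from the center $\{\sum_c e^{i\alpha_c}P_c\}$ of $\mathcal{A}'$, or some multiplicity obeys $\dim\tilde W_c\geq2$, in which case $U^\star$ can be taken to be a Hadamard-type unitary supported in $\tilde W_c$. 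In each case I would check $(U^\star)^{\otimes2}L(U^\star)^{\dagger\otimes2}\neq L$ by a finite computation — equivalently, by showing that the relevant off-block of $L$ (between two isotypic sectors, or inside a multiplicity space) is nonzero — choosing $F$ generically inside $\mathcal{F}$, or equivalently differentiating the family $F_\theta=e^{i\theta A}$ (with $iA$ in the Lie algebra $\mathfrak{f}$ of $\mathcal{F}$, $A$ non-scalar) and using $L_2:=\tfrac12\partial_\theta^2L(\theta)|_{\theta=0}$, which is again $\mathcal{C}_{N,\mathcal{G}}$-invariant for every $\theta$. This reproduces, in coordinate-free form, the explicit $\mathrm{U}(1)$ and $\mathrm{SU}(2)$ calculations used elsewhere in this work.

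I expect the last step to be the main obstacle. Unlike the $3$-design case there is no discreteness shortcut: Lemma~\ref{SMlem:3_design_discreteness} needs $t\geq3$, and its inequality $|\alpha|^t\leq|\alpha|^2$ becomes an equality at $t=2$, so one must engage directly with the second-moment operator $L$ and the representation theory of $\mathcal{F}^{\otimes M}$. The delicate points are: (i) $\mathcal{H}=(\mathbb{C}^{2^n})^{\otimes M}$ need not contain a repeated $\mathcal{F}$-irrep (already $\mathrm{SU}(2)$ on one qubit with $M=2$ gives the multiplicity-free triplet $\oplus$ singlet), so the argument cannot rely on multiplicity spaces alone and genuinely needs the two-case split above; (ii) the off-block of $L$ that must not vanish can indeed vanish for special $F$ (for instance $F=e^{i(\pi/2)\mathrm{Z}}$ in the $\mathrm{U}(1)$ example makes $L$ purely ``diagonal''), which is why genericity of $F$, or passing to $L_2$, is essential; and (iii) to cover every connected $\mathcal{F}$ uniformly one must work with the abstract structure of $\mathcal{U}_{N,\mathcal{G}}=\mathcal{U}(\mathcal{A}')$ rather than any explicit gate decomposition. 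As a by-product, via $\dim\mathrm{Comm}(\mathcal{X}^{\otimes2})=F_2(\mathcal{X})$, the argument yields the strict frame-potential gap $F_2(\mathcal{C}_{N,\mathcal{G}})>F_2(\mathcal{U}_{N,\mathcal{G}})$ that underlies the numerical checks.
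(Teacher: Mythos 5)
Your skeleton is the paper's: the ``if'' direction via the $3$-design property of $\mathcal{C}_N$, and for the converse the combination of Lemma~\ref{SMlem:Clifford_invariant_construction} (with $t=t'=2$) and Lemma~\ref{SMlem:unitary_design_commutativity} to promote a quadratic Pauli invariant of $\mathcal{C}_{N,\mathcal{G}}$ to an invariant of all of $\mathcal{U}_{N,\mathcal{G}}$. But the step you yourself flag as ``the main obstacle'' --- exhibiting a symmetric unitary that actually moves $L$ --- is the entire content of the theorem, and your plan for it is not a proof. The case split over the commutant $\mathcal{A}'=\bigoplus_c\mathcal{L}(\tilde W_c)$ only tells you that $\mathcal{U}_{N,\mathcal{G}}$ contains non-scalar elements; it does not tell you that the specific operator $L$ (or $L_2$) has a nonzero component in the relevant off-block, and ``check by a finite computation, choosing $F$ generically'' is precisely the missing argument. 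Note also that genericity of $F$ is not available to you: the statement quantifies over every connected non-scalar $\mathcal{F}$, so you must work with an arbitrary non-scalar $A\in\mathfrak{f}$ (your $L_2$ fallback does address this, and in fact recovers the operator the paper actually uses).

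What the paper does at this point is both more concrete and avoids the isotypic decomposition entirely. It works with the Lie-algebra element from the start: for non-scalar $A\in\mathfrak{f}$ with Pauli coefficients $\beta_P$ it forms $B:=\sum_{k=1}^M\sum_{P\in\mathcal{P}_n^+}\beta_P^2\,P^{(\mathcal{H}_k^1)}\otimes P^{(\mathcal{H}_k^2)}$, which is the Lemma~\ref{SMlem:Clifford_invariant_construction} invariant attached to $\sum_k A^{(\mathcal{H}_k)}$. The distinguishing unitaries are the one-parameter family $e^{i\theta\,\mathrm{SWAP}^{(\mathcal{H}_1,\mathcal{H}_2)}}$, which lies in $\mathcal{U}_{N,\mathcal{G}}$ for every tensor-power symmetry with $M\geq 2$ --- no case analysis needed. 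Differentiating the invariance of $B$ gives $\bigl[\sum_{j=1}^2\mathrm{SWAP}^{(\mathcal{H}_1^j,\mathcal{H}_2^j)},\,B\bigr]=0$, and pairing this commutator with test operators $Q^{(\mathcal{H}_1^1)}\otimes R^{(\mathcal{H}_2^1)}\otimes S^{(\mathcal{H}_1^2)}$ and using the swap trick collapses it to $\beta_S^2\,\mathrm{tr}([S,Q]R)=0$ for all $Q,R,S\in\mathcal{P}_n^+$; hence $\beta_S=0$ for every $S\neq I$, so $A\in\mathbb{R}I$, and connectedness of $\mathcal{F}$ then forces $\mathcal{G}$ to be scalar and $\mathcal{U}_{N,\mathcal{G}}=\mathcal{U}_N$. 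If you want to salvage your route, you should replace the abstract two-case argument by this single explicit computation (or an equivalent one); as written, the proposal stops exactly where the proof has to begin.
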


Since we are going to deal with many Hilbert spaces in the proof, we define the notations for Hilbert spaces as follows: 
In the context of unitary $2$-designs, we consider two copies of the Hilbert space $\mathcal{H}$ associated with $N$ qubits, which we denote by $\mathcal{H}^1$ and $\mathcal{H}^2$. 
The symmetry $\mathcal{G}$ naturally induces the decomposition of the Hilbert space $\mathcal{H}^j$ into $M$ parts, which we denote the $k$th part by $\mathcal{H}_k^j$. 
We denote $\mathcal{H}_k^j$ simply by $\mathcal{H}_k$ when we need not specify $j$.

\begin{proof}
	Since $\mathcal{C}_N$ is a unitary $3$-design~\cite{webb2016clifford, zhu2017multiqubit}, the ``if'' part is trivial. 
	In the following, we consider the ``only if'' part. 
	Suppose that $\mathcal{C}_{N, \mathcal{G}}$ is a $\mathcal{G}$-symmetric unitary $2$-design. 
	We define $\mathfrak{f}$ as the Lie algebra of $\mathcal{F}$, and we are going to prove that $\mathfrak{f}\subset\{aI\ |\ a\in\mathbb{R}\}$. 
	We take arbitrary $A\in\mathfrak{f}$. 
	Since $e^{i\theta A}\in\mathcal{F}$ for all $\theta\in\mathbb{R}$, we have 
	\begin{align}
		\bigotimes_{k=1}^M e^{i\theta {A^{(\mathcal{H}_k)}}}\in\mathcal{G}. 
	\end{align}
	We take arbitrary $U\in\mathcal{C}_{N, \mathcal{G}}$. 
	Then, $U$ satisfies 
	\begin{align}
		\left[U, \bigotimes_{k=1}^M e^{i\theta {A^{(\mathcal{H}_k)}}}\right]=0. 
	\end{align}
	By taking the derivative at $\theta=0$, we get 
	\begin{align}
		\left[U, \sum_{k=1}^M A^{(\mathcal{H}_k)}\right]=0. \label{SMeq:SMthm:2design1}
	\end{align}
	We define $\beta_P$ as the expansion coefficient of $P$ in $A$ in the Pauli basis, i.e., $\beta_P:=\mathrm{tr}(AP)/2^n$. 
	Then, $A$ can be written as 
	\begin{align}
		A=\sum_{P\in\mathcal{P}_n^+} \beta_P P. \label{SMeq:SMthm:2design2}
	\end{align}
	By plugging Eq.~\eqref{SMeq:SMthm:2design2} into Eq.~\eqref{SMeq:SMthm:2design1}, we get 
	\begin{align}
		\left[U, \sum_{k=1}^M \sum_{P\in\mathcal{P}_n^+} \beta_P P^{(\mathcal{H}_k)}\right]=0, 
	\end{align}
	or equivalently, 
	\begin{align}
		U\left(\sum_{k=1}^M \sum_{P\in\mathcal{P}_n^+} \beta_P P^{(\mathcal{H}_k)}\right)U^\dag
		=\sum_{k=1}^M \sum_{P\in\mathcal{P}_n^+} \beta_P P^{(\mathcal{H}_k)}. 
	\end{align}
	We define 
	\begin{align}
		B:=\sum_{k=1}^M \sum_{P\in\mathcal{P}_n^+} \beta_P^2 P^{(\mathcal{H}_k^1)}\otimes P^{(\mathcal{H}_k^2)}. \label{SMeq:SMthm:2design3}
	\end{align}
	Then, we get $U^{\otimes 2}BU^{\dag\otimes 2}=B$ by Lemma~\ref{SMlem:Clifford_invariant_construction}. 
	Since this holds for all $U\in\mathcal{C}_{N, \mathcal{G}}$ and $\mathcal{C}_{N, \mathcal{G}}$ is a $\mathcal{G}$-symmetric unitary $2$-design, we get $U^{\otimes 2}BU^{\dag\otimes 2}=B$ for all $U\in\mathcal{U}_{N, \mathcal{G}}$ by Lemma~\ref{SMlem:unitary_design_commutativity}. 
	By noting that the SWAP operator $\mathrm{SWAP}^{(\mathcal{H}_1, \mathcal{H}_2)}$ between the Hilbert space $\mathcal{H}_1$ and $\mathcal{H}_2$ satisfies $[\mathrm{SWAP}^{(\mathcal{H}_1, \mathcal{H}_2)}, G]=0$ for all $G\in\mathcal{G}$, we know that $e^{i\theta\cdot\mathrm{SWAP}^{(\mathcal{H}_1, \mathcal{H}_2)}}\in\mathcal{U}_{N, \mathcal{G}}$ for all $\theta\in\mathbb{R}$. 
	We thus get 
	\begin{align}
		\left(\bigotimes_{j=1}^2 e^{i\theta\cdot\mathrm{SWAP}^{(\mathcal{H}_1^j, \mathcal{H}_2^j)}}\right)B\left(\bigotimes_{j=1}^2 e^{-i\theta\cdot\mathrm{SWAP}^{(\mathcal{H}_1^j, \mathcal{H}_2^j)}}\right)=B 
	\end{align}
    for all $\theta\in\mathbb{R}$. 
	By taking the derivative at $\theta=0$, we get 
	\begin{align}
		\left[\sum_{j=1}^2 \mathrm{SWAP}^{(\mathcal{H}_1^j, \mathcal{H}_2^j)}, B\right]=0. 
	\end{align}
	This implies that for any $Q, R, S\in\mathcal{P}_n^+$, 
	\begin{align}
		\mathrm{tr}\left(\left[\sum_{j=1}^2 \mathrm{SWAP}^{(\mathcal{H}_1^j, \mathcal{H}_2^j)}, B\right]\left(Q^{(\mathcal{H}_1^1)}\otimes R^{(\mathcal{H}_2^1)}\otimes S^{(\mathcal{H}_1^2)}\right)\right)
		=0. \label{SMeq:SMthm:2design4}
	\end{align}
	By using Eq.~\eqref{SMeq:SMthm:2design3}, we expand the left-hand side of this as follows: 
	\begin{align}
		&\mathrm{tr}\left(\left[\sum_{j=1}^2 \mathrm{SWAP}^{(\mathcal{H}_1^j, \mathcal{H}_2^j)}, B\right]\left(Q^{(\mathcal{H}_1^1)}\otimes R^{(\mathcal{H}_2^1)}\otimes S^{(\mathcal{H}_1^2)}\right)\right) \nonumber\\
		=&\mathrm{tr}\left(\left[\sum_{j=1}^2 \mathrm{SWAP}^{(\mathcal{H}_1^j, \mathcal{H}_2^j)}, \sum_{k=1}^M \sum_{P\in\mathcal{P}_n^+} \beta_P^2 P^{(\mathcal{H}_k^1)}\otimes P^{(\mathcal{H}_k^2)}\right]\left(Q^{(\mathcal{H}_1^1)}\otimes R^{(\mathcal{H}_2^1)}\otimes S^{(\mathcal{H}_1^2)}\right)\right) \nonumber\\
		=&\mathrm{tr}\left(\left[\sum_{j=1}^2 \mathrm{SWAP}^{(\mathcal{H}_1^j, \mathcal{H}_2^j)}, \sum_{k=1}^2 \sum_{P\in\mathcal{P}_n^+} \beta_P^2 P^{(\mathcal{H}_k^1)}\otimes P^{(\mathcal{H}_k^2)}\right]\left(Q^{(\mathcal{H}_1^1)}\otimes R^{(\mathcal{H}_2^1)}\otimes S^{(\mathcal{H}_1^2)}\right)\right) \nonumber\\
		=&\sum_{P\in\mathcal{P}_n^+} \beta_P^2 \mathrm{tr}\left(\left[\left(\left[\mathrm{SWAP}^{(\mathcal{H}_1^1, \mathcal{H}_2^1)}\otimes I^{(\mathcal{H}_1^2)}\otimes I^{(\mathcal{H}_2^2)}, P^{(\mathcal{H}_1^1)}\otimes I^{(\mathcal{H}_2^1)}\otimes P^{(\mathcal{H}_1^2)}\otimes I^{(\mathcal{H}_2^2)}\right]\right.\right.\right. \nonumber\\
		&\left.\left.\left.\ \ \ \ \ \ \ \ \ \ \ \ \ \ \ \ \ \ \ +\left[\mathrm{SWAP}^{(\mathcal{H}_1^1, \mathcal{H}_2^1)}\otimes I^{(\mathcal{H}_1^2)}\otimes I^{(\mathcal{H}_2^2)}, I^{(\mathcal{H}_1^1)}\otimes P^{(\mathcal{H}_2^1)}\otimes I^{(\mathcal{H}_1^2)}\otimes P^{(\mathcal{H}_2^2)}\right]\right.\right.\right. \nonumber\\
		&\left.\left.\left. \ \ \ \ \ \ \ \ \ \ \ \ \ \ \ \ \ \ \ +\left[I^{(\mathcal{H}_1^1)}\otimes I^{(\mathcal{H}_2^1)}\otimes\mathrm{SWAP}^{(\mathcal{H}_1^2, \mathcal{H}_2^2)}, P^{(\mathcal{H}_1^1)}\otimes I^{(\mathcal{H}_2^1)}\otimes P^{(\mathcal{H}_1^2)}\otimes I^{(\mathcal{H}_2^2)}\right]\right.\right.\right. \nonumber\\
		&\left.\left.\left.\ \ \ \ \ \ \ \ \ \ \ \ \ \ \ \ \ \ \ +\left[I^{(\mathcal{H}_1^1)}\otimes I^{(\mathcal{H}_2^1)}\otimes\mathrm{SWAP}^{(\mathcal{H}_1^2, \mathcal{H}_2^2)}, I^{(\mathcal{H}_1^1)}\otimes P^{(\mathcal{H}_2^1)}\otimes I^{(\mathcal{H}_1^2)}\otimes P^{(\mathcal{H}_2^2)}\right]\right)\right.\right. \nonumber\\
		&\left.\left.\ \ \ \ \ \ \ \ \ \ \ \ \ \ \ \ \ \ \ \ \ \ \times\left(Q^{(\mathcal{H}_1^1)}\otimes R^{(\mathcal{H}_2^1)}\otimes S^{(\mathcal{H}_1^2)}\otimes I^{(\mathcal{H}_2^2)}\right)\right]\otimes\bigotimes_{j=1}^2 \bigotimes_{k=3}^M I^{(\mathcal{H}_k^j)}\right) \nonumber\\
		=&2^{2(M-2)n}\sum_{P\in\mathcal{P}_n^+} \beta_P^2 \left(\mathrm{tr}\left(\left[\mathrm{SWAP}^{(\mathcal{H}_1^1, \mathcal{H}_2^1)}, P^{(\mathcal{H}_1^1)}\otimes I^{(\mathcal{H}_2^1)}\right]\left(Q^{(\mathcal{H}_1^1)}\otimes R^{(\mathcal{H}_2^1)}\right)\otimes PS^{(\mathcal{H}_1^2)}\otimes I^{(\mathcal{H}_2^2)}\right)\right. \nonumber\\
		&\left.\ \ \ \ \ \ \ \ \ \ \ \ \ \ \ \ \ \ \ \ \ \ \ \ \ \ \ \ +\mathrm{tr}\left(\left[\mathrm{SWAP}^{(\mathcal{H}_1^1, \mathcal{H}_2^1)}, I^{(\mathcal{H}_1^1)}\otimes P^{(\mathcal{H}_1^2)}\right]\left(Q^{(\mathcal{H}_1^1)}\otimes R^{(\mathcal{H}_2^1)}\right)\otimes S^{(\mathcal{H}_1^2)}\otimes P^{(\mathcal{H}_2^2)}\right)\right. \nonumber\\
		&\left.\ \ \ \ \ \ \ \ \ \ \ \ \ \ \ \ \ \ \ \ \ \ \ \ \ \ \ \ +\mathrm{tr}\left(PQ^{(\mathcal{H}_1^1)}\otimes R^{(\mathcal{H}_1^2)}\otimes\left[\mathrm{SWAP}^{(\mathcal{H}_1^2, \mathcal{H}_2^2)}, P^{(\mathcal{H}_1^2)}\otimes I^{(\mathcal{H}_2^2)}\right] \left(S^{(\mathcal{H}_1^2)}\otimes I^{(\mathcal{H}_2^2)}\right)\right)\right. \nonumber\\
		&\left.\ \ \ \ \ \ \ \ \ \ \ \ \ \ \ \ \ \ \ \ \ \ \ \ \ \ \ \ +\mathrm{tr}\left(Q^{(\mathcal{H}_1^1)}\otimes PR^{(\mathcal{H}_1^2)}\otimes\left[\mathrm{SWAP}^{(\mathcal{H}_1^2, \mathcal{H}_2^2)}, I^{(\mathcal{H}_1^2)}\otimes P^{(\mathcal{H}_2^2)}\right] \left(S^{(\mathcal{H}_1^2)}\otimes I^{(\mathcal{H}_2^2)}\right)\right)\right). \label{SMeq:SMthm:2design5}
	\end{align}
	The four terms in the sum can be calculated as follows: 
	\begin{align}
		\mathrm{tr}([\mathrm{SWAP}, P\otimes I](Q\otimes R)\otimes PS \otimes I)
		=&\mathrm{tr}([\mathrm{SWAP}, P\otimes I](Q\otimes R))\mathrm{tr}(PS)\mathrm{tr}(I) \nonumber\\
		=&\mathrm{tr}(\mathrm{SWAP}([P\otimes I, Q\otimes R]))\cdot 2^n\delta_{P, S}\cdot 2^n \nonumber\\
		=&2^{2n}\delta_{P, S}\mathrm{tr}(\mathrm{SWAP}([P, Q]\otimes R)) \nonumber\\
		=&2^{2n}\delta_{P, S}\mathrm{tr}([P, Q]R), \label{SMeq:SMthm:2design6}\\
		\mathrm{tr}([\mathrm{SWAP}, I\otimes P](Q\otimes R)\otimes S\otimes R)
		=&\mathrm{tr}([\mathrm{SWAP}, I\otimes P](Q\otimes R))\mathrm{tr}(S)\mathrm{tr}(R) \nonumber\\
		=&\mathrm{tr}(\mathrm{SWAP}[I\otimes P, Q\otimes R])\cdot 2^n\delta_{S, I}\cdot 2^n\delta_{R, I} \nonumber\\
		=&2^{2n}\delta_{R, I}\delta_{S, I}\mathrm{tr}(\mathrm{SWAP}[I\otimes P, Q\otimes I]) \nonumber\\
		=&0, \label{SMeq:SMthm:2design7}\\
		\mathrm{tr}(PQ\otimes R\otimes [\mathrm{SWAP}, P\otimes I](S\otimes I))
		=&\mathrm{tr}(PQ)\mathrm{tr}(R)\mathrm{tr}([\mathrm{SWAP}, P\otimes I](S\otimes I))  \nonumber\\
		=&2^n\delta_{P, Q}\cdot 2^n\delta_{R, I}\cdot\mathrm{tr}(\mathrm{SWAP}[P\otimes I, S\otimes I]) \nonumber\\
		=&2^{2n}\delta_{P, Q}\delta_{R, I}\mathrm{tr}(\mathrm{SWAP}([P, S]\otimes I)) \nonumber\\
		=&2^{2n}\delta_{P, Q}\delta_{R, I}\mathrm{tr}([P, S]) \nonumber\\
		=&0, \label{SMeq:SMthm:2design8}\\
		\mathrm{tr}(Q\otimes PR\otimes [\mathrm{SWAP}, I\otimes P](S\otimes I))
		=&\mathrm{tr}(Q)\mathrm{tr}(PR)\mathrm{tr}([\mathrm{SWAP}, I\otimes P](S\otimes I))  \nonumber\\
		=&2^n\delta_{Q, I}\cdot 2^n\delta_{P, R}\cdot\mathrm{tr}(\mathrm{SWAP}[I\otimes P, S\otimes I]) \nonumber\\
		=&0, \label{SMeq:SMthm:2design9}
	\end{align}
	where we used the cyclicity of the trace and the swap trick $\mathrm{tr}(\mathrm{SWAP}(L\otimes M))=LM$. 
	By plugging Eqs.~\eqref{SMeq:SMthm:2design5}, \eqref{SMeq:SMthm:2design6}, \eqref{SMeq:SMthm:2design7}, \eqref{SMeq:SMthm:2design8} and \eqref{SMeq:SMthm:2design9} into Eq.~\eqref{SMeq:SMthm:2design4}, we get 
	\begin{align}
		2^{2(M-1)n}\beta_S^2 \mathrm{tr}([S, Q]R)=0. \label{SMeq:SMthm:2design10}
	\end{align}
	Suppose that $\beta_S\neq0$. 
	Then, Eq.~\eqref{SMeq:SMthm:2design10} means that $\mathrm{tr}([S, Q]R)=0$. 
	Since this holds for all $Q, R\in\mathcal{P}_n^+$, we have $[S, Q]=0$ for all $Q\in\mathcal{P}_n^+$. 
	This implies that $S=I$. 
	We therefore know that $\beta_S\neq 0$ for all $S\in\mathcal{P}_n^+\backslash\{I\}$, or equivalently, $A=\beta_I I$. 
	Since this holds for all $A\in\mathfrak{f}$, we get $\mathfrak{f}\subset\{aI\ |\ a\in\mathbb{R}\})$. 
	Since the connected Lie group $\mathcal{F}$ can be generated by $e^{i\mathfrak{f}}$ by Corollary~2.31 of Ref.~\cite{hall2003lie}, we know that $\mathcal{F}\subset\{e^{ia}I\ |\ a\in\mathbb{R}\}$, which implies that $\mathcal{G}\subset\{e^{ia}I\ |\ a\in\mathbb{R}\}$. 
	We thus get $\mathcal{U}_{N, \mathcal{G}}=\mathcal{U}_N$. 
\end{proof}

\section{Weighted Unitary Designs} \label{SMsec:weighted_unitary_design}

In this appendix, we introduce weighted symmetric unitary designs and show that the condition for a symmetric Clifford group to be a weighted symmetric unitary design is equivalent to the condition for it to be an unweighted symmetric unitary design, which we defined in Definition~\ref{SMdef:symmetric_unweighted_unitary_design}. 
The definition of weighted symmetric unitary designs is as follows.

\begin{definition} \label{SMdef:symmetric_weighted_unitary_design}
    (Weighted symmetric unitary designs.) 
	Let $N, n, t\in\mathbb{N}$, $\mathcal{G}$ be a subgroup of $\mathcal{U}_N$, $\lambda_1,\lambda_2, ..., \lambda_n\in\mathbb{R}$, and $U_1, U_2, ..., U_n\in\mathcal{U}_N$. 
	A finite set of pairs $\{(\lambda_j, U_j)\}_{j=1}^n$ is a weighted $\mathcal{G}$-symmetric unitary $t$-design if $\sum_{j=1}^n \lambda_j=1$ and 
	\begin{align}
		\sum_{j=1}^n \lambda_j \mathcal{E}_{t, U_j}
		=\Phi_{t, \mathcal{U}_{N, \mathcal{G}}}. \label{SMeq:symmetric_weighted_unitary_design_def}
	\end{align}
\end{definition}

We present the equivalence between the conditions for a symmetric Clifford group to be weighted and unweighted symmetric unitary designs in the following theorem.

\begin{theorem} \label{SMthm:weighted_unitary_design}
    (Equivalence between unweighted and weighted symmetric unitary designs for symmetric Clifford groups.) 
	Let $N, t\in\mathbb{N}$, $\mathcal{G}$ be a subgroup of $\mathcal{U}_{N, \mathcal{G}}$. 
	Then, there exists $n\in\mathbb{N}$, $\lambda_1, \lambda_2, ..., \lambda_n\in\mathbb{R}$ and $U_1, U_2, ..., U_n\in\mathcal{C}_{N, \mathcal{G}}$ such that $\{(\lambda_j, U_j)\}_{j=1}^n$ is a weighted $\mathcal{G}$-symmetric unitary $t$-design if and only if $\mathcal{C}_{N, \mathcal{G}}$ is an unweighted $\mathcal{G}$-symmetric unitary $t$-design. 
\end{theorem}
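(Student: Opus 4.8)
## Proof Proposal for Theorem S9 (Equivalence of weighted and unweighted symmetric unitary designs)

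The plan is to prove the two implications separately; the ``if'' direction is essentially definitional, while the ``only if'' direction is an immediate consequence of Lemma~\ref{SMlem:uniform_mixture_property1} once one observes that a weighted symmetric Clifford design is, by definition, a map in $\mathfrak{C}_{t, \mathcal{G}}$.

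For the direction ``unweighted $\Rightarrow$ weighted'', I would first recall that $\mathcal{C}_{N, \mathcal{G}} \subset \mathcal{C}_N$ is finite modulo the global phase $\mathcal{U}_0 I$, since $|\mathcal{C}_N/\mathcal{U}_0|$ is finite, and that $\mathcal{E}_{t, U}$ depends only on the phase class of $U$. Hence the Haar integral defining $\Phi_{t, \mathcal{C}_{N, \mathcal{G}}}$ collapses to a uniform finite average $\frac{1}{|\mathcal{V}|}\sum_{U\in\mathcal{V}} \mathcal{E}_{t, U}$ over a set $\mathcal{V}$ of representatives of $\mathcal{C}_{N, \mathcal{G}}/\mathcal{U}_0$. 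If $\mathcal{C}_{N, \mathcal{G}}$ is an unweighted $\mathcal{G}$-symmetric unitary $t$-design, this average equals $\Phi_{t, \mathcal{U}_{N, \mathcal{G}}}$, so $\{(1/|\mathcal{V}|, U)\}_{U\in\mathcal{V}}$ is a $\mathcal{G}$-symmetric weighted unitary $t$-design with all $U\in\mathcal{C}_{N, \mathcal{G}}$.

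For the direction ``weighted $\Rightarrow$ unweighted'', suppose $\{(\lambda_j, U_j)\}_{j=1}^n$ with $U_j\in\mathcal{C}_{N, \mathcal{G}}$ is a $\mathcal{G}$-symmetric weighted unitary $t$-design, i.e.\ $\mathcal{D}:=\sum_{j=1}^n \lambda_j \mathcal{E}_{t, U_j} = \Phi_{t, \mathcal{U}_{N, \mathcal{G}}}$. By construction $\mathcal{D}\in\mathfrak{C}_{t, \mathcal{G}}$, so applying Lemma~\ref{SMlem:uniform_mixture_property1} with $\mathcal{X}=\mathcal{C}_{N, \mathcal{G}}$ gives $\Phi_{t, \mathcal{C}_{N, \mathcal{G}}} = \Phi_{t, \mathcal{C}_{N, \mathcal{G}}}\circ\mathcal{D} = \Phi_{t, \mathcal{C}_{N, \mathcal{G}}}\circ\Phi_{t, \mathcal{U}_{N, \mathcal{G}}}$. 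The only remaining point is the composition identity $\Phi_{t, \mathcal{C}_{N, \mathcal{G}}}\circ\Phi_{t, \mathcal{U}_{N, \mathcal{G}}} = \Phi_{t, \mathcal{U}_{N, \mathcal{G}}}$, which I would prove just as in the proof of Lemma~\ref{SMlem:uniform_mixture_property1}: for every $W\in\mathcal{C}_{N, \mathcal{G}}\subset\mathcal{U}_{N, \mathcal{G}}$, the left-invariance of $\mu_{\mathcal{U}_{N, \mathcal{G}}}$ yields $\mathcal{E}_{t, W}\circ\Phi_{t, \mathcal{U}_{N, \mathcal{G}}} = \Phi_{t, \mathcal{U}_{N, \mathcal{G}}}$, and integrating over $W\in\mathcal{C}_{N, \mathcal{G}}$ gives the claim. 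Combining, $\Phi_{t, \mathcal{C}_{N, \mathcal{G}}} = \Phi_{t, \mathcal{U}_{N, \mathcal{G}}}$, so $\mathcal{C}_{N, \mathcal{G}}$ is an unweighted $\mathcal{G}$-symmetric unitary $t$-design.

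There is no real obstacle here: the substance of the theorem is already packaged in Lemma~\ref{SMlem:uniform_mixture_property1}, which says that precomposing either twirl with an arbitrary $\mathcal{G}$-symmetric Clifford mixture does nothing. The small pieces requiring attention are the finiteness-mod-phase reduction in the easy direction and the composition identity $\Phi_{t, \mathcal{C}_{N, \mathcal{G}}}\circ\Phi_{t, \mathcal{U}_{N, \mathcal{G}}} = \Phi_{t, \mathcal{U}_{N, \mathcal{G}}}$, both of which are routine consequences of invariance of the Haar measure.
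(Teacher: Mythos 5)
Your proposal is correct and follows essentially the same route as the paper: the easy direction is the uniform average over representatives of $\mathcal{C}_{N,\mathcal{G}}/\mathcal{U}_0$, and the converse combines Lemma~\ref{SMlem:uniform_mixture_property1} (applied to the weighted design viewed as an element of $\mathfrak{C}_{t,\mathcal{G}}$) with the composition identity $\Phi_{t, \mathcal{C}_{N, \mathcal{G}}}\circ\Phi_{t, \mathcal{U}_{N, \mathcal{G}}} = \Phi_{t, \mathcal{U}_{N, \mathcal{G}}}$, proved exactly as you describe via left-invariance of $\mu_{\mathcal{U}_{N,\mathcal{G}}}$ and normalization of $\mu_{\mathcal{C}_{N,\mathcal{G}}}$. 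No gaps.
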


By combining this theorem with Theorems~\ref{thm:main}, \ref{thm:1design}, and \ref{thm:4design} in the main text, we know that the condition for an ensemble of $\mathcal{C}_{N, \mathcal{G}}$ is an unweighted $\mathcal{G}$-symmetric unitary $t$-design.

\begin{proof}
	First, we prove the ``if'' part. 
	We suppose that $\mathcal{C}_{N, \mathcal{G}}$ is an unweighted $\mathcal{G}$-symmetric unitary $t$-design. 
	We define $n:=|\mathcal{C}_{N, \mathcal{G}}/(\mathcal{U}_0 I)|$, $\lambda_j:=1/|\mathcal{C}_{N, \mathcal{G}}/(\mathcal{U}_0 I)|$ for all $j\in\{1, 2, ..., n\}$, and take $U_1$, $U_2$, ..., $U_n$ from all the equivalence classes of $\mathcal{C}_{N, \mathcal{G}}/(\mathcal{U}_0 I)$. 
	Then, we get 
	\begin{align}
		\sum_{j=1}^n \lambda_j\mathcal{E}_{t, U_j}=\Phi_{t, \mathcal{C}_{N, \mathcal{G}}}=\Phi_{t, \mathcal{U}_{N, \mathcal{G}}}.  
	\end{align}
	This means that $\{(\lambda_j, U_j)\}_{j=1}^n$ is a weighted $\mathcal{G}$-symmetric unitary $t$-design.

	Next, we prove the ``only if'' part. 
	We suppose that there exists $n\in\mathbb{N}$, $\lambda_1, \lambda_2, ..., \lambda_n\in\mathbb{R}$ and $U_1, U_2, ..., U_n\in\mathcal{C}_{N, \mathcal{G}}$ such that $\{(\lambda_j, U_j)\}_{j=1}^n$ is a weighted unitary $t$-design. 
	We define a map $\mathcal{D}$ on $\mathcal{L}(\mathcal{H}^{\otimes t})$ by 
	\begin{align}
		\mathcal{D}:=\sum_{j=1}^n \lambda_j\mathcal{E}_{t, U_j}. 
	\end{align}
	Then, we have 
	\begin{align}
		\mathcal{D}=\Phi_{t, \mathcal{U}_{N, \mathcal{G}}}. \label{SMeq:SMthm:weighted_unitary_design1}
	\end{align}
	Since $\mathcal{D}\in\mathfrak{C}_{t, \mathcal{G}}$, by Lemma~\ref{SMlem:uniform_mixture_property1}, we get 
	\begin{align}
		\Phi_{t, \mathcal{C}_{N, \mathcal{G}}}\circ\mathcal{D}=\Phi_{t, \mathcal{C}_{N, \mathcal{G}}}. \label{SMeq:SMthm:weighted_unitary_design2}
	\end{align}
	Since $\mu_{\mathcal{U}_{N, \mathcal{G}}}$ is left-invariant and $\mu_{\mathcal{C}_{N, \mathcal{G}}}$ is normalized, we get 
	\begin{align}
		\Phi_{t, \mathcal{C}_{N, \mathcal{G}}}\circ\Phi_{t, \mathcal{U}_{N, \mathcal{G}}} 
		=&\int_{U'\in\mathcal{C}_{N, \mathcal{G}}} \int_{U\in\mathcal{U}_{N, \mathcal{G}}} \mathcal{E}_{t, U'}\circ\mathcal{E}_{t, U} d\mu_{\mathcal{C}_{N, \mathcal{G}}}(U') d\mu_{\mathcal{U}_{N, \mathcal{G}}}(U) \nonumber\\
		=&\int_{U'\in\mathcal{C}_{N, \mathcal{G}}} \int_{U\in\mathcal{U}_{N, \mathcal{G}}} \mathcal{E}_{t, U'U} d\mu_{\mathcal{C}_{N, \mathcal{G}}}(U') d\mu_{\mathcal{U}_{N, \mathcal{G}}}(U) \nonumber\\
		=&\int_{U'\in\mathcal{C}_{N, \mathcal{G}}} \int_{U\in\mathcal{U}_{N, \mathcal{G}}} \mathcal{E}_{t, U} d\mu_{\mathcal{C}_{N, \mathcal{G}}}(U') d\mu_{\mathcal{U}_{N, \mathcal{G}}}(U) \nonumber\\
		=&\int_{U\in\mathcal{U}_{N, \mathcal{G}}} \mathcal{E}_{t, U} d\mu_{\mathcal{U}_{N, \mathcal{G}}}(U) \nonumber\\
		=&\Phi_{t, \mathcal{U}_{N, \mathcal{G}}}. \label{SMeq:SMthm:weighted_unitary_design3}
	\end{align}
	By Eqs.~\eqref{SMeq:SMthm:weighted_unitary_design1}, \eqref{SMeq:SMthm:weighted_unitary_design2} and \eqref{SMeq:SMthm:weighted_unitary_design3}, we get 
	\begin{align}
		\Phi_{t, \mathcal{C}_{N, \mathcal{G}}}
		=\Phi_{t, \mathcal{C}_{N, \mathcal{G}}}\circ\mathcal{D}
		=\Phi_{t, \mathcal{C}_{N, \mathcal{G}}}\circ\Phi_{t, \mathcal{U}_{N, \mathcal{G}}}
		=\Phi_{t, \mathcal{U}_{N, \mathcal{G}}}. 
	\end{align}
	This means that $\mathcal{C}_{N, \mathcal{G}}$ is an unweighted $\mathcal{G}$-symmetric unitary $t$-design. 
\end{proof}

\section{Technical Lemmas} \label{SMsec:technical}

In this appendix, we present the technical lemmas that we use in the proofs of the theorems in this work.

\subsection{Transformation of Pauli subgroups into the standard form}

For the proofs of Lemma~\ref{SMlem:symmetrization} and Theorem~\ref{SMthm:symmetric_Clifford_expression}, we show that any Pauli subgroup can be transformed into the form of Eq.~\eqref{SMeq:Pauli_subgroup_standard} up to phase via some Clifford conjugation action. 
For that purpose, we prepare two simple properties about Pauli operators. 
First, we prove that any Pauli operator can be transformed into the Z operator on the first qubit up to multiplicity of a constant via some Clifford conjugation action.

\begin{lemma} \label{SMlem:Pauli_to_Z}
	Let $N\in\mathbb{N}$, $P\in\mathcal{P}_N$ and $P\not\in\mathcal{P}_0 I$. 
	Then, there exists some $W\in\mathcal{C}_N$ such that $WPW^\dag=\chi\mathrm{Z}_1$ with $\chi\in\mathcal{P}_0$. 
\end{lemma}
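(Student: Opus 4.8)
The plan is to peel off the overall phase and then reduce an arbitrary non-identity element of $\mathcal{P}_N^+$ to $\mathrm{Z}_1$ by an explicit sequence of elementary Clifford conjugations. Write $P=\chi_0 P^+$ with $\chi_0\in\mathcal{P}_0$ and $P^+=\bigotimes_{j=1}^N Q_j\in\mathcal{P}_N^+$; since $P\notin\mathcal{P}_0\mathrm{I}$, the set $S_0:=\{j\mid Q_j\neq\mathrm{I}\}$ is nonempty. Because conjugation by any unitary fixes the scalar $\chi_0$, it suffices to construct $W\in\mathcal{C}_N$ with $WP^+W^\dagger\in\{\pm\mathrm{Z}_1\}$; then $WPW^\dagger=\chi\mathrm{Z}_1$ with $\chi:=\pm\chi_0\in\mathcal{P}_0$.

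First I would use single-qubit Clifford gates to turn every nontrivial tensor factor into $\mathrm{Z}$. For each $j\in S_0$, conjugation by a suitable product of $\mathrm{H}_j$ and $\mathrm{S}_j$ sends $Q_j$ to $\mathrm{Z}$ up to a sign (for instance $\mathrm{H}_j$ for $Q_j=\mathrm{X}$, $\mathrm{H}_j\mathrm{S}_j$ for $Q_j=\mathrm{Y}$, the identity for $Q_j=\mathrm{Z}$), while leaving the factors with $Q_j=\mathrm{I}$ unchanged. After applying all of these, $P^+$ has been conjugated to $\varepsilon\prod_{j\in S_0}\mathrm{Z}_j$ for some $\varepsilon\in\{\pm1\}$.

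Next I would collapse this product of $\mathrm{Z}$'s onto a single qubit. Fix $j_0\in S_0$. For each $j\in S_0\setminus\{j_0\}$, conjugate by $\mathrm{CNOT}_{j,j_0}$ (control $j$, target $j_0$); using the Heisenberg action $\mathrm{CNOT}_{j,j_0}\,\mathrm{Z}_{j_0}\,\mathrm{CNOT}_{j,j_0}^\dagger=\mathrm{Z}_j\mathrm{Z}_{j_0}$ and $\mathrm{CNOT}_{j,j_0}\,\mathrm{Z}_j\,\mathrm{CNOT}_{j,j_0}^\dagger=\mathrm{Z}_j$, the pair $\mathrm{Z}_j\mathrm{Z}_{j_0}$ inside the product is mapped to $\mathrm{Z}_{j_0}$, so the index $j$ is eliminated (every other $\mathrm{Z}_k$ is untouched since it acts on a third qubit). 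Iterating, formally by induction on $|S_0|$, leaves $\pm\mathrm{Z}_{j_0}$, and finally conjugating by $\mathrm{SWAP}_{1,j_0}$ (if $j_0\neq1$) gives $\pm\mathrm{Z}_1$. Composing the single-qubit Cliffords, the CNOTs, and the SWAP yields the desired $W\in\mathcal{C}_N$.

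There is no real obstacle here; the argument is fully constructive. The only thing requiring a little care is the sign bookkeeping — conjugation by $\mathrm{H}$, $\mathrm{S}$, and $\mathrm{SWAP}$ can introduce $\pm1$ factors (e.g.\ $\mathrm{H}\mathrm{Y}\mathrm{H}=-\mathrm{Y}$, $\mathrm{S}\mathrm{Y}\mathrm{S}^\dagger=-\mathrm{X}$) — but every such sign is a scalar and is absorbed into the final $\chi\in\mathcal{P}_0$, so it does not affect the statement. One could alternatively just invoke that $\mathcal{C}_N$ acts transitively on $\mathcal{P}_N^+\setminus\{\mathrm{I}\}$, but the constructive route above keeps the lemma self-contained and directly supplies the elementary gates used elsewhere in the proofs.
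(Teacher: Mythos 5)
Your proof is correct and follows essentially the same route as the paper's: single-qubit Clifford conjugations to bring every nontrivial tensor factor to $\mathrm{Z}$, a product of $\mathrm{CNOT}_{j,j_0}$ gates to collapse $\prod_{j\in S_0}\mathrm{Z}_j$ onto one qubit, and a final SWAP to move it to the first qubit, with all signs absorbed into $\chi\in\mathcal{P}_0$. No gaps.
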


\begin{proof}
	By noting that $\mathrm{S}_j^\dag\mathrm{Y}_j{(\mathrm{S}_j^\dag)}^\dag=\mathrm{X}_j$ and $\mathrm{H}_j\mathrm{X}_j\mathrm{H}_j^\dag=\mathrm{Z}_j$, we can construct $W_1\in\mathcal{C}_N$ with $\mathrm{S}_j^\dag$ and $\mathrm{H}_j$ such that $W_1 PW_1^\dag=\chi\prod_{j=1}^N \mathrm{Z}_j^{\mu_j}$ with $\chi\in\mathcal{P}_0$ and $\mu_j\in\{0, 1\}$. 
	Since $P\not\in\mathcal{P}_0 I$, we can take some $a\in\{1, 2, ..., N\}$ such that $\mu_a=1$. 
	We next define $W_2:=\prod_{j\in\{1, 2, ..., N\}\backslash\{a\}} (\mathrm{CNOT}_{j, a})^{\mu_j}$. 
	Then, $W_2(\prod_{j=1}^N \mathrm{Z}_j^{\mu_j})W_2^\dag=\mathrm{Z}_j$. 
	We finally define $W_3:=\mathrm{SWAP}_{1, j}$ if $j\neq1$ and $W_3:=I$ if $j=1$. 
	We define $W:=W_3 W_2 W_1$. 
	Then, $WPW^\dag=\chi\mathrm{Z}_1$. 
\end{proof}

Next, we prove that any pair of two non-commutative Pauli operators can be simultaneously transformed into the Z and X operators on the first qubit up to multiplicity of a constant via some Clifford conjugation action.

\begin{lemma} \label{SMlem:noncommutative_Pauli_to_XZ}
	Let $N\in\mathbb{N}$, $P, P'\in\mathcal{P}_N$, and $P$ and $P'$ be non-commutative with each other. 
	Then, there exists some $W\in\mathcal{C}_N$ such that $WPW^\dag=\chi \mathrm{Z}_1$ and $WP'W^\dag=\chi' \mathrm{X}_1$ with $\chi, \chi'\in\mathcal{P}_0$. 
\end{lemma}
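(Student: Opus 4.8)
The plan is to reduce to Lemma~\ref{SMlem:Pauli_to_Z} and then ``clean up'' the second operator using single-qubit Cliffords together with controlled gates that all leave $\mathrm{Z}_1$ invariant.

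First I note that $P\notin\mathcal{P}_0 I$: a scalar multiple of the identity commutes with everything, in particular with $P'$, contradicting the hypothesis. Hence Lemma~\ref{SMlem:Pauli_to_Z} applies and I may take $W_1\in\mathcal{C}_N$ with $W_1 P W_1^\dag=\chi\mathrm{Z}_1$ for some $\chi\in\mathcal{P}_0$. Set $\tilde P':=W_1 P' W_1^\dag\in\mathcal{P}_N$. Conjugation by a unitary preserves anticommutation, so $\tilde P'$ anticommutes with $\mathrm{Z}_1$; writing $\tilde P'=\omega\,R_1\otimes R_2\otimes\cdots\otimes R_N$ with $\omega\in\mathcal{P}_0$ and $R_i\in\{\mathrm{I},\mathrm{X},\mathrm{Y},\mathrm{Z}\}$, this forces $R_1\in\{\mathrm{X},\mathrm{Y}\}$. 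Replacing $W_1$ by $\mathrm{S}_1^\dag W_1$ if $R_1=\mathrm{Y}$ — which still conjugates $P$ to $\chi\mathrm{Z}_1$ since $\mathrm{S}_1$ fixes $\mathrm{Z}_1$, while $\mathrm{S}_1^\dag\mathrm{Y}_1\mathrm{S}_1=\mathrm{X}_1$ — I may assume $R_1=\mathrm{X}$, i.e. $\tilde P'=\omega\,\mathrm{X}_1\otimes Q$ with $\omega\in\mathcal{P}_0$ and $Q$ a Pauli string on qubits $2,\dots,N$.

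Next I strip off $Q$ factor by factor. Let $T:=\{j\ge 2\mid Q\text{ acts nontrivially on qubit }j\}$. For each $j\in T$ choose a single-qubit Clifford $C_j$ on qubit $j$ (a suitable product of $\mathrm{H}_j$ and $\mathrm{S}_j$) with $C_j\,(\text{the }j\text{-th factor of }Q)\,C_j^\dag\in\{\pm\mathrm{X}_j\}$; the $C_j$ act on distinct qubits, all $\ge 2$, so they commute with one another and with $\mathrm{Z}_1$. Put $C:=\prod_{j\in T}C_j$, so that $C\tilde P'C^\dag=\omega'\,\mathrm{X}_1\otimes\bigotimes_{j\in T}\mathrm{X}_j$ for some $\omega'\in\mathcal{P}_0$ and $C\,\mathrm{Z}_1\,C^\dag=\mathrm{Z}_1$. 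Finally put $C':=\prod_{j\in T}\mathrm{CNOT}_{1,j}$; these gates share the control qubit $1$, hence commute, fix $\mathrm{Z}_1$, and satisfy $\mathrm{CNOT}_{1,j}\mathrm{X}_1\mathrm{CNOT}_{1,j}^\dag=\mathrm{X}_1\mathrm{X}_j$ and $\mathrm{CNOT}_{1,j}\mathrm{X}_j\mathrm{CNOT}_{1,j}^\dag=\mathrm{X}_j$, so $C'\bigl(\mathrm{X}_1\otimes\bigotimes_{j\in T}\mathrm{X}_j\bigr)C'^\dag=\mathrm{X}_1$ and $C'\,\mathrm{Z}_1\,C'^\dag=\mathrm{Z}_1$. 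Then $W:=C'CW_1$ is a Clifford with $WPW^\dag=\chi\mathrm{Z}_1$ and $WP'W^\dag=\chi'\mathrm{X}_1$ for some $\chi,\chi'\in\mathcal{P}_0$, as claimed.

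The argument is essentially bookkeeping; the only point I would write out in full is the verification that every gate introduced after the application of Lemma~\ref{SMlem:Pauli_to_Z} leaves $\mathrm{Z}_1$ fixed while the controlled gates progressively remove the tail $Q$ without reintroducing anything on qubit $1$. This is the main (mild) obstacle, and it is handled by the standard Clifford conjugation rules quoted above.
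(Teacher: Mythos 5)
Your proof is correct, and its skeleton coincides with the paper's: rule out $P\in\mathcal{P}_0 I$, apply Lemma~\ref{SMlem:Pauli_to_Z} to send $P$ to $\chi\mathrm{Z}_1$, use anticommutation to force the first factor of the conjugated $P'$ into $\{\mathrm{X},\mathrm{Y}\}$ and normalize it with $\mathrm{S}_1^\dag$, then eliminate the tail with gates that fix $\mathrm{Z}_1$. The only genuine difference is the tail-elimination step: the paper invokes Lemma~\ref{SMlem:Pauli_to_Z} a second time on the remaining $N-1$ qubits to compress $P''$ into $\eta'\mathrm{Z}_2$ and then cancels it with a single $\mathrm{CZ}_{1,2}$, whereas you normalize each nontrivial factor of $Q$ locally to $\mathrm{X}_j$ and absorb them all into $\mathrm{X}_1$ with a commuting fan of $\mathrm{CNOT}_{1,j}$ gates. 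Your version avoids the recursive reuse of the earlier lemma at the cost of a slightly longer explicit circuit; both verifications (that every gate after the first stage fixes $\mathrm{Z}_1$, and that the controlled gates strip the tail without touching qubit $1$) are carried out correctly, so there is no gap.
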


\begin{proof}
	Since $P$ and $P'$ are non-commutative with each other, $P\not\in\mathcal{P}_0 I$. 
	By Lemma~\ref{SMlem:Pauli_to_Z}, we can take $W_1\in\mathcal{C}_N$ such that $W_1PW_1^\dag=\chi \mathrm{Z}_1$ with $\chi\in\mathcal{P}_0$. 
	Since $W_1 PW_1^\dag$ and $W_1P'W_1^\dag$ are non-commutative, $W_1 P'W_1^\dag$ can be written as $W_1 P'W_1^\dag=\eta \mathrm{X}_1\otimes P''$ or $W_1^\dag P'W_1=\eta \mathrm{Y}_1\otimes P''$ with $\eta\in\mathcal{P}_0$ and $P''\in\mathcal{P}_{N-1}$. 
	We define $W_2:=I$ in the former case and $W_2:=\mathrm{S}_1^\dag$ in the latter case. 
	Then, $(W_2 W_1)P'(W_2 W_1)^\dag=\eta \mathrm{X}_1\otimes P''$. 
	If $P''\not\in\mathcal{P}_0 I$, we can take $W'\in\mathcal{C}_{N-1}$ by Lemma~\ref{SMlem:Pauli_to_Z} such that $W' P'' W'^\dag=\eta' \mathrm{Z}_2$ with $\eta'\in\mathcal{P}_0$. 
	We define $W_3:=W'\cdot\mathrm{CZ}_{1, 2}$ and $\chi':=\eta\eta'$. 
	If $P''\in\mathcal{P}_0 I$, we define $W_3:=I$ and $\chi':=\eta$. 
	We define $W:=W_3 W_2 W_1$. 
	Then, $WPW^\dag=\chi \mathrm{Z}_1$ and $WP'W^\dag=\chi' \mathrm{X}_1$. 
\end{proof}

By using the two lemmas above, we prove that we can transform any Pauli subgroup into the form of Eq.~\eqref{SMeq:Pauli_subgroup_standard} up to multiplicity of a constant via some Clifford conjugation action.

\begin{lemma} \label{SMlem:Pauli_subgroup_equivalence}
	Let $n\in\mathbb{N}$ and $\mathcal{Q}$ be a subgroup of $\mathcal{P}_n$. 
	Then, there exists $W\in\mathcal{C}_n$ and $n_1, n_2, n_3\geq0$ such that $\mathcal{P}_0 W\mathcal{Q}W^\dag=\mathcal{P}_0\{\mathrm{I}, \mathrm{X}, \mathrm{Y}, \mathrm{Z}\}^{\otimes n_1}\otimes\{\mathrm{I}, \mathrm{Z}\}^{\otimes n_2}\otimes\{\mathrm{I}\}^{\otimes n_3}$. 
\end{lemma}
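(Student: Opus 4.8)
The plan is to induct on $n$. If $\mathcal{Q}\subset\mathcal{P}_0 I$ there is nothing to prove: take $W=I$, $n_1=n_2=0$, $n_3=n$. Otherwise I would split into two cases according to whether $\mathcal{Q}$ contains two anticommuting operators, i.e.\ whether the image of $\mathcal{Q}$ in $\mathcal{P}_n/\mathcal{P}_0$ (an $\mathbb{F}_2$-space with the symplectic form induced by commutation) is nondegenerate on some vector or totally isotropic. In either case the strategy is to use Lemma~\ref{SMlem:Pauli_to_Z} or Lemma~\ref{SMlem:noncommutative_Pauli_to_XZ} to normalize one qubit, peel off that qubit as a tensor factor, and recurse.

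\textbf{Case 1: $\mathcal{Q}$ contains $P,P'$ with $PP'=-P'P$.} By Lemma~\ref{SMlem:noncommutative_Pauli_to_XZ} choose $W_1\in\mathcal{C}_n$ with $W_1PW_1^\dag=\chi\mathrm{Z}_1$ and $W_1P'W_1^\dag=\chi'\mathrm{X}_1$ for some $\chi,\chi'\in\mathcal{P}_0$, so that $\mathcal{P}_0 W_1\mathcal{Q}W_1^\dag$ contains $\mathrm{Z}_1$, $\mathrm{X}_1$, and hence every Pauli supported on qubit $1$. I then claim $W_1\mathcal{Q}W_1^\dag$ tensor-factorizes over qubit $1$: any $Q\in W_1\mathcal{Q}W_1^\dag$ has the form (phase)$\cdot A\otimes Q'$ with $A\in\{\mathrm{I},\mathrm{X},\mathrm{Y},\mathrm{Z}\}$ on qubit $1$, and multiplying by the appropriate word in $\mathrm{Z}_1,\mathrm{X}_1$ (which lie in $\mathcal{P}_0 W_1\mathcal{Q}W_1^\dag$) removes $A$, showing $\mathrm{I}\otimes Q'\in\mathcal{P}_0 W_1\mathcal{Q}W_1^\dag$. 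Therefore $\mathcal{P}_0 W_1\mathcal{Q}W_1^\dag=\mathcal{P}_0\,\{\mathrm{I},\mathrm{X},\mathrm{Y},\mathrm{Z}\}\otimes\mathcal{Q}''$, where $\mathcal{Q}'':=\{R\in\mathcal{P}_{n-1}\mid \mathrm{I}\otimes R\in\mathcal{P}_0 W_1\mathcal{Q}W_1^\dag\}$ is a Pauli subgroup of $\mathcal{P}_{n-1}$. Applying the induction hypothesis to $\mathcal{Q}''$ gives $W''\in\mathcal{C}_{n-1}$ putting it in standard form; then $(I_1\otimes W'')W_1$ works and increments $n_1$ by one.

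\textbf{Case 2: $\mathcal{Q}$ is abelian modulo phase.} Pick any $P\in\mathcal{Q}\setminus\mathcal{P}_0 I$ and use Lemma~\ref{SMlem:Pauli_to_Z} to get $W_1\in\mathcal{C}_n$ with $W_1PW_1^\dag=\chi\mathrm{Z}_1$, $\chi\in\mathcal{P}_0$. Since every element of $W_1\mathcal{Q}W_1^\dag$ commutes with $\mathrm{Z}_1$, it has the form (phase)$\cdot A\otimes Q'$ with $A\in\{\mathrm{I},\mathrm{Z}\}$, and multiplying by $\mathrm{Z}_1$ when $A=\mathrm{Z}$ yields $\mathcal{P}_0 W_1\mathcal{Q}W_1^\dag=\mathcal{P}_0\,\{\mathrm{I},\mathrm{Z}\}\otimes\mathcal{Q}''$ with $\mathcal{Q}''$ a Pauli subgroup of $\mathcal{P}_{n-1}$. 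By induction $\mathcal{Q}''$ can be brought to standard form, so after conjugating by $I_1\otimes W''$ the group becomes $\mathcal{P}_0\,\{\mathrm{I},\mathrm{Z}\}\otimes\{\mathrm{I},\mathrm{X},\mathrm{Y},\mathrm{Z}\}^{\otimes n_1'}\otimes\{\mathrm{I},\mathrm{Z}\}^{\otimes n_2'}\otimes\{\mathrm{I}\}^{\otimes n_3'}$, which is the desired form up to the order of the tensor factors. Conjugating once more by a permutation Clifford (a product of $\mathrm{SWAP}$s moving qubit $1$ past the $n_1'$ fully-Pauli qubits) yields Eq.~\eqref{SMeq:Pauli_subgroup_standard} with parameters $(n_1',\,n_2'+1,\,n_3')$.

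I do not expect a genuine obstacle here: the two slightly delicate points are (i) verifying that a qubit carrying a maximal local Pauli set inherited from $\mathcal{Q}$ forces the tensor split of $\mathcal{Q}$ over that qubit, which is just the phase bookkeeping sketched above, and (ii) restoring the prescribed order of the three blocks, which costs one conjugation by a permutation Clifford in Case~2. The induction terminates because each step removes one qubit, and the entire content reduces to Lemmas~\ref{SMlem:Pauli_to_Z} and~\ref{SMlem:noncommutative_Pauli_to_XZ} together with the symplectic structure of $\mathcal{P}_n/\mathcal{P}_0$.
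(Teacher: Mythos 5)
Your proof is correct and follows essentially the same route as the paper's: induction on $n$, splitting into the anticommuting and abelian cases, normalizing one qubit via Lemma~\ref{SMlem:noncommutative_Pauli_to_XZ} or Lemma~\ref{SMlem:Pauli_to_Z}, stripping that tensor factor, and recursing. The only cosmetic differences are that you argue on all group elements rather than a generating set, and you restore the block order with a SWAP conjugation where the paper instead observes that the residual group in the abelian case is commutative and hence has no full-Pauli block.
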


\begin{proof}
	We prove this lemma by mathematical induction about $n$. 
	Since the statement of Lemma~\ref{SMlem:Pauli_subgroup_equivalence} trivially holds for $n=1$, it is sufficient to show that any subgroup $\mathcal{Q}$ of $\mathcal{P}_{k+1}$ satisfies the following two properties for all $k\in\mathbb{N}$:

	\noindent
	i) If $\mathcal{Q}$ has a non-commutative pair of elements, then $\mathcal{P}_0 W\mathcal{Q}W^\dag=\mathcal{P}_0\{\mathrm{I}, \mathrm{X}, \mathrm{Y}, \mathrm{Z}\}\otimes\mathcal{Q}'$ with some $W\in\mathcal{C}_{k+1}$ and subgroup $\mathcal{Q}'$ of $\mathcal{P}_k$.

	\noindent
	ii) If every pair of the elements of $\mathcal{Q}$ is commutative and $\mathcal{Q}\not\subset\mathcal{P}_0 I$, then $\mathcal{P}_0 W\mathcal{Q}W^\dag=\mathcal{P}_0\{\mathrm{I}, \mathrm{Z}\}\otimes\mathcal{Q}'$ with some $W\in\mathcal{C}_{k+1}$ and commutative subgroup $\mathcal{Q}'$ of $\mathcal{P}_k$.

	\noindent
	Under the assumption of these two properties, we can construct the mathematical induction as follows. 
	Suppose that the statement of Lemma~\ref{SMlem:Pauli_subgroup_equivalence} holds for $n=k\in\mathbb{N}$ and take an arbitrary subgroup $\mathcal{Q}$ of $\mathcal{P}_{k+1}$. 
	If $\mathcal{Q}$ has a non-commutative pair of elements, by the property i), $\mathcal{P}_0 W\mathcal{Q}W^\dag=\mathcal{P}_0\{\mathrm{I}, \mathrm{X}, \mathrm{Y}, \mathrm{Z}\}\otimes\mathcal{Q}'$ with some $W\in\mathcal{C}_{k+1}$ and subgroup $\mathcal{Q}'$ of $\mathcal{P}_k$. 
	Since we suppose that the statement of Lemma~\ref{SMlem:Pauli_subgroup_equivalence} holds for $n=k$, $\mathcal{Q}'$ satisfies $\mathcal{P}_0 W'\mathcal{Q}' W'^\dag=\mathcal{P}_0\{\mathrm{I}, \mathrm{X}, \mathrm{Y}, \mathrm{Z}\}^{\otimes k_1}\otimes\{\mathrm{I}, \mathrm{Z}\}^{\otimes k_2}\otimes\{I\}^{\otimes k_3}$ with some $W'\in\mathcal{C}_k$ and $k_1, k_2, k_3\geq0$. 
	We therefore get $\mathcal{P}_0[(I\otimes W')W]\mathcal{Q}[(I\otimes W')W]^\dag=\mathcal{P}_0\{\mathrm{I}, \mathrm{X}, \mathrm{Y}, \mathrm{Z}\}^{\otimes k_1+1}\otimes\{\mathrm{I}, \mathrm{Z}\}^{\otimes k_2}\otimes\{\mathrm{I}\}^{\otimes k_3}$. 
	If every pair of the elements of $\mathcal{Q}$ is commutative and $\mathcal{Q}\not\subset\mathcal{P}_0 I$, by the property  ii), $\mathcal{P}_0 W\mathcal{Q}W^\dag=\mathcal{P}_0\{\mathrm{I}, \mathrm{Z}\}\otimes\mathcal{Q}'$ with some $W\in\mathcal{C}_{k+1}$ and commutative subgroup $\mathcal{Q}'$ of $\mathcal{P}_k$. 
	Since we suppose that the statement of Lemma~\ref{SMlem:Pauli_subgroup_equivalence} holds for $n=k$, $\mathcal{Q}'$ satisfies $\mathcal{P}_0 W'\mathcal{Q}' W'^\dag=\mathcal{P}_0\{\mathrm{I}, \mathrm{Z}\}^{\otimes k_2}\otimes\{\mathrm{I}\}^{\otimes k_3}$ with some $W'\in\mathcal{C}_k$ and $k_2, k_3\geq0$. 
	We therefore get $\mathcal{P}_0[(I\otimes W')W]\mathcal{Q}[(I\otimes W')W]^\dag=\mathcal{P}_0\{\mathrm{I}, \mathrm{Z}\}^{\otimes k_2+1}\otimes\{\mathrm{I}\}^{\otimes k_3}$. 
	If $\mathcal{Q}\subset\mathcal{P}_0 I$, we trivially get $\mathcal{P}_0\mathcal{Q}=\mathcal{P}_0$. 
	In all cases, the statement of Lemma~\ref{SMlem:Pauli_subgroup_equivalence} holds for $n=k+1$, and we can complete the proof by mathematical induction.

	In the following, we prove the properties i) and ii). 
	First, we prove the property i). 
	Since $\mathcal{Q}$ is a finite group, $\mathcal{Q}$ can be expressed as the group $\braket{\{Q_j\}_{j=1}^M}$ generated by some $Q_1, Q_2, ..., Q_M\in\mathcal{Q}$. 
	We take $a, b\in\{1, 2, ..., M\}$ such that $Q_a$ and $Q_b$ are non-commutative with each other. 
	By Lemma~\ref{SMlem:noncommutative_Pauli_to_XZ}, we can take $W\in\mathcal{C}_{k+1}$ such that $W Q_a W^\dag=\chi \mathrm{Z}_1$ and $W Q_b W^\dag=\chi' \mathrm{X}_1$ with some $\chi, \chi'\in\mathcal{P}_0$. 
	For any $j\in\{1, 2, ..., M\}$, $W Q_j W^\dag$ can be written as 
	\begin{align}
		W Q_j W^\dag=\mathrm{Z}_1^{\mu_j}\mathrm{X}_1^{\nu_j}\otimes Q'_j
	\end{align}
	with some $\mu_j, \nu_j\in\{0, 1\}$ and $Q'_j\in\mathcal{P}_k$. 
	We therefore get 
	\begin{align}
		\mathcal{P}_0 W\mathcal{Q}W^\dag
		=&\mathcal{P}_0\braket{\{W Q_j W^\dag\}_{j=1}^M} \nonumber\\
		=&\mathcal{P}_0\braket{\chi \mathrm{Z}_1, \chi' \mathrm{X}_1, \{W Q_j W^\dag\}_{j\in\{1, 2, ..., M\}\backslash\{a, b\}}} \nonumber\\
		=&\mathcal{P}_0\braket{\mathrm{Z}_1, \mathrm{X}_1, \{Q'_j\}_{j\in\{1, 2, ..., M\}\backslash\{a, b\}}} \nonumber\\
		=&\mathcal{P}_0\braket{\mathrm{Z}, \mathrm{X}}\otimes\mathcal{Q}' \nonumber\\
		=&\mathcal{P}_0\{\mathrm{I}, \mathrm{X}, \mathrm{Y}, \mathrm{Z}\}\otimes\mathcal{Q}', 
	\end{align}
	where $\mathcal{Q}':=\braket{\{Q'_j\}_{j\in\{1, 2, ..., M\}\backslash\{a, b\}}}$.

	Next, we prove the property ii). 
	As in the proof of the property i), $\mathcal{Q}$ can be expressed as $\mathcal{Q}=\braket{\{Q_j\}_{j=1}^M}$ with some $Q_1, Q_2, ..., Q_M\in\mathcal{Q}$. 
	Since $\mathcal{Q}\not\subset\mathcal{P}_0 I$, we can take $a\in\{1, 2, ..., M\}$ such that $Q_a\not\in\mathcal{P}_0 I$. 
	By Lemma~\ref{SMlem:Pauli_to_Z}, we can take $W\in\mathcal{C}_{k+1}$ such that $W Q_a W^\dag=\chi \mathrm{Z}_1$ with some $\chi\in\mathcal{P}_0$. 
	Since every pair of elements of $\mathcal{Q}$ is commutative, $[W Q_j W^\dag, \mathrm{Z}_1]=\chi^{-1}[W Q_j W^\dag, W Q_a W^\dag]=\chi^{-1} W[Q_j, Q_a]W^\dag=0$ for all $j\in\{1, 2, ..., M\}$. 
	This implies that for any $j\in\{1, 2, ..., M\}$, $W Q_j W^\dag$ can be written as 
	\begin{align}
		W Q_j W^\dag=\mathrm{Z}_1^{\mu_j}\otimes Q'_j
	\end{align}
	with some $\mu_j\in\{0, 1\}$ and $Q'_j\in\mathcal{P}_k$. 
	We therefore get 
	\begin{align}
		\mathcal{P}_0 W\mathcal{Q}W^\dag
		=&\mathcal{P}_0\braket{\{W Q_j W^\dag\}_{j=1}^M} \nonumber\\
		=&\mathcal{P}_0\braket{\chi \mathrm{Z}_1, \{W Q_j W^\dag\}_{j\in\{1, 2, ..., M\}\backslash\{a\}}} \nonumber\\
		=&\mathcal{P}_0\braket{\mathrm{Z}_1, \{Q'_j\}_{j\in\{1, 2, ..., M\}\backslash\{a\}}} \nonumber\\
		=&\mathcal{P}_0\braket{\mathrm{Z}}\otimes\mathcal{Q}' \nonumber\\
		=&\mathcal{P}_0\{\mathrm{I}, \mathrm{Z}\}\otimes\mathcal{Q}', 
	\end{align}
	where $\mathcal{Q}':=\braket{\{Q'_j\}_{j\in\{1, 2, ..., M\}\backslash\{a\}}}$. 
	Since every pair of elements of $\mathcal{Q}$ is commutative, we have for any $j, j'\in\{1, 2, ..., M\}\backslash\{a\}$, 
	\begin{align}
		[Q'_j, Q'_{j'}]
		=&[\mathrm{Z}_1^{\mu_j}W Q_j W^\dag, \mathrm{Z}_1^{\mu_{j'}}W Q_{j'} W^\dag] \nonumber\\
		=&[(\chi^{-1} W Q_a W^\dag)^{\mu_j}W Q_j W^\dag, (\chi^{-1} W Q_a W^\dag)^{\mu_{j'}}W Q_{j'} W^\dag] \nonumber\\
		=&\chi^{-\mu_j-\mu_{j'}}W[Q_a^{\mu_j}Q_j, Q_a^{\mu_{j'}}Q_{j'}]W^\dag \nonumber\\
		=&0. 
	\end{align}
	This means that every pair of elements of $\mathcal{Q}'$ is commutative. 
\end{proof}

\subsection{Property of $3$-bit sequences}

For the proof of Lemma~\ref{SMlem:CPauli^BC_mixture}, we prove the following property of 3-bit sequences.

\begin{lemma} \label{SMlem:3_bit_property}
	Let $\{x_j\}, \{x'_j\}, \{y_j\}, \{y'_j\}\in\{0, 1\}^3$ satisfy 
	\begin{align}
		\sum_{j=1}^3 x_j=\sum_{j=1}^3 y_j,\ 
		\sum_{j=1}^3 x'_j=\sum_{j=1}^3 y'_j,\ 
		\sum_{j=1}^3 x_j x'_j\equiv\sum_{j=1}^3 y_j y'_j\ (\mathrm{mod}\ 2), 
	\end{align}
	and $p, q\in\{1, 2, 3\}$ and $\sigma\in\mathfrak{S}_3$ satisfy $x_{\sigma(j)}\neq x_{\sigma(p)}$ for all $j\in\{1, 2, 3\}\backslash\{p\}$, $y_j\neq y_q$ for $j\in\{1, 2, 3\}\backslash\{q\}$, and $x'_{\sigma(j)}=y'_j$ for all $j\in\{1, 2, 3\}$. 
	Then, $x'_{\sigma(p)}=x'_{\sigma(q)}$. 
\end{lemma}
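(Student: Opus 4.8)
The plan is to extract the combinatorial structure forced by the ``uniqueness'' hypotheses on the bit patterns and then read off the conclusion from the parity constraint. First I would observe that the hypothesis $x_{\sigma(j)}\neq x_{\sigma(p)}$ for all $j\neq p$, together with $x_{\sigma(j)}\in\{0,1\}$, forces the multiset $\{x_{\sigma(1)},x_{\sigma(2)},x_{\sigma(3)}\}$ to be $\{a,1-a,1-a\}$ where $a:=x_{\sigma(p)}$; likewise $\{y_1,y_2,y_3\}=\{b,1-b,1-b\}$ with $b:=y_q$. Since $\sum_j x_j=\sum_j x_{\sigma(j)}=2-a$ and $\sum_j y_j=2-b$, the hypothesis $\sum_j x_j=\sum_j y_j$ gives $a=b$, i.e.\ $x_{\sigma(p)}=y_q$.

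Next, if $p=q$ the conclusion $x'_{\sigma(p)}=x'_{\sigma(q)}$ is immediate, so I would assume $p\neq q$ and let $r$ be the remaining index of $\{1,2,3\}$. Then $x_{\sigma(j)}-y_j$ vanishes at $j=r$ (both entries equal $1-a$), equals $a-(1-a)=2a-1$ at $j=p$ (here $y_p=1-a$ since $p\neq q$), and equals $(1-a)-a=1-2a$ at $j=q$ (here $x_{\sigma(q)}=1-a$ since $q\neq p$). Using the hypothesis $x'_{\sigma(j)}=y'_j$ to rewrite $\sum_j x_j x'_j=\sum_j x_{\sigma(j)}x'_{\sigma(j)}=\sum_j x_{\sigma(j)}y'_j$, the third hypothesis becomes
\begin{align}
0&\equiv\sum_{j=1}^3 x_j x'_j-\sum_{j=1}^3 y_j y'_j \nonumber \\
&=\sum_{j=1}^3 x_{\sigma(j)}y'_j-\sum_{j=1}^3 y_j y'_j \nonumber \\
&=(2a-1)(y'_p-y'_q)\pmod 2.
\end{align}
Since $2a-1$ is odd, this gives $y'_p\equiv y'_q\pmod 2$, hence $y'_p=y'_q$; then $x'_{\sigma(p)}=y'_p=y'_q=x'_{\sigma(q)}$ by $x'_{\sigma(j)}=y'_j$, which is the claim.

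The whole argument is elementary; the only care required is the index bookkeeping through $\sigma$ and the evaluation of $x_{\sigma(j)}-y_j$ at each of the three indices, where one must use $p\neq q$ to pin down $y_p=1-a$ and $x_{\sigma(q)}=1-a$. I also note in passing that the second hypothesis $\sum_j x'_j=\sum_j y'_j$ is already implied by $x'_{\sigma(j)}=y'_j$ and is not actually needed. I expect no genuine obstacle here.
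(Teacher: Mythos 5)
Your proof is correct and follows essentially the same route as the paper's: both exploit the fact that $\{x_{\sigma(j)}\}$ and $\{y_j\}$ each have a unique minority bit (at $p$ and $q$ respectively) with the same majority value, and then feed the resulting expression for $x_{\sigma(j)}-y_j$ into the parity constraint $\sum_j x_j x'_j\equiv\sum_j y_j y'_j \pmod 2$ to conclude $x'_{\sigma(p)}=y'_q=x'_{\sigma(q)}$. Your side remark that the hypothesis $\sum_j x'_j=\sum_j y'_j$ is already implied by $x'_{\sigma(j)}=y'_j$ is also accurate.
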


\begin{proof}
	We define $z_j:=x_{\sigma(j)}$ and $z'_j:=x'_{\sigma(j)}$. 
	Then, 
	\begin{align}
		&\sum_{j=1}^3 z_j
		=\sum_{j=1}^3 x_{\sigma(j)}
		=\sum_{j=1}^3 x_j
		=\sum_{j=1}^3 y_j, \label{SMeq:SMlem:3_bit_property1}\\ 
		&\sum_{j=1}^3 z'_j
		=\sum_{j=1}^3 x'_{\sigma(j)}
		=\sum_{j=1}^3 x'_j
		=\sum_{j=1}^3 y'_j, \label{SMeq:SMlem:3_bit_property2}\\ 
		&\sum_{j=1}^3 z_j z'_j
		=\sum_{j=1}^3 x_{\sigma(j)} x'_{\sigma(j)}
		=\sum_{j=1}^3 x_j x'_j
		\equiv\sum_{j=1}^3 y_j y'_j\ (\mathrm{mod}\ 2). \label{SMeq:SMlem:3_bit_property3}
	\end{align}
	Since $\{z_j\}$ and $\{y_j\}$ satisfy Eq.~\eqref{SMeq:SMlem:3_bit_property1}, $z_j\neq z_p$ for all $j\in\{1, 2, 3\}\backslash\{p\}$, and $y_j\neq y_q$ for all $j\in\{1, 2, 3\}\backslash\{q\}$, we can take $w\in\{0, 1\}$ such that for any $j\in\{1, 2, 3\}$, 
	\begin{align}
		z_j\equiv w+\delta_{j, p},\ y_j\equiv w+\delta_{j, q}\ (\mathrm{mod}\ 2). \label{SMeq:SMlem:3_bit_property4}
	\end{align} 
	By Eqs.~\eqref{SMeq:SMlem:3_bit_property2}, \eqref{SMeq:SMlem:3_bit_property3} and \eqref{SMeq:SMlem:3_bit_property4}, we get 
	\begin{align}
		z'_p-y'_q
		=&\sum_{j=1}^3 \delta_{j, p}z'_j-\sum_{j=1}^3 \delta_{j, q}y_j \nonumber\\
		\equiv&\sum_{j=1}^3 (z_j-w)z'_j-\sum_{j=1}^3 (y_j-w)y'_j\ (\mathrm{mod}\ 2) \nonumber\\
		=&\left(\sum_{j=1}^3 z_j z'_j-\sum_{j=1}^3 y_j y'_j\right)-w\left(\sum_{j=1}^3 z'_j-\sum_{j=1}^3 y_j\right) \nonumber\\
		=&0. 
	\end{align}
	Since $z'_p, y'_q\in\{0, 1\}$, this implies that $z'_p=y'_q$. 
	By combining this, the definition of $\{z'_j\}$ and the assumption that $x'_{\sigma(j)}=y'_j$ for all $j\in\{1, 2, 3\}$, we get 
	\begin{align}
		x'_{\sigma(p)}
		=z'_p
		=y'_q
		=x'_{\sigma(q)}. 
	\end{align}
\end{proof}

\subsection{Property of $t$-fold mixture maps}

In order to construct $t$-fold $\mathcal{G}$-symmetric Clifford conjugation mixture maps $\mathcal{D}$ in Lemma~\ref{SMlem:symmetrization} and $\mathcal{D}''$ in Theorem~\ref{SMthm:1design}, we prove that the set $\mathfrak{C}_{t, \mathcal{G}}$ of all $t$-fold $\mathcal{G}$-symmetric Clifford mixture maps is closed under composition.

\begin{lemma} \label{SMeq:symmetric_Clifford_mixture_composition}
	Let $N, t\in\mathbb{N}$, $\mathcal{G}$ be a subgroup of $\mathcal{U}_N$, $\mathfrak{C}_{t, \mathcal{G}}$ be the set of all $t$-fold $\mathcal{G}$-symmetric Clifford conjugation mixture maps defined by Eq.~\eqref{SMeq:symmetric_Clifford_mixture_def} and $\mathcal{D}, \mathcal{D}'\in\mathfrak{C}_{t, \mathcal{G}}$. 
	Then, $\mathcal{D}\circ\mathcal{D}'\in\mathfrak{C}_{t, \mathcal{G}}$. 
\end{lemma}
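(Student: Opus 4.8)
The plan is to prove Lemma~\ref{SMeq:symmetric_Clifford_mixture_composition} by unwinding the definition of $\mathfrak{C}_{t, \mathcal{G}}$ and checking that the composition of two affine combinations of conjugation maps is again an affine combination of conjugation maps whose unitaries lie in $\mathcal{C}_{N, \mathcal{G}}$. First I would write $\mathcal{D}=\sum_{i=1}^m \lambda_i \mathcal{E}_{t, U_i}$ and $\mathcal{D}'=\sum_{j=1}^n \kappa_j \mathcal{E}_{t, V_j}$ with $U_i, V_j\in\mathcal{C}_{N, \mathcal{G}}$, $\sum_i\lambda_i=1$ and $\sum_j\kappa_j=1$, using the defining form in Eq.~\eqref{SMeq:symmetric_Clifford_mixture_def}.

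The key algebraic step is the identity $\mathcal{E}_{t, U_i}\circ\mathcal{E}_{t, V_j}=\mathcal{E}_{t, U_i V_j}$, which follows immediately from Eq.~\eqref{SMeq:t_fold_unitary_action_def} since $(U_i V_j)^{\otimes t} L (U_i V_j)^{\dag\otimes t}=U_i^{\otimes t}\big(V_j^{\otimes t} L V_j^{\dag\otimes t}\big)U_i^{\dag\otimes t}$. Then by bilinearity of composition,
\begin{align}
	\mathcal{D}\circ\mathcal{D}'
	=\sum_{i=1}^m \sum_{j=1}^n \lambda_i\kappa_j\, \mathcal{E}_{t, U_i}\circ\mathcal{E}_{t, V_j}
	=\sum_{i=1}^m \sum_{j=1}^n \lambda_i\kappa_j\, \mathcal{E}_{t, U_i V_j}.
\end{align}
Here one reads off that this is an affine combination of the maps $\mathcal{E}_{t, U_i V_j}$ with coefficients $\lambda_i\kappa_j$, whose sum is $\big(\sum_i\lambda_i\big)\big(\sum_j\kappa_j\big)=1$.

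Finally I would verify the membership condition: since $\mathcal{C}_{N, \mathcal{G}}=\mathcal{C}_N\cap\mathcal{U}_{N, \mathcal{G}}$ is a group (it is the intersection of the Clifford group with the commutant subgroup $\mathcal{U}_{N, \mathcal{G}}$, both of which are groups), the products $U_i V_j$ again belong to $\mathcal{C}_{N, \mathcal{G}}$. Relabeling the double index $(i, j)$ as a single index running over $mn$ values, $\mathcal{D}\circ\mathcal{D}'$ is exactly of the form prescribed by Eq.~\eqref{SMeq:symmetric_Clifford_mixture_def}, hence $\mathcal{D}\circ\mathcal{D}'\in\mathfrak{C}_{t, \mathcal{G}}$. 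There is no real obstacle here; the only point worth stating explicitly is that $\mathcal{C}_{N, \mathcal{G}}$ is closed under multiplication, which is where the group structure of both $\mathcal{C}_N$ and $\mathcal{U}_{N, \mathcal{G}}$ is used, and that the product of two affine coefficient families is again affine.
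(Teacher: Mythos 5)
Your proof is correct and follows essentially the same route as the paper's: expand both maps via Eq.~\eqref{SMeq:symmetric_Clifford_mixture_def}, use $\mathcal{E}_{t, U}\circ\mathcal{E}_{t, V}=\mathcal{E}_{t, UV}$, and check that the product coefficients $\lambda_i\kappa_j$ sum to $1$ while $U_iV_j\in\mathcal{C}_{N,\mathcal{G}}$ by closure of the group. Your explicit remark that closure of $\mathcal{C}_{N,\mathcal{G}}$ under multiplication is the only structural input is a helpful touch the paper leaves implicit.
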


\begin{proof}
	Since $\mathcal{D}, \mathcal{D}'\in\mathfrak{C}_{t, \mathcal{G}}$, $\mathcal{D}$ and $\mathcal{D}'$ can be written as 
	\begin{align}
		\mathcal{D}=\sum_{j=1}^n \lambda_j \mathcal{E}_{t, U_j},\ 
		\mathcal{D}'=\sum_{j'=1}^{n'} \lambda'_{j'} \mathcal{E}_{t, U'_{j'}} 
	\end{align}
	with some $n, n'\in\mathbb{N}$, $U_1, U_2, ..., U_n, U'_1, U'_2, ..., U'_{n'}\in\mathcal{C}_{N, \mathcal{G}}$ and $\lambda_1, \lambda_2, ..., \lambda_n, \lambda'_1, \lambda'_2, ..., \lambda'_{n'}\in\mathbb{R}$ satisfying $\sum_{j=1}^n \lambda_j=\sum_{j'=1}^{n'} \lambda'_{j'}=1$. 
	Then we get 
	\begin{align}
		\mathcal{D}\circ\mathcal{D}' 
		=\sum_{j=1}^n\sum_{j'=1}^{n'} \lambda_j\lambda'_{j'}\mathcal{E}_{t, U_j}\circ\mathcal{E}_{t, U'_{j'}} 
		=\sum_{j=1}^n \sum_{j'=1}^{n'} \lambda_j\lambda'_{j'}\mathcal{E}_{t, U_jU'_{j'}}, 
	\end{align}
	$U_jU'_{j'}\in\mathcal{C}_{N, \mathcal{G}}$, and the coefficients satisfy 
	\begin{align}
		\sum_{j=1}^n \sum_{j'=1}^{n'} \lambda_j\lambda'_{j'} 
		=\left(\sum_{j=1}^n \lambda_j\right)\left(\sum_{j'=1}^{n'} \lambda'_{j'}\right) 
		=1. 
	\end{align}
	We therefore get $\mathcal{D}\circ\mathcal{D}'\in\mathfrak{C}_{t, \mathcal{G}}$. 
\end{proof}

\subsection{Bijections induced by Clifford operators}

We use the following lemma in the proofs of Lemma~\ref{SMlem:Clifford_invariant_construction} and Theorem~\ref{thm:4design} in the main text.

\begin{lemma} \label{SMlem:Clifford_action_on_Pauli}
	Let $N\in\mathbb{N}$ and $U\in\mathcal{C}_N$. 
	Then, there exist some function $s_U:\mathcal{P}_N^+\to\{\pm 1\}$ and some bijection $h_U$ on $\mathcal{P}_N^+$ such that 
	\begin{align}
		UPU^\dag=s_U(P) h_U(P)
	\end{align}
	for all $P\in\mathcal{P}_N^+$. 
\end{lemma}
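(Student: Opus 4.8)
The plan is to read off the conclusion directly from the defining property of the Clifford group, supplemented by a Hermiticity argument to control the phase. First I would note that since $U\in\mathcal{C}_N$ we have $U\mathcal{P}_N U^\dagger=\mathcal{P}_N$, and $\mathcal{P}_N=\mathcal{P}_0\cdot\mathcal{P}_N^+$; hence for every $P\in\mathcal{P}_N^+\subset\mathcal{P}_N$ the operator $UPU^\dagger$ lies in $\mathcal{P}_N$ and can be written as $UPU^\dagger=\chi_P\,h(P)$ for some $\chi_P\in\mathcal{P}_0=\{\pm1,\pm i\}$ and $h(P)\in\mathcal{P}_N^+$. This defines a map $h\colon\mathcal{P}_N^+\to\mathcal{P}_N^+$ together with phases $\chi_P$.

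Next I would pin down that $\chi_P\in\{\pm1\}$. Since $P$ is Hermitian and $U$ is unitary, $UPU^\dagger$ is Hermitian; and $h(P)$, being a tensor product of the Hermitian operators $\mathrm{I},\mathrm{X},\mathrm{Y},\mathrm{Z}$, is a nonzero Hermitian operator. Taking the adjoint of $UPU^\dagger=\chi_P h(P)$ gives $\chi_P^\ast h(P)=\chi_P h(P)$, so $\chi_P^\ast=\chi_P$, forcing $\chi_P\in\{\pm1,\pm i\}\cap\mathbb{R}=\{\pm1\}$. Setting $s_P:=\chi_P$ yields the claimed identity $UPU^\dagger=s_P\,h(P)$ with $s_P\in\{\pm1\}$.

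Finally I would check that $h$ is a bijection of the finite set $\mathcal{P}_N^+$; since the set is finite it suffices to show injectivity. If $h(P)=h(P')$, then $UPU^\dagger=(s_P s_{P'})\,UP'U^\dagger$ (using $s_{P'}^{-1}=s_{P'}$), and conjugating back by $U^\dagger$ gives $P=(s_P s_{P'})P'$ with $s_P s_{P'}\in\{\pm1\}$; since no element of $\mathcal{P}_N^+$ is $-1$ times another element of $\mathcal{P}_N^+$ (every element has coefficient $+1$ on the fixed Pauli tensor basis), we conclude $P=P'$. Equivalently, one may observe that the analogous construction for $U^\dagger$ provides a two-sided inverse of $h$. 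There is essentially no obstacle here; the only points requiring a little care are ruling out the phases $\pm i$ via Hermiticity and the elementary remark that $-P\notin\mathcal{P}_N^+$, both of which are immediate.
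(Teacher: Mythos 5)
Your proof is correct and follows essentially the same route as the paper: write $UPU^\dagger=\chi_P h(P)$ using the normalizer property, rule out $\chi_P=\pm i$ by Hermiticity, and deduce bijectivity of $h$ from injectivity plus finiteness of $\mathcal{P}_N^+$. The only cosmetic difference is in the injectivity step, where the paper invokes trace orthogonality of distinct Pauli operators while you conjugate back by $U^\dagger$ and use that $-P'\notin\mathcal{P}_N^+$; both arguments are equally valid.
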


\begin{proof}
	$U\in\mathcal{C}_N$ implies that for any $P\in\mathcal{P}_N^+$, $UPU^\dag\in\mathcal{P}_N$, i.e., 
	\begin{align}
		UPU^\dag=s' P' 
	\end{align}
	with some $s'\in\{\pm1, \pm i\}$ and $P'\in\mathcal{P}_N^+$. 
	Since $P$ and $P'$ are hermitian, we have 
	\begin{align}
        s'^* P'
        =(s' P')^\dag
        =(UPU^\dag)^\dag
        =UPU^\dag
        =s' P'. 
	\end{align} 
	Thus we get $s'^*=s'$, and $s'\in\{\pm 1\}$. 
	We define $s_U(P):=s'$ and $h_U(P):=P'$. 
	For any $P_1, P_2\in\mathcal{P}_N^+$ satisfying $P_1\neq P_2$, we get 
	\begin{align}
        \mathrm{tr}(h_U(P_1)h_U(P_2))
        =\mathrm{tr}(s_U(P_1)UP_1 U^\dag\cdot s_U(P_2)UP_2 U^\dag)
        =s_U(P_1)s_U(P_2)\mathrm{tr}(P_1P_2)=0, 
	\end{align}
	and thus we get $h_U(P_1)\neq h_U(P_2)$. 
	This implies that $h_U$ is injective. 
	By noting that $\mathcal{P}_N^+$ is a finite set, we know that $h_U$ is bijective. 
\end{proof}

\subsection{Connectedness of symmetric unitary groups}

For the proof of Lemma~\ref{SMlem:3_design_discreteness}, we prove the connectedness of $\mathcal{U}_{N, \mathcal{G}}$ for a general subgroup $\mathcal{G}$ of $\mathcal{U}(\mathcal{H})$. 
Here we give a detailed explanation that an operator that is commutative with all representations of a group should be in the form of Eq.~(2.26) of Ref.~\cite{bartlett2007reference}, or Eq.~\eqref{SMeq:SMlem:symmetric_unitary_connectedness9}.

\begin{lemma} \label{SMlem:symmetric_unitary_connectedness}
	Let $N\in\mathbb{N}$ and $\mathcal{G}$ be a subgroup of $\mathcal{U}_N$. 
	Then, $\mathcal{U}_{N, \mathcal{G}}$ is connected. 
\end{lemma}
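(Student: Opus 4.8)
The plan is to identify $\mathcal{U}_{N,\mathcal{G}}$ explicitly as a finite product of full unitary groups, and then invoke the two elementary facts that $\mathrm{U}(m)$ is connected for every $m\in\mathbb{N}$ and that a finite product of connected spaces is connected. Note first that $\mathcal{U}_{N,\mathcal{G}}$ is nonempty, since $I\in\mathcal{U}_{N,\mathcal{G}}$, so connectedness is the substantive claim.

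First I would observe that $\mathcal{A}:=\mathrm{span}_{\mathbb{C}}(\mathcal{G})$ is a finite-dimensional $*$-subalgebra of $\mathcal{L}(\mathcal{H})$ containing $I$: it is closed under multiplication because $\mathcal{G}$ is a group, and closed under taking adjoints because $G^\dagger=G^{-1}\in\mathcal{G}$ for every $G\in\mathcal{G}$. By the structure theorem for finite-dimensional $C^*$-algebras, there is a unitary identification $\mathcal{H}\cong\bigoplus_\lambda(\mathbb{C}^{m_\lambda}\otimes\mathbb{C}^{d_\lambda})$ under which $\mathcal{A}=\bigoplus_\lambda(I_{m_\lambda}\otimes\mathcal{L}(\mathbb{C}^{d_\lambda}))$; this is the isotypic decomposition of $\mathcal{H}$ with respect to $\mathcal{G}$, with the $\mathbb{C}^{d_\lambda}$ carrying the pairwise-inequivalent irreducible representations occurring in $\mathcal{H}$ and the $\mathbb{C}^{m_\lambda}$ their multiplicity spaces. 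Since commuting with every $G\in\mathcal{G}$ is the same as commuting with every element of $\mathcal{A}=\mathrm{span}(\mathcal{G})$, an operator $U\in\mathcal{U}_N$ lies in $\mathcal{U}_{N,\mathcal{G}}$ if and only if it is a unitary element of the commutant of $\mathcal{A}$, which is $\bigoplus_\lambda(\mathcal{L}(\mathbb{C}^{m_\lambda})\otimes I_{d_\lambda})$ — this is exactly the block form recorded in Eq.~(2.26) of Ref.~\cite{bartlett2007reference}.

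Reading off the unitary group of $\bigoplus_\lambda(\mathcal{L}(\mathbb{C}^{m_\lambda})\otimes I_{d_\lambda})$, one gets
\begin{align}
\mathcal{U}_{N,\mathcal{G}}=\Bigl\{\,\bigoplus_\lambda(V_\lambda\otimes I_{d_\lambda})\ \Big|\ V_\lambda\in\mathrm{U}(m_\lambda)\,\Bigr\},
\end{align}
which is homeomorphic to the direct product $\prod_\lambda\mathrm{U}(m_\lambda)$. Each factor $\mathrm{U}(m_\lambda)$ is a connected Lie group, and a finite product of connected topological spaces is connected; hence $\mathcal{U}_{N,\mathcal{G}}$ is connected.

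The main obstacle is the first step: making the passage from ``commutes with all $G\in\mathcal{G}$'' to the explicit block decomposition rigorous without assuming that $\mathcal{G}$ is finite, since in our applications $\mathcal{G}$ may be $\mathrm{U}(1)$ or $\mathrm{SU}(2)$. The resolution is that only the finite-dimensional algebra $\mathrm{span}(\mathcal{G})\subseteq\mathcal{L}(\mathcal{H})$ enters, so the Artin–Wedderburn structure theorem applies regardless of whether $\mathcal{G}$ itself is finite, compact, or neither; everything downstream of the block form is routine.
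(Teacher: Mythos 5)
Your proof is correct and takes essentially the same route as the paper: both reduce the claim to the block decomposition $\mathcal{U}_{N,\mathcal{G}}=\{\bigoplus_\lambda I^{(\mathcal{I}_\lambda)}\otimes U_\lambda^{(\mathcal{J}_\lambda)}\}$ (the commutant form of Eq.~(2.26) of Ref.~\cite{bartlett2007reference}) and then observe that this set is homeomorphic to a finite product of connected unitary groups. The only difference is how that decomposition is justified — you apply the structure theorem for the finite-dimensional $*$-algebra $\mathrm{span}(\mathcal{G})$, while the paper completely reduces the unitary representation $G\mapsto G$ and invokes Schur's lemma — and both arguments are valid without any finiteness or compactness assumption on $\mathcal{G}$.
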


\begin{proof}
	We consider the regular representation $\rho(G)$ of $\mathcal{G}$, i.e., $\rho(G)=G$ for all $G\in\mathcal{G}$. 
	Since $\rho(G)$ is a unitary representation, $\rho(G)$ is completely reducible. 
	Thus there exist Hilbert spaces $\{\mathcal{H}_\lambda\}_\lambda$, $\{\mathcal{I}_\lambda\}_\lambda$ and $\{\mathcal{J}_\lambda\}_\lambda$ satisfying 
	\begin{align}
		\mathcal{H}=\bigoplus_\lambda \mathcal{H}_\lambda,\ 
		\mathcal{H}_\lambda=\mathcal{I}_\lambda\otimes\mathcal{J}_\lambda
	\end{align}
	and $\rho(G)$ is decomposed in the form of 
	\begin{align}
		\rho(G)=\bigoplus_\lambda \rho_\lambda(G)^{(\mathcal{I}_\lambda)}\otimes I^{(\mathcal{J}_\lambda)} \label{SMeq:SMlem:symmetric_unitary_connectedness1}
	\end{align}
	with irreducible representations $\rho_\lambda(G)$ of $\mathcal{G}$ on $\mathcal{I}_\lambda$ such that $\rho_{\lambda_1}(G)$ and $\rho_{\lambda_2}(G)$ are inequivalent if $\lambda_1\neq\lambda_2$. 
	For the proof of this lemma, it is sufficient to prove 
	\begin{align}
		\mathcal{U}_{N, \mathcal{G}}=\left\{\bigoplus_\lambda I^{(\mathcal{I}_\lambda)}\otimes U_\lambda^{(\mathcal{J}_\lambda)}\ |\ U_\lambda\in\mathcal{U}(\mathcal{J}_\lambda)\right\}. \label{SMeq:SMlem:symmetric_unitary_connectedness2}
	\end{align}
	By using this relation, the connectedness of $\mathcal{U}_{N, \mathcal{G}}$ follows from the connectedness of $\mathcal{U}(\mathcal{J}_\lambda)$ for all $\lambda$.

	Since $\mathcal{U}_{N, \mathcal{G}}\supset\left\{\bigoplus_\lambda I^{(\mathcal{I}_\lambda)}\otimes U_\lambda^{(\mathcal{J}_\lambda)}\ |\ U_\lambda\in\mathcal{U}(\mathcal{J}_\lambda)\right\}$ is trivial, we are going to prove the converse inclusion relation. 
	We take arbitrary $U\in\mathcal{U}_{N, \mathcal{G}}$. 
	Eq.~\eqref{SMeq:SMlem:symmetric_unitary_connectedness1} can equivalently be expressed as 
	\begin{align}
		\rho(G)=\sum_\lambda \Gamma_\lambda \left(\rho_\lambda(G)\otimes I\right)\Gamma_\lambda^\dag \label{SMeq:SMlem:symmetric_unitary_connectedness3}
	\end{align}
	with isometries $\Gamma_\lambda$ from $\mathcal{H}_\lambda$ to $\mathcal{H}$. 
	Since $U$ commutes with $\rho(G)$ for all $G\in\mathcal{G}$, for any $\mu$ and $\nu$, we have 
	\begin{align}
		\Gamma_\mu^\dag[U, \rho(G)]\Gamma_\nu=0. \label{SMeq:SMlem:symmetric_unitary_connectedness4}
	\end{align}
	By plugging Eq.~\eqref{SMeq:SMlem:symmetric_unitary_connectedness3} into Eq.~\eqref{SMeq:SMlem:symmetric_unitary_connectedness4}, we get 	
	\begin{align}
		\sum_\lambda \Gamma_\mu^\dag \left[U\Gamma_\lambda(\rho_\lambda(G)\otimes I)\Gamma_\lambda^\dag-\Gamma_\lambda(\rho_\lambda(G)\otimes I)\Gamma_\lambda^\dag U\right]\Gamma_\nu=0.
	\end{align}
	By noting that $\Gamma_{\lambda_1}^\dag\Gamma_{\lambda_2}=I$ if $\lambda_1=\lambda_2$ and $\Gamma_{\lambda_1}^\dag\Gamma_{\lambda_2}=0$ if $\lambda_1\neq\lambda_2$, this implies that 
	\begin{align}
		\Gamma_\mu U\Gamma_\nu(\rho_\nu(G)\otimes I)-(\rho_\mu(G)\otimes I)\Gamma_\mu^\dag U\Gamma_\nu^\dag=0. \label{SMeq:SMlem:symmetric_unitary_connectedness5}
	\end{align}
	For each $\mu$ and $\nu$, we take a basis $\{E_{\mu, \nu, l}\}_l$ of $\mathcal{L}(\mathcal{I}_\nu\to\mathcal{I}_\mu)$. 
	Then, $\Gamma_\mu^\dag U\Gamma_\nu$ can be written as 
	\begin{align}
		\Gamma_\mu^\dag U\Gamma_\nu=\sum_l U_{\mu, \nu, l}\otimes E_{\mu, \nu, l} \label{SMeq:SMlem:symmetric_unitary_connectedness6}
	\end{align}
	with some $U_{\mu, \nu, l}\in\mathcal{L}(\mathcal{\mathcal{J}_\nu}\to\mathcal{J}_\mu)$. 
	By plugging Eq.~\eqref{SMeq:SMlem:symmetric_unitary_connectedness6} into Eq.~\eqref{SMeq:SMlem:symmetric_unitary_connectedness5}, we get 	
	\begin{align}
		\sum_l (U_{\mu, \nu, l}\rho_\nu(G)-\rho_\mu(G)U_{\mu, \nu, l})\otimes E_{\mu, \nu, l}=0. 
	\end{align}
	This implies that $U_{\mu, \nu, l}\rho_\nu(G)-\rho_\mu(G)U_{\mu, \nu, l}=0$. 
	Since this holds for all $G\in\mathcal{G}$, by Schur's lemma (Propositions~5.3.3 and 5.3.4 of Ref.~\cite{raczka1986theory}), we get 
	\begin{align}
		U_{\mu, \nu, l}=
		\left\{
		\begin{array}{ll}
			u_{\mu, l}I\ &(\textrm{if}\ \mu=\nu)\\
			0\ &(\textrm{if}\ \mu\neq\nu). 
		\end{array}
		\right. \label{SMeq:SMlem:symmetric_unitary_connectedness7}
	\end{align}
	Since $\Gamma_\lambda\Gamma_\lambda^\dag$ is the projection onto $\mathcal{H}_\lambda$ and $\mathcal{H}=\bigoplus_\lambda \mathcal{H}_\lambda$, $U$ can be written as 
	\begin{align}
		U=\left(\sum_\mu \Gamma_\mu \Gamma_\mu^\dag\right)U\left(\sum_\nu \Gamma_\nu \Gamma_\nu^\dag\right)
		=\sum_{\mu, \nu} \Gamma_\mu(\Gamma_\mu^\dag U\Gamma_\nu)\Gamma_\nu^\dag. \label{SMeq:SMlem:symmetric_unitary_connectedness8}
	\end{align}
	By plugging Eqs.~\eqref{SMeq:SMlem:symmetric_unitary_connectedness6} and \eqref{SMeq:SMlem:symmetric_unitary_connectedness7} into Eq.~\eqref{SMeq:SMlem:symmetric_unitary_connectedness8}, we get 	
	\begin{align}
		U
		=\sum_{\mu, \nu, l} \Gamma_\mu(U_{\mu, \nu, l}\otimes E_{\mu, \mu, l})\Gamma_\mu^\dag
		=\sum_{\mu, l} \Gamma_\mu(u_{\mu, l}I\otimes E_{\mu, \mu, l})\Gamma_\mu^\dag
		=\sum_\mu \Gamma_\mu(I\otimes U_\mu)\Gamma_\mu^\dag, 
	\end{align}
	where $U_\mu:=\sum_l u_{\mu, l}E_{\mu, \mu, l}$. 
	This can equivalently be expressed as 
	\begin{align}
		U=\bigoplus_\lambda I^{(\mathcal{I}_\lambda)}\otimes U_\lambda^{(\mathcal{J}_\lambda)} \label{SMeq:SMlem:symmetric_unitary_connectedness9}
	\end{align}
	with some $U_\lambda\in\mathcal{L}(\mathcal{J}_\lambda)$. 
	For any $\lambda$, $U_\lambda\in\mathcal{U}(\mathcal{J}_\lambda)$ follows from $U\in\mathcal{U}(\mathcal{H})$. 
	We therefore get Eq.~\eqref{SMeq:SMlem:symmetric_unitary_connectedness2}. 
\end{proof}

\end{document}